\newif\ifcr
\crtrue
\crfalse 

\documentclass[sigconf]{aamas}

\usepackage{balance} %

\usepackage{amsmath}
\usepackage{amsthm}
\usepackage{mathtools}
\usepackage{amsfonts}
\usepackage{paralist}
\usepackage{multirow}
\usepackage{dsfont}

\usepackage[textsize=tiny,textwidth=1.5cm,linecolor=green!70!black, backgroundcolor=green!10, bordercolor=black,disable]{todonotes} 

\usepackage{tikz}
\usetikzlibrary{decorations,arrows,arrows.meta,petri,topaths,backgrounds,shapes,positioning,fit,calc,decorations.pathreplacing,patterns,intersections}
\pgfdeclarelayer{bg}    %
\pgfdeclarelayer{fg}    %
\pgfsetlayers{bg,main,fg}  %

\tikzstyle{unode} = [draw, circle, inner sep=1.7pt, fill=black]
\tikzstyle{wnode} = [draw=gray, rectangle, inner sep=1.8pt, fill=gray]
\tikzstyle{edc} = [arrows = {-Stealth[length=6pt, inset=2pt]}, thick]
\tikzstyle{labelst} = [fill=white, inner sep = 1pt]
\tikzstyle{fc} = [blue!60!black, dotted, line width=1pt]
\tikzstyle{ec} = [red!60!black, line width=1pt]

\newtheorem{theorem}{Theorem}

\newtheorem{definition}{Definition}
\newtheorem{example}{Example}
\newtheorem{lemma}{Lemma}
\newtheorem{observation}{Observation}

\newtheorem{claim}{Claim}[theorem]
\newtheorem{remark}{Remark}

\newtheorem{appclaim}{Claim}[subsection] 
\newtheorem{appobservation}{Observation}[subsection] 
\newtheorem{applemma}{Lemma}[subsection] 

\newcommand{\mypara}[1]{
\smallskip
\noindent \textbf{#1}%
}

\usepackage{thm-restate}
\usepackage[title]{appendix}

\usepackage[vlined,ruled,linesnumbered]{algorithm2e}

\SetKw{KwAnd}{and}
\SetKw{KwOr}{or}
\SetKw{KwNo}{no}
\SetKw{KwYes}{yes}
\SetKw{KwT}{true}
\SetKw{KwF}{false}
\SetKw{KwST}{s.t.}
\SetKwInOut{Input}{Input}
\SetKwInOut{Output}{Output}
\SetKwProg{Def}{def}{:}{}
\SetKw{Break}{break}

\newcommand{\lineref}[1]{line~\ref{#1}}

\usepackage[sort&compress,nameinlink,capitalize]{cleveref}

\crefname{algorithm}{Algorithm}{Algorithms}
\crefname{lemma}{Lemma}{Lemmas}
\crefname{proposition}{Proposition}{Propositions}
\crefname{observation}{Observation}{Observations}
\crefname{figure}{Figure}{Figures}
\crefname{theorem}{Theorem}{Theorem}
\crefname{claim}{Claim}{Claims}
\crefname{table}{Table}{Tables}
\crefname{open}{Open question}{Open questions}
\crefname{line}{line}{lines}

\crefname{appclaim}{Claim}{Claims}
\crefname{applemma}{Lemma}{Lemmas}
\crefname{observation}{Observation}{Observations}

\usepackage{etoolbox}

\doi{WVYI7850}

\makeatletter
\gdef\@copyrightpermission{
  \begin{minipage}{0.2\columnwidth}
   \href{https://creativecommons.org/licenses/by/4.0/}{\includegraphics[width=0.90\textwidth]{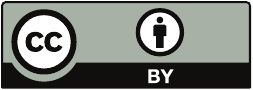}}
  \end{minipage}\hfill
  \begin{minipage}{0.8\columnwidth}
   \href{https://creativecommons.org/licenses/by/4.0/}{This work is licensed under a Creative Commons Attribution International 4.0 License.}
  \end{minipage}
  \vspace{5pt}
}
\makeatother

\setcopyright{ifaamas}
\acmConference[AAMAS '26]{Proc.\@ of the 25th International Conference
on Autonomous Agents and Multiagent Systems (AAMAS 2026)}{May 25 -- 29, 2026}
{Paphos, Cyprus}{C.~Amato, L.~Dennis, V.~Mascardi, J.~Thangarajah (eds.)}
\copyrightyear{2026}
\acmYear{2026}
\acmDOI{}
\acmPrice{}
\acmISBN{}

\title[Control in Hedonic Games]{Control in Hedonic Games}

\newcommand{\mytitle}{Control in Hedonic Games}
\newcommand{\appendixtitle}{Supplementary Material for the Paper ``\mytitle''}
\title{\mytitle}

\author{Jiehua Chen}
\affiliation{
  \institution{TU Wien}
  \city{Vienna}
  \country{Austria}}
\email{jiehua.chen@ac.tuwien.ac.at}

\author{Jakob Guttmann}
\affiliation{
  \institution{TU Wien}
  \city{Vienna}
  \country{Austria}}
\email{e11810289@student.tuwien.ac.at}

\author{Merisa Mustajba\v{s}i\'{c}}
\affiliation{
  \institution{TU Wien}
  \city{Vienna}
  \country{Austria}}
\email{e12450330@student.tuwien.ac.at}

\author{Sofia Simola}
\affiliation{
  \institution{TU Wien}
  \city{Vienna}
  \country{Austria}}
\email{ssimola@ac.tuwien.ac.at}

\begin{abstract}
  We initiate the study of control in hedonic games, where an external actor influences coalition formation by adding or deleting agents.
  We consider three basic control goals 
  \begin{inparaenum}[(1)]
    \item enforcing that an agent is \emph{not alone} (\myemph{\NA});
    \item enforcing that a \emph{pair} of agents is in the same coalition (\myemph{\PA});
    \item enforcing that all agents are in the \emph{same grand} coalition (\myemph{\GR}),
  \end{inparaenum}
  combined with two control actions: adding agents~(\emph{\AddAg}) or deleting agents~(\emph{\DelAg}). %
  We analyze these problems for friend-oriented and additive preferences under individual rationality, individual stability, Nash stability, and core stability.
  We provide a complete computational complexity classification for control in hedonic games.
\end{abstract}

\keywords{Hedonic Games, Control; Computational Social Choice; Additive Preferences; Friend-Oriented Preferences}

\newcommand{\probname}[1]{{\normalfont\textsc{#1}}}

\newcommand{\decprob}[3]{
  \begin{center}
    \begin{compactitem}[d]
      \item[\probname{#1}]
      \item[\textbf{Input:}]  #2\\[0.2ex]
      \item[\textbf{Question:}]  #3
    \end{compactitem}
  \end{center}
}

\newcommand{\probdef}[3]{\decprob{#1}{#2}{#3}}

\newcommand{\pa}{PA} %
\newcommand{\irt}{individually rational} %
\newcommand{\ist}{individually stable} %
\newcommand{\cst}{core stable} %
\newcommand{\nst}{Nash stable} %
\newcommand{\no}{No-instance} %
\newcommand{\yes}{Yes-instance} %

\definecolor{darkyellow}{RGB}{204, 153, 0}

\newcommand{\goodG}{\ensuremath{\mathcal{F}}}

\newcommand{\utilG}{\ensuremath{\mathcal{G}}}

\newcommand{\additive}{additively separable}

\definecolor{ForestGreen}{rgb}{0.13,0.55,0.13}
\definecolor{BrickRed}{rgb}{0.8,0.25,0.33}
\definecolor{BurntOrange}{rgb}{0.8,0.33,0.0}
\definecolor{Plum}{rgb}{0.56,0.27,0.52}

\newcommand{\todoJ}[1]{\todo[linecolor=green!70!black, backgroundcolor=orange!30]{J: #1}}

\newcommand{\friends}{friend-oriented}

\newcommand{\DAG}{\ensuremath{\mathsf{DAG}}} %
\newcommand{\SYM}{\ensuremath{\mathsf{SYM}}} %
\newcommand{\DTab}{{\ensuremath{^{\mathsf{DAG}}}}}
\newcommand{\STab}{{\ensuremath{^{\mathsf{SYM}}}}}

\def\A{\ensuremath{\mathcal{A}}}
\def\G{\ensuremath{\mathcal{G}}}
\def\S{\ensuremath{\mathsf{S}}}

\def\M{\ensuremath{\mathsf{{M}}}}
\def\AddAg{\ensuremath{\mathsf{AddAg}}}
\def\DelAg{\ensuremath{\mathsf{DelAg}}}
\def\NA{{\color{red!60!black}\ensuremath{\mathsf{NA}}}}
\def\PA{{\color{blue!70!black}\ensuremath{\mathsf{PA}}}}
\def\GR{{\color{green!50!black}\ensuremath{\mathsf{GR}}}}
\def\FA{\textsc{FriHG}}
\def\AD{\textsc{AddHG}}
\def\IR{\ensuremath{\mathsf{IR}}}
\def\IS{\ensuremath{\mathsf{IS}}}
\def\CS{\ensuremath{\mathsf{CS}}}
\def\NS{\ensuremath{\mathsf{NS}}}

\newcommand{\FE}{\FA}
\newcommand{\ADD}{\AD}

\def\NPc{{\ensuremath{\mathsf{NPc}}}}
\def\NPcour{{\color{red!80!black}\ensuremath{\boldsymbol{\mathsf{NPc}}}}}
\def\NP{{\ensuremath{\mathsf{NP}}}}
\def\PPP{{\ensuremath{\mathsf{P}}}}

\def\coNP{{\ensuremath{\mathsf{coNP}}}}

\def\coNPcour{{\color{red!80!black}\ensuremath{\boldsymbol{\mathsf{coNPc}}}}}
\def\Wtwo{\ensuremath{\mathsf{W[2]}}}
\def\Wtwoh{{\color{purple}\ensuremath{\mathsf{W[2]h}}}}
\def\Wtwohour{{\color{purple}\ensuremath{\boldsymbol{\mathsf{W[2]h}}}}}
\def\XP{\ensuremath{\mathsf{XP}}}
\def\XPour{\ensuremath{\boldsymbol{\mathsf{XP}}}}
\def\sigmac{\ensuremath{\Sigma^P_2{c}}}
\def\sigmatwop{\ensuremath{\Sigma^P_2}}
\def\PP{{\color{green!60!black}\ensuremath{\mathsf{P}}}}
\def\PPour{{\color{green!60!black}\ensuremath{\boldsymbol{\mathsf{P}}}}}
\def\IMM{{\color{orange!80!black}\ensuremath{\mathsf{imm}}}}
\def\IMMh{{\color{orange!80!black}\ensuremath{\mathsf{imm}^{\star}}}}
\def\IMMour{{\color{orange!80!black}\ensuremath{\boldsymbol{\mathsf{imm}}}}}
\def\IMMhour{{\color{orange!80!black}\ensuremath{\boldsymbol{\mathsf{imm}^{\star}}}}}
\def\NEVER{\ensuremath{\boldsymbol{\mathsf{never}}}}

\newcommand{\immune}{immune}

\newcommand{\never}{never}

\newcommand{\agents}{\ensuremath{\mathcal{V}}}
\newcommand{\agentsU}{\ensuremath{\mathcal{U}}}
\newcommand{\agentsW}{\ensuremath{\mathcal{W}}}
\newcommand{\ag}[1]{\ensuremath{{#1}}}
\newcommand{\util}[1]{\ensuremath{\mu}_{#1}}

\newcommand{\weight}{\ensuremath{\omega}}

\newcommand{\budget}{\ensuremath{k}}

\newcommand{\agx}{\ag{x}}

\newcommand{\agi}{\ag{i}}
\newcommand{\agy}{\ag{y}}

\newcommand{\sss}{\ensuremath{{\agentsW'}}}

\newcommand{\coal}{\ensuremath{A}}
\newcommand{\bcoal}{\ensuremath{B}}

\newcommand{\myemph}[1]{{\color{green!40!black}\emph{#1}}}

\newcommand{\fas}{\ensuremath{\mathsf{f}}}

\newcommand{\RETCf}{\probname{Restricted Exact Cover by 3-Sets}}
\newcommand{\RETC}{\probname{RX3C}}
\newcommand{\scp}{\probname{Set Cover}}
\newcommand{\cliqprob}{\probname{Clique}}
\newcommand{\twoDSN}{\probname{2-DSN}}

\newcommand{\np}{\ensuremath{\mathsf{NP}}}

\newcommand{\els}{\ensuremath{\mathcal{E}}}
\newcommand{\sets}{\ensuremath{\mathcal{S}}}

\newcommand{\set}[1]{\ensuremath{S_{#1}}}
\newcommand{\excov}{\ensuremath{\mathcal{K}}}
\newcommand{\nn}{\ensuremath{\hat{n}}}

\newcommand{\sset}[1]{\ensuremath{s_{#1}}}
\newcommand{\elag}[1]{\ensuremath{u_{#1}}}
\newcommand{\axag}[1]{\ensuremath{y_{#1}}}
\newcommand{\dmag}[2]{\ensuremath{d_{#1}^{#2}}}

\newcommand{\utilP}[1]{\util{#1}(\Pi(#1))}

\newcommand{\controlprob}[4]{\probname{#4-#1-#2-#3}}

\newenvironment{claimproof}[1]{\begin{proof}\renewcommand{\qed}{\hfill (end of the proof of~\cref{#1})~$\diamond$}}{
\end{proof}}

\newcommand{\appendixenvchanges}[1]{
  \let\origclaim\claim
  \let\endorigclaim\endclaim

  \let\origobservation\observation
  \let\endorigobservation\endobservation

  \let\origlemma\lemma
  \let\endoriglemma\endlemma
  
  \renewenvironment{claim}{\begin{appclaim}}{\end{appclaim}}
  \renewenvironment{observation}{\begin{appobservation}}{\end{appobservation}}
  \renewenvironment{lemma}{\begin{applemma}}{\end{applemma}}
    #1
  \let\claim\origclaim
  \let\endclaim\endorigclaim

    \let\observation\origobservation
  \let\endobservation\endorigobservation

      \let\lemma\origlemma
  \let\endlemma\endoriglemma
}

\newcommand{\appsymb}{$\star$}

\usepackage{etoolbox} %

\newcommand{\appendixproofwithstatement}[3]{%
  \gappto{\appendixtext}{
    \subsection{Proof of \cref{#1}}\label{proof:#1}
    #2 \appendixenvchanges{
    \begin{proof}
    #3\end{proof} }
  }
}

\newcommand{\appendixproofwithstatementsketch}[4]{%
\begin{proof}[Proof Sketch]
#3
\end{proof}
  \gappto{\appendixtext}{
    \subsection{Proof of \cref{#1}}\label{proof:#1}
    #2
\appendixenvchanges{    \begin{proof}
    #4\end{proof} }
  }
}

\newcommand{\appendixsection}[1]{%
  \gappto{\appendixtext}{
    \section{Additional material for Section~\ref{#1}}
    \label{appsec:#1}
  }
}

\newcommand{\appendixfigure}[4]{%
 #2
  \gappto{\appendixtext}{
    \subsection{#3}\label{appfigure:#1}
\appendixenvchanges{      #4 }
    }
}

\begin{document}

\pagestyle{fancy}
\fancyhead{}

\maketitle

\section{Introduction}\label{sec:intro}
\todo{TODONOTES ARE ON}
Hedonic games~\cite{Dreze80_Hedonic} are coalition formation games where agents form coalitions based on their preferences over which coalition to join.
The task is to partition agents into disjoint coalitions satisfying certain stability criteria.
Typical stability criteria include \emph{individual rationality} (no agent prefers being alone to his current coalition),
\emph{individual stability} (no agent prefers to join an existing coalition that would accept them),
\emph{Nash stability} (no agent can improve by unilaterally moving to another existing coalition), and \emph{core stability} (no group of agents can all improve by forming a new coalition together).
Since their introduction, hedonic games have become an important research topic in algorithmic game theory~\cite{CEW2011CooperativeGames,BER2016} and computational social choice~\cite{AZ2016HG-Bookchapter}.

Most research has focused on analyzing fixed hedonic games--proposing new stability concepts, investigating existential questions (does a stable partition exist?), and computational questions (verifying stability, finding stable partitions when they exist).
Far less attention has been paid to how external influence--such as adding or deleting agents--can shape outcomes in hedonic games.
Such external influence is formalized as \emph{control} in voting~\cite{Bartholdi92_Control}, where an external actor manipulates elections by adding or deleting voters or candidates.
Control naturally arises in coalition formation settings where external actors shape outcomes: department chairs forming research collaborations, conference organizers balancing working groups, or managers assigning employees to project teams.

In this paper, we introduce and perform a systematic study of control in hedonic games, where an external party influences outcomes by adding or deleting agents.
We formalize control problems combining two actions: adding agents~(\emph{\AddAg}) or deleting agents~(\emph{\DelAg}),
with three goals: ensuring a specific agent is not alone~(\NA), ensuring a specific pair is together~(\PA), or forming a grand coalition~(\GR), all while reaching a stable partition.
These goals capture common control scenarios:
\begin{compactitem}[--]
\item \NA: A company forms project teams. A new employee joins who has not built relationships yet. Without intervention, they might remain isolated and unassigned. The manager can hire another employee with complementary skills, making them willing to form a team together or more attractive to existing teams.
\item \PA: The department chair wants two senior researchers to collaborate. A third, polarizing colleague, cannot work with one of them, which prevents a stable joint assignment. Reassigning this colleague to another project enables a mutually acceptable collaboration between the two senior researchers.
\item \GR: Multiple small research labs work on similar problems but compete rather than collaborate. A funding agency wants to form a single large consortium for a major grant. The agency can offer funding for postdocs who would only join if everyone collaborates, forcing consolidation into a grand coalition.
\end{compactitem}
We analyze these problems for friend-oriented~(\FA) and additive~(\AD) preferences under four stability concepts: individual rationality~(\IR), individual stability~(\IS), Nash stability~(\NS), and core stability~(\CS).

\mypara{Our contributions.}
Besides introducing the control model, we provide a comprehensive complexity picture of control in hedonic games. 
We summarize the key findings below:
\begin{compactitem}[--]
  \item We provide a few polynomial-time algorithms for achieving \NA\ and \PA\ by adding agents in the friend-oriented preference setting.
  The algorithms exploit specific graph structures: For \IR\ (and \IS), we reduce to finding minimum-weight paths and cycles in arc-weighted directed graphs; For \CS, we reduce to finding minimum-weight subgraphs in Steiner networks.
  These positive results show that targeted control goals (ensuring individuals are not isolated or specific pairs collaborate) are computationally tractable via agent addition.
  \item We discover that for most stability concepts, achieving \NA\ or \PA\ by deleting agents is impossible--the problems are \immune; see \cref{def:immune+never}.
  This asymmetry between addition and deletion is somewhat surprising: You can add agents to force desired coalitions, but removing agents rarely helps.
  \item For all intractable (\NP-hard, \coNP-hard, or \sigmatwop-hard) control problems, the base problem--verifying whether the goal already holds without any control--is already intractable.
  This provides a natural barrier against control: If determining whether control is necessary is already hard, executing control attacks is not easier.
  \item For the control goal~\GR, the problem is mostly polynomial-time solvable for constant number~$k$ of control actions (i.e., in \XP\ wrt.~$k$). 
  Interestingly, the complexity picture inverts between preference types: \GR\ is more resistant to control than \NA\ and \PA\ in \FA, but less resistant in \AD.
  This suggests grand coalition formation has different structural properties than targeted pair or individual goals.
\end{compactitem}
Our main results regarding the computational complexity are summarized in Table~\ref{tab:merged-complexity-overview}.

{

\crefname{theorem}{T\hspace*{-0.07cm}}{T}
\crefname{proposition}{P\hspace*{-0.07cm}}{P}
\crefname{observation}{O\hspace*{-0.07cm}}{O}

\newcommand{\addIRNAAGcite}{[\cref{thm:AD-IR-NA}]}
\newcommand{\addIRPAAGcite}{[T\ref{thm:AD-IR-PA},T\ref{thm:SYMMAD-IR-PA}]}
\newcommand{\addIRGRAGcite}{[T\ref{thm:DAG-AD-IRISNS-GR-ADD},T\ref{thm:SYM-AD-IRISNS-GR-ADD}]}
\newcommand{\addIRNADGcite}{[\cref{prop:AD-IR-IMMUNE-DELAG}]}
\newcommand{\addIRPADGcite}{[\cref{prop:AD-IR-IMMUNE-DELAG}]}
\newcommand{\addIRGRDGcite}{[T\ref{thm:DAG-AD-IRISNS-GR-ADD},T\ref{thm:SYM-AD-IRISNS-GR-ADD}]}

\newcommand{\addISNSNAAGcite}{\cite{Sung10_Additive},[\cref{obs:originalhard-follows}]} %
\newcommand{\addISNSPAAGcite}{\cite{Sung10_Additive},[\cref{obs:originalhard-follows}]} %
\newcommand{\addISNSGRAGcite}{[T\ref{thm:DAG-AD-IRISNS-GR-ADD},T\ref{thm:SYM-AD-IRISNS-GR-ADD}]}

\newcommand{\DAGSYMaddISNSNAAGcite}{\DAGaddISNSNAcite} %
\newcommand{\DAGSYMaddISNSPAAGcite}{\DAGaddISNSPAcite} %
\newcommand{\DAGSYMaddISNSGRAGcite}{XXX}

\newcommand{\addISNSNADGcite}{\cite{Sung10_Additive},[\cref{obs:originalhard-follows}]}
\newcommand{\addISNSPADGcite}{\cite{Sung10_Additive},[\cref{obs:originalhard-follows}]}
\newcommand{\addISNSGRDGcite}{[X\cref{thm:DAG-AD-IRISNS-GR-AD}]} %

\newcommand{\DAGSYMaddISNSNADGcite}{\DAGaddISNSPAcite}
\newcommand{\DAGSYMaddISNSPADGcite}{\DAGaddISNSPAcite}
\newcommand{\DAGSYMaddIRISNSGRDGcite}{[T\ref{thm:DAG-AD-IRISNS-GR-ADD},T\ref{thm:SYM-AD-IRISNS-GR-ADD}]} %

\newcommand{\addCSNAAGcite}{\cite{Woeginger13_CoreAdditiveHG,peters2017precise},[\cref{obs:originalhard-follows}]}
\newcommand{\addCSPAAGcite}{\cite{Woeginger13_CoreAdditiveHG,peters2017precise},[\cref{obs:originalhard-follows}]}
\newcommand{\addCSGRAGcite}{[\cref{thm:AD-CS-GS}]} %

\newcommand{\DAGSYMaddCSNAAGcite}{\SYMMaddCSNAAGcite}
\newcommand{\DAGSYMaddCSPAAGcite}{\SYMMaddCSPAAGcite}
\newcommand{\DAGYMaddCSGRAGcite}{[\cref{thm:AD-CS-GS}]} %
\newcommand{\addCSNADGcite}{\cite{Woeginger13_CoreAdditiveHG,peters2017precise},[\cref{obs:originalhard-follows}]}
\newcommand{\addCSPADGcite}{\cite{Woeginger13_CoreAdditiveHG,peters2017precise},[\cref{obs:originalhard-follows}]}
\newcommand{\addCSGRDGcite}{[\cref{thm:AD-CS-GS}]} %

\newcommand{\DAGaddCSNAAGcite}{\DAGaddIRCSNAAGcite}
\newcommand{\DAGaddCSPAAGcite}{\DAGaddIRCSPAAGcite}
\newcommand{\DAGaddCSGRAGcite}{[\cref{thm:DAG-AD-IRISNS-GR-ADD}]}

\newcommand{\DAGaddCSNADGcite}{\DAGaddIRCSNADGcite}
\newcommand{\DAGaddCSPADGcite}{\DAGaddIRCSPADGcite}
\newcommand{\DAGaddCSGRDGcite}{[\cref{thm:DAG-AD-IRISNS-GR-ADD}]}

\newcommand{\friendsIRISCSNAAGcite}{[T\ref{thm:FE-IR-NA-ADD},T\ref{thm:FE-CS-PA-ADD}]} %
\newcommand{\friendsIRISCSPAAGcite}{[T\ref{thm:FE-IR-NA-ADD},T\ref{thm:FE-CS-PA-ADD}]} %
\newcommand{\friendsIRISCSGRAGcite}{\friendsIRGRAGcite} %

\newcommand{\friendsIRISCSNADGcite}{\friendsIRNADGcite} %
\newcommand{\friendsIRISCSPADGcite}{\friendsIRPADGcite} %
\newcommand{\friendsIRISCSGRDGcite}{\friendsIRGRDGcite} %

\newcommand{\friendsNSNAAGcite}{\cite{Brandt24_AI_NashStable},[\cref{obs:originalhard-follows}]}
\newcommand{\friendsNSPAAGcite}{\cite{Brandt24_AI_NashStable},[\cref{obs:originalhard-follows}]}
\newcommand{\friendsNSGRAGcite}{[\cref{thm:FE-IR-GR-ADD}]}

\newcommand{\friendsNSNADGcite}{\cite{Brandt24_AI_NashStable},[\cref{obs:originalhard-follows}]}
\newcommand{\friendsNSPADGcite}{\cite{Brandt24_AI_NashStable},[\cref{obs:originalhard-follows}]}
\newcommand{\friendsNSGRDGcite}{[\cref{prop:FE-IRISNS-GR-DEL}]}

\newcommand{\SYMMfriendsNSNAAGcite}{[\cref{prop:SYM-FE-IRISNS-NA-ADD}]}
\newcommand{\SYMMfriendsNSPAAGcite}{[\cref{prop:SYM-FE-IRISNS-NA-ADD}]}
\newcommand{\SYMMfriendsNSGRAGcite}{[\cref{thm:FE-IR-GR-ADD}]}

\newcommand{\SYMMfriendsNSNADGcite}{[\cref{prop:IMMUNE-DELAG}]}
\newcommand{\SYMMfriendsNSPADGcite}{[\cref{prop:IMMUNE-DELAG}]}
\newcommand{\SYMMfriendsNSGRDGcite}{[\cref{prop:FE-IRISNS-GR-DEL}]}

\newcommand{\DAGfriendscite}{[\cref{obs:DAG-FE-NO}]}

\newcommand{\friendsIRNAAGcite}{[\cref{thm:FE-IR-NA-ADD}]}
\newcommand{\friendsIRPAAGcite}{[\cref{thm:FE-IR-NA-ADD}]}
\newcommand{\friendsIRGRAGcite}{[\cref{thm:FE-IR-GR-ADD}]}

\newcommand{\friendsIRNADGcite}{[\cref{prop:IMMUNE-DELAG}]}
\newcommand{\friendsIRPADGcite}{[\cref{prop:IMMUNE-DELAG}]}
\newcommand{\friendsIRGRDGcite}{[\cref{prop:FE-IRISNS-GR-DEL}]}

\newcommand{\friendsISNAAGcite}{[\cref{thm:FE-IR-NA-ADD}]}
\newcommand{\friendsISPAAGcite}{[\cref{thm:FE-IR-NA-ADD}]}
\newcommand{\friendsISGRAGcite}{[\cref{thm:FE-IR-GR-ADD}]}

\newcommand{\friendsISNADGcite}{[\cref{prop:IMMUNE-DELAG}]}
\newcommand{\friendsISPADGcite}{[\cref{prop:IMMUNE-DELAG}]}
\newcommand{\friendsISGRDGcite}{[\cref{prop:FE-IRISNS-GR-DEL}]}

\newcommand{\friendsCSNAAGcite}{[\cref{thm:FE-CS-PA-ADD}]}
\newcommand{\friendsCSPAAGcite}{[\cref{thm:FE-CS-PA-ADD}]}
\newcommand{\friendsCSGRAGcite}{[\cref{thm:FE-IR-GR-ADD}]}

\newcommand{\friendsCSNADGcite}{[\cref{prop:IMMUNE-DELAG}]}
\newcommand{\friendsCSPADGcite}{[\cref{prop:IMMUNE-DELAG}]}
\newcommand{\friendsCSGRDGcite}{[\cref{prop:FE-IRISNS-GR-DEL}]}

\newcommand{\our}[1]{\textbf{#1}}

\newcommand{\DAGaddIRCSNAAGcite}{[\cref{prop:SYM-AD-IR-NA-ADD}]}
\newcommand{\DAGaddIRCSPAAGcite}{[\cref{thm:AD-IR-PA}]}
\newcommand{\DAGaddIRCSGRAGcite}{[\cref{thm:DAGSYM-AD-IRISNS-GR-ADD}]}

\newcommand{\DAGaddIRCSNADGcite}{[\cref{prop:AD-IR-IMMUNE-DELAG}]}
\newcommand{\DAGaddIRCSPADGcite}{[\cref{prop:AD-IR-IMMUNE-DELAG}]}
\newcommand{\DAGaddIRCSGRDGcite}{[\cref{thm:DAGSYM-AD-IRISNS-GR-ADD}]}

\newcommand{\DAGaddISNSNAcite}{[T\ref{thm:DAGSYM-AD-ISNS-NA-ADD},T\ref{thm:SYM-AD-ISNS-NA-ADD}]}
\newcommand{\DAGaddISNSPAcite}{[T\ref{thm:DAGSYM-AD-ISNS-NA-ADD},T\ref{thm:SYM-AD-ISNS-NA-ADD}]}
\newcommand{\DAGaddISNSGRcite}{[\cref{thm:DAGSYM-AD-IRISNS-GR-ADD}]}

\newcommand{\SYMMfriendsIRNAAGcite}{[\cref{thm:FE-IR-NA-ADD}]}
\newcommand{\SYMMfriendsIRPAAGcite}{[\cref{thm:FE-IR-NA-ADD}]}
\newcommand{\SYMMfriendsIRGRAGcite}{[\cref{thm:FE-IR-GR-ADD}]}

\newcommand{\SYMMfriendsIRNADGcite}{[\cref{prop:IMMUNE-DELAG}]}
\newcommand{\SYMMfriendsIRPADGcite}{[\cref{prop:IMMUNE-DELAG}]}
\newcommand{\SYMMfriendsIRGRDGcite}{[\cref{prop:FE-IRISNS-GR-DEL}]}

\newcommand{\SYMMfriendsISNAAGcite}{[\cref{thm:FE-IR-NA-ADD}]}
\newcommand{\SYMMfriendsISPAAGcite}{[\cref{thm:FE-IR-NA-ADD}]}
\newcommand{\SYMMfriendsISGRAGcite}{[\cref{thm:FE-IR-GR-ADD}]}

\newcommand{\SYMMfriendsISNADGcite}{[\cref{prop:IMMUNE-DELAG}]}
\newcommand{\SYMMfriendsISPADGcite}{[\cref{prop:IMMUNE-DELAG}]}
\newcommand{\SYMMfriendsISGRDGcite}{[\cref{prop:FE-IRISNS-GR-DEL}]}

\newcommand{\SYMMfriendsCSNAAGcite}{[\cref{thm:FE-CS-PA-ADD}]}
\newcommand{\SYMMfriendsCSPAAGcite}{[\cref{thm:FE-CS-PA-ADD}]}
\newcommand{\SYMMfriendsCSGRAGcite}{[\cref{thm:FE-IR-GR-ADD}]}

\newcommand{\SYMMfriendsCSNADGcite}{[\cref{prop:IMMUNE-DELAG}]}
\newcommand{\SYMMfriendsCSPADGcite}{[\cref{prop:IMMUNE-DELAG}]}
\newcommand{\SYMMfriendsCSGRDGcite}{[\cref{prop:FE-IRISNS-GR-DEL}]}

\newcommand{\SYMMaddIRNAAGcite}{[\cref{prop:SYM-AD-IR-NA-ADD}]}
\newcommand{\SYMMaddIRPAAGcite}{[T\ref{thm:AD-IR-PA},T\ref{thm:SYMMAD-IR-PA}]}
\newcommand{\SYMMaddIRGRAGcite}{[\cref{thm:DAGSYM-AD-IRISNS-GR-ADD}]}

\newcommand{\SYMMaddIRNADGcite}{[\cref{prop:AD-IR-IMMUNE-DELAG}]}
\newcommand{\SYMMaddIRPADGcite}{[\cref{prop:AD-IR-IMMUNE-DELAG}]}
\newcommand{\SYMMaddIRGRDGcite}{[\cref{thm:DAGSYM-AD-IRISNS-GR-ADD}]}

\newcommand{\SYMMaddISNAAGcite}{[\cref{thm:DAGSYM-AD-ISNS-NA-ADD}]}
\newcommand{\SYMMaddISPAAGcite}{[\cref{thm:DAGSYM-AD-ISNS-NA-ADD}]}
\newcommand{\SYMMaddISGRAGcite}{[\cref{thm:FE-IR-GR-ADD}]}

\newcommand{\SYMMaddISNADGcite}{[\cref{thm:DAGSYM-AD-ISNS-NA-ADD}]}
\newcommand{\SYMMaddISPADGcite}{[\cref{thm:DAGSYM-AD-ISNS-NA-ADD}]}
\newcommand{\SYMMaddISGRDGcite}{[\cref{thm:DAGSYM-AD-IRISNS-GR-ADD}]}

\newcommand{\SYMMaddNSNAAGcite}{[\cref{thm:DAGSYM-AD-ISNS-NA-ADD}]}
\newcommand{\SYMMaddNSPAAGcite}{[\cref{thm:DAGSYM-AD-ISNS-NA-ADD}]}
\newcommand{\SYMMaddNSGRAGcite}{[\cref{thm:DAGSYM-AD-IRISNS-GR-ADD}]}

\newcommand{\SYMMaddNSNADGcite}{[\cref{thm:DAGSYM-AD-ISNS-NA-ADD}]}
\newcommand{\SYMMaddNSPADGcite}{[TODO]}
\newcommand{\SYMMaddNSGRDGcite}{[\cref{thm:DAGSYM-AD-IRISNS-GR-ADD}]}

\newcommand{\SYMMaddCSNAAGcite}{\cite{peters2017precise}} %
\newcommand{\SYMMaddCSPAAGcite}{\cite{peters2017precise}} %
\newcommand{\SYMMaddCSGRAGcite}{[\cref{thm:AD-CS-GS}]}

\newcommand{\SYMMaddCSNADGcite}{\cite{peters2017precise}} %
\newcommand{\SYMMaddCSPADGcite}{\cite{peters2017precise}} %
\newcommand{\SYMMaddCSGRDGcite}{[\cref{thm:AD-CS-GS}]}

\newcommand{\SYMMaddISNSNAAGcite}{[\cref{thm:DAGSYM-AD-ISNS-NA-ADD}]}
\newcommand{\SYMMaddISNSPAAGcite}{[\cref{thm:DAGSYM-AD-ISNS-NA-ADD}]}
\newcommand{\SYMMaddISNSGRAGcite}{[\cref{thm:FE-IR-GR-ADD}]}

\newcommand{\SYMMaddISNSNADGcite}{[\cref{thm:DAGSYM-AD-ISNS-NA-ADD}]}
\newcommand{\SYMMaddISNSPADGcite}{[\cref{thm:DAGSYM-AD-ISNS-NA-ADD}]}
\newcommand{\SYMMaddISNSGRDGcite}{[\cref{thm:DAGSYM-AD-IRISNS-GR-ADD}]}

\newcommand{\budgetzerosym}{$^\clubsuit$}

\begin{table*}
  \centering
  \small
  \caption{
    Complexity results of control in hedonic games,
    by either adding (\AddAg) or deleting agents (\DelAg),
    with the goals of ensuring that either a given agent is \textbf{n}ot \textbf{a}lone~(\NA),
    or a \textbf{pa}ir of agents is in the same coalition~(\PA),
    or all agents are in the grand coalition~(\GR).
    We study two compact preference representations--additive preferences (\AD) or \friends\ preferences in the friends-and-enemies model (\FA)--and four stability concepts--individual rationality (\IR),
    individual stability~(\IS),
    Nash stability~(\NS),
    and core stability~(\CS).
    All hardness results hold even if no control action is allowed (i.e., $\budget=0$).
    Problems with tag~``\DTab'' (resp.\ ``\STab'') means that the corresponding hardness results hold even if the preference graph is a DAG (resp.\ symmetric). 
    Entries labeled ``\PP'' denote polynomial-time solvability,
    ``\NPc'' NP-completeness,
    and ``\sigmac'' $\Sigma^{P}_2$-completeness.
    ``\IMM'' means \immune\ and ``\IMMh'' means that it is \immune\ while deciding yes-instances with $k=0$ remains \NP-hard. %
    \Wtwoh\ and \XP\ are with respect to the budget $\budget$.
    ``\NEVER'' means that the given instance is always a no instance.
    Results in \textbf{bold} are our contributions.
  }
  \label{tab:merged-complexity-overview}
  {
    \begin{tabular}{@{}c@{} l @{\;\;}
      l@{\,}l @{\;\;}
      l@{\,}l @{\;\;}
      l@{\,}l @{\;\;}
      l@{\,}l @{\;\;}
      l@{\,}l @{\;\;}
      l@{\,}l @{}}
    \toprule
& HG  & \multicolumn{2}{c}{\NA-\AddAg} & \multicolumn{2}{c}{\NA-\DelAg} & \multicolumn{2}{c}{\PA-\AddAg} & \multicolumn{2}{c}{\PA-\DelAg} & \multicolumn{2}{c}{\GR-\AddAg} & \multicolumn{2}{c}{\GR-\DelAg} \\
    \midrule
      & \FA-\{\IR,\IS,\CS\}\STab & \PPour & \friendsIRISCSNAAGcite & \IMMour & \friendsIRISCSNADGcite & \PPour & \friendsIRISCSPAAGcite & \IMMour & \friendsIRISCSPADGcite & \Wtwohour, \XPour & \friendsIRISCSGRAGcite & \PPour & \friendsIRISCSGRDGcite \\\cline{2-14}\\[-2ex]
   & \FA-\NS & \NPc & \friendsNSNAAGcite & \NPc & \friendsNSNADGcite & \NPc & \friendsNSPAAGcite & \NPc & \friendsNSPADGcite & \Wtwohour, \XPour & \friendsNSGRAGcite & \PPour & \friendsNSGRDGcite \\
    & \quad {\scriptsize \SYM} & \PPour & \SYMMfriendsNSNAAGcite & \IMMour & \SYMMfriendsNSNADGcite & \PPour & \SYMMfriendsNSPAAGcite & \IMMour & \SYMMfriendsNSPADGcite & \Wtwohour, \XPour & \SYMMfriendsNSGRAGcite & \PPour & \SYMMfriendsNSGRDGcite \\\cline{2-14}\\[-2ex]
      &{\small \FA-\{\IR,\IS,\NS,\CS\}}\textsuperscript{\scriptsize \DAG}  & \NEVER & \DAGfriendscite & \NEVER & \DAGfriendscite & \NEVER & \DAGfriendscite & \NEVER & \DAGfriendscite & \NEVER & \DAGfriendscite & \NEVER & \DAGfriendscite \\ %
      \midrule
      & \AD-\IR & \NPcour & \addIRNAAGcite  &\IMMhour  & \addIRNADGcite & \NPcour & \addIRPAAGcite & \IMMhour & \addIRPADGcite & \Wtwohour, \XPour &  \addIRGRAGcite & \Wtwohour, \XPour  &  \addIRGRDGcite \\
  & \quad {\scriptsize \DAG/\SYM} & \PPour & \SYMMaddIRNAAGcite & \IMMour & \SYMMaddIRNADGcite & \NPcour & \SYMMaddIRPAAGcite & \IMMhour & \SYMMaddIRPADGcite & \Wtwohour, \XPour & \DAGSYMaddIRISNSGRDGcite & \Wtwohour, \XPour & \DAGSYMaddIRISNSGRDGcite \\
   & \AD-\{\IS,\NS\} & \NPc & \addISNSNAAGcite & \NPc & \addISNSNADGcite & \NPc & \addISNSPAAGcite & \NPc & \addISNSPADGcite & \Wtwohour, \XPour & \addISNSGRAGcite & \Wtwohour, \XPour & \addISNSGRAGcite \\
  &  \quad {\scriptsize \DAG/\SYM} & \NPcour & \DAGSYMaddISNSNAAGcite & \NPcour & \DAGSYMaddISNSNADGcite & \NPcour & \DAGSYMaddISNSPAAGcite & \NPcour & \DAGSYMaddISNSPADGcite & \Wtwohour, \XPour & \DAGSYMaddIRISNSGRDGcite & \Wtwohour, \XPour & \DAGSYMaddIRISNSGRDGcite \\
   & \AD-\CS \STab & \sigmac & \addCSNAAGcite & \sigmac & \addCSNADGcite & \sigmac & \addCSPAAGcite & \sigmac & \addCSPADGcite & \coNPcour & \SYMMaddCSGRDGcite & \coNPcour & \SYMMaddCSGRDGcite \\
      & \quad {\scriptsize \DAG} & \PPour & \DAGaddCSNAAGcite & \IMMour & \DAGaddCSNADGcite & \NPcour & \DAGaddCSPAAGcite &\IMMhour & \DAGaddCSPADGcite & \Wtwohour, \XPour & \DAGaddCSGRAGcite & \Wtwohour, \XPour  &\DAGaddCSGRDGcite \\
      \bottomrule
    \end{tabular}
  }
\end{table*}
}

\mypara{Related work.}
Hedonic games were first conceptualized by \citet{Dreze80_Hedonic} and formally reintroduced by \citet{Banerjee01_Core} and \citet{Bogomolnaia02_NS_Symmetric}.
They independently defined hedonic coalition formation games and analyzed fundamental stability concepts such as the core and Nash stability for additive preferences. 
\citet{dimitrov2006simple} introduced the friends-and-enemies model.
\citet{Woeginger_CoreSurvey} surveyed different preference models for core stability.
\citet{Brandt24_AI_NashStable} demonstrated that most stability-related problems remain intractable even under fairly restrictive preference assumptions.
For a comprehensive overview of complexity results regarding verifying whether a partition is stable and determining the existence of stable partitions, we refer to a recent survey by~\citet{CHS2025FPTCOMSOC}.

To our knowledge, no prior work has studied control in hedonic games.
\citet{Bartholdi92_Control} introduced electoral control (adding or deleting voters or candidates to change the election winner) into voting theory.
\citet{Hemaspaandra07_Preclude} explored both constructive and destructive control in elections under various voting rules, showing that many forms of control can be computationally difficult (a desirable property for election security), and providing a template for defining control actions in a precise algorithmic way; also see the book chapter by \citet{fal-rot:b:handbook-comsoc-control-and-bribery} for more references on voting control.

\citet{Boehmer21_StableMarriage} first studied control in matching markets, systematically exploring external control in Stable Marriage by defining a range of manipulative actions including adding or deleting agents.
\citet{Chen25_StableMatchControl} recently provided a comprehensive complexity overview of adding or removing agents in both Stable Marriage and Stable Roommates settings to achieve certain outcomes (such as guaranteeing the existence of a stable matching or ensuring a particular pair is matched).
For control problems in other domains, we refer to a recent survey by \citet{che-kac-nue-rot-sch-see:c:control-in-computational-social-choice}.

\mypara{Outline of the paper.}
In \cref{sec:prelim}, we define hedonic games, the stability concepts we consider, and our control problems.
In \cref{sec:structural}, we provide some structural results for hedonic games and our control problems.
In Sections \ref{sec:friends} and \ref{sec:add-setting},
we investigate the complexity for the friend-oriented preference and additive preference settings, respectively.
For both sections, we first consider the control goal of \NA, then \PA, and finally \GR.
We conclude with a discussion on potential areas for future research in \cref{sec:conclusion}. 
\ifcr
  Due to space constraints, proofs and statements marked with (\appsymb) are deferred to the full version of the paper~\cite{CGMS2026ControlHedonicArxiv}
\else
  Proofs and statements marked with (\appsymb) are deferred to the Appendix. 
\fi

\todoJ{UPDATE ARXIV ENTRY IN BIB, when finishing}

\section{Preliminaries}\label{sec:prelim}
\appendixsection{sec:prelim}
Given an integer~$t$, let \myemph{$[t]$} denote the set~$\{1,2,\dots,t\}$.
Given a directed graph~$G$ and a vertex~$v \in V(G)$, let \myemph{$N^+_G(v)$} and \myemph{$N^-_G(v)$} denote the out- and in-neighborhood of~$v$, respectively. 
Given a directed graph~$G = (V, A)$ and a subset~$V'\subseteq V$ of vertices,
a subgraph \myemph{induced} by $V'$, written as \myemph{$G[V']$}, is a subgraph $(V', A')$ of $G$ where
$A'=\{a' \in A \mid a' \subseteq V'\}$.

Let $\agents$ be a finite set of~$n$ agents.
A \myemph{coalition} is a non-empty subset of~$\agents$.
We call the entire agent set~$\agents$ the \myemph{grand coalition}.
The input of a hedonic game is a tuple~\myemph{$(\agents, (\succeq_{\ag{i}})_{\ag i \in \agents})$},
where $\agents$ is the agent set, 
and each agent~$\ag{i} \in \agents$ has a preference order~$\succeq_i$ over all coalitions that contain~$\ag i$.
Each preference order is a weak order (i.e., complete, reflexive, and transitive).
For two coalitions~$S$ and $T$ containing~$\ag{i}$, we say that agent~$\ag{i}$ \myemph{weakly prefers}~$S$ to~$T$ if $S \succeq_{\ag{i}} T$;
$\ag{i}$ (\myemph{strictly}) \myemph{prefers} $S$ to~$T$ (written as \myemph{$S \succ_{\ag{i}} T$}) if  $S \succeq_{\ag{i}} T$, but $T \not{\succeq}_{\agi} S$;
$\ag{i}$ is \myemph{indifferent between} $S$ and $T$ (written as \myemph{$S \sim_{\ag{i}} T$})  if $S \succeq_{\ag{i}} T$ and $T \succeq_{\ag{i}} S$.

A \myemph{partition}~$\Pi$ of~$\agents$ is a division of~$\agents$ into disjoint coalitions, i.e., the coalitions in~$\Pi$ are pairwise disjoint and $\bigcup_{\coal\in \Pi} \coal = \agents$.
Given a partition~$\Pi$ of $\agents$ and an agent~$\ag i\in \agents$, let~\myemph{$\Pi(\ag i)$} denote the coalition which contains~$\ag i$.
We also call the partition where every agent is in the grand coalition the \myemph{grand coalition partition}.

\mypara{Compact preference representations of hedonic games.}
\ifcr
  Since the number of possible coalitions that contain an agent is exponential in the number of agents,
  there is a need for compact representation of the preference orders of the agents.
  \fi
In this paper, we focus on two compact presentation models that can be encoded polynomially in the number of agents. 
The first setting is called hedonic games with \myemph{\additive\ preferences~(\AD)}.
In \AD, every agent only needs to express a cardinal utility to every other agent. 
The second setting is simple restriction of the first setting and is called hedonic games with \myemph{friend-oriented preferences~(\FA)}.
In \FA, every agent regards every other agent either as a \myemph{friend} or an \myemph{enemy} such that he prefers coalitions with more friends to those with less friends.

\begin{definition}[\AD~\cite{Banerjee01_Core}]
  Let $\agents$ be a set of agents.
  The input of \AD\ is a tuple~$(\agents, (\util{\ag{i}})_{\ag{i}\in \agents})$, where
  every agent~$\ag{i}\in \agents$ has a \myemph{cardinal utility function}~$\util{i} \colon \agents \to \mathds{R}$ such that for each two coalitions~$S$ and $T$ containing~$\ag{i}$,
  agent~$\ag{i}$ weakly prefers $S$ to $T$ if $\sum_{j\in S}\util{\ag{i}}(\ag{j}) \geq \sum_{j\in T}\util{\ag{i}}(\ag{j})$.
  We assume $\util{\ag i}(\ag i) = 0$.
  The utility functions can also be compactly represented by a \myemph{preference graph}.
  It is a tuple~$(\utilG, \weight)$, where $\utilG$ is a directed graph on the agent set and~$\weight$ is an arc-weighting function.
  For each two agents~$\ag{i}$ and $\ag{j}$ there is an arc~$(\ag{i}, \ag{j})$ if and only if the utility of~$\ag{i}$ to~$\ag{j}$ is non-zero.
  The weight of this arc is equal to the non-zero utility~$\weight(\ag{i}, \ag{j}) = \util{\ag{i}}(\ag{j})$.
\end{definition}

\begin{definition}[\FA~\cite{dimitrov2006simple}]
  The input of \FA\ is a directed graph~$\goodG$ where every vertex corresponds to an agent such that every agent~$\ag{i}$ considers another agent~$\ag{j}$ as a \myemph{friend} if and only if there is an arc from~$\ag{i}$ to~$\ag{j}$; otherwise~$\ag{i}$ considers~$\ag{j}$ as an \myemph{enemy}.
  Graph~$\goodG$ is also called a \myemph{friendship} graph. 
  For each agent~$\ag{i} \in \agents$ and each two coalitions~$S$ and $T$ containing~$\ag{i}$,
  agent $i$ \myemph{weakly prefers}~$S$ to~$T$ if 
  \begin{compactenum}[(i)]
    \item either $|N^+_{\goodG}(\ag{i})\cap S| > |N^+_{\goodG}(\ag{i})\cap T|$, or
    \item $|N^+_{\goodG}(\ag{i})\cap S| = |N^+_{\goodG}(\ag{i})\cap T|$ and
    $|S \setminus N^+_{\goodG}(\ag{i})| \leq | T \setminus N^+_{\goodG}(\ag{i})|$.
  \end{compactenum}
  
\end{definition}

\begin{remark}\label{remark:FEN->Add} %
  Note that \FA\ is a simple restriction of \AD: Set $\util{x}(y) = n$ if $(x,y)$ is an arc in~$\goodG$, and $\util{x}(y)=-1$ otherwise. 
\end{remark}

  We say that an instance of \ADD\ (resp.\ \FE) is a \myemph{\DAG} if the preference graph (resp.\ friendship graph) is acyclic.
  Correspondingly, we say that it has \myemph{$\fas$~feedback arcs} if the graph can be turned acyclic by deleting at most~$\fas$ arcs.
  Further, we say that \ADD\ is \myemph{symmetric} if for every pair of agents $\ag i, \ag j$ it holds that $\util i (\ag j) = \util j (\ag i)$.
  Similarly, we say that an instance of \FE\ is \myemph{symmetric} if the friendship graph~\goodG\ is symmetric.
  In these cases we can assume that the preference graph (resp.\ friendship graph) is undirected.

\mypara{Relevant stability concepts.}
In this paper, we study four relevant stability concepts. 
\begin{definition}
  Let $\Pi$ be a partition of~$\agents$. 
  A coalition~$\bcoal$ is \myemph{blocking} a partition~$\Pi$ if every agent~$\ag i \in \bcoal$ prefers~$\bcoal$ to $\Pi(\ag i)$.
  An agent $\ag{i}$ and a coalition~$\bcoal$ form a \myemph{blocking tuple} if
  $\ag i$ prefers $\bcoal \cup \{\ag i\}$ to $\Pi(\ag{i})$ and each agent~$\ag j\in \bcoal$ weakly prefers $\bcoal\cup \{\ag i\}$ to~$\Pi(\ag{j})$.	
  \begin{compactitem}[--]
    \item $\Pi$ is \myemph{\irt\ (\IR)} if no agent~$\ag{i}$ prefers $\{\ag{i}\}$ to $\Pi(\ag{i})$. 
    If $\ag{i}$ prefers $\{\ag{i}\}$ to $\Pi(\ag{i})$, we say he wishes to \myemph{deviate} from~$\Pi$.
    \item $\Pi$ is \myemph{\nst\ (\NS)} if
    no agent~$\ag{i}$ and coalition~$\bcoal\in \Pi\cup \{\emptyset\}$ exist such that $\ag{i}$ strictly prefers~$\bcoal\cup \{\ag{i}\}$
    to his coalition~$\Pi(\ag{i})$.
    \item $\Pi$ is \myemph{\ist\ (\IS)} if no agent~$\ag i$ and coalition $\bcoal\in  \Pi \cup \{\emptyset\}$ can form a blocking tuple.
    \item $\Pi$ is \myemph{\cst\ (\CS)} if \emph{no} coalition is blocking~$\Pi$.
  \end{compactitem}
\end{definition}

By definition, the stability concepts satisfy the following:
\begin{observation}\label{obs:stability-relation}
  An \NS\ partition is \IS. An \IS\ partition is \IR.
  A \CS\ partition is \IR. 
\end{observation}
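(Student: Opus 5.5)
The plan is to check each of the three implications straight from the definitions. In each case, a violation of the weaker concept is turned into a violation of the stronger one, using the empty coalition~$\emptyset$ or a singleton coalition as the witness.

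\emph{\NS\ implies \IS.} Let $\Pi$ be \NS\ and suppose, for contradiction, that agent~$\ag i$ and coalition $\bcoal \in \Pi\cup\{\emptyset\}$ form a blocking tuple. By definition of a blocking tuple, $\ag i$ strictly prefers $\bcoal\cup\{\ag i\}$ to $\Pi(\ag i)$. This is exactly a Nash deviation with the same $\bcoal\in\Pi\cup\{\emptyset\}$, which contradicts Nash stability. The additional requirement in \IS, namely that the members of~$\bcoal$ consent, is simply dropped.

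\emph{\IS\ implies \IR.} Let $\Pi$ be \IS\ and suppose some agent~$\ag i$ strictly prefers $\{\ag i\}$ to $\Pi(\ag i)$. Take $\bcoal=\emptyset\in\Pi\cup\{\emptyset\}$. Then $\ag i$ strictly prefers $\bcoal\cup\{\ag i\}=\{\ag i\}$ to $\Pi(\ag i)$, and the condition on the members of~$\bcoal$ holds vacuously. So $(\ag i,\emptyset)$ is a blocking tuple, which is a contradiction.

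\emph{\CS\ implies \IR.} Let $\Pi$ be \CS\ and suppose $\ag i$ strictly prefers $\{\ag i\}$ to $\Pi(\ag i)$. Then the coalition $\bcoal=\{\ag i\}$ is blocking, since its only member strictly prefers it to his current coalition. This contradicts core stability.

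None of these steps is a real obstacle; the whole observation is definitional. The only point needing care is that the empty set is explicitly permitted in $\Pi\cup\{\emptyset\}$ and that the weak-preference condition for members of~$\bcoal$ is vacuous when $\bcoal=\emptyset$. This is what makes the \IS-to-\IR\ step go through.
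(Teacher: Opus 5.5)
Your proof is correct and matches the paper, which asserts this observation as following directly from the definitions and gives no further argument. Your three checks are exactly the intended definitional ones: dropping the consent condition gives \NS\ $\Rightarrow$ \IS, taking $B=\emptyset$ gives \IS\ $\Rightarrow$ \IR, and using the singleton $\{i\}$ as a blocking coalition gives \CS\ $\Rightarrow$ \IR.
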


\mypara{Our control problems and their complexity upper bounds.}
In this paper, we study three different control goals
\begin{inparaenum}[(1)]
  \item enforcing that an agent is \emph{not alone} (\myemph{\NA});
  \item enforcing that a \emph{pair} of agents is in the same coalition (\myemph{\PA});
  \item enforcing that all agents are in the \emph{same grand} coalition (\myemph{\GR}).
\end{inparaenum}
Moreover, we study two possible control actions we can use to obtain the control goals:
adding agents~(\myemph{\AddAg}) or deleting agents~(\myemph{\DelAg}).

\begin{example}
  Let $U=\{x=u_1, u_2, u_3\}$ and $W=\{w_1,w_2\}$, where~$U$ is a set of consisting of three original agents while $W$
  consisting of two additional agents, respectively. 
  The preference graph of the agents~$U\cup W$ is depicted below.
  Throughout, we use blue dotted line to indicate friendship relation, and  red solid line enemy relation. 

  \def \xx {1.5}
  \def \xy {.65}
  {
    \centering
    \begin{tikzpicture}[>=stealth',shorten <= 1pt, shorten >= 1pt]
      \foreach \i / \j / \n in {0/0/u1,-1/1/u2,-1/-1/u3} {
        \node[unode] at (\i*\xx, \j*\xy) (\n) {};
      } 
      \foreach \i / \j / \n in {1/1/w1, 1/-1/w2} {
        \node[wnode] at (\i*\xx, \j*\xy) (\n) {};
      }
      \foreach \n / \p / \l / \nn in {u1/above/-1/{u_1}, u2/left/1/{u_2}, u3/left/1/{u_3}, w1/right/1/{w_1}, w2/right/1/{w_2}} {
        \node[\p = \l pt of \n] {$\nn$}; 
      }
      \foreach \s / \t / \x / \b / \et in {u1/u2/-1/0/ec,
        u3/u1/-2/0/ec, u3/u2/2/0/fc, u1/w1/1/15/fc, w1/u1/{-1}/15/ec, u1/w2/1/15/fc, w2/u1/{-1}/15/ec, w2/u3/1/0/fc} {
        \draw[->, \et] (\s) edge[bend left=\b] node[midway, labelst] {\small $\x$} (\t);
      }
      
    \end{tikzpicture}
    
  }

  Our special agent is~$x=u_1$.
  In the original instance, consisting of only~$U$, agent~$u_1$ must be alone in an \IR\ partition since he dislikes~$u_2$ but $u_3$ dislikes him.
  One can verify that it is impossible to make $u_1$ not alone by deleting agents since otherwise the new \IR\ partition augmented with the deleted agents in singletons would yield an \IR\ partition for the original instance.

  We can add agent~$w_2$ (not $w_1$) to the original instance to obtain an \IR\ partition, where every agent is in the same grand coalition.
\end{example}

Now we are ready to formally define our control problems (as decision problems)~HG-$\S$-\G-\A, %
where $\S\in \{\IR, \IS, \NS, \CS\}$ denotes one of the four stability concepts, $\G\in \{\NA, \PA, \GR\}$ one of the three control goals, and $\A\in \{\AddAg, \DelAg\}$ one of the two control actions, respectively.
The problem of control by adding agents is defined as follows:

\decprob{HG-$\S$-\NA-\AddAg~(\normalfont{\textit{resp.}} HG-$\S$-\PA-\AddAg)}
{A hedonic game instance $(\agents = \agentsU \cup \agentsW, (\succeq_i)_{\ag i \in \agents})$,
  a selected agent~$\agx \in \agentsU$ (resp.\ selected agent pair~$\{\agx, \agy\} \subseteq \agentsU$), and
  a non-negative value~$\budget \in \mathds{N}\cup \{0\}$.}
{Is there a subset $\agentsW' \subseteq \agentsW$ of size~$|\agentsW'| \leq \budget$
  such that $\agentsU\cup \agentsW'$ admits an $\S$-partition $\Pi$ where $\Pi(\agx) \neq \{\agx\}$ (resp.\ $\Pi(\agx) = \Pi(\agy)$)?}

For the control goal~\GR, the input only consists of the hedonic game instance and a budget:
\decprob{HG-$\S$-\GR-\AddAg}
{A hedonic game instance $(\agents = \agentsU \cup \agentsW, (\succeq_i)_{\ag i \in \agents})$ and
  a non-negative value~$\budget \in \mathds{N}\cup \{0\}$.}
{Is there a subset $\agentsW' \subseteq \agentsW$ of size~$|\agentsW'| \leq \budget$
  such that the partition consisting of the grand coalition~$\agentsU\cup \agentsW'$ is~$\S$?}

In the above, we call the set~$\agentsU$ the \myemph{original} agents and the set~$\agentsW$ the \myemph{additional} agents.

Similarly, we define the problems for the setting where we can delete agents.

\decprob{HG-$\S$-\NA-\DelAg~(\normalfont{\textit{resp.}} HG-$\S$-\PA-\DelAg)}
{A hedonic game instance $(\agentsU, (\succeq_i)_{\ag i \in \agentsU})$,
  a selected agent~$\agx \in \agentsU$ (resp.\ selected agent pair~$\{\agx, \agy\} \subseteq \agentsU$), and
  a non-negative value~$\budget \in \mathds{N}\cup \{0\}$.}
{Is there a subset $\agentsU' \subseteq \agentsU$ of size~$|\agentsU'| \leq \budget$
  such that $\agentsU\setminus \agentsU'$ admits an $\S$-partition $\Pi$ where $\Pi(\agx) \neq \{\agx\}$ (resp.\ $\Pi(\agx) = \Pi(\agy)$)?}

\decprob{HG-$\S$-\GR-\DelAg}
{A hedonic game instance $(\agentsU, (\succeq_i)_{\ag i \in \agentsU})$ and
  a non-negative value~$\budget \in \mathds{N}\cup \{0\}$.}
{Is there a subset $\agentsU' \subseteq \agentsU$ of size~$|\agentsU'| \leq \budget$
  such that the partition consisting of the grand coalition~$\agentsU\setminus \agentsU'$ is~$\S$?}

If the hedonic game is restricted to be \AD\ or \FA, we replace the prefix~HG with \AD\ or \FA.

To classify cases where control is impossible, we define the following.
\begin{definition}[\immune\ and \never]\label{def:immune+never}
  We say a hedonic game control problem is \myemph{\immune} if
  for every \no\ with $k=0$,
  it remains a \no\ even if we set $k=\infty$.
  We say that the problem is \myemph{\never} if all instances containing at least two agents are \no s. 
\end{definition}

Clearly, an instance with the control goal \NA\ with one agent overall is always a \no, and an instance with the control goal \PA\ contains at least two agents by definition.
On the other hand, an instance with the control goal \GR\ that contains one agent is trivially a \yes.

It is known that for both preference settings, one can check in polynomial time whether a given partition is \IR, \IS, or \NS.
However, the verification problem for \CS\ is \coNP-complete~\cite{CheCsaRoySim2023Verif}.
This immediately yields the following complexity upper bounds for our control problems. 
\begin{observation}\label{obs:complexityupperbounds}
  For preference model~$\M\in \{\AD$, $\FA\}$, control goal~$\G\in \{\NA, \PA, \GR\}$, and control action~$\A\in \{\AddAg,$ $\DelAg\}$,
  the first three sets of problems~\controlprob{\IR}{\G}{\A}{\M}, \controlprob{\IS}{\G}{\A}{\M}, and \controlprob{\NS}{\G}{\A}{\M} are in \NP,
  while the problems~\controlprob{\CS}{\NA}{\A}{\M} and~\controlprob{\CS}{\PA}{\A}{\M} are in \sigmatwop,
  and the problem~\controlprob{\CS}{\GR}{\A}{\M} is in \coNP.
\end{observation}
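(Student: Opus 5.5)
We establish each membership by exhibiting a certificate and a verification procedure, using the verification results just recalled. In every problem the control action is a subset of agents: $\agentsW'\subseteq\agentsW$ for \AddAg, $\agentsU'\subseteq\agentsU$ for \DelAg, with size at most~$\budget$. For \NA\ and \PA\ we also need a partition~$\Pi$ of the resulting agent set. Both objects have size polynomial in the input, since the agent set is explicitly given and the utilities or friendship arcs are encoded in the instance.

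For $\S\in\{\IR,\IS,\NS\}$ the \NP\ membership is immediate. The certificate is the control set together with, for \NA/\PA, the partition~$\Pi$; for \GR\ the partition is forced to be the grand coalition on the remaining agents. Verification proceeds in four steps.
\begin{compactitem}[--]
\item Check that the control set has size at most $\budget$.
\item Check the goal condition: $\Pi(\agx)\neq\{\agx\}$ for \NA, and $\Pi(\agx)=\Pi(\agy)$ for \PA; note that $\agx$ (resp.\ $\agx,\agy$) must not have been deleted.
\item Check that $\Pi$ is \S\ on the restricted instance. This is polynomial for \AD, hence for \FA\ via \cref{remark:FEN->Add}. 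Concretely, for each agent $\ag i$ and each $\bcoal\in\Pi\cup\{\emptyset\}$, compare additive sums. For \IS\ we additionally check, for every $\ag j\in\bcoal$, whether $\ag j$ weakly prefers $\bcoal\cup\{\ag i\}$. This takes $O(n^3)$ arithmetic operations.
\item Restrict the utilities to the remaining agents. This is harmless, because additive and friend-oriented preferences on a sub-instance are given by simply restricting $\util{}$ (resp.\ the graph~$\goodG$).
\end{compactitem}

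For \CS\ we combine this guessing with the \coNP\ verification. For \NA/\PA\ the question has the form ``there exist a control set and a partition $\Pi$ satisfying the budget and goal (polynomially checkable) such that for all coalitions $\bcoal$, $\bcoal$ does not block~$\Pi$.'' Whether a given $\bcoal$ blocks $\Pi$ is checkable in polynomial time by comparing sums for each member. This is an $\exists\forall$ formula with a polynomial-time predicate, so the problem is in \sigmatwop.

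For \GR\ the partition is fixed to the grand coalition, but the control set is still existentially chosen. This is the one step requiring care: a naive argument again yields only \sigmatwop. To obtain \coNP, we argue that the existential choice collapses. The point to be checked is that whether \emph{some} subset of size at most $\budget$ works can be decided by a \coNP\ certificate structure, i.e., that the complement (every small subset yields a blocked grand coalition) has polynomially verifiable witnesses.

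I expect this collapse to be the main obstacle. One route is to note that the grand coalition of $\agentsU\cup\agentsW'$ is blocked iff some $\bcoal$ has every member strictly preferring $\bcoal$. One would then try to show that a single blocking coalition structure, or one per subset, can be succinctly certified. If that fails, the fallback is to treat $\budget$ as making the enumeration over subsets polynomial only for constant~$\budget$, and to state \coNP\ membership for the variant where the outer guess is absorbed. In the worst case, only \sigmatwop\ membership is safely obtained by this plan for \controlprob{\CS}{\GR}{\A}{\M}, and the \coNP\ claim would need an additional structural lemma.
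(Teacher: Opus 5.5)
Your guess-and-verify arguments for the \NP\ bounds (\IR, \IS, \NS) and the \sigmatwop\ bounds (\CS\ with \NA/\PA) are correct. They are exactly what the paper's one-line justification amounts to: verification is polynomial for \IR/\IS/\NS\ and \coNP-complete for \CS. The gap is the \coNP\ claim for \controlprob{\CS}{\GR}{\A}{\M}, which you leave unproved. Your suspicion is justified. The ``collapse'' of the existential choice you hope for cannot be proved for unbounded $\budget$ unless $\NP=\coNP$.

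Consider \controlprob{\CS}{\GR}{\AddAg}{\FA}. By \cref{thm:FE-IR-GR-ADD}, the Set Cover reduction makes it \NP-hard. It is also in \NP: guess $\agentsW'$ and check that the friendship graph on $\agentsU\cup\agentsW'$ is strongly connected (cf.\ \cref{obs:csstogether} and the \XP\ part of that proof). Similarly, \cref{thm:DAG-AD-IRISNS-GR-ADD} together with \cref{obs:dag:eq} makes \controlprob{\CS}{\GR}{\A}{\AD} \NP-hard, already on DAGs and for both actions. An \NP-hard problem in \coNP\ would give $\NP\subseteq\coNP$. The paper's ``immediately yields'' silently treats the control set as given, so it makes the same oversight. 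As a consequence, the \coNP-completeness entry for \AD-\CS-\GR\ in \cref{tab:merged-complexity-overview} is only valid for fixed $\budget$.

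What is true is the bounded version your fallback points to. For every fixed $\budget$, in particular $\budget=0$ (the case used in \cref{thm:AD-CS-GS}), the problem is in \coNP. A no-instance is certified by listing one coalition blocking the corresponding grand coalition for each admissible control set. There are $O(|\agentsW|^{\budget})$ such sets for \AddAg\ and $O(|\agentsU|^{\budget})$ for \DelAg. The list has polynomial size and can be checked in polynomial time.

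For unbounded $\budget$, the correct upper bound is \sigmatwop\ for \AD, by your $\exists\forall$ argument verbatim, and \NP\ for \FA. So you should state the \GR\ part in this corrected form rather than look for a structural lemma that would place the unbounded problem in \coNP.
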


\mypara{$\boldsymbol{\Wtwo}$ and $\boldsymbol{\XP}$.}
In this paper, we prove that several problems are \Wtwo-hard\ (resp.\ in \XP) with respect to a parameter~$p$.
\Wtwo-hardness is shown via a \myemph{parameterized reduction} from a \Wtwo-hard problem~$Q$ (parameter~$q$) running in time $f(q)\cdot |I_Q|^{O(1)}$ and producing an instance with parameter $p\le g(q)$, for computable $f,g$.
Thus, it is unlikely that the problems are fixed-parameter tractable (FPT), i.e., solvable in time $f'(p)\cdot |I|^{O(1)}$ for computable $f'$, unless $\mathsf{FPT}=\Wtwo$.
Membership in \XP\ wrt.\ $p$ means solvability in time $|I|^{h(p)}$ for some computable $h$ (equivalently, polynomial time for every constant $p$).
If a problem remains \NP-hard for some fixed constant value of $p$, then it is not in \XP\ wrt.\ $p$, unless $\PPP=\NP$.
See~\cite{Nie06,CyFoKoLoMaPiPiSa2015} for details.

\section{Structural Observations}\label{sec:structural}

In this section, we collect some useful structural properties that may be of independent interest.

\mypara{Relations among the stability concepts.}
The first two observations describe two cases when a more stringent stability concept is equivalent to \IR.
This is useful in searching for both algorithms and hardness results. 

The first observation follows directly from definition. 
\begin{restatable}{observation}{obsGrandEQ}
  \label{obs:gra:eq}
  For every $\S \in \{\NS, \IS\}$, the grand coalition partition is $\S$ if and only if it is \IR.
\end{restatable}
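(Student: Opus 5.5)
One direction is immediate from \cref{obs:stability-relation}: an \NS\ or \IS\ partition is \IR, so if the grand coalition partition is \NS\ or \IS, it is also \IR. Because \NS\ implies \IS, the only direction that needs work is this: if the grand coalition partition $\Pi=\{\agents\}$ is \IR, then it is \NS\ (and hence \IS).

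The plan is to list every deviation that Nash stability has to exclude and compare each one with individual rationality. Under $\Pi=\{\agents\}$ the only coalitions available are $\bcoal\in\Pi\cup\{\emptyset\}=\{\agents,\emptyset\}$, so there are two cases for an agent~$\ag i$.

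If $\bcoal=\emptyset$, then $\bcoal\cup\{\ag i\}=\{\ag i\}$, and a strict preference of $\ag i$ for $\{\ag i\}$ over $\Pi(\ag i)$ is exactly an \IR\ violation. Since $\Pi$ is \IR, this cannot happen.

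If $\bcoal=\agents$, then $\bcoal\cup\{\ag i\}=\agents=\Pi(\ag i)$. Strict preferences are irreflexive, so $\ag i$ cannot strictly prefer $\agents$ to itself. Note that the definition does not require $\bcoal\neq\Pi(\ag i)$; the case is excluded anyway because it would need an agent to strictly prefer a coalition to itself.

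Neither case produces a profitable deviation, so $\Pi$ is \NS. By \cref{obs:stability-relation}, it is then also \IS.

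For \IS\ one can also argue directly. A blocking tuple needs $\ag i$ to strictly prefer $\bcoal\cup\{\ag i\}$ to $\Pi(\ag i)$, and the same two cases rule this out. The argument uses no property of the preferences beyond weak orders, so it holds for general hedonic games and in particular for \AD\ and \FA. There is no real obstacle here; the only point to check carefully is the trivial case $\bcoal=\Pi(\ag i)$.
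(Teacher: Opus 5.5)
Your proof is correct and is essentially the paper's argument. The paper only says the observation follows directly from the definitions. Your case split over $\bcoal\in\{\agents,\emptyset\}$ is exactly that verification: joining one's own coalition $\agents$ cannot be a strict improvement, and deviating to $\emptyset$ is precisely an \IR\ violation.
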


Next, we observe that %
if the preference graph is a \DAG, the two stability concepts \IR\ and \CS\ coincide.

\begin{restatable}{observation}{obsDAGEQ}
  \label{obs:dag:eq}
  For every instance of \AD\ with an acyclic preference graph, a partition is \CS\ if and only if it is \IR.
\end{restatable}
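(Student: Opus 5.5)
The direction ``\CS\ implies \IR'' is already part of \cref{obs:stability-relation}, so only the converse needs an argument. Let $\Pi$ be an \IR\ partition of an \AD\ instance whose preference graph $(\utilG,\weight)$ is acyclic. We assume, for contradiction, that some coalition $\bcoal$ blocks $\Pi$. The whole proof then consists in producing one agent of $\bcoal$ whose utility for $\bcoal$ is non-positive.

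First, we use acyclicity to fix a topological order of $\utilG$, so that every arc goes from an earlier to a later agent. We restrict this order to $\bcoal$ and let $\ag i$ be the \emph{last} agent of $\bcoal$ in it. Since every arc points forward, $\ag i$ has no outgoing arc into another member of $\bcoal$. Hence $\util{\ag i}(\ag j)=0$ for all $\ag j\in\bcoal$, and so $\sum_{\ag j\in \bcoal}\util{\ag i}(\ag j)=0$, which is exactly $\ag i$'s utility for being alone.

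Second, we invoke individual rationality of $\Pi$. It says that $\ag i$ does not prefer $\{\ag i\}$ to $\Pi(\ag i)$, so the utility of $\ag i$ in $\Pi(\ag i)$ is at least $0$. Because $\bcoal$ is blocking, $\ag i$ would strictly prefer $\bcoal$ to $\Pi(\ag i)$, i.e., have utility strictly greater than its utility in $\Pi(\ag i)$, which is at least $0$. This contradicts the first step, where that utility was shown to be exactly $0$. Therefore no blocking coalition exists and $\Pi$ is \CS.

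There is no real obstacle here. The only point needing care is choosing the correct extreme of the topological order: we need an agent that is a sink within $\utilG[\bcoal]$, so that its own valuation of $\bcoal$ is zero. A source would not work, since sources have no incoming arcs, and incoming arcs are irrelevant to an agent's own valuation. Singleton blocking coalitions $\bcoal=\{\ag i\}$ are covered directly by \IR, and the argument above handles them uniformly anyway.
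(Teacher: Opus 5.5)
Your proposal is correct and follows the paper's proof essentially verbatim: both take a blocking coalition, use acyclicity to find an agent that is a sink in the subgraph induced by the coalition (so its utility for the coalition is $0$), and contradict individual rationality, which gives that agent utility at least $0$ in its current coalition. The only cosmetic difference is that the paper restricts to blocking coalitions of size at least two, while you note that singletons are covered by the same argument.
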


\begin{proof}
  By \cref{obs:stability-relation}, it suffices to show that \IR\ implies \CS.
  Suppose, for the sake of contradiction, that $\Pi$ is \IR\ but not \CS.
  Let~$\bcoal$ be a blocking coalition with at least two agents.
  Since the preference graph is acyclic, one agent $t \in \bcoal$ is a sink in the preference graph induced by~$\bcoal$.
  Such a sink $t$ has zero utility towards~$\bcoal$, but has utility at least zero towards $\Pi(t)$, since $\Pi$ is \IR.
  Thus, agent $t$ has no incentive to deviate to~$\bcoal$, a contradiction.
\end{proof}

The next lemma shows that under the friend-oriented model and for the control goal~\PA, the existence questions for \IR\ and \IS\ are essentially the same.

\begin{restatable}[\appsymb]{lemma}{obsFriIRIS} 
  Let $I = (\agents, \goodG)$ be a \FE-instance, and $\agx$ and $\agy$ two agents in~$\agents$.
  From each \IR\ partition~$\Pi$ with $\Pi(\agx) = \Pi(\agy)$,
  one can construct in polynomial time an \IS\ partition~$\Pi'$ with $\Pi'(\agx) = \Pi'(\agy)$. 
\label{obs:fri:ir:is}
\end{restatable}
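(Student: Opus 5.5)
We obtain $\Pi'$ by a local-improvement procedure that starts from $\Pi$ and never separates the coalition containing $\agx$ and $\agy$. Let $C_0 = \Pi(\agx) = \Pi(\agy)$. In \FE, an agent's utility is lexicographic: first the number of friends, then (negated) the number of enemies.

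The key observation concerns which deviations are possible in an \IR\ partition. An \IR\ coalition $S$ of size at least two gives each member at least one friend. Otherwise that member has zero friends and at least one enemy, so he would prefer being alone.

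Now consider a blocking tuple $(\ag i, \bcoal)$ with $\bcoal \in \Pi\cup\{\emptyset\}$. Every $\ag j \in \bcoal$ must weakly prefer $\bcoal\cup\{\ag i\}$ to $\bcoal$. Adding $\ag i$ either adds a friend or adds an enemy. Hence this holds if and only if no $\ag j\in\bcoal$ regards $\ag i$ as an enemy, i.e., all of $\bcoal$ have arcs to $\ag i$. Deviations to $\emptyset$ are excluded by \IR, so only moves into nonempty coalitions matter.

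We repeatedly apply such \IS-improving moves, choosing them so that \IR\ is preserved and $\agx,\agy$ stay together.

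To guarantee the second property, we first merge. Instead of moving single agents out of $C_0$, whenever some agent $\ag i$ forms a blocking tuple with a coalition $\bcoal$, we do the following:
\begin{compactitem}[--]
\item If $\ag i \notin C_0$, we move $\ag i$ to $\bcoal$. This is safe because $\ag i$ strictly improves and everyone in $\bcoal$ gains a friend. The coalition $\ag i$ left might lose \IR; that is handled by the potential argument below.
\item If $\ag i \in C_0$, moving him out could destroy \IR\ of $C_0$ or split $\agx,\agy$ when $\ag i\in\{\agx,\agy\}$. So we instead merge $\bcoal$ into $C_0$, or move $\ag i$ only when $\ag i\notin\{\agx,\agy\}$.
\end{compactitem}

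A cleaner alternative I would try first is to start from an arbitrary \IS\ partition of $\agents\setminus C_0$ and then insert $C_0$. This requires an \IS-style local search on the rest that treats $C_0$ as a single fixed coalition which only grows: agents may join $C_0$ when all of its members befriend them, since that improves everyone in $C_0$, and agents never leave $C_0$.

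Termination and \IR\ would come from a potential function, namely the lexicographic vector of (total number of friend pairs inside coalitions, minus total enemy pairs). The standard argument for \IS\ existence with friend-oriented preferences uses exactly this. When $\ag i$ joins $\bcoal$, all of $\bcoal$ gain a friend and $\ag i$ strictly gains. Agents in $\ag i$'s old coalition may lose a friend, so I would instead use the potential $\sum_i (\#\text{friends of } i \text{ in } \Pi(i))$. This potential changes by (gain of $\ag i$) + $|\bcoal|$ - (number of members of the old coalition that befriended $\ag i$), which is not obviously positive.

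This monotonicity is the main obstacle. I expect to resolve it by restricting the dynamics to adding agents: only allow moves of agents who are currently singletons, or merges of whole coalitions. With that restriction, no coalition ever loses members, so \IR\ is preserved and friend counts only increase. It then remains to show that at a fixed point no blocking tuple involving a non-singleton deviator exists. If one remains, I would handle it by checking that after the deviation the abandoned coalition stays \IR, or else splitting it into singletons that are then reabsorbed. Polynomial time follows since each step strictly increases a bounded integer potential.
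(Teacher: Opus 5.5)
\paragraph{Verdict.} Your proposal has a genuine gap. It never commits to one procedure for which you prove both polynomial-time termination and that the final partition is \IS.

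\paragraph{Where each variant fails.}
\begin{itemize}
\item In the first variant, you move a deviator $i$ out of $D=\Pi(i)$ with $|D|\ge 2$. This can leave members of $D$ without a friend. As you note yourself, $\sum_i(\text{friends of } i)$ is not monotone under such moves. So neither \IR\ nor termination is secured.
\item The ``cleaner alternative'' simply forbids members of $C_0$ from leaving. Forbidding a move does not remove the corresponding blocking tuple, so the result need not be \IS. You also give no polynomial-time way to obtain the initial \IS\ partition of $\mathcal{V}\setminus C_0$.
\item In the restricted dynamics (only singletons move, or whole coalitions merge), the case you leave open is exactly a blocking tuple $(i,B)$ with $|\Pi(i)|\ge 2$. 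Your remedy is to let $i$ leave and split the abandoned coalition into singletons to be reabsorbed. Splitting destroys the monotone potential on which your polynomial bound rests, so the process may cycle and no bound follows.
\end{itemize}

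\paragraph{The missing idea.} Resolve such a tuple by merging $B$ into $\Pi(i)$, not by moving $i$. The merged coalition is \IR\ for three reasons:
\begin{itemize}
\item By your own observation, every member of $B$ befriends $i$.
\item If $|\Pi(i)|\ge 2$, members of $\Pi(i)$ already have a friend there, because the current partition is \IR.
\item If $\Pi(i)=\{i\}$, then $i$ strictly preferring $B\cup\{i\}$ to $\{i\}$ forces $i$ to have a friend in $B$.
\end{itemize}
Coalitions only merge, so $x$ and $y$ stay together. Each step lowers the number of coalitions, so there are at most $n-1$ steps, each found in polynomial time. At termination there is no blocking tuple into a nonempty coalition, and \IR\ excludes deviations to $\emptyset$, so the partition is \IS.

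\paragraph{Comparison with the paper.} The paper instead builds the final partition in one shot. It forms one coalition $M$ from all non-singleton coalitions of $\Pi$ and all strongly connected components of size at least two. It then closes $M$ under adding every agent that can reach $M$, and puts everyone else in a singleton. A three-case check rules out blocking tuples:
\begin{itemize}
\item singleton to singleton needs mutual friendship, which would put both agents in a nontrivial strongly connected component;
\item singleton to $M$ needs a friend in $M$, so the agent can reach $M$;
\item an $M$-member to a singleton $\{j\}$ needs $j$ to befriend someone in $M$, so $j$ can reach $M$.
\end{itemize}
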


\begin{proof}[Proof sketch.]
  The idea is to start with an \IR\ partition where $\agx$ and~$\agy$ are in the same coalition, say~$C$, and merge it with all strongly connected components in~$\goodG$.
  We then repeatedly add agents to~$C$ whenever they can reach some agent in $C$.
  All remaining agents will be in their own singleton coalitions.
\end{proof}
\newcommand{\megacoal}{\ensuremath{M}}
\appendixproofwithstatement{obs:fri:ir:is}{\obsFriIRIS*}
{
Let $I,\agx,\agy,\Pi$ be as defined. %
The desired \IS\ partition $\Pi'$ will consist of a big coalition~$\megacoal$ with possibly some singleton coalitions.
\begin{compactenum}[Step 1:]
  \item Add the agents of all size-at-least-two coalitions in~$\Pi$ to~$\megacoal$.
  Formally, $\megacoal \coloneqq \bigcup_{\coal \in \Pi, |\coal| \geq 2} \coal$. %
  Since $\Pi(\agx) = \Pi(\agy)$, it holds that $\agx, \agy \in \megacoal$.
  \item For each strongly connected component~$C$ of size at least two in~$\goodG$,
  add the agents in~$C$ to~$\megacoal$. %
  \item Add all agents to $\megacoal$ that can reach some agent in~$\megacoal$.
\end{compactenum}
For every agent not in $\megacoal$, set his coalition as the singleton coalition under $\Pi'$.

Clearly by Step 1 it holds that $\Pi'(\agx) = \Pi'(\agy) = \megacoal$.

It remains to show that $\Pi'$ is \IS.
Let us first show that $\Pi'$ is \IR, i.e., no agent can prefer deviating to singleton coalition.
Clearly the agents in the singleton coalitions cannot deviate.
Now consider an agent $\ag i$ in $\megacoal$.
If $\ag i$ was added to $\megacoal$ in Step 1, then $|\Pi(\ag i)| \geq 2$.
This means that $\ag i$ must obtain at least one friend in $\Pi(\ag i)$, because otherwise~$\ag i$ would prefer $\{\ag i\}$ to $\Pi(\ag i)$, and hence $\Pi$ would not be \IR.
Since $\ag i$ obtains at least one friend in $\Pi(\ag i)$ and $\Pi(\ag i) \subseteq \megacoal$, he also obtains a friend under $\Pi'(\ag i) = \megacoal$, and thus prefers $\megacoal$ to $\{\ag i\}$.
If instead $\ag i$ was added to $\megacoal$ in Step 2, then $\ag i$ is part of a strongly connected component $C$ of size at least $2$. It must hold that $\ag i$ considers someone in $C$ a friend and $C \subseteq \megacoal$.
Thus, $\ag i$ obtains a friend in $\Pi'(\ag i) = \megacoal$, and thus prefers $\megacoal$ to $\{\ag i\}$. 

Next we must show that no agent, coalition pair $(\ag i, \coal)$, where $\ag i \in \agents, \coal \in \Pi'$ forms a blocking tuple.
We have three cases to consider:
\begin{compactitem}
\item[Case 1:] Agent~$\ag i$ is in a singleton coalition under $\Pi'$ and $\coal$ is a singleton coalition.
Let $\coal = \{\ag j\}$.
Since $\ag i$ prefers $\{\ag i, \ag j\}$ to $\{\ag i\}$, agent~$\ag j$ must be a friend of~$\ag i$.
Since $\ag j$ weakly prefers  $\{\ag i, \ag j\}$ to $\{\ag j\}$, agent $\ag i$ must be a friend of $\ag j$.
But then $\ag i$ and $\ag j$ are mutually reachable and they must be in the same strongly connected component of $\goodG$.
However, we did not add them to $\megacoal$ in Step 2, a contradiction.
\item[Case 2:] Agent~$\ag i$ is in a singleton coalition under $\Pi'$ and $\coal = \megacoal$.
Then $\ag i$ prefers $\{\ag i\} \cup \megacoal$ to $\ag i$ and thus must have a friend in $\megacoal$.
But then $\ag i$ can reach an agent in $\megacoal$, a contradiction to not adding $\ag i$ to $\megacoal$ in Step 3.
\item[Case 3:] Agent~$\ag i \in \megacoal$ and $\coal$ is a singleton coalition.
Let $\coal = \{\ag j\}$.
Since $\ag j$ weakly prefers  $\{\ag i, \ag j\}$ to $\{\ag j\}$, agent $\ag i$ must be a friend of $\ag j$.
But then $\ag j$ can reach an agent in $\megacoal$, a contradiction to not adding $\ag j$ to $\megacoal$ in Step 3.
\end{compactitem}
Since all cases lead to contradiction, it must be that $\Pi'$ is \IS, as required.
}

\mypara{Influence of the control goals on the complexity.}
The next result implies that enforcing the control goal~\NA\ or \PA\ will not help in lowering the complexity if the underlying problem of determining the existence of a stable partition is hard.
\begin{restatable}[\appsymb]{observation}{obsOriginalhardFollows}%
  For every $\M\in \{\AD$, $\FA\}$ and~$\S \in \{\IR$, $\NS, \IS, \CS\}$
  if it is \sigmatwop-hard (resp.\ \NP-hard)  to determine the existence of an $\S$-partition under preference representation~$\M$,
  then so is \controlprob{\S}{\G}{\A}{\M} for every control goal~$\G\in \{\NA, \PA\}$, control action $\A\in \{\AddAg, \DelAg\}$, even when $\budget = 0$.
  The same implication holds even when the preferences are symmetric.
\label{obs:originalhard-follows}
\end{restatable}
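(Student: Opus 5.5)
We prove this by a polynomial-time many-one reduction from the existence problem to each control problem with budget $\budget=0$. The idea is to attach to the given instance a small gadget that always achieves the control goal on its own and is completely independent of the rest. Then the gadget's goal is satisfiable exactly when the original instance admits an $\S$-partition.

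\textbf{Construction.} Let $(\agents,\cdot)$ be an instance of the existence question under model~$\M$. We add two fresh agents $\agx,\agy$ who like each other and are indifferent to everyone else. In \AD\ we set $\util{\agx}(\agy)=\util{\agy}(\agx)=1$ and all other utilities involving $\agx,\agy$ to $0$. In \FA\ the edges $(\agx,\agy)$ and $(\agy,\agx)$ are the only arcs incident to $\agx,\agy$. This construction keeps symmetry whenever the original instance is symmetric. We set $\agentsU=\agents\cup\{\agx,\agy\}$, take the pair $\{\agx,\agy\}$ (or the single agent~$\agx$ for \NA), and let $\budget=0$. For \AddAg\ we set $\agentsW=\emptyset$; for \DelAg\ the budget~$0$ forbids deletions. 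In both cases the question becomes: does $\agentsU$ admit an $\S$-partition with $\Pi(\agx)=\Pi(\agy)$, respectively $\Pi(\agx)\neq\{\agx\}$?

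\textbf{Forward direction.} Given an $\S$-partition $\Pi$ of $\agents$, we claim that $\Pi\cup\{\{\agx,\agy\}\}$ is an $\S$-partition of $\agentsU$. The agents $\agx,\agy$ each hold their unique friend (utility $1$, the maximum possible), so neither can improve and neither is part of any blocking coalition or blocking tuple. An original agent $\ag i$ evaluates a coalition $\bcoal\cup\{\ag i\}$ the same way whether or not it also contains $\agx,\agy$, with one exception in \FA: adding the enemies $\agx,\agy$ only makes the coalition worse for $\ag i$. Hence any deviation in the new instance toward $\{\agx,\agy\}$, or any blocking coalition containing $\agx$ or $\agy$, cannot exist. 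Any remaining deviation involves only original agents and would already exist in $\Pi$.

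\textbf{Backward direction and main obstacle.} The harder step is to show that an $\S$-partition $\Pi'$ of $\agentsU$ that meets the goal induces an $\S$-partition of $\agents$. We take $\Pi$ to be the restriction of $\Pi'$ to $\agents$, removing $\agx,\agy$ from their coalitions. In \AD, agents have utility $0$ toward $\agx,\agy$, so removing them changes no original agent's utility for any coalition. In \FA, removing the enemies $\agx,\agy$ strictly improves the coalitions they were in, for the original agents in them. We then check each concept against this restriction:
\begin{compactitem}[--]
\item \IR: values only weakly increase, so \IR\ is preserved.
\item \CS: a coalition blocking $\Pi$ would have to block $\Pi'$ as well, since each of its members' current value was no better under $\Pi'$.
\item \NS: an agent who wants to move to a coalition $\bcoal\setminus\{\agx,\agy\}$ would, in \AD, also want to move to $\bcoal$ in $\Pi'$, since the values are equal.
\item \NS\ in \FA: here the coalition $\bcoal$ in $\Pi'$ may contain the enemies $\agx,\agy$, so the target the agent sees in $\Pi'$ is worse than $\bcoal\setminus\{\agx,\agy\}$ and the argument above breaks.
\end{compactitem}
To remove this difficulty, we plan to make the gadget inert for \FA\ as well. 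The cleanest option is to first normalize $\Pi'$: moving $\agx,\agy$ into their own coalition $\{\agx,\agy\}$ keeps stability, by the forward-direction argument applied to the original agents, who only benefit. This gives the exact form $\Pi\cup\{\{\agx,\agy\}\}$. Alternatively, for \FA\ we could embed the problem via \cref{remark:FEN->Add}, or simply note that the \FA\ hardness sources already apply. We must also verify the \IS\ blocking-tuple case, where $\ag j$'s weak preference is likewise unaffected once $\agx,\agy$ sit alone.
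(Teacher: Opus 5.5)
Your \AD\ part is correct, and it is cleaner than the paper's version. Zero utilities in both directions make $x,y$ invisible: removing them from any coalition changes no original agent's valuation, and they weakly accept any newcomer, so all four concepts transfer in both directions.

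\textbf{The gap is the \FA\ backward direction for \NS.} This is the one \FA\ case the observation is actually used for. Because friend count dominates, $x$ and $y$ can sit in a coalition $A=A_0\cup\{x,y\}$ with $A_0\neq\emptyset$, since each still has a friend there. In such a coalition they act as two extra enemies that make $A$ less attractive to outsiders.

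Your normalization step is circular. The forward direction presupposes that the restriction of $\Pi'$ is already \NS, which is precisely what must be shown. Moreover, ``the original agents only benefit'' covers the members of $A_0$, but not an outsider, for whom $A_0$ is now strictly more attractive than $A$ was.

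The step genuinely fails. Let $f_1,f_2$ be mutual friends, let $g,c_1,c_2$ be pairwise mutual friends, and let $j$ have exactly the friends $f_1$ and $g$. Then $\Pi'=\{\{f_1,f_2,x,y\},\{j,g,c_1,c_2\}\}$ is \NS: agent $j$ has one friend and three non-friends where he is, versus one friend and four non-friends in $\{f_1,f_2,x,y,j\}$. After deleting $x,y$, however, $j$ strictly prefers joining $\{f_1,f_2\}$. This instance still has another \NS\ partition, so it refutes your step, not the statement.

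For \IS\ the shielding is even stronger, since $x,y$ veto every newcomer to $A$. That case is harmless only because \IS\ partitions always exist in \FA. Your fallbacks do not help either. The embedding of \FA\ into \AD\ gives hardness for \AD, not for \FA. Pointing to the specific \FA\ hardness source does not prove the general implication. The paper's one-line proof uses the same friend pair and does not address this case either.

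\textbf{A repair is to anchor the pair so it can never mix with original agents.} Add fresh friendless agents $w_x,w_y$, together with arcs $x\to w_x$ and $y\to w_y$, and no other arcs touching the gadget.
\begin{itemize}
\item Friendless agents are singletons in every \IR\ partition. So if $x$ is not alone, he must be with $y$.
\item If $|\Pi'(x)|\geq 3$, then $x$ has exactly one friend in $\Pi'(x)$. He therefore strictly prefers joining the existing coalition $\{w_x\}$, which contradicts \NS.
\item Hence the gadget occupies exactly $\{x,y\},\{w_x\},\{w_y\}$. Every original agent keeps the same coalition and the same options, so the restriction is \NS.
\item In the forward direction, $x$ is only indifferent between $\{x,y\}$ and $\{x,w_x\}$. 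No original agent has a friend in any gadget coalition, so nobody deviates.
\end{itemize}
This gadget is not symmetric, but symmetric \FA\ instances always admit an \NS\ partition. The symmetric part of the claim is therefore unaffected.
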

\appendixproofwithstatement{obs:originalhard-follows}{\obsOriginalhardFollows*}{
  Take any hardness reduction which shows that the existence of an $\S$-partition for $\M$ is \sigmatwop-hard (resp.\ \NP-hard).
  Add two additional agents $\{x,y\}$ to the instance of $\M$ who have positive utility towards each other and negative to all other agents (resp. in the friends-setting, agents $x,y$ are mutual friends and have no other friends).
  Now the reduction also holds for \controlprob{\S}{\G}{\A}{\M} for every control goal~$\G\in \{\NA, \PA\}$ by requiring that $\agx$ is not alone (resp. $\{\agx, y\}$ are paired).
}
\begin{proof}[Proof sketch.]
Add two additional agents $\{x,y\}$ who have positive utility towards each other and negative to all other agents. They will not affect the stability of the rest of the structure.
\end{proof}

The next result allows us to focus on the control goal~\PA\ when searching for efficient algorithms for the goal~\NA.
\begin{restatable}[\appsymb]{observation}{obsPAtoNA}
  For every $\M \in \{\AD, \FA\}$, $\S \in \{\IR, \NS$, $\IS, \CS\}$,
  and $\A\in \{\AddAg, \DelAg\}$, if \controlprob{\S}{\PA}{\A}{\M} is polynomial-time solvable, then so is \controlprob{\S}{\NA}{\A}{\M}.\label{obs:pa->na}
\end{restatable}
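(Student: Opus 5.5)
The idea is a polynomial-time Turing reduction from \NA\ to \PA: agent~$\agx$ is not alone in a partition~$\Pi$ if and only if there is some other agent~$\agy$ with $\Pi(\agx)=\Pi(\agy)$. Given an instance of \controlprob{\S}{\NA}{\A}{\M} with selected agent~$\agx$ and budget~$\budget$, I will loop over every candidate partner~$\agy$. Each call runs the polynomial-time algorithm for \controlprob{\S}{\PA}{\A}{\M} on the same game and the same budget, with selected pair~$\{\agx,\agy\}$. The algorithm answers yes if and only if at least one call answers yes.

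The range of~$\agy$ depends on the action. For \DelAg, $\agy$ ranges over $\agentsU\setminus\{\agx\}$. For \AddAg, the problem definition requires the pair to lie in~$\agentsU$, but the partner of~$\agx$ might be an added agent from~$\agentsW$. To handle this, I split the \AddAg\ case in two.

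\begin{itemize}
\item If $\agy\in\agentsU\setminus\{\agx\}$, I call the \PA\ algorithm directly.
\item If $\agy=w\in\agentsW$, I build a modified instance in which $w$ is moved into the original agent set, so the original set becomes $\agentsU\cup\{w\}$ and the additional set becomes $\agentsW\setminus\{w\}$. The budget drops to $\budget-1$, and this guess is skipped when $\budget=0$. I then call the \PA\ algorithm with pair~$\{\agx,w\}$.
\end{itemize}

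Correctness needs two directions, and both are routine unwinding of the definitions. For soundness, a yes-answer for some~$\agy$ gives a set of added or deleted agents within budget and an $\S$-partition with $\Pi(\agx)=\Pi(\agy)\ni\agy\neq\agx$, so $\Pi(\agx)\neq\{\agx\}$. In the moved-agent case the added set is $\agentsW'\cup\{w\}$, which has size at most~$\budget$. For completeness, take a witness for \NA\ and pick any $\agy\in\Pi(\agx)\setminus\{\agx\}$. Either $\agy$ is original, or it belongs to the added set~$\agentsW'$. In the second case the same partition witnesses the moved instance with added set $\agentsW'\setminus\{\agy\}$ and budget $\budget-1$.

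There are at most $|\agentsU|+|\agentsW|$ calls, each polynomial, so the total running time is polynomial. The only point that needs care is this \AddAg\ subtlety where the partner of~$\agx$ is an additional agent. The preferences, the stability concept~$\S$, and the model~$\M$ are untouched by moving~$w$, since only the partition of agents into original and additional changes. The argument is therefore uniform over all $\M$, $\S$, and $\A$.
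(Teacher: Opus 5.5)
Your proposal is correct and is essentially the paper's proof: try every candidate partner $a$ of $x$ and call the \PA\ algorithm on the pair $\{x,a\}$; when $a$ is an additional agent, move it into the original agent set and decrease the budget by one. Your explicit treatment of the case $k=0$ and of both correctness directions simply makes the same argument more careful.
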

\appendixproofwithstatement{obs:pa->na}{\obsPAtoNA*}{
  Let $(\agents = \agentsU \cup \agentsW, (\succeq_i)_{i \in \agents}, \{x\},k)$ be an instance of \controlprob{\S}{\NA}{\A}{\M}.
  If \controlprob{\S}{\PA}{\A}{\M} is polynomial-time solvable, we can try to pair the special agent $\agx$ with every other agent $a \in \agentsU \cup \agentsW$.
  We iteratively solve \controlprob{\S}{\PA}{\A}{\M} with $(\agents = \agentsU \cup \agentsW, (\succeq_i)_{i \in \agents}, \{x, a\},k)$, which is possible in polynomial-time.
  If $a \in \agentsW$, we additionally decrease the budget $\budget$ by one and move $a$ from $\agentsW$ to $\agentsU$.
  
  If for some pair $\{\agx, a^*\}$, \controlprob{\S}{\PA}{\A}{\M} is a \yes, then \controlprob{\S}{\NA}{\A}{\M} is also a \yes\ by taking the partition returned for the paired version.
  If, on the other hand, for every pair $\{\agx, a\}$, \controlprob{\S}{\PA}{\A}{\M} is a \no, then no \S\ partition exists in which $\agx$ is paired with another agent. 
  Hence, $\agx$ must be lone in every \S\ partition.
}
\begin{proof}[Proof sketch]
  We guess the other agent~$y \in (\agentsU \cup \agentsW$) with whom $x$ should be paired with. 
  If there is no $y$ such that \controlprob{\S}{\PA}{\A}{\M} is a \yes, $x$ will be always alone.
\end{proof}

Next, we state a well-known tight relation between the coalitions in a \CS\ partition and the strongly connected components of the friendship graph, which is very useful for the \PA\ goal and \CS\ stability.
\begin{restatable}[\cite{dimitrov2006simple,Woeginger_CoreSurvey},\appsymb]{observation}{obsFriendsPASCC}\label{obs:csstogether}
For every \FA-instance and two agents~$\ag{x}$ and $\ag{y}$, there is a \cst\ partition $\Pi$ such that $x$ and $y$ belong to the same coalition if and only if they belong to the same strongly connected component.
\end{restatable}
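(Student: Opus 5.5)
I plan to prove both directions separately. The tool throughout is the standard fact that, under \friends\ preferences, every agent strictly prefers any coalition containing more friends. For ``only if'', I take a \cst\ partition $\Pi$ with $\Pi(\agx)=\Pi(\agy)=C$ and show that every coalition $C$ of a \CS\ partition lies inside one strongly connected component of $\goodG$; in particular $\agx$ and $\agy$ then share a component.

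Suppose instead that $C$ meets two or more components. The components restricted to $C$ form a DAG, so I pick a component $S$ with $S\cap C\neq\emptyset$ that is a sink among them, meaning no arc leaves $S\cap C$ to $C\setminus S$. Set $B=S\cap C$. Every agent in $B$ has all its friends from $C$ inside $B$, so it keeps the same number of friends and strictly fewer enemies. Hence every agent of $B$ strictly prefers $B$ to $C$, and $B$ blocks $\Pi$.

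One detail needs care: if $|B|=1$, the blocking coalition is a singleton. This is still a valid block, because that agent has no friends in $C$ and would be better alone, which also contradicts \IR. So in every case \CS\ is violated, and the contradiction shows $C$ lies inside a single component.

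For ``if'', I build the partition greedily. Let $S^*$ be the strongly connected component containing $\agx$ and $\agy$. Put $S^*$ into $\Pi$ as one coalition, then repeatedly take a sink component of the remaining graph, add it as a coalition, and remove it. I would rather use the known construction from \cite{dimitrov2006simple}: ordering the components so that each is a sink in what remains, choosing components one at a time. The issue is that $S^*$ is not necessarily a sink at the moment it is chosen, so the main step is verifying stability when $S^*$ appears in the middle of this order. I expect this to be the main obstacle.

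To handle it, I would instead use the partition of all agents into their strongly connected components and prove directly that it is \CS. Suppose $B$ blocks it. Consider the component-DAG restricted to components meeting $B$ and choose a sink component $T$, so agents in $B\cap T$ have no friends in $B\setminus T$. I then take an agent $v\in B\cap T$ and compare $|N^+(v)\cap B|\le |N^+(v)\cap T|$.

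If $B\cap T\neq T$, strong connectivity gives some $v\in B\cap T$ with a friend in $T\setminus B$, so $v$ has strictly fewer friends in $B$ and does not prefer $B$, a contradiction. If $T\subseteq B$, then every $v\in T$ has equal friends in $B$ and in $T$. For $v$ to prefer $B$ it would need fewer enemies, but $B\supseteq T$ gives at least as many enemies, again a contradiction. The remaining case, where $T$ is a singleton without a self-loop, fits the same argument because that agent has zero friends. Therefore the component partition is \CS\ and keeps $\agx$ and $\agy$ together.
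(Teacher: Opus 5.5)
Your proof is correct and follows the paper's approach. The ``only if'' direction is the same sink-component blocking argument: you take a sink among the components of $\goodG$ meeting the coalition, while the paper takes a sink SCC of the induced subgraph $\goodG[\Pi(x)]$, and either choice works. For ``if'' you use the same SCC partition, but you prove its core stability directly by a second sink-component argument, whereas the paper simply cites Dimitrov et al.; the greedy detour at the start of that part can be dropped, since splitting off whole SCCs in any order yields exactly the SCC partition.
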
 
\appendixproofwithstatement{obs:csstogether}{\obsFriendsPASCC*}{
  We show the statement for the sake of completion.
  For the ``if'' part, assume that $\ag{x}$ and $\ag{y}$ are in the same strongly connected component (SCC).
  Since, by \citet{dimitrov2006simple}, there exists a \CS\ partition, where each coalition corresponds to an SCC in the graph,
  this \CS\ partition must include a coalition with $\ag{x}$ and $\ag{y}$ being together.

  For ``only if'' part, assume that $\Pi$ is a \CS\ partition with $\Pi(\ag{x})=\Pi(\ag{y})$.
  Suppose, for the sake of contradiction, that $\ag{x}$ and $\ag{y}$ do not belong to the same SCC.
  This implies that the friendship graph~$\goodG[\Pi(\ag{x})]$ induced by $\Pi(\ag{x})$ has more than one SCC.
  Let $C$ be a sink component in the topological ordering of the SCCs in $\goodG[\Pi(\ag{x})]$.
  Then, every agent in~$C$ has the same number of friends in~$C$ as in~$\Pi(\ag{x})$ and at least one enemy less in~$C$ than in~$\Pi(\ag{x})$,
  meaning that $C$ is blocking, a contradiction.
}

\mypara{Immune cases.} We close this section by summarizing cases when control is immune.
\begin{restatable}{proposition}{propIMMUNEDELAG}
  For each stability concept $\S \in \{\IR, \IS, \CS\}$ and each control goal $\G \in \{\NA,\PA\}$, \controlprob{\S}{\G}{\DelAg}{\FE} is immune.
Under symmetric preferences \controlprob{\NS}{\NA}{\DelAg}{\FE} and \controlprob{\NS}{\PA}{\DelAg}{\FE} are immune.
  \label{prop:IMMUNE-DELAG}
\end{restatable}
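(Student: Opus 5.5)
The plan is to prove immunity by contraposition. Suppose that after deleting some set $\agentsU'\subseteq\agentsU$ the remaining game on $\agentsU\setminus\agentsU'$ has an $\S$-partition $\Pi$ achieving the goal. I will lift $\Pi$ to an $\S$-partition $\Pi^*$ of the original game $\agentsU$ that still achieves the goal. The lifted partition then witnesses that the instance is a \yes\ already with $k=0$, so every \no\ with $k=0$ stays a \no\ for any budget. The key point is that in \FE\ the relevant existence questions are monotone in a way that deletion cannot exploit.

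For \IR\ and \IS\ with goal \PA, take $\Pi$ on $\agentsU\setminus\agentsU'$ with $\Pi(\agx)=\Pi(\agy)$. Extend it by putting every deleted agent in a singleton; the result is an \IR\ partition of $\agentsU$, since singletons never deviate and the other coalitions are unchanged. This already handles \IR. For \IS, first weaken the assumption to an \IR\ partition (\cref{obs:stability-relation}), extend it as above, and then apply \cref{obs:fri:ir:is} on the full instance. This yields an \IS\ partition of $\agentsU$ with $\agx,\agy$ together. For \NA, $\Pi(\agx)$ contains some agent $a\neq\agx$ that was not deleted, so the same argument with the pair $\{\agx,a\}$ gives the claim.

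For \CS, use \cref{obs:csstogether}. If $\agx,\agy$ share a coalition in a \CS\ partition of the reduced game, then they lie in the same SCC of $\goodG[\agentsU\setminus\agentsU']$. This SCC is contained in an SCC of $\goodG$, since deleting vertices only destroys paths. Applying \cref{obs:csstogether} to the original instance then gives the desired \CS\ partition. For \NA\ with \CS, I again pick a partner $a\in\Pi(\agx)$ and argue in the same way.

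For symmetric \NS, symmetric friendship means \goodG\ is undirected. The partition $\Pi^*$ whose coalitions are the connected components of \goodG\ should be \NS. Each agent already has all of its friends in its component, and each enemy is absent from it. An agent can only improve by gaining friends, which is impossible since other components contain none of its friends. It can also improve by losing enemies at equal friend count, but moving to another coalition or to a singleton loses at least one friend unless the agent has no friends, in which case it is isolated and already alone. So $\Pi^*$ is \NS. If $\agx$ and $\agy$ (or $\agx$ and a partner $a$) are together in an \NS\ partition of the reduced game, I need them to be connected there. This holds because within each coalition of an \IR\ (hence \NS) partition, every agent has a friend in its coalition. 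The obstacle is that this alone does not give connectivity of the whole coalition. To fix it, I will show that an \NS\ coalition splits into its connected components without loss: take the component $K$ of $\agx$ inside $\Pi(\agx)$, and note that $\agy$ has a friend in its coalition. A cleaner route is to argue directly that if $\agx$ and $\agy$ are in different components of \goodG, then any agent $v$ whose coalition contains members of several components is strictly better off in a singleton-free restriction. Concretely, a smallest-enemy argument shows that $v$ would prefer moving to the coalition of its component pieces. I expect this connectivity step to be the main difficulty, and I would first try proving it by picking a component piece with the fewest enemies. Once connectivity in the reduced graph is established, connectivity in \goodG\ follows immediately, and $\Pi^*$ places $\agx$ and $\agy$ together.
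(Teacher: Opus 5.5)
Your arguments for \IR, \IS\ and \CS\ (both goals) match the paper's. You re-add the deleted agents as singletons, lift to \IS\ via \cref{obs:fri:ir:is}, and for \CS\ use \cref{obs:csstogether} together with the fact that deleting vertices cannot merge strongly connected components. Your component partition also settles symmetric \NS\ with goal \NA, provided you take the partner of $x$ to be a friend of $x$, which individual rationality supplies. (Small slip: a component need not be a clique, so an agent can have enemies in its own component. Your \NS\ check never uses this claim, so it is harmless.)

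The gap is symmetric \NS\ with goal \PA. There your witness only works if $x$ and $y$ lie in the same connected component, and the connectivity step you single out as the main difficulty is false. Take two disjoint mutual-friend pairs $\{x,a\}$ and $\{y,c\}$ and no other friendships. The grand coalition $\{x,a,y,c\}$ is \NS: each agent keeps its only friend, there is no other existing coalition to join, and leaving would drop it to zero friends. So $x$ and $y$ share a coalition in an \NS\ partition, yet they are in different components. This also shows why no smallest-enemy argument can help. Under \NS\ an agent may only move to an existing coalition of the partition or to the empty set, and the part of its coalition lying in its own component is not such a coalition.

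The missing idea is a coarser witness that needs no connectivity. Put all agents with at least one friend into one coalition $M$ and leave the friendless agents as singletons. By symmetry, every friend of a non-isolated agent is itself non-isolated. So each member of $M$ already has all of its friends and would drop to zero friends by moving anywhere, while friendless agents are best off alone; hence this partition is \NS. It places $x$ and $y$ together as soon as each has a friend in $\goodG$. That follows from the reduced \NS\ (hence \IR) partition: every member of a non-singleton coalition has a friend there, and friendships are unaffected by re-adding the deleted agents. This is exactly the paper's argument.
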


\begin{proof}
First observe that if it is possible to delete some agents to have an \IR-partition such that $x$ is not alone (resp.\ $x$ and $y$ are in the same coalition), then adding back the deleted agents each as a singleton coalition would be an \IR-partition with the same goal as well.
By \cref{obs:fri:ir:is}, this result extends to \IS\ as well.

By \cref{obs:csstogether}, if it is possible to delete some agents to have a \CS-partition such that $x$ is not alone (resp.\ $x$ and~$y$ are in the same coalition), then $x$ must be part of a connected component of size at least two (resp.\ $x$ and~$y$ are part of the same strongly connected component) in the friendship graph of the remaining agents.
If the agents are connected, they are also connected after adding back the agents.
By \cref{obs:csstogether}, there is then a \CS\ partition of the original agent set such that $x$ is not alone (resp.\ $x$ and $y$ are together).

If it is possible to delete some agents to have a \NS\ partition such that $x$ is not alone (resp.\ $x$ and~$y$ are in the same coalition), then $x$ must have a friend (resp. $x$ and $y$ both have a friend).
Let us now add back all the removed agents.
Consider the partition in which all agents with degree at least one are placed in the same big coalition, while all remaining agents form singleton coalitions.
This partition is \NS.
Since preferences are symmetric, no agent in the big coalition is friends with the singletons, and since every agent in the big coalition has a friend within that coalition, he prefers it to being alone.
Since $x$ has a friend (resp.\ $x$ and $y$ have a friend) he (resp.\ they) must be in the big coalition.
\end{proof}

DAGs in \FE\ are very restrictive.
The \IR\ partition is unique and no two agents are together in it.

\begin{restatable}[\appsymb]{observation}{obsDAGFENO}
  For \FE-instances with acyclic friendship graph, the only partition that may be stable for $S \in \{\IR, \IS, \NS, \CS\}$ consists of every agent being in a singleton coalition.
  \label{obs:DAG-FE-NO}
\end{restatable}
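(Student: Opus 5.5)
By \cref{obs:stability-relation}, every \NS\ or \IS\ or \CS\ partition is \IR. It therefore suffices to show that in an acyclic friendship graph the only \IR\ partition is the all-singletons partition. The statement says ``may be stable'', so for the stronger concepts we only need that no other partition qualifies.

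Take any partition~$\Pi$ that contains a coalition~$\coal$ with $|\coal|\ge 2$. Since $\goodG$ is acyclic, so is the induced subgraph $\goodG[\coal]$. Hence $\goodG[\coal]$ has a sink $t\in \coal$, that is, an agent with no out-neighbour inside~$\coal$. Then $|N^+_{\goodG}(t)\cap \coal| = 0 = |N^+_{\goodG}(t)\cap\{t\}|$, so $t$ has equally many friends in $\coal$ and in $\{t\}$. On the other hand, $|\coal\setminus N^+_{\goodG}(t)| = |\coal| \ge 2 > 1 = |\{t\}\setminus N^+_{\goodG}(t)|$, so $t$ has strictly more enemies in $\coal$ (counting $t$ itself, consistent with the definition of \FA). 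By the definition of friend-oriented preferences, $t$ strictly prefers $\{t\}$ to $\coal=\Pi(t)$. This violates \IR.

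Consequently every \IR\ partition, and hence every \IS, \NS, or \CS\ partition, consists only of singletons. The only step needing care is that the enemy count is computed consistently. Whether or not $t$ is counted in both sets, the difference is at least one, so the strict preference for $\{t\}$ holds either way. As a side remark (not required by the statement), the all-singletons partition is in fact \NS\ and \CS\ as well, by the same sink argument applied to any blocking coalition or to any target coalition of a deviating agent.
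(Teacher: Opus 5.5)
Your proof is correct and is essentially the paper's argument: in any coalition $\coal$ with $|\coal|\ge 2$, a sink of $\goodG[\coal]$ strictly prefers being alone, so only the all-singletons partition is \IR, and \IS, \NS, and \CS\ each imply \IR. Your side remark is wrong for \NS, however: if $\goodG$ has any arc $(\ag{i},\ag{j})$, then in the all-singletons partition agent~$\ag{i}$ strictly prefers $\{\ag{i},\ag{j}\}$ to $\{\ag{i}\}$, and \NS\ does not require $\ag{j}$'s consent, so the sink argument gives nothing there (it does work for \IS\ and \CS), which is exactly why the statement only says ``may be stable''.
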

\appendixproofwithstatement{obs:DAG-FE-NO}{\obsDAGFENO*}{
    For each coalition of size at least two, there is always an agent who is a sink in the induced subgraph.
    This sink agent will prefer to be alone.
    Hence, the only \IR\ partition is to have every agent alone.
}

\begin{restatable}{proposition}{propADIRIMMUNEDELAG}
  For control goal~$\G \in \{\NA,\PA\}$, \controlprob{\IR}{\G}{\DelAg}{\AD} is immune.
  \label{prop:AD-IR-IMMUNE-DELAG}
\end{restatable}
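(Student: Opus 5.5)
The plan is to argue directly from the definition of immunity (\cref{def:immune+never}). It suffices to show the contrapositive: if some deletion set~$\agentsU' \subseteq \agentsU$, of arbitrary size, yields an \IR\ partition of $\agentsU \setminus \agentsU'$ achieving the goal, then the original instance with $\budget = 0$ is already a \yes. This is the same re-insertion argument as in the first paragraph of the proof of \cref{prop:IMMUNE-DELAG}. The only difference is that it is now carried out with additive utilities, so I would redo it explicitly rather than cite the \FA-specific lemmas.

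Concretely, let $\Pi$ be an \IR\ partition of $\agentsU\setminus\agentsU'$ with $\Pi(\agx)\neq\{\agx\}$ (resp.\ $\Pi(\agx)=\Pi(\agy)$). Define $\Pi^\ast \coloneqq \Pi \cup \{\{\ag d\} \mid \ag d \in \agentsU'\}$, a partition of~$\agentsU$. Then I check \IR\ for $\Pi^\ast$ agent by agent.

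\begin{compactitem}[--]
\item For $\ag d \in \agentsU'$, the agent is in a singleton, so it has utility $\util{\ag d}(\ag d)=0$ and trivially does not prefer $\{\ag d\}$ to $\Pi^\ast(\ag d)$.
\item For $\ag i \in \agentsU\setminus\agentsU'$, we have $\Pi^\ast(\ag i)=\Pi(\ag i)$. Since additive utility of a coalition depends only on its members, $\sum_{j\in\Pi(\ag i)}\util{\ag i}(\ag j)\ge 0$ holds in the original game exactly as in the reduced one. This inequality is precisely \IR\ of $\Pi$.
\end{compactitem}

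The goal is preserved because coalitions of $\agx$ (and $\agy$) are unchanged. Hence $\Pi^\ast$ witnesses a \yes\ with $\budget=0$, and every \no\ with $\budget=0$ stays a \no\ for $\budget=\infty$.

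There is no real obstacle. The one point to state carefully is that deleting agents alters no remaining agent's utility for its own coalition, which holds because preferences in \AD\ are separable and depend only on coalition members. This is also why the argument does not extend to \IS/\NS/\CS: there the re-inserted singletons can create new deviations or blocking coalitions, consistent with the table.
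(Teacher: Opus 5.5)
Your proposal is correct and is essentially the paper's own proof: re-insert the deleted agents as singleton coalitions. The remaining agents keep their coalitions and hence their additive utilities, so the augmented partition is \IR\ with $\budget=0$ and still meets the goal.
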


\begin{proof}
  If it would be possible to delete some agents to have an \IR-partition such that $x$ is not alone (resp.\ $x$ and $y$ are in the same coalition),
  then adding back the deleted agents each as a singleton coalition would be an \IR-partition with the same goal as well.
\end{proof}

\section{Friend-Oriented Preferences}
\label{sec:friends}

In this section, we discuss the results relating to \FE.
For the control goals \NA\ and \PA, we discover that \IR, \IS, and \CS\ are all polynomial-time solvable for the control action \AddAg: These stability concepts require certain friendship structures and finding a minimum number of agents to add to realize those can be done in polynomial time.
In \cref{prop:IMMUNE-DELAG} we saw that these stability concepts were also immune to the control action \DelAg.

For \NS, on the other hand, \controlprob{\NS}{\G}{\A}{\FE} is \NP-hard for every $\G \in \{ \NA, \PA\}, \A \in \{\AddAg, \DelAg\}$ even when $\budget = 0$ by a result from \cite{Brandt24_AI_NashStable}.
We however discover that symmetric preferences make the problems tractable.

Finally, we look into enforcing that the grand coalition partition is stable (\GR) and see that regardless of the stability concept, obtaining grand coalition through agent deletion is computationally easier than through agent addition.

We start the section by presenting an algorithm that solves \controlprob{\IR}{\PA}{\AddAg}{\FE}.

\begin{restatable}[\appsymb]{theorem}{thmFEIRNAADD}
For each stability concept $\S \in \{\IR, \IS\}$ and each control goal $\G \in \{ \NA, \PA\}$, \controlprob{\S}{\G}{\AddAg}{\FE} is polynomial-time solvable.
\label{thm:FE-IR-NA-ADD}
\end{restatable}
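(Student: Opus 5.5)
By \cref{obs:pa->na} it suffices to handle \PA. By \cref{obs:fri:ir:is}, for any fixed agent set an \IR\ partition with $x,y$ together exists iff an \IS\ one does, so the \IS\ case reduces to \IR. The plan is therefore to give a polynomial-time algorithm for \controlprob{\IR}{\PA}{\AddAg}{\FE}.

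\textbf{Structural characterization.} First, characterize when an agent set $\agentsU\cup\agentsW'$ admits an \IR\ partition with $\Pi(\agx)=\Pi(\agy)$. In \FE, a coalition is \IR\ exactly when every member has at least one friend inside it. Each agent contributes a friend, and in a finite set this forces a cycle. If such a coalition $C\ni\agx,\agy$ exists, then in $\goodG[C]$ following a friend arc from each of $\agx$ and $\agy$ gives walks that eventually enter a cycle. Conversely, take a set $C$ containing $\agx$, $\agy$, a directed path from $\agx$ to some cycle, a directed path from $\agy$ to some cycle, and those cycles. Every vertex on these structures has an out-neighbour in $C$, so $\{C\}$ together with singletons is \IR. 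The goal becomes: choose a minimum number of agents from $\agentsW$ so that the subgraph on $\agentsU\cup\agentsW'$ contains such a ``two paths into cycles'' structure. If $\agx$ already has an out-neighbour route, that route may even go through $\agy$, or the reverse.

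\textbf{Reduction to weighted shortest paths.} Give each vertex of $\agentsW$ weight~$1$ and each vertex of $\agentsU$ weight~$0$, and split vertices into arcs in the standard way to get an arc-weighted digraph. For every vertex $v$, compute $c(v)$, the minimum weight of a directed cycle through $v$: take the shortest path from $v$ back to $v$ with Dijkstra/BFS on $0/1$ weights. From this, compute $d(u)$, the minimum weight of a walk from $u$ to some vertex $v$ plus $c(v)$. This is one more shortest-path computation after attaching a super-sink reached from $v$ with cost $c(v)$.

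A solution of cost at most $\budget$ exists iff $d(\agx)+d(\agy)\le\budget$, or a shared structure is cheaper. The paths of $\agx$ and $\agy$ may merge at a vertex $z$ and then share the route to a cycle. To account for this, also take the minimum over $z$ of $\mathrm{dist}(\agx,z)+\mathrm{dist}(\agy,z)+d(z)$, minus double counting of $z$. Sharing can only save cost when the paths meet, and a minimal union of two such walks can always be reshaped to merge at a first common vertex. Hence the minimum over $z$ of this expression, including $z=\agx$ or $z=\agy$, gives the exact optimum. All computations are polynomial.

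\textbf{Main obstacle.} The delicate step is the correctness of this cost formula. A set $W'$ counted along two overlapping paths should be charged once. Conversely, any optimal feasible $W'$ must be shown to contain a ``merge at $z$'' structure whose cost is at most $|W'|$. I would argue this by taking an inclusion-minimal feasible coalition, which is a union of two out-paths ending in cycles. Then take the first vertex $z$ where the $\agy$-path hits the $\agx$-structure; if it hits a cycle, that vertex serves as $z$, with $d(z)$ covering the cycle. Counting vertices of weight $1$ then shows the formula's value is at most $|W'|$. Soundness holds because the union of the chosen paths and cycle is a valid \IR\ coalition.
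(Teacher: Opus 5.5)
Your reductions (\IS\ to \IR\ via \cref{obs:fri:ir:is}, and \NA\ to \PA\ via \cref{obs:pa->na}) are fine, and so is your soundness direction. The cost formula, however, is incomplete. The completeness argument fails exactly at ``if it hits a cycle, that vertex serves as $z$, with $d(z)$ covering the cycle.''

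Suppose the $\agx$-walk enters its cycle at $\hat{x}$, and the $\agy$-walk first hits this cycle at a different vertex $\hat{y}$. This includes the case that $\agy$ itself lies on the cycle.
\begin{itemize}
\item With $z=\hat{y}$, the bound on $\mathrm{dist}(\agx,\hat{y})$ uses the cycle segment from $\hat{x}$ to $\hat{y}$, and $d(\hat{y})\le c(\hat{y})$ pays for the whole cycle again.
\item With $z=\hat{x}$, you instead pay twice for the segment from $\hat{y}$ to $\hat{x}$.
\end{itemize}
So the value of your formula can exceed $|\agentsW'|$.

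Here is a concrete counterexample. Take $\agentsU=\{\agx,\agy\}$, $\agentsW=\{w_1,w_2\}$, friendship arcs $\agx\to w_1\to\agy\to w_2\to\agx$ and nothing else, and $\budget=2$. Adding $w_1,w_2$ gives the \IR\ coalition $\{\agx,w_1,\agy,w_2\}$, so this is a yes-instance. The only cycle has weight $2$, and your candidates evaluate as follows:
\begin{itemize}
\item $z=\agx$ gives $0+1+2$;
\item $z=\agy$ gives $1+0+2$;
\item $z=w_i$ gives at least $3$, even after removing the repeated weight of $z$;
\item $d(\agx)+d(\agy)=4$.
\end{itemize}
Every candidate is at least $3>\budget$, so your algorithm answers no.

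The missing ingredient is a third structure. Both walks end on one common cycle, entering it at distinct vertices $\hat{x}\neq\hat{y}$, and the cycle is paid for exactly once as two complementary segments. You must also test, over all pairs $\hat{x}\neq\hat{y}$, whether $\mathrm{dist}(\agx,\hat{x})+\mathrm{dist}(\agy,\hat{y})+\mathrm{dist}(\hat{x},\hat{y})+\mathrm{dist}(\hat{y},\hat{x})\le\budget$. This is the paper's second check in \cref{alg:IRPAADDFE}, the middle structure of \cref{fig:FE-IR-NA-ADD}; in the example it gives $0+0+1+1=2$.

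With this term added, your cases line up with the paper's three checks:
\begin{itemize}
\item the $\agy$-walk hits the $\agx$-tail or $\hat{x}$ itself: your merge formula;
\item it hits the cycle elsewhere: the new term;
\item it never hits: $d(\agx)+d(\agy)$.
\end{itemize}
Completeness then follows as in the paper. Split the coalition's walks into pairwise disjoint path and cycle pieces; the weight of each piece bounds the corresponding term, so the total is at most $|\agentsW'|$.
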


\newcommand{\wpath}{\ensuremath{\weight^P}}
\newcommand{\wcycle}{\ensuremath{\weight^C}}
\newcommand{\apath}{\ensuremath{P}}
\newcommand{\acycle}{\ensuremath{C}}

\appendixproofwithstatementsketch{thm:FE-IR-NA-ADD}{\thmFEIRNAADD*}{

\begin{figure}
  \begin{tikzpicture}[>=stealth',shorten <= 1pt, shorten >= 1pt, yscale=0.8]
  \def \xx {1.3}
  \def \xy {1}
  
  \begin{scope}
  \foreach \x / \y / \n in {
  	3.7/4/x,
  	4/3/i,
  	4.4/4/y,
  	3.5/2/j,
        4.6/2/dummy} {
    \node[unode] at (\x*\xx, \y*\xy) (\n) {};
  }

  \foreach \i / \p / \a / \r / \n in {%
    x/above left/-5pt/2pt/x,
    i/below right/-5pt/2pt/\hat{y},
    y/above right/-5pt/2pt/y,
    j/below left/0/0/\hat{x}}%
 {
     \node[\p = \a and \r of \i,fill=white,inner sep=0.1pt] {$\n$};
   }

  \begin{pgfonlayer}{bg}
    \foreach \s / \t / \aa  / \typ in {
      x/i/20/fc,y/i/20/fc,i/j/20/fc,j/dummy/-60/fc, dummy/j/-60/fc}
    {
      \draw[->, \typ] (\s) edge[bend right = \aa]  (\t);
    }

  \end{pgfonlayer}
  \end{scope}
  
  \begin{scope}[xshift=2.4cm]
      \foreach \x / \y / \n in {
  	3.7/4/x,
  	3.7/2/i,
  	4.7/4/y,
  	4.7/2/j} {
    \node[unode] at (\x*\xx, \y*\xy) (\n) {};
  }

  \foreach \i / \p / \a / \r / \n in {%
    x/above left/-5pt/2pt/x,
    i/below left/0/0/\hat{x},
     y/above right/-5pt/2pt/y,
    j/below right/0/0/\hat{y}}%
 {
     \node[\p = \a and \r of \i,fill=white,inner sep=0.1pt] {$\n$};
   }

  \begin{pgfonlayer}{bg}
    \foreach \s / \t / \aa  / \typ in {
    x/i/20/fc,y/j/-20/fc,i/j/-60/fc,j/i/-60/fc}
  {
    \draw[->, \typ] (\s) edge[bend right = \aa]  (\t);
  }

  \end{pgfonlayer}
  \end{scope}
  
  \begin{scope}[xshift=4.8cm]
  \foreach \x / \y / \n in {
  	4/4/x,
  	5/3.2/i, 6/3.2/dummy} {
    \node[unode] at (\x*\xx, \y*\xy) (\n) {};
  }

  \foreach \i / \p / \a / \r / \n in {%
     x/above left/-5pt/2pt/x,
    i/above left/0/-5pt/\hat{x}}%
 {
     \node[\p = \a and \r of \i,fill=white,inner sep=0.1pt] {$\n$};
   }

  \begin{pgfonlayer}{bg}
    \foreach \s / \t / \aa  / \typ in {
    x/i/20/fc, i/dummy/-60/fc, dummy/i/-60/fc}
  {
    \draw[->, \typ] (\s) edge[bend right = \aa]  (\t);
  }

  \end{pgfonlayer}
  
  \begin{scope}[yshift=-1.2cm]
  
   \foreach \x / \y / \n in {
     4/4/y,
     5/3.2/j, 6/3.2/dummy} {
     \node[unode] at (\x*\xx, \y*\xy) (\n) {};
  }

  \foreach \i / \p / \a / \r / \n in {%
     y/above left/-5pt/2pt/y,
     j/above left/0/-5pt/\hat{y}}%
   {
     \node[\p = \a and \r of \i,fill=white,inner sep=0.1pt] {$\n$};
   }

  \begin{pgfonlayer}{bg}
    \foreach \s / \t / \aa  / \typ in {
    y/j/20/fc, j/dummy/-60/fc, dummy/j/-60/fc}
  {
    \draw[->, \typ] (\s) edge[bend right = \aa]  (\t);
  }
 
  \end{pgfonlayer}
  
  \end{scope}
  \end{scope}
\end{tikzpicture}
\caption{The three possible substructures for an \IR\ coalition containing $\agx$ and $\agy$. The arcs indicate paths of varied lengths.
    Structure on left corresponds to \lineref{check1}, middle to \lineref{check2}, and right to \lineref{check3}.}\label{fig:FE-IR-NA-ADD}
\end{figure}
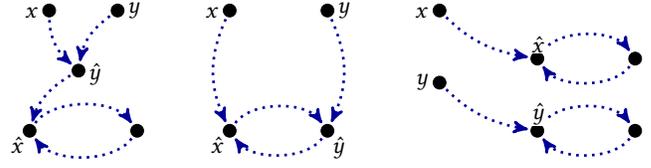

\begin{algorithm}
\caption{Algorithm for \controlprob{\IR}{\pa}{\AddAg}{\FE}. \label{alg:IRPAADDFE}}
\KwIn{%
  $(\agents = \agentsU \cup \agentsW, \goodG = (\agentsU \cup \agentsW, \mathcal{A}), \agx, \agy, k)$}
\lForEach{$(i, j) \in \mathcal{A}$}{$\weight((\ag i, \ag j)) \gets \begin{cases}1, & \text{ if } \ag i \in \agentsW \\ 0, & \text{ otherwise}. \end{cases}$}

\ForEach{$\ag i, \ag j \in \agents$}{$\wpath(\ag i, \ag j) \gets$ weight of a min-weight path from $\ag i$ to $\ag j$.}

\lForEach{$\ag i \in \agents$}{$\wcycle(\ag i) \gets$ weight of a min-weight non-trivial cycle containing $\ag i$.}

\ForEach{$\hat\agx, \hat\agy \in \agents$}{ 
  \If{$\wpath(\agx, \hat\agy) + \wpath(\agy, \hat\agy) + \wpath(\hat\agy, \hat\agx) + \wcycle(\hat\agx) \leq \budget$\label{check1}}{\Return \KwYes}
  
  \lIf{$\wpath(\agx, \hat\agx) + \wpath(\agy, \hat\agy) + \wpath(\hat\agx, \hat\agy) + \wpath(\hat\agy, \hat\agx) \leq \budget$ and $\hat\agx \neq\hat\agy$\label{check2}}{\Return \KwYes}

  \If{ $\wpath(\agx, \hat\agx) + \wpath(\agy, \hat\agy) + \wcycle(\hat\agx) + \wcycle(\hat\agy) \leq \budget$ \label{check3}}{\Return \KwYes}
}
\Return \KwNo

\end{algorithm}

The idea is that a coalition containing both $\agx$ and $\agy$, which is part of an \IR\ partition, must contain one of three possible structures, which we illustrate in \cref{fig:FE-IR-NA-ADD}.
Any agent who is part of an \IR\ partition must be part of a ``$\rho$-shaped'' subgraph~\cite[Chapter 14]{Galbraith2012MPKC}, that is, a graph that consists of a directed cycle of length at least two and a path (possibly of length zero) that reaches a vertex in it.
This is a consequence of every agent obtaining a friend in an $\IR$ coalition of size at least two: If we start from an agent and follow a friendship path, we must eventually encounter an agent we  have already seen.
Since $\agx$ and $\agy$ are both a part of an $\IR$ coalition, they both must be a part of a ``$\rho$-shaped'' subgraph.
There are three ways in which they can intersect, as shown in \cref{fig:FE-IR-NA-ADD}.
We show that these structures can be found in polynomial time by combining with finding minimum-weight paths.
The algorithm is in \cref{alg:IRPAADDFE}.

In the algorithm we construct an arc-weight function $\weight$ for~$\goodG$ based on whether an arc starts with an agent in $\agentsW$ or not.
We compute the all-pair minimum-weight paths on $(\goodG, \weight)$ in polynomial time using Floyd-Warshall~\cite{Floyd_Warshall} or some other algorithm. 
While running the algorithm, we also store, for each agent, an extra entry that records the minimum-weight path from the agent to himself of length at least one, i.e., the minimum-weight non-trivial cycle containing the agent.
Then, we use those paths and cycles to search for the structures shown in \cref{fig:FE-IR-NA-ADD}.

By \cref{obs:fri:ir:is}, this algorithm can also be used to solve \controlprob{\IS}{\PA}{\AddAg}{\FE}.  
By \cref{obs:pa->na}, we then can solve the remaining  \controlprob{\IR}{\NA}{\AddAg}{\FE} and \controlprob{\IS}{\NA}{\AddAg}{\FE}. 
However, the latter two problems can be solved more efficiently by searching for a single ``$\rho$-shaped'' subgraph containing $\agx$.
}{ 
Given two agents $\ag i, \ag j \in \agents$, we use $\wpath(\ag i, \ag j)$ to denote the weight of the minimum-weight path from $\ag i$ to $\ag j$ and $\wcycle(\ag i)$ to denote the weight of the minimum-weight non-trivial cycle containing $\ag i$.
Similarly, let $\apath(\ag i, \ag j)$ denote the set of agents on a minimum-weight path from $\ag i$ to $\ag j$ and $\acycle(\ag i)$ denote the set of agents on a minimum-weight cycle containing $\ag i$, as discovered by the minimum-weight paths algorithm.

Let us first show that if the \cref{alg:IRPAADDFE} returns \KwYes, then $I$ is indeed a \yes for \FA-\IR-\PA-\AddAg.
Since the algorithm returns yes, there are $\ag i \in \agents, \ag j \in \agents$ such that one of the checks was satisfied.

First assume \lineref{check1} returns \KwYes.
Consider the coalition $\coal$ that consists of the agents on $\apath(\agx, \hat\agy) + \apath(\agy, \hat\agy) + \apath(\hat\agy, \hat\agx) + \acycle(\hat\agx)$.
Observe that every agent in this coalition is contained in a path that does not end on him, or in a cycle.
Thus, every agent has an out-arc in~$\goodG[\coal]$ and thus obtains a friend.
Hence no agent wishes to deviate from~$\coal$.

Next observe that $|\coal \cap \agentsW| \leq \budget$: 
Every agent $\ag w \in \coal \cap \agentsW$ is the non-last agent on a path or is contained in a cycle, and hence the respective path or cycle contains an arc starting from~$\ag w$. This arc contributes one to the sum on \lineref{check1}.

Clearly $\agx, \agy \in \coal$.
We can form an \IR\ partition $\Pi$ such that $\coal \in \Pi$ by assigning all the other agents in singleton coalitions.

Through similar reasoning, we obtain an \IR\ partition if \lineref{check2} or \lineref{check3} returns \KwYes.\\

Now assume that $I$ is a \yes\ of \controlprob{\IR}{\PA}{\AddAg}{\FE}.
Let $\sss \subseteq \agentsW$ such that $|\sss| \leq \budget$ and 
let $\Pi$ be an \IR\ partition of $\agents \cup \sss$ such that $\Pi(\agx) = \Pi(\agy)$ and let $\coal \coloneqq \Pi(\agx) = \Pi(\agy)$.
Clearly $|\coal| \geq 2$.
Thus, every agent in $\coal$ must obtain a friend.

Construct a sequence $(\agx, \agx_1, \dots, \agx_{t})$ by starting from $\agx$, picking an arbitrary friend $\agx_1$ of $\agx$ from $\coal$, picking an arbitrary friend $\agx_2$ of $\agx_1$ and so on until. We stop when we reach an agent $\agx_t$ who has a friend $\{\agx, \agx_1, \dots, \agx_{t - 1}\}$.
Let us call this friend $\hat{\agx}$.
Since $\coal$ is finite, we must reach such an agent eventually.

Now construct a sequence $(\agy, \agy_1, \dots, \agy_{s})$ similarly, except instead of stopping when we reach an agent~$\agy_{s}$ with a friend in~$(\agy, \agy_1, \dots, \agy_{s-1})$, we may also stop earlier if we encounter an agent~$\agy_s$ who has a friend among $\{\agx, \agx_1, \dots, \agx_{t}\}$. 
Similarly to the previous case, we use $\hat{\agy}$ to refer to the friend of~$\agy_s$.
This agent~$\hat{\agy}$ is either $\{\agx, \agx_1, \dots, \agx_{t}\}$ or $\{\agy, \agy_1, \dots, \agy_{s - 1}\}$.
Note that it is possible that $\agy \in \{\agx, \agx_1, \dots, \agx_{t}\}$, in which case the sequence is empty and $\hat{\agy} = \agy$.
Moreover, if $\agy$ has a friend among $(\agx, \agx_1, \dots, \agx_{t})$, then the sequence is just $(\agy)$.

By construction, all the agents in these two sequences are pairwise disjoint, and they are all in $\coal$. 

We proceed in three possible cases regarding the identity of $\hat{\agy}$.
\begin{compactitem}
  \item[Case 1] Agent~$\hat{\agy}$ is before $\hat{\agx}$ on the sequence $\agx, \agx_1, \dots, \agx_{t}$ or $\hat{\agy} = \hat{\agx}$.
First consider the set of agents in $\agx, \agx_1, \dots, \agx_{t}$ who precede $\hat{\agy}$.
Let us call this set $X_1$. 
We observe that $\wpath(\agx, \hat{\agy}) \leq |X_1 \cap \sss|$: By construction, the agents in $X_1 \cup \{\hat{\agy}\}$ form a path from $\agx$ to $\hat{\agy}$ and the weight of this path is the number of arcs whose first element is in $\agentsW$.
Note that if $X_1 = \emptyset$, then $\hat{\agy} = \agx$ and both sizes of the inequality are zero.

Similarly, consider the set of agents $Y$ that are in the sequence  $(\agy, \agy_1, \dots, \agy_{s})$.
We obtain that $\wpath(\agy, \hat{\agy}) \leq |Y \cap \sss|$, because by construction the agents in $Y \cup \{\hat{\agy}\}$ form a path from $\agy$ to~$\hat{\agy}$ and the weight of this path is the number of arcs whose first element is in $\agentsW$.
Note that if $Y = \emptyset$, then $\hat{\agy} = \agy$ and both sizes of the inequality are zero.

Now consider the set of agents on the sequence $(\agx, \agx_1, \dots, \agx_{t})$ who are after $\hat{\agy}$ (including $\hat{\agy}$) but before $\hat{\agx}$ (excluding $\hat{\agx}$).
Let us call this set $X_2$. 
We observe that $\wpath(\hat{\agy}, \hat{\agx}) \leq |X_2 \cap \sss|$, because by construction the agents in $X_2 \cup \{\hat{\agx}\}$ form a path from $\hat{\agy}$ to $\hat{\agx}$ and the weight of this path is the number of arcs whose first element is in $\agentsW$.

Finally, consider the set of agents $X_3$ on the sequence $(\agx, \agx_1,$ $\dots, \agx_{t})$ who are after $\hat{\agx}$ (including $\hat{\agx}$).
We observe that $\wcycle(\hat{\agx})$ $\leq |X_3 \cap \sss|$:
Since~$\hat{\agx}$ is, by construction, an agent such that~$\agx_t$ considers him a friend, graph $\goodG[X_3]$ contains a friendship cycle containing $\hat{\agx}$.
By construction, the weight of this path is the number of arcs on it whose first element is in $\agentsW$.
This is equivalent to the number of agents in the cycle who are in $\agentsW$.

Recall that $X_1, X_2, X_3$, and $Y$ are all pairwise disjoint sets that are subsets of $\coal$ and hence subsets of $\agents \cup \sss$.
We obtain that 
\begin{align*}\budget &\geq |\sss| \geq |\coal \cap \sss| \\
&\geq |X_1 \cap \sss| + |Y \cap \sss| + |X_2 \cap \sss| + |X_3 \cap \sss| \\
&\geq \wpath(\agx, \hat{\agy}) + \wpath(\agy, \hat{\agy}) +  \wpath(\hat{\agy}, \hat{\agx}) + \wcycle(\hat{\agx}).
\end{align*}
We obtain that \lineref{check1} returns \KwYes.

\item[Case 2] Agent~$\hat{\agy}$ is after $\hat{\agx}$ on the sequence $\agx, \agx_1, \dots, \agx_{t}$.
This implies that $\hat{\agx} \neq \hat{\agy}$, so the second part of \lineref{check2} is satisfied.
First consider the set of agents in $\agx, \agx_1, \dots, \agx_{t}$ who precede $\hat{\agx}$.
Let us call this set $X_1$. 
By reasoning analogous to the previous case we have that $\wpath(\agx, \hat{\agx}) \leq |X_1 \cap \sss|$.

Similarly, consider the set of agents $Y$ that are in the sequence  $(\agy, \agy_1, \dots, \agy_{s})$.
Again  $\wpath(\agy, \hat{\agy}) \leq |Y \cap \sss|$.

Now consider the set of agents on the sequence $(\agx, \agx_1, \dots, \agx_{t})$ who are after $\hat{\agx}$ (including $\hat{\agx}$) but before $\hat{\agy}$ (excluding $\hat{\agy}$).
Let us call this set $X_2$. 
We observe that $\wpath(\hat{\agx}, \hat{\agy}) \leq |X_2 \cap \sss|$.

Finally, consider the set of agents $X_3$ on the sequence $(\agx, \agx_1,$ $\dots, \agx_{t})$ who are after $\hat{\agy}$ (including $\hat{\agy}$).
We observe again that $\wpath(\hat{\agy}, \hat{\agx}) \leq |X_3 \cap \sss|$.

Recall that $X_1, X_2, X_3$, and $Y$ are all pairwise disjoint subsets of $\coal$, and hence subsets of $\agents \cup \sss$.
It follows that 
\begin{align*}\budget &
\geq |X_1 \cap \sss| + |Y \cap \sss| + |X_2 \cap \sss| + |X_3 \cap \sss| \\
&\geq \wpath(\agx, \hat{\agx}) + \wpath(\agy, \hat{\agy}) +  \wpath(\hat{\agx}, \hat{\agy}) + \wpath(\hat{\agy}, \hat{\agx}).
\end{align*}
We obtain that \lineref{check2} returns \KwYes.

\item[Case 3] Agent~$\hat{\agy}$ is not on the sequence $\agx, \agx_1, \dots, $ $\agx_{t}$, i.e., $\hat{\agy}$ is in $\{\agy, \agy_1, \dots, \agy_s\}$.
First consider the set of agents in $\agx, \agx_1, \dots, \agx_{t}$ who precede $\hat{\agx}$.
Let us call this set $X_1$. 
Again we have that $\wpath(\agx, \hat{\agx}) \leq |X_1 \cap \sss|$.

Now consider the set of agents $X_2$ on the sequence $(\agx, \agx_1, \dots, $ $\agx_{t})$ who are after $\hat{\agx}$ (including $\hat{\agx}$).
We observe that $\wcycle(\hat{\agx}) \leq |X_2 \cap \sss|$:
Since $\hat{\agx}$ is, by construction, an agent such that $\agx_t$ considers him a friend, graph $\goodG[X_2]$ contains a friendship cycle containing $\hat{\agx}$.
By construction, the weight of this cycle is the number of arcs--and hence agents--in it whose first element is in $\agentsW$.

Now consider the set of agents in $\agy, \agy_1, \dots, \agy_{s}$ who precede~$\hat{\agx}$.
Let us call this set $Y_1$. 
Again we have that $\wpath(\agy, \hat{\agy}) \leq |Y_1 \cap \sss|$.

Next consider the set of agents $Y_2$ on the sequence $(\agy, \agy_1, \dots, $ $\agy_{s})$ who are after $\hat{\agy}$ (including $\hat{\agy}$).
Analogously to the set $X_2$, we observe that $\wcycle(\hat{\agy}) \leq |Y_2 \cap \sss|$.

Recall that $X_1, X_2, Y_1$, and $Y_2$ are all pairwise disjoint sets that are subsets of $\coal$ and hence subsets of $\agents \cup \sss$.
Hence  
\begin{align*}\budget &
\geq |X_1 \cap \sss| + |Y_1 \cap \sss| + |X_2 \cap \sss| + |Y_2 \cap \sss| \\
&\geq \wpath(\agx, \hat{\agx}) + \wpath(\agy, \hat{\agy}) +  \wcycle(\hat{\agx}) + \wcycle(\hat{\agy}).
\end{align*}
We obtain that \lineref{check3} returns \KwYes.
\end{compactitem}

Thus, the algorithm correctly solves \controlprob{\IR}{\PA}{\AddAg}{\FE}.
By \cref{obs:pa->na}, \controlprob{\IR}{\NA}{\AddAg}{\FE} is then also polynomial-time solvable.
However, it is straightforward to prove that we can obtain a more efficient algorithm by looking for $\ag i \in \agents \setminus \{\agx\}$ such that $\wpath(\agx, \ag i) + \wcycle(\ag i) \leq \budget$.
Finally, \cref{obs:fri:ir:is} implies that the algorithm can also be used to solve \controlprob{\IS}{\PA}{\AddAg}{\FE} and \controlprob{\IS}{\NA}{\AddAg}{\FE}.
}

\citet{Brandt24_AI_NashStable} show that determining the existence of an \NS\ partition is \NP-hard for \FE, and thus \controlprob{\NS}{\G}{\A}{\FE} is \NP-hard for every $\A \in \{\AddAg, \DelAg\}, \G \in \{\NA, \PA\}$ by \cref{obs:originalhard-follows}.
Moreover, if an instance of \FE\ is a DAG, then by \cref{obs:DAG-FE-NO} we cannot hope to have any pair of agents in the same coalition in a stable partition.
However, we obtain a non-trivial--although simple--algorithm when the preferences are symmetric for adding agents, whereas deleting agents is immune (see \cref{prop:IMMUNE-DELAG}).

\begin{restatable}[\appsymb]{proposition}{propSymFeIRISNSNaAdd}
  For symmetric preferences, \controlprob{\NS}{\NA}{\AddAg}{\FE} and \controlprob{\NS}{\PA}{\AddAg}{\FE} are polynomial-time solvable.
  \label{prop:SYM-FE-IRISNS-NA-ADD}
\end{restatable}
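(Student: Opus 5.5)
The plan is to reduce everything to a reachability question in the undirected (symmetric) friendship graph. The reduction rests on a characterization: under symmetric friend-oriented preferences, for any agent set~$\agents'$, there is an \NS\ partition in which~$\agx$ is not alone if and only if~$\agx$ has at least one friend in~$\agents'$. Likewise, there is an \NS\ partition with~$\Pi(\agx)=\Pi(\agy)$ if and only if both~$\agx$ and~$\agy$ have at least one friend in~$\agents'$.

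For the ``if'' direction I would reuse the construction from the proof of \cref{prop:IMMUNE-DELAG}. Put every agent of degree at least one in~$\goodG[\agents']$ into one big coalition~$\megacoal$ and every isolated agent into a singleton. The partition is \NS\ for three reasons:
\begin{compactitem}[--]
\item An agent in~$\megacoal$ has all of its friends inside~$\megacoal$. By symmetry no friend of it is isolated, so it cannot gain a friend by moving to a singleton or by leaving.
\item Moving never reduces enemies in a way that outweighs friends, since it would lose at least one friend.
\item An isolated agent has no friends anywhere, so joining any coalition only adds enemies.
\end{compactitem}
For the ``only if'' direction: if~$\agx$ has no friend, then in any coalition of size at least two~$\agx$ has zero friends and some enemies, so~$\agx$ prefers to be alone. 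That violates \IR\ and hence \NS\ by \cref{obs:stability-relation}. The same argument applies to~$\agy$ for the \PA\ goal.

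With the characterization in hand, the control problem becomes simple. For \NA: if~$\agx$ already has a friend in~$\agentsU$, answer yes with~$\agentsW'=\emptyset$. Otherwise answer yes iff $\budget\ge 1$ and~$\agx$ has a friend in~$\agentsW$, since adding that single friend~$\agx$ suffices. For \PA: let~$c$ be the number of agents among~$\agx,\agy$ that have no friend in~$\agentsU$. We need to add a set~$\agentsW'$ that contains a friend of each such agent. Because each added agent can serve as a friend of both~$\agx$ and~$\agy$ simultaneously, the minimum number of additions is~$0$ if $c=0$. If $c=1$ it is~$1$, provided that agent has a friend in~$\agentsW$. If $c=2$ it is~$1$ when some~$w\in\agentsW$ is friends with both, and otherwise~$2$ when each has some friend in~$\agentsW$. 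Compare this minimum with~$\budget$. All of these checks take polynomial time.

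The main point needing care is verifying Nash stability of the big-coalition partition, in particular that an agent in~$\megacoal$ never strictly prefers joining a singleton or being alone. That is exactly where symmetry is used: it guarantees that no friend of a non-isolated agent sits outside~$\megacoal$. A further check is that adding agents creates no new obstruction. Added agents with friends simply join~$\megacoal$ and isolated ones stay alone, so the characterization applies verbatim to~$\agentsU\cup\agentsW'$.
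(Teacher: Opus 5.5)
Your proposal is correct and follows essentially the same route as the paper. Both observe that, under symmetric friendships, the partition that puts all non-isolated agents into one coalition and the rest into singletons is \NS. The goal therefore reduces to ensuring that $\agx$ (resp.\ both $\agx$ and $\agy$) has a friend, which requires adding at most two agents. Your case split on how many of $\agx,\agy$ lack a friend in $\agentsU$ is only a reorganization of the paper's case split on $\budget=0$, $\budget=1$, and $\budget\geq 2$.
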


\begin{proof}[Proof sketch]
  For symmetric friendship relations, the following partition is \NS: Put all agents who have at least one friend together, while all agents without any friends form singleton coalitions.
  Due to this, the problem of ensuring $x$ and $y$ are together reduces to adding at most two agents to the original graph and checking whether afterwards $x$ and $y$ will each have a friend.
\end{proof}
\appendixproofwithstatement{prop:SYM-FE-IRISNS-NA-ADD}{\propSymFeIRISNSNaAdd*}
{
We first note that when the preferences are symmetric, the partition that contains all the agents with degree at least one in one coalition $\megacoal$, and the remaining agents in singleton coalitions, is \NS.
Since preferences are symmetric, every agent who has friends has all of his friends in $\megacoal$, and thus cannot prefer any other coalition to $\megacoal$.
An agent who has no friends cannot prefer any other coalition to the singleton coalition containing him.
Moreover, there is no \NS\ partition where an agent who has no friends is in a coalition other than the coalition containing only him.

Thus, to enforce that $\agx$ and $\agy$ are together in an \NS\ partition, it is sufficient and necessary to make them have a friend.
Hence, we can determine \controlprob{\NS}{\PA}{\AddAg}{\FE} as follows:
\begin{compactenum}
\item[Case 1] $\budget = 0$. Return yes if and only if both $\agx$ and $\agy$ have a friend in $\agentsU$.
\item[Case 2] $\budget = 1$. If both $\agx$ and $\agy$ have a friend in $\agentsU$, then return yes. If only $\agx$ (resp.\ only $\agy$) does not have a friend in $\agentsU$, then return yes if $\agentsW$ contains an agent who is a friend of $\agx$ (resp.~$\agy$).
If neither $\agx$ nor $\agy$ has a fried in $\agentsU$, then return yes if $\agentsW$ contains an agent who is a friend of both $\agx$ and $\agy$.
If the above does not return yes, return no.
\item[Case 3] $\budget \geq 2$. If both $\agx$ and $\agy$ have a friend in $\agentsU$, then return yes. If only $\agx$ (resp.\ only~$\agy$) does not have a friend in $\agentsU$, then return yes if $\agentsW$ contains an agent who is a friend of $\agx$ (resp.~$\agy$).
If neither $\agx$ nor $\agy$ has a fried in $\agentsU$, then return yes if $\agentsW$ contains an agent who is a friend of $\agx$ and an agent who is a friend of $\agy$.
If the above does not return yes, return no.
\end{compactenum}

For \controlprob{\NS}{\NA}{\AddAg}{\FE} it is sufficient to check if $\agx$ has a friend among $\agentsU$ when $\budget = 0$, or among $\agentsU \cup \agentsW$ when $\budget \geq 1$.
}

For core stability, we can solve the control goal~\PA\ efficiently by reducing to finding a minimum-weight subgraph where $\agx$ and $\agy$ are mutually reachable.
This is connected to the \twoDSN\ problem, which admits a polynomial-time algorithm~\cite{FeldmanRuhl_SCSS,li1992point}:

\probdef{Directed Steiner Network (\twoDSN)}
{A directed graph $G=(V, A)$ with an arc-weighting function~$\weight\colon A\to \mathds{R}$,
  two pairs $(s_1, t_1), (s_2, t_2)$, and $\delta \in \mathds{R}$.}
{Is there a subgraph~$H=(V',A')$ of $G$
  which contains a path from~$s_1$ to $t_1$ and from $s_2$ to $t_2$, respectively, such that $\sum_{a \in A'}\weight(a)$ $ \leq~\delta$?}

\begin{restatable}[\appsymb]{theorem}{thmFECSPAADD}
\controlprob{\CS}{\NA}{\AddAg}{\FE} and \controlprob{\CS}{\PA}{\AddAg}{\FE} are polynomial-time solvable.
\label{thm:FE-CS-PA-ADD}
\end{restatable}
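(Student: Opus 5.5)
By \cref{obs:csstogether}, a set $\sss\subseteq\agentsW$ is a solution for \controlprob{\CS}{\PA}{\AddAg}{\FE} if and only if $\agx$ and $\agy$ lie in the same strongly connected component of $\goodG[\agentsU\cup\sss]$. Since this only depends on the friendship graph, the question becomes purely graph-theoretic. We must decide whether there is some $\sss$ with $|\sss|\le\budget$ such that $\goodG[\agentsU\cup\sss]$ contains a directed path from $\agx$ to $\agy$ and a directed path from $\agy$ to $\agx$. Adding further agents never destroys existing paths, so it suffices to find the cheapest pair of such paths, where the cost is the number of distinct agents of $\agentsW$ lying on their union.

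The plan is to encode this as an instance of \twoDSN, which is polynomial-time solvable~\cite{FeldmanRuhl_SCSS,li1992point}. The cost is naturally a vertex weight (agents of $\agentsW$ cost $1$, original agents cost $0$), and the two paths may share agents, which must be paid for only once. To handle this, we split each agent $\ag w\in\agentsW$ into $\ag w^{\mathrm{in}}$ and $\ag w^{\mathrm{out}}$, joined by an arc of weight $1$. Every friendship arc $(\ag i,\ag j)$ becomes an arc of weight $0$ from the out-copy of $\ag i$ to the in-copy of $\ag j$, where an agent of $\agentsU$ counts as its own in- and out-copy. We set the terminal pairs to $(s_1,t_1)=(\agx,\agy)$ and $(s_2,t_2)=(\agy,\agx)$ and the bound to $\delta=\budget$.

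Correctness is checked in both directions. If a subgraph $H$ of weight at most $\budget$ contains both paths, let $\sss$ be the set of agents whose splitting arc lies in $H$. Any path through an agent of $\agentsW$ must use its splitting arc, so both paths project to walks in $\goodG[\agentsU\cup\sss]$ and $|\sss|\le\budget$. Conversely, two paths in $\goodG[\agentsU\cup\sss]$ lift to the split graph, and their union uses each splitting arc at most once, so its weight is at most $|\sss|$. The main point to verify carefully is this ``pay once for shared agents'' accounting, which is exactly why we reduce to the Steiner network problem rather than summing two shortest paths.

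For \controlprob{\CS}{\NA}{\AddAg}{\FE} we can invoke \cref{obs:pa->na}, which tries every possible partner of $\agx$. If the partner lies in $\agentsW$, it is moved to $\agentsU$ and the budget is decreased by one. Alternatively, there is a more direct route. By \cref{obs:csstogether}, $\agx$ is not alone in some \CS\ partition exactly when $\agx$ lies on a directed cycle of length at least two in $\goodG[\agentsU\cup\sss]$. In the split graph this amounts to a minimum-weight non-trivial cycle through $\agx$, which can be computed by shortest paths as in \cref{alg:IRPAADDFE}.
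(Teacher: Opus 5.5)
Your proposal is correct and follows the paper's proof essentially verbatim: the same splitting of each agent in $\agentsW$ into an in-copy and an out-copy joined by a weight-$1$ arc, the same reduction to \twoDSN\ with terminal pairs $(x,y)$, $(y,x)$ and $\delta=\budget$, and the same two-direction correctness argument. Your handling of the \NA\ variant, either via \cref{obs:pa->na} or directly via a minimum-weight cycle through $\agx$ in the split graph, also matches the paper's closing remark, which uses Dijkstra for the direct route.
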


\newcommand{\agu}[1]{\ensuremath{u_{#1}}}
\newcommand{\agwI}[1]{\ensuremath{w^1_{#1}}}
\newcommand{\agwO}[1]{\ensuremath{w^2_{#1}}}
\newcommand{\agux}{\agu x}
\newcommand{\aguy}{\agu y}

\appendixproofwithstatementsketch{thm:FE-CS-PA-ADD}{\thmFECSPAADD*}{

By \cref{obs:csstogether}, two agents $\agx$ and $\agy$ can be in the same coalition in a \CS\ partition if and only if they are in the same strongly connected component of the friendship graph, i.e., if there is a path from $\agx$ to $\agy$ and vice versa.
This is closely related to the polynomial-time solvable \twoDSN-problem~\cite{FeldmanRuhl_SCSS,li1992point}.

The main difference is that \twoDSN\ has arc-weights, whereas we have vertex weights.
To address this, we replace every agent in $\agentsW$ with two agents, one taking care of the in-arcs and the other the out-arcs.
Let $((\agents = \agentsU \cup \agentsW, \goodG), \agx, \agy, \budget)$ be an instance of \controlprob{\CS}{\PA}{\AddAg}{\FE}.
We will transform this instance into an instance $I = (G, \weight, (s_1, t_1), (s_2, t_2))$ of \twoDSN\ as follows:
\begin{compactitem}[--]
\item For every $\ag i \in \agentsU$, we add to~$G$ a vertex~$\agu i$.
\item For every $\ag i \in \agentsW$, we add to~$G$ two vertices, $\agwI{i}$ and $\agwO{i}$.
\end{compactitem}
The idea is that $\agwI{i}$ inherits the in-arcs of $\ag i \in \agentsW$ and $\agwO{i}$ the out-arcs.
Formally, we construct the following arcs with weights and add them to~$G$. 
\begin{compactitem}[--]
\item For every arc $(\ag i, \ag j) \in A(\goodG)$, we add to~$G$ an arc $(u, v)$ with weight $0$, where
\begin{inparaitem}[$\bullet$]
\item if $\ag i \in \agentsU$, then $u \coloneqq \agu i$, otherwise $u \coloneqq \agwO{i}$ and
\item if $\ag j \in \agentsU$, then $v \coloneqq \agu j$, otherwise $v \coloneqq \agwI{j}$.
\end{inparaitem}
\item For every $\ag i \in \agentsW$, we construct the arc $(\agwI{i}, \agwO{i})$ with weight~$1$.
\end{compactitem}
We set $(s_1, t_1) \coloneqq (\agux, \aguy)$ and $(s_2, t_2) \coloneqq (\aguy, \agux)$. Moreover, we set $\delta \coloneqq \budget$.
The correctness is deferred to the appendix.}{
We continue the proof by showing the correctness of our reduction.
\begin{claim}\label{prop:pacsaddfe1}
If  $((\agents = \agentsU \cup \agentsW, \goodG), \agx, \agy, \budget)$ is a \yes\ of \controlprob{\CS}{\PA}{\AddAg}{\FE}, then $I$ is a \yes\ of \twoDSN.
\end{claim}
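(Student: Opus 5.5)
Assume $\sss \subseteq \agentsW$ with $|\sss| \leq \budget$ is a witness, and let $\Pi$ be a \CS\ partition of $\agentsU \cup \sss$ with $\Pi(\agx) = \Pi(\agy)$. The plan is to build from these a subgraph $H$ of $G$ with weight at most $\budget$ that contains an $\agux$--$\aguy$ path and a $\aguy$--$\agux$ path. I take $\agx \neq \agy$; if the pair is degenerate, the single vertex $\agux$ already satisfies both pairs at weight $0$.

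First, apply \cref{obs:csstogether} to the \FE-instance induced by $\agentsU \cup \sss$. It gives that $\agx$ and $\agy$ lie in the same strongly connected component of $\goodG[\agentsU \cup \sss]$. Hence there is a directed friendship path $P_1$ from $\agx$ to $\agy$ and a path $P_2$ from $\agy$ to $\agx$, and every vertex on either path lies in $\agentsU \cup \sss$.

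Next, translate each path into $G$:
\begin{compactitem}[--]
\item an arc $(\ag i,\ag j)$ of $\goodG$ becomes the weight-$0$ arc between the corresponding copies: $\agu i$ or $\agwO i$ on the tail side, and $\agu j$ or $\agwI j$ on the head side;
\item each internal vertex $\ag i \in \sss$ of a path is traversed through the arc $(\agwI i,\agwO i)$ of weight $1$.
\end{compactitem}
The endpoints are $\agux$ and $\aguy$, so each translated sequence is a valid directed path from $s_1$ to $t_1$, respectively from $s_2$ to $t_2$, in $G$. Let $H$ be the union of the two translated paths.

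The only step needing care is the weight bound. Each weight-$1$ arc of $G$ has the form $(\agwI i,\agwO i)$ with $\ag i \in \agentsW$. Such an arc appears in $H$ only if $\ag i$ is on $P_1$ or $P_2$, which forces $\ag i \in \sss$. The arc set $A'$ of $H$ is a set, so an arc used by both paths is counted only once. Hence
\[
\sum_{a\in A'}\weight(a) \leq |\sss| \leq \budget = \delta .
\]
So $H$ witnesses that $I$ is a \yes\ of \twoDSN.
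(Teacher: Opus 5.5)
Your proof is correct and follows essentially the same route as the paper: you apply \cref{obs:csstogether} to $\goodG[\agentsU\cup\sss]$ to get the two friendship paths, translate them into $G$ through the in/out copies, and bound the weight by $|\sss|$. The only difference is cosmetic: the paper takes $H$ to be the subgraph of $G$ induced by all copies of agents in $\agentsU\cup\sss$, which has weight exactly $|\sss|$, whereas you take the union of the two translated paths, whose weight can only be smaller.
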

\begin{claimproof}{prop:pacsaddfe1}
Let $\sss \subseteq \agentsW$ such that $|\sss| \leq \budget$ and $\agentsU \cup \sss$ admits a \cst\ partition $\Pi$ such that $\Pi(\agx) = \Pi(\agy)$.
We construct a subgraph $H$ of $G$ as follows: let $V' \coloneqq \{\agu i \mid \ag i \in \agentsU\} \cup \{\agwI{i}, \agwO{i} \mid \ag i \in \sss\}$ and let $H \coloneqq G[V']$.

Let us first show that $\agux$ is reachable from $\aguy$ and vice versa.
By \cref{obs:csstogether}, we must have that $\agx$ and $\agy$ belong to the same strongly connected component of $\goodG[\agentsU \cup \sss]$.
In other words, agent~$\agx$ must be reachable from $\agy$ and vice versa on $\goodG[\agentsU \cup \sss]$. 
Let $\agx, \ag {i_1}, \dots, \ag{i_t}, \agy$ be a path from $\agx$ to $\agy$ on $\goodG[\agentsU \cup \sss]$. 
We construct a path from $\agux$ to $\aguy$ on $H$. For every $\ell \in [t]$:
\begin{compactitem}[--]
\item  If $\ag{i_\ell} \in \agentsU$, add the vertex $\agu{i_\ell}$ to the path. Since $\ag{i_\ell} \in \agentsU$, we have that $\agu{i_\ell} \in V'$.
 \item If $\ag{i_\ell} \in \agentsW$, add the vertices $\agwI{i_\ell},\agwO{i_\ell}$ to the path. Since $\ag{i_\ell} \in \sss$, we have that $\agwI{i_\ell},\agwO{i_\ell} \in V'$.
\end{compactitem}
Consider every successive pair $(u, v)$ of vertices on the constructed path.

If both $u$ and $v$ correspond to vertices in $\agentsU$, then clearly there is an arc from $u$ to $v$ by construction, because $u$ and $v$ inherit all the in- and out-arcs of the corresponding vertices.

If $u = \agwI{i_{\ell}}$ for some $\ell \in [t], \ag {i_{\ell}} \in \agentsW$, then by construction, it must be that $v = \agwO{i_\ell}$, because $\agwI{i_{\ell}}$ has no other out-neighbors. 
By construction, arc $(u, v)$ clearly exists on $H$.
Similarly, if $v = \agwO{i_\ell}$ for some $\ell \in [t], \ag {i_{\ell}} \in \agentsW$, then it must be that $u = \agwI{i_\ell}$, because $\agwO{i_{\ell}}$ has no other in-neighbors. 

Now consider the case when $u = \agwO{i_{\ell}}$ for some $\ell \in [t], \ag {i_{\ell}} \in \agentsW$. 
If $v$ corresponds to agent $\hat{\ag{v}} \in \agentsU$, then by construction $H$ has an arc from $u$ to $v$ if and only if there is an arc from $\ag {i_{\ell}}$ to $\hat{\ag{v}} $ in $\goodG$.
Since $\hat{\ag{v}}$ and  $\ag {i_{\ell}}$ are consecutive agents on the path from $\agx$ to $\agy$, such an arc must exist.
Similarly, if $v$ instead corresponds to an agent $ \ag {i_{\ell + 1}} \in \agentsW$, then by construction $v = \agwI{i_{\ell + 1}}$ and $H$ has an arc from $u$ to $v$ if and only if there is an arc from $\ag {i_{\ell}}$ to $\ag {i_{\ell + 1}}$ on $\goodG$.
Since they are on the path from $\agx$ to $\agy$, such an arc must exist.

The case where $v = \agwI{i_{\ell}}$  for some $\ell \in [t], \ag {i_{\ell}} \in \agentsW$ is analogous.

We have now shown that $H$ contains a path from $\agux$ to $\aguy$. We can analogously show that $H$ contains a path from $\aguy$ to $\agux$.

It remains to show that the total weight of the arcs in $H$ does not exceed $\delta = \budget$.
Observe that the only arcs with a positive weight in $H$ are $\{(\agwI{i}, \agwO{i}) \mid i \in \sss\}$ and each of those has weight one.
Clearly $|\{(\agwI{i}, \agwO{i}) \mid i \in \sss\}| = |\sss| \leq \budget = \delta$, as required. 
\end{claimproof}

\begin{claim}\label{prop:pacsaddfe2}
	If   $I$ is a \yes\ of \twoDSN, then $((\agents = \agentsU \cup \agentsW, \goodG), \agx, \agy, \budget)$ is a \yes\ of \controlprob{\CS}{\PA}{\AddAg}{\FE}.
\end{claim}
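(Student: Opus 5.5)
Take a feasible subgraph $H=(V',A')$ of $G$ witnessing that $I$ is a \yes\ of \twoDSN, with total weight at most $\delta=\budget$. Since all arc weights are non-negative, $H$ contains a path $P_1$ from $\agux$ to $\aguy$ and a path $P_2$ from $\aguy$ to $\agux$. We define the set of added agents as
\[
\sss \coloneqq \{\ag i\in\agentsW \mid (\agwI{i},\agwO{i})\in A'\}.
\]
Each such arc has weight one, so $|\sss|\le \sum_{a\in A'}\weight(a)\le\budget$.

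Next we project $P_1$ and $P_2$ back onto $\goodG[\agentsU\cup\sss]$. Map each vertex $\agu i$ to $\ag i$, and each pair $\agwI{i},\agwO{i}$ to $\ag i$. The key structural fact comes from the construction of $G$: the only out-arc of $\agwI{i}$ is $(\agwI{i},\agwO{i})$, and the only in-arc of $\agwO{i}$ is that same arc. Hence whenever a path passes through a vertex belonging to some $\ag i\in\agentsW$, it traverses the weight-one arc $(\agwI{i},\agwO{i})$. This holds because neither endpoint $\agux$ nor $\aguy$ is a $w$-vertex, so the path cannot start at $\agwO{i}$ or end at $\agwI{i}$. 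It follows that every additional agent visited by $P_1$ or $P_2$ lies in $\sss$.

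Each remaining arc of the path is a weight-zero arc inherited from an arc $(\ag i,\ag j)\in A(\goodG)$. Contracting the split pairs therefore yields a walk from $\agx$ to $\agy$, and one from $\agy$ to $\agx$, inside $\goodG[\agentsU\cup\sss]$. Consequently $\agx$ and $\agy$ lie in the same strongly connected component of $\goodG[\agentsU\cup\sss]$.

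Finally we apply \cref{obs:csstogether} to the instance on $\agentsU\cup\sss$. It gives a \CS\ partition $\Pi$ of $\agentsU\cup\sss$ with $\Pi(\agx)=\Pi(\agy)$, so the original instance is a \yes. The only delicate step is the argument that a path never uses a $w$-vertex without paying for it, i.e., the in/out-degree structure of the split vertices. Everything else is bookkeeping.

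Once both claims hold, the \twoDSN\ algorithm~\cite{FeldmanRuhl_SCSS,li1992point} solves \controlprob{\CS}{\PA}{\AddAg}{\FE} in polynomial time. The \NA\ variant then follows from \cref{obs:pa->na}.
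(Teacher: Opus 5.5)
Your proof is correct and follows essentially the same route as the paper: you define $\sss$ via the weight-one arcs $(\agwI{i},\agwO{i})$ in $H$, use that $\agwI{i}$'s only out-arc is to $\agwO{i}$ (and $\agwO{i}$'s only in-arc is from $\agwI{i}$) to turn the two $H$-paths into paths in $\goodG[\agentsU \cup \sss]$, and conclude with \cref{obs:csstogether}. One small correction: the two paths exist because $I$ is a \twoDSN\ yes-instance, not because the weights are non-negative, but this does not affect the argument.
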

\begin{claimproof}{prop:pacsaddfe2}
Let $H=(V',A')$ be a subgraph of $G$ that contains a path from $\agux$ to $\aguy$ and vice versa and $\sum_{a \in A'}\weight(a) \leq \delta$.
We construct a subset $\sss \subseteq \agentsW$ as follows:
For every $a \in A'$, if $a = (\agwI{i}, \agwO{i})$ for some $\ag i \in \agentsW$, then add $\ag i$ to $\sss$.
Observe that since the arcs that are $(\agwI{i}, \agwO{i})$ for some $\ag i \in \agentsW$ are the only arcs in $G$ with non-zero weight, and their weight is 1, we must have that $|\sss| \leq \delta = \budget$, as required.

Let us first show that $\goodG[\agentsU \cup \sss]$ admits a \cst\ partition where $\agx$ and $\agy$ are in the same coalition.
By \cref{obs:csstogether}, it is sufficient to show that $\goodG[\agentsU \cup \sss]$ admits a path from $\agx$ to $\agy$ and vice versa.
Let $\agux, v_1, \dots, v_t, \aguy$ be a path from $\agux$ to $\aguy$ in $H$.

If $v_1$ corresponds to an agent $\ag i \in \agentsU$, then by construction $\goodG[\agentsU \cup \sss]$ must contain the arc $(\agx, \ag i)$.
Since $\agx \in \agentsU$, we can continue the reasoning starting from $v_1$.

If $v_1 = \agwI{\ag i}$ for some $\ag i \in \agentsW$, then we must have that $v_2 = \agwO{\ag i}$, because $ \agwI{\ag i}$ only has an out-arc to $ \agwO{\ag i}$ and  $ \agwI{\ag i}$ is on a path to $\aguy$. Then $\ag i \in \sss$ and moreover $\goodG$ contains the arc $(\agx, \ag i)$.
Since $\ag i \in \agentsU \cup \sss$, we can continue the reasoning starting from $v_2$; recall that $\agwO{\ag i}$ inherits all the out-arcs of $\ag i$.

By repeating this reasoning, we obtain that there must be a path from $\agx$ to $\agy$ in $\agentsU \cup \sss$. Analogous reasoning shows us that there is a path from $\agy$ to $\agx$ as well.
\end{claimproof}

By \cref{prop:pacsaddfe1,prop:pacsaddfe2} we can construct every instance of \controlprob{\CS}{\PA}{\AddAg}{\FE} to an equivalent instance of \twoDSN.
As \twoDSN\ is solvable in time $O(mn + n^2 \log(n))$~\cite{FeldmanRuhl_SCSS,li1992point,natu1997point}, \controlprob{\CS}{\PA}{\AddAg}{\FE} is solvable in time $O(n^3)$, where $n$ is the number of agents.
\cref{obs:gra:eq} shows that  \controlprob{\CS}{\NA}{\AddAg}{\FE} is then also polynomial-time solvable, in particular solvable in time $O(n^4)$.
However, the algorithm obtained through \cref{obs:gra:eq} is suboptimal: We can obtain a $n^2$ running time by using Dijkstra's algorithm to find the minimum-weight cycle containing $\agx$.
}

\newcommand{\scs}{\ensuremath{h}}

We finally look into enforcing that grand coalition is stable.
For every $\S \in \{$\IR, \IS, \NS, \CS$\}$, we discover an algorithm for \controlprob{S}{\GR}{\AddAg}{\FE} that runs in time $|\agentsW|^{\budget} |\agents|^{O(1)}$.
However, we show that the problem is W[2]-hard wrt.~$\budget$, i.e., it is unlikely to admit an algorithm that runs in time  $f(\budget) \cdot |\agents|^{O(1)}$, where $f$ is some computable function.
The result is through a straightforward reduction from \scp, which is W[2]-hard wrt.\ the set cover size $\scs$~\cite{DF13}.

\begin{restatable}[\appsymb]{theorem}{thmFeIRGrADD}
 For every $\S \in \{$\IR, \IS, \NS, \CS$\}$, \controlprob{S}{\GR}{\AddAg}{\FE} is W[2]-hard wrt. $\budget$ even when the preference graph is symmetric and in \XP\ wrt.\ $\budget$.
  \label{thm:FE-IR-GR-ADD}
\end{restatable}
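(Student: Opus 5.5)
The plan has two parts: the \XP\ algorithm by brute force over small subsets of $\agentsW$, and \Wtwo-hardness by a parameterized reduction from \scp\ with parameter the cover size~$\scs$.

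\emph{Membership in \XP.} I would enumerate all subsets $\sss\subseteq\agentsW$ with $|\sss|\le\budget$. There are $O(|\agentsW|^{\budget})$ of them. For each, I test in polynomial time whether the grand coalition $\agentsU\cup\sss$ is $\S$-stable. For \IR, with at least two agents, the grand coalition is \IR\ if and only if every agent has at least one friend in it. By \cref{obs:gra:eq}, the same test decides \IS\ and \NS.

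For \CS, the verification problem is \coNP-hard in general, so I need a structural test. I would prove that the grand coalition is \CS\ if and only if its friendship graph is strongly connected. If the graph is not strongly connected, a sink strongly connected component $C$ is a proper subset. Every agent in $C$ keeps all its friends and loses at least one enemy, so $C$ blocks; this is the argument from \cref{obs:csstogether}. If the graph is strongly connected, consider any proper subset $\bcoal$. Some agent of $\bcoal$ has a friend outside $\bcoal$, so that agent has strictly fewer friends in $\bcoal$ and does not prefer it. Hence no blocking coalition exists. Each check is polynomial, so the total running time is $|\agentsW|^{\budget}\cdot|\agents|^{O(1)}$.

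\emph{\Wtwo-hardness.} I reduce from \scp\ with universe $\els$, family $\sets$ and bound $\scs$. The original agents $\agentsU$ are one agent $\elag{e}$ per element $e\in\els$ plus one hub agent $z$. The additional agents $\agentsW$ are one agent $\sset{j}$ per set $\set{j}\in\sets$. The symmetric friendship edges are $\{\elag{e},\sset{j}\}$ whenever $e\in\set{j}$, and $\{z,\sset{j}\}$ for every $j$. There are no other friendships; in particular the element agents are pairwise enemies and enemies of $z$. I set $\budget:=\scs$.

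For the forward direction, let $\excov$ be a cover of size at most $\scs$ and add the corresponding set agents. Every element agent has a friend among the added set agents, $z$ has a friend provided $\els\neq\emptyset$, and every added set agent is a friend of $z$. The graph is also connected, via $z$ to each added set agent and each element to a covering set agent. Since the graph is symmetric, connected means strongly connected. So the grand coalition is \IR, and by the above it is also \IS, \NS\ and \CS.

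For the backward direction, assume the grand coalition on $\agentsU\cup\sss$ is $\S$-stable for some $\S$. By \cref{obs:stability-relation} it is \IR. Then every element agent needs a friend, and its only possible friends are set agents containing it. Hence $\sss$ corresponds to a cover of size at most $\budget=\scs$.

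The construction is polynomial and the parameter is preserved exactly, so this is a parameterized reduction and it transfers \Wtwo-hardness. The main obstacle I expect is the choice of the hub: it must make the graph connected, which is needed for \CS, without giving the element agents friends. That is why $z$ is adjacent only to set agents. The other delicate step is the strong-connectivity characterization of core stability for the grand coalition, which replaces the generally hard \CS\ verification.
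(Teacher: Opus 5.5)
Your proposal is correct and follows the paper's proof. You brute-force over subsets of $\agentsW$ of size at most $\budget$, using strong connectivity as the test for \CS, and you give a parameterized reduction from \scp\ with $\budget = \scs$ whose backward direction only uses that every element agent needs a friend under \IR. The only differences are minor: the paper makes the set agents pairwise friends instead of attaching them to a hub $z$, and it relies on \citet{dimitrov2006simple} for the strong-connectivity characterization of core stability of the grand coalition, which you prove directly.
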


\appendixproofwithstatement{thm:FE-IR-GR-ADD}{\thmFeIRGrADD*}{
We first show that \controlprob{S}{\GR}{\AddAg}{\FE} is W[2]-hard wrt.~$\budget$ for every $\S \in \{$\IR, \IS, \NS, \CS$\}$.

As mentioned, we reduce from the following problem, which is W[2]-hard wrt. the solution size~$\scs$:
 \decprob{\scp}
  {A $\nn$-element set $\els = [\nn]$ and a collection~$\sets = \{\set 1, \dots, \set{m}\}$ of subsets of~$\els$, an integer $\scs$.}{Does~$\sets$ contain a \myemph{cover} of size at most \scs\ for~$\els$,  i.e., a subcollection~$\excov \subseteq \sets$ such that $|\excov| \leq \scs$ and $\cup_{\set{j}\in \excov}\set{j} = \els$?}

Let $I = (\els = [\nn], \sets,$ $\scs)$ be an instance of \scp.
Let us construct an instance of \FE\ with an agent set $\agents = \agentsU \cup \agentsW$ and friendship graph $\goodG$ as follows:
Let $\agentsU \coloneqq \{u_i \mid i \in [\nn]\}$ and $\agentsW \coloneqq \{s_j \mid \set j \in \sets\}$.
For every $i \in [\nn]$, agent $u_i$ is mutually friends with $s_j$ if and only if $i \in \set j$.
Moreover, the agents in $\{s_j \mid \set j \in \sets\}$ are all friends with each other.
The preferences are symmetric.
Moreover, we set $\budget \coloneqq \scs$ to complete the construction.

\begin{claim}If~$I$ is a \yes\ of \scp, then there is $\sss \subseteq \agentsW$ such that $|\sss| \leq \scs$ and the grand coalition partition of $\agentsU \cup \sss$ is \CS.\label{clm1:FE-IR-GR-ADD}
\end{claim}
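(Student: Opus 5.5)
Let $\excov \subseteq \sets$ be a cover with $|\excov| \leq \scs$. We take $\sss \coloneqq \{s_j \mid \set j \in \excov\}$, so $|\sss| \leq \scs = \budget$, and argue directly that no coalition $\bcoal \subseteq \agentsU \cup \sss$ blocks the grand coalition $G \coloneqq \agentsU \cup \sss$. Without loss of generality $\excov \neq \emptyset$; if $\els$ is empty the claim is trivial. The plan is to show that \emph{every} agent gets all of his friends present in $G$ inside $G$ itself. Since friends are compared first, any coalition $\bcoal \subsetneq G$ containing agent $\ag i$ then gives $\ag i$ at most as many friends as $G$. Strict preference of $\bcoal$ over $G$ is therefore only possible if $\bcoal$ contains the same number of friends of $\ag i$, that is, all of $\ag i$'s friends in $G$, and strictly fewer enemies.

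The key steps are as follows. First, observe that $\goodG[G]$ is connected: each $u_i$ has a friend in $\sss$ because $\excov$ covers $i$, and the agents in $\sss$ form a clique. Second, let $\bcoal$ be a blocking coalition; note $\bcoal \neq G$, since no agent strictly prefers $G$ to itself. By the remark above, every agent in $\bcoal$ has all of his $G$-friends in $\bcoal$. Thus $\bcoal$ is closed under taking friends in the symmetric, connected graph $\goodG[G]$. Because $\bcoal$ is nonempty, it follows that $\bcoal = G$, a contradiction. Hence $G$ is \CS.

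The only delicate point is the preference comparison, and it needs care only in that direction: with the same number of friends, fewer enemies is preferred. The connectivity argument rules out any proper subset keeping all friends, so no agent strictly improves. I expect no real obstacle here; the main thing to verify is that connectivity uses both the cover property and the clique among the $s_j$'s. Since \CS\ implies \IR, and by \cref{obs:gra:eq} \IR\ is equivalent to \IS\ and \NS\ for the grand coalition, the claim also yields the statement for the other stability concepts. In fact, each agent has at least one friend in $G$, which directly gives \IR.
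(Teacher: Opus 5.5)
Your proof is correct and takes essentially the paper's route: the paper also builds $\sss$ from the cover and shows that $\goodG[\agentsU\cup\sss]$ is strongly connected, using the cover property together with the clique on the set-agents. It then simply invokes the result of Dimitrov et al.\ that the partition into strongly connected components is core stable, whereas your closure-under-friends argument is a self-contained proof of that fact in this symmetric, connected case.
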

\begin{claimproof}{clm1:FE-IR-GR-ADD}
Let $\excov$ be a cover of $\els$ such that $|\excov| \leq \scs$.
Let $\sss \coloneqq \{s_j \mid \set j \in \excov\}$.
Clearly $|\sss| = |\excov| \leq \budget$.
It remains to show that the grand coalition partition is a \CS\ partition of $\agentsU \cup \sss$.
Recall that by  \citet{dimitrov2006simple} it is sufficient to show that $\goodG[\agentsU \cup \sss]$ is strongly connected.
Since~$\excov$ is a cover of $[\nn]$, for every $i \in [\nn]$ there is a set $\set j \in \excov$ such that $i \in \set j$.
Thus $u_i$ has a mutual friend $\set j$ among $\agentsU \cup \sss$.
Since the $\goodG[\sss]$ is a clique, we must have that $\goodG[\agentsU \cup \sss]$ is strongly connected, as required.
\end{claimproof}

\begin{claim}If there is $\sss \subseteq \agentsW$ such that $|\sss| \leq \budget$ and the grand coalition partition of $\agentsU \cup \sss$ is \IR, then $I$ is a \yes\ of \scp.\label{clm2:FE-IR-GR-ADD}
\end{claim}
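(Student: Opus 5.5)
We take $\sss \subseteq \agentsW$ with $|\sss| \leq \budget$ such that the grand coalition partition of $\agentsU \cup \sss$ is \IR. The candidate cover is $\excov \coloneqq \{\set j \mid s_j \in \sss\}$. Since each $s_j \in \agentsW$ corresponds to exactly one set $\set j$, we immediately get $|\excov| = |\sss| \leq \budget = \scs$. It therefore remains to show that $\excov$ covers $\els$.

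To do this, fix an arbitrary element $i \in [\nn]$ and look at agent $u_i$ in the grand coalition $\agentsU \cup \sss$. We first arrange that this coalition has at least two agents, so that individual rationality forces $u_i$ to have a friend in it. Under \friends\ preferences, an agent with no friend in a coalition of size at least two has strictly more enemies than in $\{u_i\}$ and the same number of friends (zero). He would therefore strictly prefer being alone, contradicting \IR. Hence $u_i$ has at least one friend in $\agentsU \cup \sss$.

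By construction, the only friends of $u_i$ are the agents $s_j$ with $i \in \set j$. Agents in $\agentsU$ are not friends with each other, since the only arcs incident to $u_i$ go to and from set agents. So the friend of $u_i$ in $\agentsU \cup \sss$ is some $s_j \in \sss$ with $i \in \set j$, which means $\set j \in \excov$ covers $i$. Because $i$ was arbitrary, $\excov$ is a cover, and $I$ is a \yes.

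The only obstacle is the degenerate case where the grand coalition is a singleton, so that \IR\ imposes nothing. This happens only when $\nn = 1$ and $\sss = \emptyset$. We avoid it by assuming without loss of generality that $\nn \geq 2$, for instance by padding the \scp\ instance: add a fresh element $\nn+1$ to every set, which preserves cover sizes. Alternatively, we can handle $\nn \leq 1$ directly in polynomial time.

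Combining this claim with \cref{clm1:FE-IR-GR-ADD} and \cref{obs:stability-relation} (\CS\ implies \IR, and \NS\ implies \IS\ implies \IR) gives W[2]-hardness for all four stability concepts simultaneously. The chain is: a cover yields \CS, which is stronger than all four concepts needed; and \IR, which is weaker than all four, yields a cover. For \IS\ and \NS\ one can alternatively use \cref{obs:gra:eq}. Since $\budget = \scs$, this is a parameterized reduction, and the construction is symmetric. XP membership follows separately by enumerating all $O(|\agentsW|^{\budget})$ subsets $\sss$ and verifying stability of the grand coalition. For \IR, \IS\ and \NS\ this check is direct. For \CS\ we use \citet{dimitrov2006simple}: the grand coalition is \CS\ if and only if the friendship graph induced by $\agentsU \cup \sss$ is strongly connected, which is checkable in polynomial time.
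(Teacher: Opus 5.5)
Your proof is correct and is essentially the paper's argument: set $\excov = \{\set j \mid s_j \in \sss\}$ and use individual rationality of the grand coalition to force each $u_i$ to have a friend, which by construction can only be some $s_j \in \sss$ with $i \in \set j$. The one real difference favors you: the paper's proof silently assumes the grand coalition has at least two agents, and for $\nn = 1$, $\sss = \emptyset$, $\scs = 0$ the claim as stated fails, so your padding fix (which preserves minimum cover sizes when $\nn \geq 1$) is a genuine repair. One correction to your closing remark: \CS\ does not imply \IS\ or \NS\ in general, so \cref{obs:gra:eq} is needed for the forward direction rather than being merely an alternative.
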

\begin{claimproof}{clm2:FE-IR-GR-ADD}
Let $\excov \coloneqq \{\set j \in \sets \mid s_j \in \sss\}$.
Clearly $|\excov| = |\sss| \leq \scs$.
It remains to show that $\excov$ is a cover of~$\els$.
Since grand coalition partition is \IR, every agent must obtain a friend, as otherwise he would prefer being alone.
Observe that for every $ i \in [\nn]$, agent~$u_i$ must have a friend among $\agentsU \cup \sss$.
By construction the only friends of $u_i$ are $s_j$ such that $\set j \in \sets, i \in \set j$.
Thus $\excov$ must contain a set $\set j$ such that $i \in \set j$.
As this holds for every $i \in \els,$ ~$\excov$ is a cover of $\els$.
\end{claimproof}

\cref{clm1:FE-IR-GR-ADD,clm2:FE-IR-GR-ADD} show that for every $\S \in \{$\IR, \IS, \NS, \CS$\}$, \controlprob{S}{\GR}{\AddAg}{\FE} is W[2]-hard wrt. $\budget$:
If $(\els, \sets, \scs)$ is a \yes, then by \cref{clm1:FE-IR-GR-ADD} $((\agents = \agentsU \cup \agentsW), \budget)$ is also a \yes; recall that by definition \CS\ implies \IR\ and by \cref{obs:gra:eq} the grand coalition partition is \IR\ if and only if it is also \IS\ and \NS.
Next, if $((\agents = \agentsU \cup \agentsW), \budget)$ is a \yes, then so is $((\els, \sets), \scs)$ by \cref{clm2:FE-IR-GR-ADD}.
We can use \cref{clm2:FE-IR-GR-ADD} because a partition that is \IS, \NS, or \CS\ must also be \IR.
This concludes showing that \controlprob{S}{\GR}{\AddAg}{\FE} is W[2]-hard wrt. $\budget$.\\

Next we show that  \controlprob{\S}{\GR}{\AddAg}{\FE} is in \XP\ wrt.\ $\budget$ for every $\S \in \{$\IR, \IS, \NS, \CS$\}$.
Observe that we can verify in polynomial time whether the grand coalition partition is \S: For \IR, \IS, and \NS\ this is clear from the definition, and by \cref{obs:csstogether} to verify \CS\ it is sufficient to check that the friendship graph of the grand coalition partition is strongly connected.
Thus we can try every subset of $\agentsW$ of cardinality at most $\budget$.
The number of such subsets is in $O(|\agentsW|^{\budget})$, and thus this simple brute-force algorithm is in \XP\ wrt.~$\budget$.
}

In contrast to \controlprob{\S}{\GR}{\AddAg}{\FE}, \controlprob{\S}{\GR}{\DelAg}{\FE} is solvable in polynomial time for every $\S \in \{\IR,\IS,\NS, \CS\}$.
For each stability concept, we can determine in polynomial time a maximum subset of agents such that the grand coalition partition of it is stable.
For \IR, \IS, and \NS, we obtain this set by recursively removing agents who have no friends, as they cannot be contained in an \IR\ partition.
For \CS, it is known that every coalition in a stable partition is a strongly connected component of the friendship graph.
Thus we only need to keep a largest strongly connected component.

\begin{restatable}[\appsymb]{proposition}{propSymFeIRISNSGrDel}
For every stability concept $\S \in \{\IR,\IS,\NS, \CS\}$, \controlprob{\S}{\GR}{\DelAg}{\FE} is polynomial-time solvable.
  \label{prop:FE-IRISNS-GR-DEL}
\end{restatable}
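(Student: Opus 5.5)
We handle \IR, \IS, \NS\ together and \CS\ separately. By \cref{obs:gra:eq}, the grand coalition partition of any agent set is \IS\ (resp.\ \NS) if and only if it is \IR, so it suffices to treat \IR. In \FE, the grand coalition partition of a set $R$ with $|R|\ge 2$ is \IR\ if and only if every agent in $R$ has at least one out-neighbor in $\goodG[R]$. (An agent with a friend in $R$ strictly prefers $R$ to being alone. An agent without one weakly prefers being alone, because he has zero friends in both coalitions but more enemies in $R$.) A single-agent set is trivially stable.

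\textbf{Case \IR, \IS, \NS.} We run the peeling procedure. Start with $R\coloneqq \agentsU$, and repeatedly delete any agent with out-degree zero in $\goodG[R]$ until no such agent remains. The resulting $R^*$ is the unique maximum subset in which every agent has an out-neighbor.

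We justify this by induction: every deleted agent lies outside every set $R'\subseteq\agentsU$ whose induced subgraph has minimum out-degree at least one. Indeed, if $R'\subseteq R$ at the moment agent $v$ is removed, then $v$ has no friend in $R\supseteq R'$, so $v\notin R'$. Moreover, the union of two sets with minimum out-degree at least one again has this property, so $R^*$ is well defined.

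Hence the answer is yes if and only if either $|\agentsU\setminus R^*|\le\budget$ (with $R^*$ nonempty, or $R^*$ empty and the remainder handled below), or we can keep a single agent, that is, $|\agentsU|-1\le\budget$. Keeping zero agents is excluded, since a coalition must be nonempty. All of this takes polynomial time.

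\textbf{Case \CS.} Using the characterization from \citet{dimitrov2006simple} already invoked in the proof of \cref{thm:FE-IR-GR-ADD}, the grand coalition of $R$ is \CS\ if and only if $\goodG[R]$ is strongly connected. The direction ``only if'' follows from the sink-component argument in \cref{obs:csstogether}: a sink SCC of $\goodG[R]$ that is a proper subset of $R$ would block.

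It remains to show that a maximum $R\subseteq\agentsU$ with $\goodG[R]$ strongly connected is a largest SCC of $\goodG[\agentsU]$. Any strongly connected $\goodG[R]$ lies inside one SCC $C$ of $\goodG[\agentsU]$. Conversely, every SCC $C$ itself induces a strongly connected subgraph, because all paths between vertices of $C$ stay inside $C$. So we compute the SCCs in linear time and answer yes if and only if $|\agentsU|-\max_C|C|\le\budget$. Singletons are SCCs, so this case is covered automatically.

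\textbf{Main obstacle.} The only step needing care is the \CS\ characterization, specifically the direction that strong connectivity suffices for core stability of the grand coalition. I would cite \citet{dimitrov2006simple} for it. If a self-contained argument is wanted, I would argue as follows. A blocking coalition $B\subsetneq R$ must give every member at least as many friends as $R$ does. By strong connectivity, some agent in $B$ has a friend in $R\setminus B$. That agent therefore has strictly fewer friends in $B$ than in $R$, so he does not prefer $B$, and $B$ is not blocking.
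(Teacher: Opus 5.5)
Your proposal is correct and follows the paper's proof. You reduce \IS\ and \NS\ to \IR\ via \cref{obs:gra:eq} and peel off friendless agents to obtain the unique largest set in which everyone has a friend; for \CS\ you keep a largest strongly connected component. Your handling of a one-agent grand coalition is slightly more careful than the paper's, whose peeling also deletes a last friendless agent and would therefore test $|\agentsU|\le\budget$ instead of $|\agentsU|-1\le\budget$ when no such set exists (one wording fix: a friendless agent in $R$ with $|R|\ge 2$ \emph{strictly}, not merely weakly, prefers being alone, and strict preference is what violating \IR\ requires).
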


\appendixproofwithstatement{prop:FE-IRISNS-GR-DEL}{\propSymFeIRISNSGrDel*}{
Recall that by \cref{obs:gra:eq} \IR, \IS, and \NS\ are equivalent for grand coalition partition.
Thus  \controlprob{\IR}{\GR}{\DelAg}{\FE},  \controlprob{\IS}{\GR}{\DelAg}{\FE}, and  \controlprob{\NS}{\GR}{\DelAg}{\FE} are equivalent.

To solve \controlprob{\IR}{\GR}{\DelAg}{\FE}, we start with the agent set $\agentsU$ and recursively remove every agent who does not have any friends among the remaining agents.
We repeat this as long as every remaining agent has at least one friend.
If the number of removed agents is at most $\budget$, then we return yes, otherwise no.

For correctness, first observe that after the recursive removal, every agent has a friend among the remaining agents.
Thus the grand coalition partition is \IR.
Also observe that a grand coalition partition of a set of agents is \IR\ if and only if everyone has a friend in it.
No agent who is removed can be in a coalition where every agent has a friend.
Hence the algorithm removes the smallest number of agents such that the grand coalition partition of the remaining agents is \IR.

 Initially checking whether an agent has no friends can be done in time $O(|\agents|)$. When recursively deleting agents, we touch any edge only when we delete it. For each edge deletion, we can check whether someone who considers him a friend should be removed. Hence the number of operations during the deletion process is in $O(|A(\goodG)|)$ and the algorithm's time complexity is $O(|\agents| + |A(\goodG)|)$.\\
 
Next, let us describe an algorithm for \controlprob{\CS}{\GR}{\DelAg}{\FE}.
We compute the strongly connected components of $\goodG$, select a largest-cardinality strongly connected component, and remove all the other agents.
If the cardinality of the removed agents is at most $\budget$, we return yes, otherwise we return no.

For correctness, recall that by \cref{obs:csstogether} grand coalition partition is \CS\ if and only if it is connected.
Observe that after the removal, the remaining set of agents form a strongly connected component of $\goodG$.
Moreover, since removing agents cannot make a pair of agents connected, the agents in a largest strongly connected component is a largest subset of agents such that the grand coalition partition of them is \CS.

The algorithm runs in the same time as computing strongly connected components.
We can recognize strongly connected components in time $O(|\agents| + |A(\goodG)|)$.
}

\section{Additive Preferences}\label{sec:add-setting}
\appendixsection{sec:add-setting}

In this section, we consider the case with additive preferences.
It turns out that most of our control problems remain intractable even in very restricted cases such as when the budget is zero or the preference graph is a DAG or symmetric.
As in the previous section, we first consider \NA, then~\PA, and finally \GR.

First, we show that even the most basic stability requirement \IR\ becomes difficult to determine once we enforce some control goal.

\appendixfigure{fig:thm:AD-IR-NA}{}{Figure for \cref{thm:AD-IR-NA}}{
  Sketch of of the preference graph for \cref{thm:AD-IR-NA}; see \cref{fig:thm:AD-IR-NA}.
  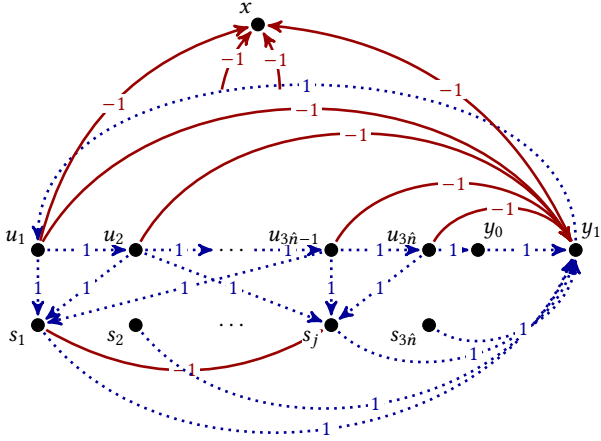
\begin{figure}[t!]
  \centering
  \begin{tikzpicture}[>=stealth',shorten <= 1pt, shorten >= 1pt]
    \path[use as bounding box,draw=none] (0.5,-4) rectangle (12,2);
    \def \xx {1.3}
  \def \xy {1}
  \foreach \x / \n in {1/1, 2/2, 4/j, 5/3n} {
    \node[unode] at (\x*\xx, -2.5*\xy) (s\n) {};
  }

  \node[unode] at (3.25*\xx, 1.5*\xy) (x) {};

  \foreach \x / \n in {1/1,2/2, 4/3n1, 5/3n} {
    \node[unode] at (\x*\xx, -1.5*\xy) (u\n) {};
  }

  \node[unode] at (5.5*\xx, -1.5*\xy) (y0) {};
  \node[unode] at (6.5*\xx, -1.5*\xy) (y1) {};

  \coordinate (hx1) at ($(x)!0.3!(u1)$);
  \coordinate (hx2) at ($(x)!0.3!(u2)$);
  \coordinate (hx3) at ($(x)!0.3!(u3n1)$);
  \coordinate (hx4) at ($(x)!0.3!(u3n)$);

  \draw[->, ec] (u1) to[bend left=30] node[midway, fill=white, inner sep=1pt] {\small $-1$} (x);
  \draw[->, ec] (hx2) to[bend left=15] node[midway, fill=white, inner sep=1pt] {\small $-1$} (x);
  \draw[->, ec] (hx3) to[bend right=15] node[midway, fill=white, inner sep=1pt] {\small $-1$} (x);
  \draw[->, ec] (y1) to[bend right=30] node[midway, fill=white, inner sep=1pt] {\small $-1$} (x);

  \foreach \i / \p / \a / \r / \n in {%
    x/above left/2pt/0pt/{x},
    u1/above left/0pt/2pt/{u_1},   u2/above left/0pt/2pt/{u_2},   u3n1/above left/0pt/2pt/{u_{3\nn-1}},  u3n/above left/0pt/2pt/{u_{3\nn}}, 
    y0/above right/2pt/0pt/{y_0}, y1/above right/2pt/0pt/{y_1},
    s1/below left/0pt/2pt/{s_1},   s2/below left/0pt/2pt/{s_2},   sj/below left/0pt/2pt/{s_j},  s3n/below left/0pt/2pt/{s_{3\nn}},
  } {
     \node[\p = \a and \r of \i,fill=white,inner sep=0.1pt] {$\n$};
   }

  \path (s2) -- node[midway, inner sep=1pt] (sdots) {$\dots$} (sj);

  \path (u2) -- node[midway, inner sep=1pt] (udots) {$\dots$} (u3n1);

  \begin{pgfonlayer}{bg}
    \foreach \s / \t / \aa / \w / \typ in {
    u1/u2/0/{1}/fc, u2/udots/0/{1}/fc, udots/u3n1/0/{1}/fc, u3n1/u3n/0/{1}/fc, u3n/y0/0/{1}/fc, y0/y1/0/{1}/fc,
    s1/sj/30/{-1}/ec,
    y1/u1/90/{1}/fc,
    u1/y1/-60/{-1}/ec, u2/y1/-60/{-1}/ec, u3n1/y1/-60/{-1}/ec, u3n/y1/-60/{-1}/ec,
    s1/y1/60/{1}/fc, s2/y1/60/{1}/fc, sj/y1/60/{1}/fc, s3n/y1/60/{1}/fc,
    u1/s1/0/{1}/fc, u2/s1/0/{1}/fc, u3n1/s1/0/{1}/fc, u2/sj/0/{1}/fc, u3n1/sj/0/{1}/fc, u3n/sj/0/{1}/fc%
    }
    {
      \draw[->, \typ] (\s) edge[bend right = \aa] node[fill=white, inner sep=1pt, midway] {\small $\w$} (\t);
    }
  \end{pgfonlayer}
\end{tikzpicture}
\caption{Illustration of the relevant part of the reduction for \cref{thm:AD-IR-NA}, assuming that $\set{1} = \{1,2,3\nn-1\}$ and $\set{j} = \{2,3\nn-1,3\nn\}$. The remaining negative utilities to $\agx$ and the negative utilities of the set-agents are omitted in the figure.}\label{fig:thm:AD-IR-NA}
\end{figure}
}

\begin{restatable}{theorem}{thmADIRNANPhard}
  For each control action~$\A\in \{\AddAg, \DelAg\}$, \controlprob{\IR}{\NA}{\A}{\AD} is \NP-complete; NP-hardness remains even if the budget is $\budget=0$ and the preference graph has only one feedback arc.
  \label{thm:AD-IR-NA}
\end{restatable}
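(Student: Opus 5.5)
We reduce from \RETC\ (Restricted Exact Cover by 3-Sets, in which every element lies in exactly three sets). Membership in \NP\ follows from \cref{obs:complexityupperbounds}. For hardness it suffices to build, for a \RETC-instance $(\els=[3\nn],\sets=\{\set1,\dots,\set m\})$, an \AD-instance with a special agent~$\agx$ such that $\agx$ is not alone in some \IR\ partition if and only if an exact cover exists. Setting $\budget=0$ (and $\agentsW=\emptyset$ for \AddAg) then gives hardness for both control actions at once.

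\textbf{Construction.} Following \cref{fig:thm:AD-IR-NA}, we create the agent~$\agx$, element agents $u_1,\dots,u_{3\nn}$, two auxiliary agents $y_0,y_1$, and one set agent $s_j$ per $\set j$. The intended weights are as follows; I expect to tune the constants while writing the proof.
\begin{compactitem}[--]
\item $\agx$ gives utility $1$ to $u_1$ and $0$ to everyone else.
\item For $i<3\nn$, agent $u_i$ gives $3$ to $u_{i+1}$; agent $u_{3\nn}$ gives $3$ to $y_0$. Every $u_i$ gives $-3$ to $\agx$, $-1$ to $y_1$, and $1$ to each $s_j$ with $i\in\set j$.
\item $y_0$ gives $1$ to $y_1$ and $-1$ to $\agx$; $y_1$ gives $1$ to $u_1$ and $-1$ to $\agx$.
\item $s_j$ gives $1$ to $y_1$, and $-2$ to every $s_{j'}$ with $j<j'$ and $\set j\cap\set{j'}\neq\emptyset$.
\end{compactitem}
We orient the set--set dislikes only from smaller to larger index so that they create no cycles.

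\textbf{Acyclicity.} The agent $\agx$ has out-arcs only to $u_1$, and set agents have out-arcs only to $y_1$ and to later set agents. Hence every cycle passes along the chain $u_1\to\dots\to u_{3\nn}\to y_0\to y_1$. The only two cycles are $y_1\to u_1\to\dots\to y_0\to y_1$ and $\agx\to u_1\to\dots\to y_1\to\agx$. Both use the arc $(y_0,y_1)$, so deleting it makes the graph acyclic, i.e., there is one feedback arc.

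\textbf{Forward direction.} Given an exact cover $\excov$, we take the coalition $C=\{\agx,y_0,y_1,u_1,\dots,u_{3\nn}\}\cup\{s_j\mid \set j\in\excov\}$ and put every other agent in a singleton. We then check \IR\ agent by agent:
\begin{compactitem}[--]
\item $\agx$ gets $1$.
\item Each $u_i$ gets $3-3-1+1=0$, since exactly one chosen set contains~$i$.
\item $y_0$ gets $0$ and $y_1$ gets $0$.
\item Each chosen $s_j$ gets $1$, because the chosen sets are pairwise disjoint and so no $-2$ applies.
\end{compactitem}

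\textbf{Backward direction (the main step).} Let $C=\Pi(\agx)$ with $|C|\ge2$ in an \IR\ partition~$\Pi$. Since $\agx$ likes only $u_1$, we get $u_1\in C$. We then propagate membership along the chain. Each $u_i\in C$ loses $3$ from $\agx$, and its set-friends contribute at most $3$ because $i$ lies in exactly three sets; so it needs $u_{i+1}$ (respectively $y_0$), which pulls in all $u_i$ and $y_0$. Next, $y_0$ loses $1$ from $\agx$ and can only compensate with $y_1$, so $y_1\in C$. Now each $u_i$ has $3-3-1$ plus its set-friends, so it needs at least one $s_j\in C$ with $i\in\set j$. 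Finally, each $s_j\in C$ has exactly $+1$ from $y_1$ and $-2$ for each later intersecting set in $C$. Therefore the chosen sets are pairwise disjoint: for any intersecting pair, the smaller-index agent would get a negative total. Pairwise disjoint sets covering all $3\nn$ elements form an exact cover. The delicate part is exactly this weight bookkeeping, in particular making the chain arcs outweigh all possible set contributions so that no $u_i$ can be satisfied without its successor. I would verify these inequalities carefully and adjust constants (for example chain weight $4$, $\agx$-penalty $-4$) if a boundary case fails.
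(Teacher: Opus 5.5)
\textbf{There is a genuine gap: your reduction, as specified, sends every RX3C-instance to a yes-instance.} You set $\mu_x(s_j)=0$, and your forward computation (``each chosen $s_j$ gets $1$'') confirms that $\mu_{s_j}(x)=0$. Hence the coalition $\{x,s_j\}$ gives both members utility $0$. With all other agents as singletons this partition is IR and $x$ is not alone, whether or not an exact cover exists.

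The error starts in your first backward step, ``since $x$ likes only $u_1$, we get $u_1\in C$''. Under additive preferences, individual rationality only requires nonnegative total utility. An agent with no negative utilities, such as your $x$, is content in every coalition. The constraint has to come from the other side: every agent other than $x$, in particular every set-agent, must dislike $x$, so that whoever joins $x$ must be compensated.

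That is how the paper proceeds. There $x$ is indifferent to everyone, and every other agent has utility $-1$ towards $x$. One then shows that any companion of $x$ forces $y_1\in C$:
\begin{itemize}
\item a set-agent or $y_0$ can only be compensated by $y_1$;
\item an element-agent needs its successor or a set-agent, and the chain ends at $y_0$.
\end{itemize}
Then $u_1\in C$, because $u_1$ is the only agent $y_1$ likes, and only now does the chain argument begin.

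\textbf{The one-feedback-arc claim also fails.} The preference graph has an arc for every \emph{nonzero} utility, including negative ones. So $(x,u_1),(u_1,x)$ and $(y_1,u_1),(u_1,y_1)$ form two arc-disjoint $2$-cycles, and neither uses $(y_0,y_1)$. At least two arcs must therefore be removed. You can repair this by making $x$ a sink (drop its positive utility towards $u_1$) and taking $(y_1,u_1)$ as the feedback arc. Then $y_1$'s only out-neighbours are $u_1$ and $x$, and after deleting $(y_1,u_1)$ the order $u_1,\dots,u_{3\hat n},s_1,\dots,s_{3\hat n},y_0,y_1,x$ is topological.

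\textbf{Your chain step is not valid as phrased either.} Since $-3+3=0$ is individually rational, ``set-friends contribute at most $3$'' does not force $u_{i+1}\in C$. You need one of two facts:
\begin{itemize}
\item the three sets containing $i$ pairwise intersect, so the set--set dislikes allow at most one of them in $C$;
\item or, having first established $y_1\in C$, its $-1$ makes the inequality strict.
\end{itemize}
With these repairs (every non-$x$ agent dislikes $x$, $x$ indifferent, feedback arc $(y_1,u_1)$), your weights work and the construction essentially coincides with the paper's.
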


\begin{proof}
  The \NP-containment follows directly from \cref{obs:complexityupperbounds}.
  Moreover, since $\budget = 0$, both control problems reduce to checking whether there is an \IR\ partition with agent~$x$ not alone in his coalition. 
  It remains to show the NP-hardness for the mentioned restriction.
  In the remainder of the proof, we will focus on this restricted variant and reduce from the following well-known \NP-complete problem~\cite{Gonzalez1985}.
  \decprob{\RETCf~(\RETC)}
  {A $3\nn$-element set $\els = [3\nn]$ and a collection~$\sets = \{\set 1, \dots, \set{3\nn}\}$ of $3$-element subsets of~$\els$ such that each element appears in \emph{exactly} three members of $\sets$.}{Does~$\sets$ contain an \myemph{exact cover} for~$\els$,  i.e., a subcollection~$\excov \subseteq \sets$ such that $|\excov|=\nn$ and $\cup_{\set{j}\in \excov}\set{j} = \els$?}

  Let $I = (\els = [3\nn], \sets)$ be an instance of \RETC. 
  We construct an \AD-instance whose preference graph has one feedback arc.
  The set of agents is $\agentsU = \{x, \axag{0}, \axag{1}\} \cup \{\elag{i} \mid i \in \els\} \cup \{\sset{j} \mid \set{j} \in \sets\}$, where
  \begin{compactitem}[--]
    \item $\agx$ is the special agent not to be alone,
    \item $\axag{0},\axag{1}$ are auxiliary-agents,
    \item each element~$i\in \els$ has one \myemph{element-agent~$\elag{i}$},
    \item each set~$\set{j}\in \sets$ has one \myemph{set-agent~$\sset{j}$}.
  \end{compactitem} 
  We construct the utilities as follows; the unmentioned utilities are~$0$:
  \begin{compactitem}[--]
    \item For each agent~$a \in \agentsU\setminus \{x\}$, set $\util{a}(x) = -1$.
    \item For each element~$i \in [3\nn-1]$, set $\util{\elag{i}}(\elag{i+1}) = 1$, for element-agent $\elag{3\nn}$, set $\util{\elag{3\nn}}(\axag{0}) = 1$.
    \item For auxiliary-agent $\axag{0}$, set $\util{\axag{0}}(\axag{1}) = 1$, for $\axag{1}$, set $\util{\axag{1}}(\elag{1}) = 1$.
    \item For each element~$i \in [3\nn]$, set $\util{\elag{i}}(\axag{1}) = -1$.
    \item For each set $\set{j} \in \sets$ and all elements $i \in \set{j}$, set $\util{\elag{i}}(\sset{j}) = 1$.
    \item For each set $\set{j} \in \sets$, set $\util{\sset{j}}(\axag{1}) = 1$.
    \item For all $\set{j}, \set{\ell} \in \sets$ with $j < \ell$ and $\set{j} \cap \set{\ell} \neq \emptyset$, set $\util{\sset{j}}(\sset{\ell}) = -1$.
  \end{compactitem}

  \ifcr
  \else
    An illustration of the preference graph of the \ADD-instance is depicted in \cref{fig:thm:AD-IR-NA}.
  \fi
  To complete the construction, we define $\agx$ as the agent not being alone.
  Before we show the correctness, we first observe that by deleting the arc $(\axag{1},\elag{1})$ we obtain a \DAG\ with following topological order: $\elag{1},\dots, \elag{3\nn}, \allowbreak \sset{1}, \dots, \sset{3\nn}, \allowbreak \axag{0}, \axag{1},\allowbreak x$.

  \begin{restatable}[\appsymb]{claim}{claimADIRNAForward}
   \label{clm:thm:AD-IR-NA-Forward}
   If $\excov$ is an exact cover for~$I$, then the following partition~$\Pi$
   with $\Pi(\agx) = \{\agx, \elag{1}, \elag{2}, \dots, \elag{3\nn}, \axag{0},\axag{1}\} \cup \{\sset{j} \mid \set{j} \in \excov\}$ and $\Pi(\sset{j}) = \{\sset{j}\}$ for all $\set{j} \notin \excov$ is \IR.
 \end{restatable}
 \appendixproofwithstatement{clm:thm:AD-IR-NA-Forward}{\claimADIRNAForward*}{
  To show the statement, we show that every agent has zero utility towards his coalition under $\Pi$.

  Let us first consider agent $\agx$, he has no preference to any of the agents, thus his utility towards his coalition $\Pi(x)$ is 0.

  Next we show that every element-agent~$\elag{i}, i \in [3\nn]$, has zero utility towards his coalition. 
  All have -1 utility towards $\agx$ and $\axag{1}$. 
  Every element-agent is together with $\elag{i+1}$ for which $\elag{i}$ has positive utility, except $\elag{3\nn}$ has positive utility towards $\axag{0}$.
  Since $\excov$ is an exact cover, there must be a set $\set{j} \in \excov$ such that $i \in S_j$.
  Thus, $\util{\elag{i}}(\Pi(\elag{i})) = \util{\elag{i}}(\{x, \axag{1}, \elag{i+1}, \sset{j}\}) = -1 -1 + 1 +1 = 0$ for $i \in [3\nn -1]$ and $\util{\elag{3\nn}}(\Pi(\elag{3\nn})) = \util{\elag{3\nn}}(\{x, \axag{1}, \axag{0}, \sset{j}\}) = -1 - 1 + 1 + 1 = 0$.

  Auxiliary-agents $\axag{0},\axag{1}$ both have negative utility towards~$\agx$, but~$\axag{0}$ has positive utility towards~$\axag{1}$ and $\axag{1}$ has positive utility towards $\elag{1}$. 
  Thus, both have zero utility towards their coalition.

  All set-elements $\sset{j}$ whose corresponding set $\set{j} \in \excov$ have negative utility towards $x$ and positive utility towards $\axag{1}$.
  Hence, set~$\excov$ is an exact cover, all sets in $\excov$ are pairwise disjoint, ensuring that no set-agent $\sset{j}$ has negative utility towards any other set-agent in $\Pi(\sset{j})$.
  It holds that $\util{\sset{j}}(\Pi(\sset{j})) = \util{\sset{j}}(\{\agx, \axag{1}\}) = -1 + 1 = 0$ and so $\Pi(\sset{j})$ is \IR\ for every set-agent from $\excov$.

  Clearly, every set-agent $\sset{j}$ with $\set{j} \notin \excov$ has zero utility towards his singleton coalition.
  This concludes the proof of $\Pi$ being \IR.
 }

 \begin{restatable}{claim}{claimADIRNABackward}
   \label{clm:thm:AD-IR-NA-Backward}
   If $\Pi$ is an \IR\ partition with $|\Pi(\agx)| \geq 2$, then the sets corresponding to the set-agents in $\Pi(\agx)$ form an exact cover of $I$.
 \end{restatable} 
\begin{claimproof}{clm:thm:AD-IR-NA-Backward}

  Let $C \coloneqq \Pi(x)$ be the coalition containing special agent~$\agx$ with $|C| \geq 2$.
  We want to show that the sets corresponding to set-agents contained in $C$ form an exact cover of $I$.
  We start by showing that auxiliary-agent $\axag{1}$ must be in $C$.

  If $\sset{j} \in C$ for some $\set j \in \sets$, then he needs to be in the same coalition with $\axag{1}$ for $C$ to be \IR, because he has negative utility towards $\agx$ and the only positive utility is towards $\axag{1}$.
  The same argument holds if $\axag{0} \in C$: He has negative utility towards $x$ and his only positive utility towards $\axag{1}$.

  If an element-agent $\elag{i} \in C$ for some $i \in [3\nn]$, then there is a set-agent $\sset{j}$ with $i \in \set{j}$ such that $\sset{j} \in C$, which implies $\axag{1} \in C$, or $\elag{i+1} \in C$; otherwise $\elag{i}$ has negative utility towards $C$.
  By repeating this argument, either a set-agent--and thus $\axag{1}$--must be in $C$, or $\elag{3\nn} \in C$.
  However, if $\elag{3\nn}$ is in $C$, then either a set-agent or $\axag{0}$ must be in C; in both cases, this implies $\axag{1}$ must also be in $C$.

  Thus $\axag{1} \in C$ and hence $\elag{1} \in C$, because $\axag{1}$ has negative utility towards $\agx$ and only positive utility towards $\elag{1}$.
  Agent $\elag{1}$ has -1 utility towards $\agx$ and $\axag{1}$ in $C$, so $\elag{1}$ needs at least one set-agent~$\sset{j}$ with $1 \in \set{j}$ to not deviate.
  However, there can be at most one $\sset{j} \in C$ with $1 \in \set{j}$:
  Suppose, towards a contradiction, that there are $\sset{j}, \sset{j'} \in C, j < j'$ with $1 \in \set{j}$ and $1 \in \set{j'}$. 
  Then $\sset{j}$ has negative utility towards $x$ and $\sset{j'}$.
  But he has only one positive out-arc to $\axag{1}$, so $C$ would not be \IR\ anymore, contradiction.
  
  Hence we have exactly one set-agent $\sset{j} \in C$ with $1 \in \set{j}$. Thus $\elag{2} \in C$ for $C$ to be \IR. 
Similarly to element~$1$, there must be exactly one $\sset{j} \in C$ with $2 \in \set{j}$. 
  Same arguments hold for $\elag{3},\dots, \elag{3\nn}$. 

  To summarize, we need for every element-agent $\elag{i}$ for $i \in [3\nn]$ at least one set-agent $\sset{j}$ with $i \in \set{j}$ which implies we have a set cover of all sets corresponding to the set-agents contained in $C$.
  As we have discussed, the sets must also be disjoint, so the set cover is also an exact-cover.
\end{claimproof}
 The correctness follows immediately from Claims \ref{clm:thm:AD-IR-NA-Forward} and \ref{clm:thm:AD-IR-NA-Backward}.
\end{proof}

The above result is tight since for DAGs or symmetric preferences, we can solve the problem in polynomial time.
In both cases we need to find only one agent $a \in \agentsU$, such that $\agx$ has non-negative utility towards agent $a$ or vice versa:

\begin{restatable}[\appsymb]{proposition}{propDagSymADIRNAAddAg}
  For DAGs, \controlprob{\IR}{\NA}{\AddAg}{\AD} and hence \controlprob{\CS}{\NA}{\AddAg}{\AD} is polynomial-time solvable.
  For symmetric preferences, \controlprob{\IR}{\NA}{\AddAg}{\AD} is polynomial-time solvable.
  \label{prop:SYM-AD-IR-NA-ADD}
\end{restatable}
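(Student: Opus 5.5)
The plan is to reduce the problem, in both settings, to finding a single partner for $\agx$. I would prove the following characterization: $\agentsU\cup\agentsW'$ admits an \IR\ partition with $\agx$ not alone if and only if some agent $a\in\agentsU\cup\agentsW'\setminus\{\agx\}$ satisfies $\util{\agx}(a)\ge 0$ and $\util{a}(\agx)\ge 0$.

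The ``if'' direction is immediate. The partition with coalition $\{\agx,a\}$ and all other agents as singletons is \IR, since every agent has utility at least zero towards its coalition.

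Given the characterization, the algorithm is straightforward. If such an $a$ exists in $\agentsU$, answer yes. Otherwise, if $\budget\ge 1$ and such an $a$ exists in $\agentsW$, add only $a$ and answer yes. In all other cases answer no. This step uses the fact that the characterization holds for every agent set $\agentsU\cup\agentsW'$, so we may assume $|\agentsW'|\le 1$. The check takes linear time.

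The main work is the ``only if'' direction. Let $\Pi$ be \IR\ with $C\coloneqq\Pi(\agx)$ and $|C|\ge 2$.

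\emph{Symmetric case.} Since $\util{\agx}(C)=\sum_{a\in C}\util{\agx}(a)\ge 0$ and the sum is over at least one other agent, some $a\in C\setminus\{\agx\}$ has $\util{\agx}(a)\ge 0$. By symmetry, $\util{a}(\agx)=\util{\agx}(a)\ge 0$, so $a$ is the required partner.

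\emph{\DAG\ case.} I split by the arcs of $\agx$ inside $G[C]$.
\begin{compactitem}[--]
\item If $\agx$ has an arc of positive weight to some $a\in C$, then $\util{\agx}(a)>0$. Acyclicity rules out the arc $(a,\agx)$, so $\util{a}(\agx)=0$, and $a$ works.
\item Otherwise every arc of $\agx$ within $C$ is negative. Since $\util{\agx}(C)\ge 0$, $\agx$ has no out-arcs in $G[C]$ at all, so $\util{\agx}(a)=0$ for every $a\in C$. Fix a topological order of $G[C]$ and let $a$ be the last agent of $C\setminus\{\agx\}$ in it. All out-neighbours of $a$ inside $C$ come after $a$, so the only possible one is $\agx$. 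Hence $\util{a}(C)=\util{a}(\agx)$, and \IR\ gives $\util{a}(\agx)\ge 0$, so $a$ again works.
\end{compactitem}
Choosing $a$ correctly in the second sub-case is the one step I expect to need care. An arbitrary sink or source of $G[C]$ need not work, because its arc to or from $\agx$ could be negative. Taking the last non-$\agx$ agent in the topological order is what makes $\util{a}(C)$ collapse to $\util{a}(\agx)$.

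Finally, for \DAG s, \cref{obs:dag:eq} shows that \CS\ and \IR\ partitions coincide. The sub-instance induced by $\agentsU\cup\agentsW'$ is still acyclic, so the same algorithm solves \controlprob{\CS}{\NA}{\AddAg}{\AD} for \DAG s.
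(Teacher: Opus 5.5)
Your proof is correct and follows the same route as the paper. Both reduce the problem to finding a single partner $a$ with $\util{\agx}(a)\ge 0$ and $\util{a}(\agx)\ge 0$, pair $\agx$ with $a$ so that at most one agent from $\agentsW$ is ever needed, and transfer the \DAG\ result to \CS\ via \cref{obs:dag:eq}.

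The only deviation is in the \DAG\ ``only if'' direction. There, your choice of the last non-$\agx$ agent in a topological order of $G[C]$ is a tidier replacement for the paper's informal chain-of-compensations argument. Your explicit conjunction is also the right condition: the paper's text writes ``or'', which fails e.g.\ when $\util{\agx}(a)=0>\util{a}(\agx)$.
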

\appendixproofwithstatement{prop:SYM-AD-IR-NA-ADD}{\propDagSymADIRNAAddAg*}{
  If the preference graph of the \controlprob{\IR}{\NA}{\AddAg}{\AD} instance is a \DAG, then the special agent $\agx \in \agentsU$ will always be alone in an \IR\ partition $\Pi$ if and only if $\agx$ has only negative in- or out-arcs to every other agent.
  
  Considering the case where $\agx$ has only negative in- or out-arcs.
  If $\agx$ is in a coalition $C$ with agent $a$ with $\util{x}(a) < 0$, then $\agx$ will always have negative utility towards $C$ and wishes to deviate from $C$.
  Now suppose that $\agx$ is in a coalition $C$ with agent $a$ with $\util{a}(x) < 0$.
  In order for $a$ not to deviate, there must exist another agent $i \in C$ with $\util{a}(i) \geq \util{a}(x)$.
  However, again either $\util{\agx}(i) < 0$ or $\util{i}(\agx)$ holds.
  In the first case, as we have already seen, $\agx$ wishes to deviate from $C$. 
  In the second case, agent $i$ must itself be compensated by another agent $j \in C$ such that $\util{i}(j) \geq \util{i}(\agx)$.

  Repeating this argument, we must eventually terminate with an agent $i^* \in \agents \setminus \{\agx\}$ who has negative utility towards $C$ or agent $\agx$ has negative utility towards $C$.
  Hence, $C$ cannot be \IR.

  In the other case where there exists an agent $a \in \agentsU \cup \agentsW$ with $\util{\agx}(a) \geq 0$ or $\util{a}(\agx) \geq 0$, then the partition $\Pi$ with $\Pi(\agx) = \Pi(a) = \{\agx,a\}$ and $\Pi(a') = \{a'\}$ for all $a' \in \agentsU \setminus \{x,a\}$ is \IR. 

  If the utilities of the \controlprob{\IR}{\NA}{\AddAg}{\AD} instance is symmetric, then we only need to find an agent $a \in \agents$ with $\util{x}(a) = \util{a}(x) \geq 0$. 

  Note that in both cases, when the instance is a \DAG\ or has symmetric preferences, we would need to add at most one additional agent from $\agentsW$ to find an \IR\ partition with $x$ not being alone.
}

Next, we consider the two more stringent stability concepts \IS\ and \NS. %
It is known that determining the existence of \NS\ and \IS\ partitions is \NP-hard on \AD~\cite{Sung10_Additive}.
Hence, by \cref{obs:gra:eq}, the control problems with goals \NA\ and \PA\ are \NP-hard as well.
We discover that the control problems remain \NP-hard even when the preference graph is a DAG or symmetric.

\cref{thm:DAGSYM-AD-ISNS-NA-ADD} to \cref{thm:SYMMAD-IR-PA} are all shown via reductions from \RETC, using approaches that are similar in structure but distinct in their technical details.

\begin{restatable}[\appsymb]{theorem}{thmDAGSYMADISNSNAADD}
  For each stability concept $\S \in \{\IS,\NS\}$ and each control action $\A \in \{\AddAg, \DelAg\}$, \controlprob{\S}{\NA}{\A}{\AD} and  \controlprob{\S}{\PA}{\A}{\AD} are NP-complete; NP-hardness remains even if the budget is $k=0$ and the preference graph is a DAG.
  \label{thm:DAGSYM-AD-ISNS-NA-ADD}
\end{restatable}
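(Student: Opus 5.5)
With $\budget=0$, both control actions reduce to the same question: does the original instance admit an $\S$-partition in which $\agx$ is not alone (for \PA: in which $\agx$ and $\agy$ are together)? \NP-membership is given by \cref{obs:complexityupperbounds}. So I only need an \NP-hardness reduction from \RETC\ that produces an acyclic preference graph. The reduction has to be sharper than the one in \cref{thm:AD-IR-NA}, for two reasons. First, for \DAG s, \IR\ control with \NA\ is polynomial (\cref{prop:SYM-AD-IR-NA-ADD}), so the feedback arc $(\axag{1},\elag{1})$ cannot be used. Second, \IR\ partitions always exist (all singletons), so the hardness must come from the extra \IS/\NS\ deviation conditions. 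The plan is to use these conditions to force every element-agent into $\agx$'s coalition, instead of using a cycle.

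The construction keeps element-agents $\elag{i}$ and set-agents $\sset{j}$, ordered topologically as $\agx$, then element-agents, then set-agents, then a sink $t$.
\begin{compactitem}[--]
\item $\agx$ has utility $1$ towards every $\elag{i}$ and towards $t$. It has a large negative utility towards some auxiliary sink agents; these only matter in the sense that $\agx$ must be outside their coalition.
\item Each $\elag{i}$ has utility $-1$ towards $t$ and $+1$ towards each $\sset{j}$ with $i\in\set{j}$.
\item Each $\sset{j}$ has utility $+1$ towards $t$ and $-1$ towards every $\sset{\ell}$ with $\ell>j$ and $\set{j}\cap\set{\ell}\neq\emptyset$.
\end{compactitem}
All arcs point forward in the order, so the graph is a \DAG. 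The intended partition for an exact cover $\excov$ is $\{\agx,t\}\cup\{\elag{i}\}_i\cup\{\sset{j}\mid \set{j}\in\excov\}$, with everyone else a singleton.

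Forward direction. Every element-agent gets utility exactly $0$. Every chosen set-agent gets $+1$. A non-chosen $\sset{j}$ that moved to the big coalition would meet an overlapping chosen set, which it dislikes. It then gets at most $0$, so it has no strict improvement and the partition is \NS\ (hence \IS). Here the utilities must be tuned, for example by giving the set-agents weight $2$ on overlaps, so that this check goes through.

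Backward direction. Start from an \IS\ partition with $|\Pi(\agx)|\ge 2$. The first aim is $t\in\Pi(\agx)$. To get it, I would add a singleton-deviation gadget, for instance an agent $\agx$ is strictly attracted to and who is indifferent to $\agx$, placed so that \IS\ forces $\agx$ next to $t$ whenever $\agx$ is not alone. Next, if some $\elag{i}$ were outside $\Pi(\agx)$, then $\agx$ would strictly prefer a coalition containing it. Making $\agx$'s positive weights dominant forces all element-agents into $\Pi(\agx)$. Individual rationality of each $\elag{i}$, which pays $-1$ for $t$, then requires a covering set-agent. Individual rationality of the set-agents forbids overlapping pairs, which yields an exact cover.

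For \PA, I add $\agy$ as a sink liked by $\agx$, with zero utilities elsewhere, so that $\agy$ joining costs nothing.

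The main obstacle is the backward direction: \IS\ and \NS\ constrain deviations to existing coalitions, not to arbitrary ones. So the no-instance argument must rule out partitions in which $\agx$ sits in some small, ``cheap'' coalition, such as $\{\agx,\elag{i}\}$ or $\{\agx,t\}$ alone. My first attempt would be to make every small coalition containing $\agx$ unstable: some element-agent outside it must form a blocking tuple with it, or $\agx$ must strictly prefer another existing coalition. I would calibrate the weights, for example $\util{\agx}(\elag{i})=1$ and large weights on $t$, so that the only non-singleton coalition of $\agx$ surviving all deviations is the full cover coalition. If this calibration fails, the fallback is per-element gadgets (chains $\elag{i}\to\elag{i+1}$ that point forward, as in \cref{thm:AD-IR-NA} but without the back arc), which propagate membership through \NS\ deviations.
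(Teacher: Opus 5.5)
Your proposal has a genuine gap, and it already shows up in the forward direction. In your intended partition every $\elag{i}$ has utility exactly $0$: it gets $-1$ from $t$ and $+1$ from its unique chosen set-agent. In \RETC, however, each $i$ lies in three sets, and only one of them is in $\excov$. So there are two unchosen set-agents $\sset{j}$ with $i\in\set{j}$, and they sit in singleton coalitions. If $\elag{i}$ moves to $\{\sset{j}\}$ it gets utility $1>0$, and $\sset{j}$ has utility $0$ towards $\elag{i}$. This is therefore both an \NS\ deviation and an \IS\ blocking tuple. Reweighting alone does not fix it. Covering is needed only because $\elag{i}$ is penalized inside $\Pi(\agx)$, so an unpenalized coalition with another covering set-agent is at least as attractive. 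You would need gadgets that absorb the unchosen set-agents into coalitions that element-agents dislike, and then the backward direction must force set-agents into those gadgets. That is where the real work lies.

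There is a separate problem for \NS. Acyclicity only allows overlap arcs $\sset{j}\to\sset{\ell}$ with $j<\ell$. An unchosen $\sset{j}$ whose overlapping chosen sets all have smaller index therefore feels no penalty in the big coalition and strictly gains from $t$. Changing the overlap weight to $2$ does not address this.

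The backward direction is not yet an argument either. Your step ``if some $\elag{i}$ were outside $\Pi(\agx)$, then $\agx$ would strictly prefer a coalition containing it'' is false. Under both \IS\ and \NS, $\agx$ compares $\util{\agx}(B)$ for the whole target coalition $B$ with $\util{\agx}(\Pi(\agx))$, not marginal gains. Once $\Pi(\agx)$ holds $t$ and many element-agents, a coalition containing one stray element-agent is unattractive, so nothing forces that agent in. The gadget that is supposed to put $t$ into $\Pi(\agx)$ is also left unspecified.

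The missing idea, which the paper uses, is to let the \IS\ conditions of the \emph{set-agents} do a counting argument, instead of relying on individual rationality of element-agents:
\begin{itemize}
\item Element-agents have no preferences at all, so they never deviate.
\item $\agx$ dislikes everyone except $\agy$, so ``not alone'' pins $\Pi(\agx)=\{\agx,\agy\}$.
\item $\agy$ values $\agx$ and every set-agent at $1$, so \IS\ forbids two set-agents in one coalition.
\item Each set-agent values its three element-agents and every member of $2\nn$ acyclic triples $\dmag{\ell}{1},\dmag{\ell}{2},\dmag{\ell}{3}$ at $1$. The triples stay together by \IS, and distinct triples stay apart by a counting argument.
\end{itemize}
Then every set-agent must reach utility $3$, but only $2\nn$ of them can do so via triples. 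The remaining $\nn$ set-agents must each sit with all three of their element-agents, and disjointness of coalitions yields an exact cover. The same instance with the pair $\{\agx,\agy\}$ handles \PA.
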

\appendixproofwithstatement{thm:DAGSYM-AD-ISNS-NA-ADD}{\thmDAGSYMADISNSNAADD*}
{
  The \NP-containment follows directly from \cref{obs:complexityupperbounds}.
  Moreover, since $\budget = 0$, both control problems reduce to checking whether there is an \IS\ or \NS\ partition where a given pair of agents is in the same coalition.  
  It remains to show the NP-hardness for the mentioned restriction.
  We reduce again from \RETC.

  Let $I = (\els = [3\nn], \sets)$ be an instance of \RETC\ and we construct an \AD-instance with an acyclic preference graph. 
  The set of agents is $\agentsU = \{x, y\} \cup \{\elag{i} \mid i \in \els\} \cup \{\sset{j} \mid \set{j} \in \sets\} \cup \{\dmag{\ell}{1}, \dmag{\ell}{2}, \dmag{\ell}{3} \mid \ell \in [2\nn]\}$, where
  \begin{compactitem}[--]
    \item $x$ is the special agent, 
    \item each element~$i\in \els$ has one \myemph{element-agent~$\elag{i}$},
    \item each set~$\set{j}\in \sets$ has one \myemph{set-agent~$\sset{j}$},
    \item and auxiliary-agent~$\dmag{\ell}{1}, \dmag{\ell}{2}, \dmag{\ell}{3}$, for $\ell \in [2\nn]$ occurring in triples.
  \end{compactitem}
  We construct the utilities as follows; the unmentioned utilities are~$0$:
  \begin{compactitem}[--]
    \item For each element~$i \in [3\nn]$, set $\util{y}(\elag{i}) = 1$.
    \item For each agent~$a\in \agentsU\setminus \{x,y\}$, set $\util{x}(a) = -1$.
    \item For each set~$\set{j} \in \sets$ and all elements~$i \in \set{j}$,
    set $\util{\sset j}(\elag{i}) = 1$.
    \item For each set~$\set j \in \sets$ and each index~$\ell \in [2\nn]$, set $\util{\sset j}(\dmag{\ell}{1}) = \util{\sset j}(\dmag{\ell}{2}) = \util{\sset j}(\dmag{\ell}{3}) = 1$.
    \item For all~$\ell \in [2\nn]$, set $\util{\dmag{\ell}{1}}(\dmag{\ell}{2}) = \util{\dmag{\ell}{1}}(\dmag{\ell}{3}) = \util{\dmag{\ell}{2}}(\dmag{\ell}{3}) = 1$.
  \end{compactitem}
  The preference graph of the instance is depicted in \cref{fig:thm:DAGSYM-AD-ISNS-NA-ADD}.
  Before we show the correctness, we first observe that the following sequence is a topological order of the preference graph: $y, x, \sset{1}, \sset{2}, \dots, \sset{3\nn}, \allowbreak 
  \elag{1}, \elag{2}, \dots, \elag{3\nn}, \allowbreak 
  \dmag{1}{1}, \dmag{1}{2}, \dmag{1}{3}, \allowbreak 
  \dmag{2}{1}, \dmag{2}{2}, \dmag{2}{3} \dots, \allowbreak
  \dmag{2\nn}{1}, \dmag{2\nn}{2}, \dmag{2\nn}{3}$.

  In the forward direction of the correctness proof, we show that if $I$ has an exact cover, we can construct a \NS\ partition with $|\Pi(x)| \geq 2$, which is by \cref{obs:stability-relation} also \IS.
  In the backward direction, we assume the partition $\Pi$ being \IS--an even weaker assumption than \NS--and show that this implies the existence of an exact cover for~$I$.

  \begin{claim} \label{clm:thm:DAG-AD-ISNS-NA-ADD-Forward}
   If $\excov$ is an exact cover for~$I$, then the following partition~$\Pi$ is \NS:
   $\Pi(\agx) = \{\agx,\agy\}$, $\Pi(\sset{j}) = \{\sset{j}, \elag{j_1}, \elag{j_2}, \elag{j_3}\}$ for $\set{j} = \{j_1,j_2,j_3\} \in \excov$ and $\Pi(\sset{j}) = \{\sset{j}, \dmag{\ell}{1}, \dmag{\ell}{2}, \dmag{\ell}{3}\}$ for $\set{j} \notin \excov$ for some $\ell \in  [2\nn]$.
 \end{claim}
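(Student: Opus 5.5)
The plan is to verify Nash stability agent by agent: for each agent I compute his utility in his own coalition under~$\Pi$, and compare it with the utility of every unilateral move, that is, joining another coalition of~$\Pi$ or going alone.

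\textbf{Step 1: $\Pi$ is a partition.} Since $\excov$ is an exact cover of the \RETC-instance, $|\excov| = \nn$, so exactly $2\nn$ set-agents are not in~$\excov$. I match these bijectively with the $2\nn$ triples $\{\dmag{\ell}{1},\dmag{\ell}{2},\dmag{\ell}{3}\}$. Each element-agent then lies in exactly one coalition $\{\sset{j}\}\cup\{\elag{i}\mid i\in\set{j}\}$, and each auxiliary-agent lies in exactly one coalition with a set-agent.

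\textbf{Step 2: the agents other than $\agx,\agy$.} These checks use that the preference graph is acyclic and that all non-zero utilities of these agents are~$+1$.
\begin{compactitem}[--]
\item Each set-agent $\sset{j}$ has utility~$3$, the maximum he can get from any coalition of~$\Pi$. Every other coalition contains at most three of his friends, whether element-agents or auxiliary-agents, so he cannot strictly improve.
\item Each element-agent has no non-zero utilities, so he has utility~$0$ everywhere and never strictly improves.
\item In a triple, $\dmag{\ell}{3}$ has utility~$0$ everywhere. The agents $\dmag{\ell}{1}$ and $\dmag{\ell}{2}$ already have their only positive partners in their own coalition, so they have nothing to gain by moving.
\item Agent~$\agx$ has utility~$0$ in $\{\agx,\agy\}$. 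Every other coalition contains agents towards whom he has utility~$-1$, and going alone gives~$0$, so he does not move.
\end{compactitem}

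\textbf{Step 3: agent~$\agy$, the main obstacle.} Agent~$\agy$ has utility~$0$ in $\{\agx,\agy\}$, since $\util{\agy}(\agx)=0$. However, as the utilities are literally stated ($\util{\agy}(\elag{i})=1$), joining a cover coalition $\{\sset{j},\elag{j_1},\elag{j_2},\elag{j_3}\}$ gives $\agy$ utility~$3$. So $\Pi$ would \emph{not} be \NS\ under that reading. My first move would therefore be to recheck the intended utilities around~$\agy$, i.e., whether the arcs are $\agy\to\elag{i}$ or are meant to attach $\agy$ differently, for instance as a friend of~$\agx$ only.

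Reversing the arcs to $\util{\elag{i}}(\agy)=1$ does not fix the problem either: element-agents would then gain by joining $\{\agx,\agy\}$. I would repair the gadget so that $\agy$ is indifferent or loses by leaving $\{\agx,\agy\}$, and then rerun Steps~1 and~2 unchanged. Once that is done, $\Pi$ is \NS, hence \IS\ by \cref{obs:stability-relation}, and it satisfies $\Pi(\agx)=\Pi(\agy)$ and $|\Pi(\agx)|\ge 2$, as the forward direction requires.
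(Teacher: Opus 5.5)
Your checks of $\agx$, the set-agents, the element-agents and the auxiliary triples are correct, as is the bijection between the $2\nn$ uncovered set-agents and the $2\nn$ triples. This is the same agent-by-agent verification the paper does. The gap is agent~$\agy$. You stop at ``repair the gadget'' and never show that $\agy$ has no strictly improving move, so the statement is left unproved.

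You are right that the bullet $\mu_y(u_i)=1$, read literally with $\mu_y(x)$ unmentioned (hence $0$), makes $\Pi$ fail \NS. But that bullet is a typo, and the intended utilities can be read off the paper itself:
\begin{itemize}
\item The figure draws weight-$1$ arcs $y\to x$ and $y\to s_j$ for every set-agent, and none from $y$ to element-agents.
\item The backward claim uses $\mu_y(x)=1$ ($y$ ``has utility 1 towards $C=\{x,y\}$'').
\item It also uses $\mu_y(s_j)=1$: no two set-agents can share a coalition, ``otherwise $y$ would prefer joining that coalition''.
\end{itemize}
With $\mu_y(x)=1$, $\mu_y(s_j)=1$ and $\mu_y(u_i)=0$, the missing step is one line. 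Agent $y$ gets $1$ in $\{x,y\}$. Every other coalition of $\Pi$ contains exactly one set-agent and no other agent that $y$ values, so joining it yields exactly $1$, and being alone yields $0$. Hence $y$ has no strict improvement. The key observation is that $\Pi$ is built so that each coalition other than $\Pi(x)$ contains exactly one set-agent.

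Your tentative fix, making $y$ a friend of $x$ only, would make the forward claim go through but would break the reduction. The backward direction needs $y$'s positive utility towards set-agents to force all set-agents into distinct coalitions under \IS, and that is what eventually yields the exact cover. So the right move was to identify the intended gadget, not to redesign it; your remark about reversing arcs then becomes moot.
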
 
 \begin{claimproof}{clm:thm:DAG-AD-ISNS-NA-ADD-Forward}
  To show the statement, we need to show that every agent $\ag{i} \in \agentsU$ has no incentive to leave his current coalition, i.e., there exist no other coalition $C \in \Pi$ with $C \cup \{i\} \succ_i \Pi(i)$.
  
  Let us first consider agent~$x$. He has negative utility towards every agent except agent~$y$ and therefore wants to stay in his current coalition.

  Agent~$y$ has utility 1 towards his coalition $\{x,y\}$ and also towards all the other coalitions $C \in \Pi$, because they all contain only one set-agent $\sset{j}$. 
  Hence, agent~$y$ also wants to stay in his coalition with agent~$x$.

  For the set-agents $\sset{j}, \set j \in \sets$, they are either in a coalition with their three element-agents~$\elag{i}$ or with three auxiliary-agents~$\{\dmag{\ell}{1},\dmag{\ell}{2},\dmag{\ell}{3}\}$ for some $\ell \in [2\nn]$.
  In both cases, the set-agents have utility 3 towards their coalitions and therefore, leaving a coalition would not strictly improve the utility.

  The element-agents~$\elag{i}$, for $i \in [3\nn]$, do not have any preferences, so they also do not want to deviate.

  Auxiliary-agents~$\dmag{\ell}{1}$ for $\ell \in [2\nn]$ have utility 2 towards their coalitions and $\dmag{\ell}{2}$ for $\ell \in [2\nn]$ have utility 1 which is in both cases the maximum utility they can get. 
  The agents~$\dmag{\ell}{3}$ do not have any preferences, so they also do not want to deviate.
 \end{claimproof}

 \begin{claim}\label{clm:thm:DAG-AD-ISNS-NA-ADD-Backward}
   If~$\Pi$ is an \IS\ partition with $|\Pi(\agx)| \geq 2$, then $\agy\in \Pi(x)$ and all set-agents which are in a coalition with their three element-agents form an exact cover of $I$.
 \end{claim}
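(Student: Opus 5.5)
We argue directly from the utilities of the constructed DAG instance, using only individual rationality plus the \IS\ condition for a few specific blocking tuples. Let $C \coloneqq \Pi(\agx)$ with $|C|\ge 2$.

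\textbf{Step 1: $C=\{\agx,\agy\}$.} Agent~$\agx$ has utility $-1$ towards every agent except $\agy$ and $0$ towards $\agy$. By individual rationality his utility towards $C$ is non-negative, so $C \subseteq \{\agx,\agy\}$. Together with $|C|\ge 2$ this gives $C=\{\agx,\agy\}$, hence $\agy\in\Pi(\agx)$, and $\agy$ has utility $0$ towards his coalition.

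\textbf{Step 2: no coalition with an element-agent may lack a set-agent.} Suppose some coalition $D\in\Pi$ contains an element-agent $\elag{i}$. Then $\agy$ strictly prefers $D\cup\{\agy\}$, with utility at least $1$, to $C$. For \IS\ to hold, some agent of $D$ must strictly lose when $\agy$ joins. Only agents with an arc to $\agy$ can lose, and nobody has an arc to $\agy$ except $\agx$, who is not in $D$. Hence $\agy$ and $D$ form a blocking tuple, a contradiction.

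This is the key point: no coalition other than $C$ may contain an element-agent, yet element-agents must belong to some coalition. I need to recheck how this squares with the forward construction. There, $\agy$ faces coalitions $\{\sset j,\elag{j_1},\elag{j_2},\elag{j_3}\}$, which give him utility $3$, not $1$. So the forward claim's reasoning for $\agy$ seems off, and I expect the intended handling to differ. Possibly $\util{\agy}$ is meant to be $1$ towards set-agents, or the \IS\ acceptance involves set-agents. I would resolve this against the actual utilities before finishing, and I expect this to be the main obstacle.

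\textbf{Step 3: with that fixed, extract the exact cover.} Assuming $\agy$ is deterred by some balance, the core step runs as follows.

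\begin{itemize}
\item Each set-agent $\sset j$ has positive utility only towards element-agents and auxiliary-agents. Consider the $2\nn$ auxiliary triples. Each agent $\dmag{\ell}{1}$ or $\dmag{\ell}{2}$ wants to join its triple-mates, so \IS\ forces each triple to sit together in a coalition (possibly alongside set-agents).
\item A set-agent in a coalition of utility below $3$ would join a full triple or its element-agents. Such a move is accepted, since receivers have zero utility towards set-agents and therefore never lose. So every set-agent attains utility $3$.
\item Counting: there are $3\nn$ set-agents, $3\nn$ elements and $2\nn$ triples. Each triple serves at most one set-agent, since otherwise some set-agent could strictly improve by moving. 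Hence at least $\nn$ set-agents obtain their utility $3$ from their three element-agents, and at most one set-agent can hold each element.
\item Therefore exactly $\nn$ set-agents hold disjoint triples of element-agents. Since $\nn$ disjoint triples cover all $3\nn$ elements, these sets form an exact cover.
\end{itemize}

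The delicate parts are the counting and the exclusion of mixed coalitions. To handle them, I would take potential deviations one at a time and apply the "receivers are indifferent to set-agents" principle.
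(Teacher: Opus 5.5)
You correctly spotted that the bullet ``$\mu_y(u_i)=1$'' in the construction is a typo. Read literally, your Step~2 shows that no \IS\ partition with $|\Pi(x)|\ge 2$ exists at all, which would make this claim vacuous and the forward claim false. Your guess is the intended reading: the figure for this reduction and the proofs of both claims use $\mu_y(x)=1$ and $\mu_y(s_j)=1$ for every set-agent. So $y$ has utility exactly $1$ in $C=\{x,y\}$, and Step~2 must be discarded.

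The genuine gap is that Step~3 never uses $y$, even though $y$ is exactly the gadget the claim rests on. The paper's key step is that nobody has negative utility towards $y$, and $y$ would get at least $2$ by joining any coalition containing two set-agents. Hence \IS\ forces all set-agents into pairwise distinct coalitions. Your replacement, ``each triple serves at most one set-agent, since otherwise some set-agent could strictly improve by moving,'' is false. Take $\Pi=\{\{x,y\},\,U\setminus\{x,y\}\}$:
\begin{itemize}
\item every set-agent already has its maximum utility $6\hat n+3$;
\item $d^1_\ell$ and $d^2_\ell$ have their maximum, and $d^3_\ell$ and the element-agents are indifferent;
\item $x$ cannot strictly improve.
\end{itemize}
This holds whether or not $I$ has an exact cover, and the only blocking tuple is $(y,U\setminus\{x,y\})$. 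So no argument built only from deviations of set-, element-, and auxiliary-agents can work. Your ``at most one set-agent can hold each element'' needs the same distinctness.

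Once you have distinctness, your bullets go through essentially as written, and the ``delicate'' exclusion of mixed coalitions is not needed:
\begin{itemize}
\item Triples stay together, since the only agent disliking auxiliary agents is $x$, who sits in $\{x,y\}$.
\item Every set-agent has utility at least $3$, because it may join any coalition containing a triple, and such a coalition does not contain $x$.
\item There are at most $2\hat n$ coalitions containing a triple, and each holds at most one set-agent. So at least $\hat n$ set-agents have no auxiliary agent in their coalition and must be with all three of their element-agents.
\item Distinctness makes the sets of all set-agents that sit with their three element-agents pairwise disjoint. Hence there are exactly $\hat n$ of them, and they form an exact cover.
\end{itemize}
The paper additionally rules out two triples in one coalition by a counting argument, but given distinctness the bound above does not need it. Also, ``receivers have zero utility towards set-agents'' is not literally true: $x$ has $-1$ and $y$ has $+1$ towards them. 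This is harmless only because $\{x,y\}$ contains no triple.
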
 
 \begin{claimproof}{clm:thm:DAG-AD-ISNS-NA-ADD-Backward}
  We aim to show that all set-agents which are in a coalition with their three element-agents form an exact cover of $I$.

  By assumption, let $C$ be the coalition containing $x$. 
  The only coalition $C$ with $x$ not alone is $\{x,y\}$, because agent~$x$ has negative utility towards every other agent and would then prefer being alone.

  Hence, agent $y$ has utility 1 towards coalition $C$. As we know that $\Pi$ is \IS, no two set-agents  are in the same coalition.
  Otherwise, agent $y$ would prefer joining that coalition.

  We also know that for all~$\ell \in [2\nn]$, all three auxiliary-agents--$\dmag{\ell}{1}$, $\dmag{\ell}{2}$ and $\dmag{\ell}{3}$--must be in the same coalition.
  If $\dmag{\ell}{1}$ is not in the same coalition with $\{\dmag{\ell}{2}, \dmag{\ell}{3}\}$, $\dmag{\ell}{1}$ would have utility 0 towards his coalition and would therefore want to deviate to a coalition containing $\dmag{\ell}{2}$ or $\dmag{\ell}{3}$.
  If $\dmag{\ell}{2}$ is not in the same coalition with $\{\dmag{\ell}{1},\dmag{v}{3}\}$, he would want to deviate to the coalition containing $\dmag{\ell}{3}$.
  The same happens if $\dmag{\ell}{3}$ was separated: Then also $\dmag{\ell}{2}$ would want to join the coalition containing $\dmag{\ell}{3}$.

  Two auxiliary-agent triples $\{\dmag{\ell}{1}, \dmag{\ell}{2},\dmag{\ell}{3}\}$ and $\{\dmag{\ell'}{1}, \dmag{\ell'}{2},\dmag{\ell'}{3}\}$ cannot be in the same coalition. 
  Suppose, towards a contradiction, there exists a coalition $C'$ containing at least two different auxiliary-agents triples, i.e. $C' \subseteq \{\dmag{\ell}{1}, \dmag{\ell}{2}, \dmag{\ell}{3}, \dmag{\ell'}{1}, \dmag{\ell'}{2}, \dmag{\ell'}{3}\}$.
  Any set-agent $\sset{j}$ has utility 6 towards $C'$, therefore all set-agents $\sset{j}$ must have utility at least 6 towards their coalition in order for $\Pi$ to remain \IS.
  Hence, the set-agents are together either with (1) their three element-agents and an auxiliary-agent triple or (2) two auxiliary-agent triples.
  However, at most $\nn$ set-agents can be together with his three element-agents and an auxiliary-agent triple. 
  For the remaining $2\nn$ set-agents, only $\nn$ auxiliary-agents triples are available. 
  Therefore, there must exist one set-agent $\sset{j^*}$ without any element-agent nor auxiliary-agents because all set-agents are in different coalitions. 
  The set-agent $\sset{j^*}$ has utility 0 towards his coalition, wanting to deviate to a coalition with auxiliary-agents, a contradiction to $\Pi$ being \IS.

  It remains to show that every set-agents~$\sset{j}, \set j \in \sets$ are either together with his three element-agents $\elag{j_1},\elag{j_2},\elag{j_3}$ contained in $\set{j} = \{j_1,j_2,j_3\}$ or with exactly three auxiliary-agents $\{\dmag{\ell}{1}, \dmag{\ell}{2}, \dmag{\ell}{3}\}$ for some $\ell \in [2\nn]$.
  We have shown that an auxiliary-agent triple $\{\dmag{\ell}{1}, \dmag{\ell}{2} ,\dmag{\ell}{3}\}$ is together in a coalition and no two different auxiliary-agent triples $\{\dmag{\ell}{1}, \dmag{\ell}{2}, \dmag{\ell}{3}\}$ and $\{\dmag{\ell'}{1}, \dmag{\ell'}{2}, \dmag{\ell'}{3}\}$ are in the same coalition.
  Hence, every set-agent $\sset{j}$ must have utility at least three towards his coalition.
  But at most $2\nn$ set-agents can be together with an auxiliary-agent triple.
  Thus, $\nn$ set-agents must be together with their corresponding three element-agents, which implies that the sets corresponding to these set-agents form an exact-cover. 
 \end{claimproof}

  \begin{figure}[t!]
  \centering  \begin{tikzpicture}[>=stealth',shorten <= 1pt, shorten >= 1pt]
  \def \xx {1.3}
  \def \xy {1}
  \foreach \x / \n in {1/1, 2/2, 4/j, 5/3n} {
    \node[unode] at (\x*\xx, -1.5*\xy) (s\n) {};
  }

  \node[unode] at (3*\xx, 0*\xy) (x) {};
  \node[unode] at (3*\xx, 1*\xy) (y) {};

  \node[] (h1) {};
  \node[] (h2) {};
  \node[] (h3) {};
  \node[] (h4) {};
  \coordinate (h1) at ($(x) + (-2.0, -0.75)$);
  \coordinate (h2) at ($(x) + (-0.7, -0.75)$);
  \coordinate (h3) at ($(x) + (0.7, -0.75)$);
  \coordinate (h4) at ($(x) + (2.0, -0.75)$);
  
  \foreach \x / \n in {1/1,2/2, 4/i, 5/3n} {
    \node[unode] at (\x*\xx, -2.5*\xy) (u\n) {};
  }

  \foreach \x / \n in {1.2/1, 4.8/3} {
    \node[unode] at (\x*\xx, -3.5*\xy) (d\n1) {};
    \node[unode] at (\x*\xx, -4.25*\xy) (d\n2) {};
    \node[unode] at (\x*\xx, -5*\xy) (d\n3) {};
  }

  \foreach \i / \p / \a / \r / \n in {%
    x/above left/2pt/0pt/{x}, y/above left/2pt/0pt/{y},
    u1/below left/0pt/-5pt/{u_1},   u2/below left/0pt/-5pt/{u_2},   ui/below left/0pt/-5pt/{u_i},  u3n/below left/0pt/-5pt/{u_{3\nn}}, 
    s1/below left/0pt/-5pt/{s_1},   s2/below left/0pt/-5pt/{s_2},   sj/below left/0pt/-5pt/{s_j},  s3n/below left/0pt/-5pt/{s_{3\nn}},
    d11/above right/-4pt/7pt/{d_1^1}, d12/above right/1pt/7pt/{d_1^2}, d13/above right/1pt/7pt/{d_1^3}, 
    d31/above left/-4pt/7pt/{d_{2\nn}^1}, d32/above left/1pt/7pt/{d_{2\nn}^2}, d33/above left/1pt/7pt/{d_{2\nn}^3}, 
  } {
     \node[\p = \a and \r of \i,fill=white,inner sep=0.1pt] {$\n$};
   }

  \path (s2) -- node[midway, inner sep=1pt] (sdots) {$\dots$} (sj);

  \path (u2) -- node[midway, inner sep=1pt] (udots) {$\dots$} (ui);

  \path (d12) -- node[midway, inner sep=1pt] (ddots) {$\dots$} (d32);

  \begin{pgfonlayer}{bg}
    \foreach \s / \t / \aa / \w / \typ in {
    x/h1/0/{-1}/ec, x/h2/0/{-1}/ec, x/h3/0/{-1}/ec, x/h4/0/{-1}/ec,
    y/x/0/{1}/fc, y/s1/20/{1}/fc, y/s2/0/{1}/fc, y/sj/0/{1}/fc, y/s3n/-20/{1}/fc,
    s1/u1/0/{1}/fc, s1/u2/0/{1}/fc, s1/ui/0/{1}/fc, sj/u2/0/{1}/fc, sj/ui/0/{1}/fc, sj/u3n/0/{1}/fc,
    s1/d11/70/{1}/fc, s1/d12/65/{1}/fc,s1/d13/65/{1}/fc,
    s3n/d31/-70/{1}/fc, s3n/d32/-65/{1}/fc,s3n/d33/-65/{1}/fc,
    d11/d12/0/{1}/fc, d11/d13/-30/{1}/fc, d12/d13/0/{1}/fc,
    d31/d32/0/{1}/fc, d31/d33/30/{1}/fc, d32/d33/0/{1}/fc%
    }
    {
      \draw[->, \typ] (\s) edge[bend right = \aa] node[fill=white, inner sep=1pt, midway] {\footnotesize $\w$} (\t);
    }
  \end{pgfonlayer}
\end{tikzpicture}
\caption{Illustration of the relevant part of the reduction for \cref{thm:DAGSYM-AD-ISNS-NA-ADD}, assuming that $\set{1} = \{1,2,i\}$ and $\set{j} = \{2,i,3\nn\}$.}\label{fig:thm:DAGSYM-AD-ISNS-NA-ADD}
\end{figure}
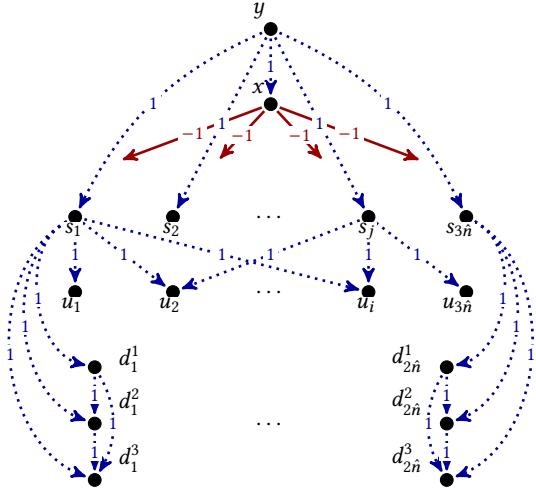

 The correctness follows immediately from \cref{clm:thm:DAG-AD-ISNS-NA-ADD-Forward,clm:thm:DAG-AD-ISNS-NA-ADD-Backward}. 
 Finally, we note that the same construction also works for \PA\ when we enforce that $x$ and $y$ must be in the same coalition.
}

Hardness for \IS\ and \NS\ extend to symmetric preferences.

\begin{restatable}{theorem}{thmSYMADISNSNAADD}
  For each stability concept $\S \in \{\IS,\NS\}$ and each control action $\A \in \{\AddAg, \DelAg\}$, \controlprob{\S}{\NA}{\A}{\AD} is NP-complete; NP-hardness remains even if the budget is $k=0$ and the preference graph is symmetric.
  \label{thm:SYM-AD-ISNS-NA-ADD}
\end{restatable}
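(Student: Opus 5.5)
We reduce from \RETC, as in \cref{thm:DAGSYM-AD-ISNS-NA-ADD}; NP-containment follows from \cref{obs:complexityupperbounds}. With $\budget=0$, both control actions reduce to deciding whether an \IS\ (resp.\ \NS) partition exists in which $\agx$ is not alone. Since the symmetric construction must make $\agx$ form a coalition only if an exact cover exists, we keep the overall design of the DAG reduction. That is, there is a ``witness'' agent who would like to leave $\agx$'s coalition whenever some other coalition is attractive enough, and set-agents are forced either to sit with their three element-agents or to absorb one auxiliary gadget.

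\textbf{Construction.} We take agents $\agx,\agy$, element-agents $\elag i$, set-agents $\sset j$, and $2\nn$ auxiliary gadgets. The edges are undirected, with weights chosen as follows.
\begin{itemize}
\item $\agx$ and $\agy$ have mutual utility $1$, and $\agx$ has utility $-B$ (large) to every other agent. The only coalition containing $\agx$ that satisfies \IR\ is then $\{\agx,\agy\}$.
\item $\sset j$ and $\elag i$ have mutual utility $1$ for $i\in\set j$.
\item Two set-agents with intersecting sets have mutual utility $-B$, and element-agents have mutual utility $-B$ among themselves.
\item Each gadget is a symmetric triangle of weight $1$, joined to every set-agent with weight $1$. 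For \IS\ we should additionally penalize two gadgets together (gadgets have mutual $-B$).
\end{itemize}
The role of $\agy$ is to certify structure. Give $\agy$ utility $1$ to each set-agent and to nothing else except $\agx$. Then $\agy$ stays with $\agx$ only if no coalition contains two set-agents, and by \IS\ consent is irrelevant for \NS. For \IS, however, the hosts must weakly accept $\agy$, and symmetry makes $\agy$ liked back by set-agents. So whenever two set-agents share a coalition, $\agy$ can join it and form a blocking tuple. This is precisely what we need.

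\textbf{Forward direction.} Given an exact cover $\excov$, set $\Pi(\agx)=\{\agx,\agy\}$, and form coalitions $\{\sset j\}\cup\{\elag i: i\in\set j\}$ for $\set j\in\excov$ and $\{\sset j\}\cup$ gadget$_\ell$ otherwise. We then verify \NS\ (hence \IS) agent by agent. The items to check are: set-agents get utility $3$ and never gain more by moving; elements get utility $1$; gadget members get utility $3$; and $\agy$ gets $1$ everywhere. Because of symmetry, set-agents also get utility $1$ from $\agy$. So we must check that no set-agent wants to join $\{\agx,\agy\}$. This holds because there it would get $1-B$.

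\textbf{Backward direction.} This is the main obstacle. From \IS\ and $|\Pi(\agx)|\ge2$ we get $\Pi(\agx)=\{\agx,\agy\}$, so set-agents are pairwise separated. Next we must force every set-agent to have utility $3$, through a counting argument over gadgets as in \cref{clm:thm:DAG-AD-ISNS-NA-ADD-Backward}. This argument now has to respect symmetric weights, which means gadget members also value set-agents. I would first show that each gadget triangle stays together, using that a separated member is accepted back. I would then show that each gadget hosts at most one set-agent, using the pairwise separation of set-agents. A set-agent with utility below $3$ could then join a gadget coalition lacking a set-agent, and the gadget members weakly accept since they value set-agents positively; this forms a blocking tuple. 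Exactly $\nn$ gadgets are free of set-agents only if $\nn$ set-agents each hold three element-agents, which gives a cover. That cover is exact because each element lies in one coalition.

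The delicate point is tuning the weights. Set-agents must not prefer joining each other, which the $-B$ on intersecting pairs handles only partially, since disjoint sets could merge. I expect to fix this by making all set-agent pairs mutually $-B$, and then double-checking \NS\ in the forward direction.
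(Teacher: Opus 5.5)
Your proposal has two genuine gaps. First, the forward direction fails under the weights you state. Element-agents have mutual utility $-B$, so in a cover coalition $\{\sset{j},\elag{j_1},\elag{j_2},\elag{j_3}\}$ every element-agent gets $1-2B<0$, and the partition is not even \IR. Your check ``elements get utility $1$'' overlooks this. The penalty has to go: the paper puts no utilities among element-agents, and none among set-agents either.

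Second, and more importantly, the backward direction lacks the step that actually forces the cover.
\begin{itemize}
\item Your deviation ``a set-agent with utility below $3$ joins a gadget coalition lacking a set-agent'' applies only if some gadget is free. In the configuration that matters, namely the forward one, all $2\nn$ gadgets are occupied. Your count ``exactly $\nn$ gadgets are free'' is off: it is $\nn$ set-agents that are gadget-free.
\item Counting does show that no gadget is free. If $f\ge 1$ gadgets were free, every set-agent would need utility at least $3$. The $\nn+f$ gadget-less set-agents are separated from each other and from $\agy$, so each would need all three of its element-agents. That gives $\nn+f>\nn$ pairwise disjoint triples in $[3\nn]$, which is impossible.
\item Once $f=0$, the only deviation that can force the remaining $\nn$ gadget-less set-agents to collect their elements is joining an \emph{occupied} gadget coalition. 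This requires the host set-agent to weakly accept the newcomer.
\end{itemize}
This is why the paper gives set-agents zero mutual utility. It also uses single auxiliary agents $\dmag{\ell}{}$ with utility $3$ to every set-agent, so each $\dmag{\ell}{}$ is pulled onto some set-agent, and a gadget-less set-agent with utility below $3$ can join any such coalition.

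Your planned fix of making all set-agent pairs mutually $-B$ removes exactly this deviation and makes the reduction unsound. Take any instance, even after dropping the element penalty. Form the partition consisting of $\{\agx,\agy\}$, each gadget together with a distinct arbitrary set-agent, the remaining $\nn$ set-agents without gadgets, and each element-agent placed with some set-agent whose set contains it. Every coalition other than $\{\agx,\agy\}$ then contains exactly one set-agent, and no agent can strictly improve. So this partition is \NS, hence \IS, with $\agx$ not alone, regardless of whether an exact cover exists.

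Your worry about disjoint set-agents merging is already handled by $\agy$. Keep set-agent pairs at $0$, as the paper does. Alternatively, penalize only intersecting pairs and pick a host whose set is disjoint; such a host exists once $\nn\ge 4$, since each set meets at most six others.
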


\begin{proof}
  The \NP-containment follows directly from \cref{obs:complexityupperbounds}.
  As before, we will reduce from \RETC\ to show the NP-hardness for the mentioned control problem.

  Let $I = (\els = [3\nn], \sets)$ be an instance of \RETC. 
  Let us construct a hedonic game instance with additive and symmetric preferences as follows. The set of agent is $\agentsU = \{\agx, y\} \cup \{\elag{i} \mid i \in \els\} \cup \{\sset{j} \mid \set{j} \in \sets\} \cup \{\dmag{\ell}{} \mid \ell \in [2\nn]\}$, where

  \begin{compactitem}[--]
    \item $\agx$ is the special agent and $y$ an auxiliary-agent,
    \item each element~$i\in \els$ has one \myemph{element-agent~$\elag{i}$},
    \item each set~$\set{j}\in \sets$ has one \myemph{set-agent~$\sset{j}$},
    \item and auxiliary-agents~$\dmag{\ell}{}$ for $\ell \in [2\nn]$.
  \end{compactitem}

  We construct the symmetric utilities as follows; the unmentioned utilities are 0:
  \begin{compactitem}[--]
    \item Set $\util{\agx}(y) = \util{y}(\agx) = 1$.
    \item For each agent~$a\in \agentsU\setminus \{x,y\}$, set $\util{\agx}(a) = \util{a}(\agx) = -1$.
    \item For each~$\set{j} \in \sets$, set~$\util{\sset{j}}(y) = \util{y}(\sset{j}) = 1$.
    \item For each set~$\set{j} \in \sets$ and all elements~$i \in \set{j}$,
    set $\util{\sset j}(\elag{i}) = \util{\elag{i}}(\sset j) =~1$.
    \item For each set~$\set{j} \in \sets$ and every auxiliary-agent $\dmag{\ell}{}$ for $\ell \in [2\nn]$, set $\util{\sset{j}}(\dmag{\ell}{}) = \util{\dmag{\ell}{}}(\sset{j}) = 3$.
  \end{compactitem}

  \appendixfigure{fig:thm:SYM-AD-ISNS-NA-ADD}{}{Figure of \cref{thm:SYM-AD-ISNS-NA-ADD}}
{
  Sketch of the preference graph for \cref{thm:SYM-AD-ISNS-NA-ADD}, see \cref{fig:thm:SYM-AD-ISNS-NA-ADD}.

\begin{figure}[t!]
  \centering  \begin{tikzpicture}[>=stealth',shorten <= 1pt, shorten >= 1pt]
  \def \xx {1.3}
  \def \xy {1}
  \foreach \x / \n in {1/1, 2/2, 4/j, 5/3n} {
    \node[unode] at (\x*\xx, -1.5*\xy) (s\n) {};
  }

  \node[unode] at (3*\xx, 0*\xy) (y) {};
  \node[unode] at (3*\xx, 1*\xy) (x) {};

  \node[] (h1) {};
  \node[] (h2) {};
  \node[] (h3) {};
  \node[] (h4) {};
  \coordinate (h1) at ($(x) + (-2.0, -0.75)$);
  \coordinate (h2) at ($(x) + (-0.7, -0.75)$);
  \coordinate (h3) at ($(x) + (0.7, -0.75)$);
  \coordinate (h4) at ($(x) + (2.0, -0.75)$);
  
  \foreach \x / \n in {1/1,2/2, 4/i, 5/3n} {
    \node[unode] at (\x*\xx, -2.5*\xy) (u\n) {};
  }

  \foreach \x / \n in {1.2/1, 4.8/3} {
    \node[unode] at (\x*\xx, -3.5*\xy) (d\n) {};
  }

  \foreach \i / \p / \a / \r / \n in {%
    x/above left/2pt/0pt/{x}, y/above left/2pt/0pt/{y},
    u1/below left/0pt/-5pt/{u_1},   u2/below left/0pt/-5pt/{u_2},   ui/below left/0pt/-5pt/{u_i},  u3n/below left/0pt/-5pt/{u_{3\nn}}, 
    s1/below left/0pt/-5pt/{s_1},   s2/below left/0pt/-5pt/{s_2},   sj/below left/0pt/-5pt/{s_j},  s3n/below left/0pt/-5pt/{s_{3\nn}},
    d1/below right/2pt/0pt/{d_1}, d3/below left/2pt/0pt/{d_{2\nn}}%
  } {
     \node[\p = \a and \r of \i,fill=white,inner sep=0.1pt] {$\n$};
   }

  \path (s2) -- node[midway, inner sep=1pt] (sdots) {$\dots$} (sj);

  \path (u2) -- node[midway, inner sep=1pt] (udots) {$\dots$} (ui);

  \path (d1) -- node[midway, inner sep=1pt] (ddots) {$\dots$} (d3);

  \begin{pgfonlayer}{bg}
    \foreach \s / \t / \aa / \w / \typ in {
    x/h1/0/{-1}/ec, x/h2/0/{-1}/ec, x/h3/0/{-1}/ec, x/h4/0/{-1}/ec,
    y/x/0/{1}/fc, y/s1/20/{1}/fc, y/s2/0/{1}/fc, y/sj/0/{1}/fc, y/s3n/-20/{1}/fc,
    s1/d1/90/{3}/fc, s3n/d3/-90/{3}/fc,
    s2/d1/25/{3}/fc,
    sj/d3/25/{3}/fc,
    s1/u1/0/{1}/fc, s1/u2/0/{1}/fc, s1/ui/0/{1}/fc, sj/u2/0/{1}/fc, sj/ui/0/{1}/fc, sj/u3n/0/{1}/fc%
    }
    {
      \draw[-, \typ] (\s) edge[bend right = \aa] node[fill=white, inner sep=1pt, midway] {\footnotesize $\w$} (\t);
    }
  \end{pgfonlayer}

\end{tikzpicture}
\caption{Illustration of the relevant part of the reduction for \cref{thm:SYM-AD-ISNS-NA-ADD}, assuming that $\set{1} = \{1,2,i\}$ and $\set{j} = \{2,i,3\nn\}$. Note that $x$ has negative edges between all other agents (except $y$) and $\{\sset{1}, \dots, \sset{3\nn}\}, \{\dmag{1}{}, \dots, \dmag{2\nn}{}\}$ form a complete bipartite graph.}\label{fig:thm:SYM-AD-ISNS-NA-ADD}
\end{figure}
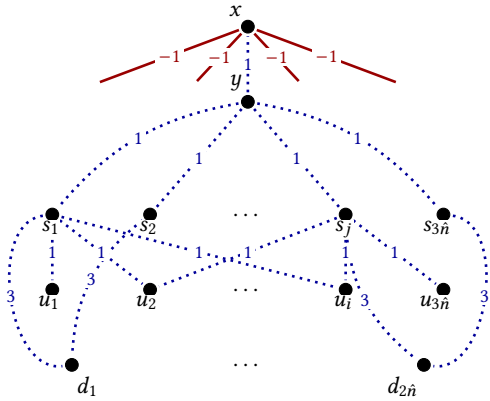
}

  \ifcr
    The sketch of the preference graph can be found in the full version of the paper~\cite{CGMS2026ControlHedonicArxiv}.
  \else
    A sketch of the preference graph can be found in the appendix, see \cref{fig:thm:SYM-AD-ISNS-NA-ADD}.
  \fi
  To complete the construction, let agent $\agx$ be the special agent not to be alone.

  In the forward direction of the correctness proof, we show that if~$I$ has an exact cover, we can construct a \NS\ partition with $|\Pi(x)| \geq 2$, which is, by \cref{obs:stability-relation}, also \IS.
  In the backward direction, we assume the partition $\Pi$ being \IS--an even weaker assumption than \NS--and show that this implies the existence of an exact cover for~$I$. 
  
  \begin{restatable}[\appsymb]{claim}{claimSYMADISNSNAADDForward}
   \label{clm:thm:SYM-AD-ISNS-NA-ADD-Forward}
   If $\excov$ is an exact cover for~$I$, then the following partition~$\Pi$
   with $\Pi(\agx) = \{\agx,\agy\}$, $\Pi(\sset{j}) = \{\sset{j}, \elag{j_1}, \elag{j_2}, \elag{j_3}\}$ for $\set{j} = \{j_1,j_2,j_3\} \in \excov$ and $\Pi(\sset{j}) = \{\sset{j}, \dmag{\ell}{}\}$ for $\set{j} \notin \excov$ and $\ell \in  [2\nn]$ is \NS.
 \end{restatable} 
 \appendixproofwithstatement{clm:thm:SYM-AD-ISNS-NA-ADD-Forward}{\claimSYMADISNSNAADDForward*}{
  To show the statement, we need to show that every agent $i \in \agentsU$ has no incentive to leave his current coalition, i.e., there exist no other coalition $C \in \Pi$ with $C \cup \{i\} \succ_i \Pi(i)$.

  Let us first consider special agent $\agx$. 
  It has utility 1 towards his coalition $\{\agx,y\}$ but to all others he has negative utility. 
  So agent~$\agx$ wants to stay in his coalition.

  Agent~$y$ has also utility 1 towards his coalition, so do all other coalitions containing exactly one set-agent $\sset{j}$.
  Therefore, agent~$y$ would not improve by joining another existing coalition.

  All set-agents~$\sset{j}$, for $\set{j} \in \sets$, have utility 3 towards their coalitions.
  Either they are together with three element-agents~$\elag{i}$ with $i \in \set{j}$ or with an auxiliary-agent $\dmag{\ell}{}$, for some $\ell \in [2\nn]$.
  Joining another coalition would not increase the utility and thus the set-agents $\sset{j}$ have no incentive to leave their current coalitions.

  The element-agents $\elag{i}$ have utility 1 towards their coalitions with a set-agent. 
  They also have at most utility 1 towards every other set-agent, but they all are in different coalitions, hence no element-agent would improve by switching to another coalition.

  The same argument holds for auxiliary-agents $\dmag{\ell}{}$, as they have utility 3 towards their coalitions. 
  No two set-agents are in the same coalition.
  Therefore, the auxiliary-agents also do not have an incentive to leave their current coalitions.
 }

 \begin{restatable}[\appsymb]{claim}{claimSYMADISNSNAADDBackward}
   \label{clm:thm:SYM-AD-ISNS-NA-ADD-Backward}
   If $\Pi$ is an \IS\ partition with $|\Pi(\agx)| \geq 2$, then $\agy\in \Pi(x)$ and all set-agents which are in a coalition with their three element-agents form an exact cover of $I$.
 \end{restatable} 
 \appendixproofwithstatement{clm:thm:SYM-AD-ISNS-NA-ADD-Backward}{\claimSYMADISNSNAADDBackward*}{
  We aim to show that all set-agents which are in a coalition with their three element-agents form an exact cover of $I$.

  By assumption, let $C: = \Pi(x)$  with $|C| \geq 2$ be a coalition containing $x$ in an \IS\ partition.
  All coalitions with $x$ not alone must be at least with agent $y$, because agent~$x$ has negative utility towards every other agent.
  There can be at most one other agent besides $y$ in $C$, otherwise $x$ would deviate to his singleton coalition.
  
  A set-agent with $\sset{j} \in C$ is not possible because $\sset{j}$ would have utility 0 towards $C$. 
  But $\sset{j}$ has only negative utility towards agent $x$, so he would prefer any different coalition containing an auxiliary-agent~$\dmag{\ell}{}$ or an element-agent $\elag{i}$ with $i \in \set{j}$.

  Element-agents $\elag{i}$ and auxiliary-agents $\dmag{\ell}{}$ already have negative utility for the coalition $\{x,y\} \subseteq C$, so they would prefer to be alone rather than join it.
  Thus, the only coalition $C$ containing $x$ is $\{\agx, y\}$.
  
  Every agent $a \in \agentsU \setminus \{x,y\}$ has negative utility only toward agent $x$.
  Thus, for every agent $j \in \agentsU$ and every $a \in \agentsU \setminus \{x\}$ holds that agent $j$ is indifferent between the coalitions $\Pi(j) \cup \{a\}$ and $\Pi(j)$.
  Therefore, we can argue in the following that no agent has an incentive to deviate to an existing coalition in $\Pi$.

  We know that $y$ does not want to deviate to another coalition, so no two set-agents $\sset{j}$ can be in the same coalition; otherwise $y$ would prefer to join them, contradicting $\Pi$ being \IS.

  We now want to show that no two auxiliary-agents can be in the same coalition in $\Pi$.
  Suppose, towards a contradiction, that for two auxiliary-agents $\dmag{\ell}{}, \dmag{\ell'}{}$, it holds that $C' := \Pi(\dmag{\ell}{}) = \Pi(\dmag{\ell'}{})$.
  So all set-agents $\sset{j}$ have utility 6 towards $C'$.
  Since $\Pi$ is \IS\, all set-agents must also have utility 6 towards their coalition $\Pi(\sset{j})$.
  We can say that at most $\nn$ coalitions with at least two auxiliary-agents can exist.
  Thus, at most $\nn$ set-agents can be within those coalitions and at least the remaining $2\nn$ set-agents must be together with their element-agents.
  But every set-agent has only three element-agents with utility greater than 0, so they can have at most utility 3 towards their coalition.
  They would then want to deviate to, e.g., $C'$ and therefore $\Pi$ would not be \IS\ anymore, contradiction.

  No two set-agents $\sset{j}$ can be in the same coalitions and no two auxiliary-agents $\dmag{\ell}{}$ can be in the same coalition.
  In order for $\Pi$ to be \IS, every $\dmag{\ell}{}$ must be with a set-agent $\sset{j}$ in a coalition; otherwise $\dmag{\ell}{}$ would deviate to a coalition containing a set-agent.
  Therefore, $2\nn$ set-agents are with exactly one auxiliary-agent in the same coalition and have utility 3 towards their coalitions.

  The remaining $\nn$ set-agents $\sset{j}$ without any auxiliary-agent must also have utility 3 towards their coalitions, this can only be achieved being together with all their element-agents $\elag{i}$ with $i \in \set{j}$.
  This means, the sets corresponding to these $\nn$ set-agents must form a set cover, and since every set contains exactly three elements, these $\nn$ sets cover $3\nn$ elements, which implies these sets form an exact cover.
 }

 The correctness follows immediately from Claims \ref{clm:thm:SYM-AD-ISNS-NA-ADD-Forward} and \ref{clm:thm:SYM-AD-ISNS-NA-ADD-Backward}.
\end{proof}

In contrast to \cref{prop:SYM-AD-IR-NA-ADD}, \PA\ remains \NP-hard on DAGs or symmetric preferences, even when no agent can be added or deleted.

\begin{restatable}[\appsymb]{theorem}{thmADIRPANPhard}
  For each stability concept~$\S \in \{\IR, \CS\}$ and each control action~$\A\in \{\AddAg, \DelAg\}$, \controlprob{\S}{\PA}{\A}{\AD} is \NP-complete; NP-hardness remains even if the budget is $\budget=0$ and the preference graph is acyclic with maximum vertex degree nine.
  \label{thm:AD-IR-PA}
\end{restatable}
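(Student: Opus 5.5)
The plan is to reduce from \RETC\ with $\budget=0$. First I would settle membership and reduce to a single stability concept. For \IR, \NP-containment is given by \cref{obs:complexityupperbounds}. For \CS, \cref{obs:dag:eq} says that on an acyclic preference graph a partition is \CS\ if and only if it is \IR, so verification is polynomial and the problem lies in \NP\ there. The same observation means one acyclic construction handles both \IR\ and \CS\ at once. With $\budget=0$ both control actions collapse to the question ``is there an \IR\ partition $\Pi$ with $\Pi(\agx)=\Pi(\agy)$?''. So it suffices to build, from an \RETC-instance $(\els=[3\nn],\sets)$, an acyclic \AD-instance of bounded degree in which this holds exactly when an exact cover exists.

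The basic tool is that in a DAG every agent of a non-singleton \IR\ coalition with a negative arc inside the coalition needs positive arcs to later agents that are also inside it. This yields ``forcing chains'': if $a$ has weight $-1$ towards an agent already forced into the coalition and weight $+1$ only towards $b$, then $a$ present forces $b$ present.

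I would order the agents topologically as $\agx$, the element-agents $\elag{1},\dots,\elag{3\nn}$, the set-agents $\sset{j}$, and finally $\agy$ together with local helper agents. The intended gadgets, used in this order, are:
\begin{compactenum}[(i)]
\item A forcing chain from $\agx$ through all element-agents, so that any \IR\ coalition containing $\agx$ and $\agy$ contains every $\elag{i}$. Each $\elag{i}$ gets a $-1$ towards a forced chain helper, which makes it need exactly one more positive arc.
\item Each element-agent gets weight $+1$ to the three set-agents of the sets containing it. Hence every element must be accompanied by at least one of its set-agents, which makes the chosen set-agents a cover.
\item For intersecting sets $\set{j},\set{\ell}$ with $j<\ell$, a heavy negative arc $\util{\sset j}(\sset\ell)$ exceeding the total positive utility available to $\sset j$. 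Together with the arc from $\sset j$ to a private forced helper that ends in the chain to $\agy$, this makes two intersecting chosen sets impossible, which yields exactness.
\end{compactenum}
Each set meets at most six others and has three elements, so its degree stays around nine. To keep the degree of $\agy$ (and of any other hub) bounded, I would replace every hub arc by a path of fresh forcing helpers, as in a binary fan-in tree.

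The forward direction takes the coalition consisting of $\agx$, $\agy$, all element-agents, the cover's set-agents and all helpers, puts everyone else in singletons, and checks that every agent's utility is non-negative. The backward direction argues along the topological order. The presence of $\agx$ and $\agy$ forces the whole chain; each element forces some incident set-agent; and the negative arcs forbid two intersecting sets, so the chosen sets form an exact cover.

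I expect the main obstacle to be reconciling acyclicity with the need to make set-agents individually rational without big hubs. Set-agents must receive positive utility only from agents later in the order, yet adding a set-agent must not itself force anything harmful; and the penalties in (iii) point only forward, so the earlier set of an intersecting pair has to be the one penalized. What I would try first is to give each $\sset{j}$ a private helper chain ending in a bounded-degree tree into $\agy$. I would then tune the weights, for instance $+1$ to that chain and $-2$ per intersecting later set, and verify the degree bound of nine carefully.
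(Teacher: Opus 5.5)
Your skeleton is right: reduce from \textsc{RX3C} with $k=0$, use that IR and CS coincide on DAGs, make each element-agent need an incident set-agent, and forbid intersecting set-agents by forward negative arcs. The gap is that the gadget actually carrying the reduction is never constructed, and what you do commit to conflicts with the statement. Your forcing lemma ($-1$ towards one forced agent, $+1$ towards a single $b$) only propagates along a path. But every link of the chain has to force two agents: the next element-agent and the next link. You also leave open which agents serve as the ``already forced'' targets of the helpers' negative arcs, proposing only to route them into $y$ through fan-in trees. If, as your stated order says, the chain helpers come after the element-agents, they cannot be the agents that force them, since those arcs would point backwards.

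The missing idea is to let the pair itself seed the chain, with consecutive chain agents serving as each other's forced targets. The paper takes the pair $\{y_0,y_1\}$ and sets $\mu_{y_i}(y_{i+1})=-2$ and $\mu_{y_i}(y_{i+2})=\mu_{y_i}(u_{i+1})=1$, ending with $\mu_{y_{3\hat n-1}}(y_{3\hat n})=-1$ and $\mu_{y_{3\hat n-1}}(u_{3\hat n})=1$. By induction, $y_i,y_{i+1}\in C$ forces $y_{i+2},u_{i+1}\in C$. Then $\mu_{u_i}(y_i)=-1$ and $\mu_{u_i}(s_j)=1$ for $i\in S_j$ make every element-agent need a set-agent. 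The order $y_0,u_1,y_1,\dots,u_{3\hat n},y_{3\hat n},s_1,\dots,s_{3\hat n}$ is topological, no hub arises, and all chain and element agents have degree at most six.

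Second, you assume set-agents must receive positive utility, and you give each one a private helper. They need none. If $s_j$ only has $\mu_{s_j}(s_\ell)=-1$ towards later intersecting $s_\ell$, its utility is $0$, hence IR, exactly when no such $s_\ell$ is present, which is precisely the exactness condition. So the ``main obstacle'' you anticipate does not exist. The private helper is in fact harmful: in \textsc{RX3C} a set can meet six other sets, so $s_j$ already has degree $3+6=9$. One more arc gives degree ten, which breaks the bound of nine you are asked to prove.
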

\appendixproofwithstatement{thm:AD-IR-PA}{\thmADIRPANPhard*}
{
  By \cref{obs:dag:eq}, it suffices to consider the stability concept \IR.
  To show NP-hardness, we again reduce from \RETC. 
  Let $I=(\els=[3\nn], \sets)$ be an instance of \RETC.
  Let us construct a hedonic game instance with additive preferences as follows.
  The set of agents is $\agentsU=\{\axag{0}\}\cup \{\axag{i}, \elag{i}\mid i\in \els\}\cup \{\sset{j}\mid \set{j} \in \sets\}$, where 
  \begin{compactitem}[--]
    \item $\axag{0}$ is a special agent,
    \item each element~$i\in \els$ has one \myemph{element-agent~$\elag{i}$} and an \myemph{auxiliary-agent~$\axag{i}$}, 
    \item and each set~$\set{j}\in \sets$ has one \myemph{set-agent~$\sset{j}$}.
  \end{compactitem}
  We construct the utilities as follows; the unmentioned utilities are~0:
  \begin{compactitem}[--]
    \item Set $\util{\axag{0}}(\axag{1}) = -2, \util{\axag{0}}(\elag 1) = 1$, and $\util{\axag{0}}(\axag 2) = 1$.
    \item For all~$i \in [3\nn - 2]$, set $\util{\axag i}(\axag {i + 1}) = -2$, $\util{\axag i}(\elag {i + 1}) = \util{\axag i}(\axag {i + 2}) = 1$.
    \item Set $\util{\axag {3\nn - 1}}(\axag {3\nn}) = -1, \util{\axag {3\nn - 1}}(\elag {3\nn}) = 1$.
    \item For all~$i \in [3\nn]$, set $\util{\elag i}(\axag i) = -1$.
    \item For all~$\set j \in \sets$ and all~$i \in \set j$, set $\util{\elag i}(\sset j) = 1$.
    \item For all~$\set j, \set {\ell} \in \sets$ with $j < \ell$ and $\set j \cap \set \ell \neq \emptyset$,
    set $\util{\sset j}(\sset \ell) = -1$.
  \end{compactitem}
  The preference graph of the instance is depicted in \cref{fig:thm:AD-IR-PA}.
  To complete the construction, we define~$\{x,y\}=\{\axag{0}, \axag{1}\}$ as the pair of agents which we aim to be in the same coalition in an \IR\ partition.
  Before we show the correctness, we first observe that the following sequence is a topological ordering of the preference graph: $\axag{0}, \elag 1$, $\axag 1$, $\elag 2, \axag 2, \dots, \elag {3\nn}, \axag{3\nn},$ $ \sset 1, \dots, \sset {3\nn}$. 
  We observe the following regarding the maximum degree in the preference graph:

  \begin{observation}\label{obs:AD-IR-PA-Deg}
    Every vertex in the preference graph has degree at most nine.
  \end{observation}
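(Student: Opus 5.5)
The statement to prove is \cref{obs:AD-IR-PA-Deg}: every vertex of the preference graph built in the proof of \cref{thm:AD-IR-PA} has degree at most nine. I will count degrees directly from the list of utilities, treating the graph as undirected (in-arcs plus out-arcs). I will go through the vertex types in turn: the special agent~$\axag{0}$, the auxiliary agents~$\axag{i}$, the element agents~$\elag{i}$, and the set agents~$\sset{j}$. For each type I collect every utility item in which that agent appears, on either side.

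\textbf{Auxiliary agents.} Agent~$\axag{0}$ has three out-arcs (to $\axag1$, $\elag1$, $\axag2$) and no in-arcs, so its degree is $3$. For $i\in[3\nn]$, agent~$\axag{i}$ has the following arcs:
\begin{compactitem}[--]
\item at most three out-arcs, namely to $\axag{i+1}$, $\elag{i+1}$, $\axag{i+2}$;
\item in-arcs from $\axag{i-1}$ (by the $-2$ or $-1$ rule, or from~$\axag0$ when $i=1$);
\item an in-arc from $\axag{i-2}$ by the $+1$ rule (from~$\axag0$ when $i=2$);
\item an in-arc from $\elag{i}$ (utility $-1$).
\end{compactitem}
This gives at most $3+3=6$.

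\textbf{Element agents.} Agent~$\elag{i}$ has one out-arc to $\axag i$ and three out-arcs to the set agents~$\sset j$ with $i\in\set j$. Each element lies in exactly three sets, which is where the \RETC\ regularity is used. Its only in-arc is from $\axag{i-1}$ (or from $\axag0$ when $i=1$). The degree is therefore at most $1+3+1=5$.

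\textbf{Set agents.} Agent~$\sset j$ has three in-arcs, one from each $\elag i$ with $i\in\set j$. It also has arcs to or from every~$\sset\ell$ with $\set\ell\cap\set j\neq\emptyset$, $\ell\ne j$. Each of the three elements of~$\set j$ lies in exactly two other sets, so there are at most $3\cdot 2=6$ such neighbours. Each such pair contributes exactly one arc, oriented from the smaller to the larger index, so the degree is at most $3+6=9$.

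The only step needing care is the set-agent case, because it is the one where the bound nine is tight. There I must check that overlapping sets are counted once per pair and not once per shared element. That holds because the utility $\util{\sset j}(\sset\ell)$ is defined once per pair regardless of how large the intersection is, and double counting could only lower the true number of distinct neighbours anyway.
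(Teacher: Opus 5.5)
Your proof is correct and follows the same type-by-type degree count as the paper: $\axag{0}$, the auxiliary agents, the element agents, and then the set agents, where the tight bound $3+6=9$ uses the \RETC\ regularity. Your count of three out-arcs for $\axag{0}$ (to $\axag{1}$, $\elag{1}$, $\axag{2}$) is the accurate one; the paper says two, which is a harmless slip that does not affect the bound.
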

  We analyze the degree of every vertex one by one.
  \begin{compactitem}[--]
    \item The agent $\axag{0}$ has two out-arcs and no in-arcs.
    \item Every auxiliary-agent~$\axag{i}$ has at most three in-arcs and three out-arcs.
    \item Every element-agent~$\elag{i}$ has at most one in-arc and four out-arcs.
    \item Every set-agent~$\sset{j}$ has three in-arcs from the element-agents corresponding to the elements contained in~$\set{j}$. 
    Moreover, since every element appears in exactly three sets, there are at most six sets that intersect any given set.
    Thus, a set-agent has maximum degree nine.
  \end{compactitem}

  \begin{claim}\label{clm:thm:pa_ir_add_add_maxdeg_dag_maxutil_kzero_1}
    If $\excov$ is an exact cover for~$I$, then the following partition~$\Pi$
   with $\Pi(\axag{0}) = \{\axag{0},\axag{1}, \dots, \axag{3\nn}, 
   \elag{1},\dots, \elag{3\nn}\}\cup \{\sset j \mid \set j \in \excov\}$, and 
   $\Pi(\sset{j})=\{\sset{j}\}$ for all $\set{j}\notin \excov$ is \IR.
  \end{claim}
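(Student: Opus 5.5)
To prove the claim, I will verify directly that in $\Pi$ every agent has utility at least zero for his own coalition. Since $\Pi$ has only one non-singleton coalition, $C \coloneqq \Pi(\axag{0})$, I only need to check the members of $C$; all singletons trivially have utility zero. The agents of $C$ fall into four groups — the auxiliary agents $\axag{0},\dots,\axag{3\nn}$, the element-agents $\elag{1},\dots,\elag{3\nn}$, and the set-agents $\sset{j}$ with $\set{j}\in\excov$ — and I will sum each agent's non-zero out-arcs from the construction, all of whose heads lie in $C$ where relevant.

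\emph{Auxiliary agents.} All auxiliary and element agents are in $C$, so each auxiliary agent receives every one of its out-arc utilities.
For $\axag{0}$ this gives $-2+1+1=0$.
For $\axag{i}$ with $i\in[3\nn-2]$ the arcs to $\axag{i+1}$, $\elag{i+1}$ and $\axag{i+2}$ also give $-2+1+1=0$.
For $\axag{3\nn-1}$ we get $-1+1=0$.
Agent $\axag{3\nn}$ has no out-arcs, so his utility is $0$.

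\emph{Element-agents.} Element-agent $\elag{i}$ has utility $-1$ toward $\axag{i}\in C$ and $+1$ toward each set-agent $\sset{j}$ with $i\in\set{j}$. Since $\excov$ is an exact cover, exactly one such $\sset{j}$ lies in $C$. Hence $\elag{i}$ obtains exactly $0$.

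\emph{Set-agents.} A set-agent $\sset{j}\in C$ only has negative arcs, namely toward set-agents $\sset{\ell}$ whose sets intersect $\set{j}$. The sets in $\excov$ are pairwise disjoint, so no such $\sset{\ell}$ is in $C$, and $\sset{j}$ obtains $0$.

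Thus no agent prefers being alone, and $\Pi$ is \IR. There is no real obstacle here. The only point needing care is the element-agents, where the exact-cover property is used for both existence (at least one covering set-agent, needed to offset $\axag{i}$) and uniqueness of the covering set-agent. Strictly, uniqueness matters only for the set-agents, since extra positive arcs for $\elag{i}$ would not hurt. I also need to confirm that the boundary cases $i=3\nn-1$ and $i=3\nn$ of the auxiliary chain match the construction exactly.
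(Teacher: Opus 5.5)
Your proposal is correct and matches the paper's proof. Both check agent by agent that everyone gets utility exactly zero in $\Pi$, using the same chain arithmetic for the auxiliary agents, the single covering set-agent for each element-agent, and pairwise disjointness of $\excov$ for the set-agents. The one step to state explicitly is that this disjointness follows by counting from the paper's definition of an exact cover ($\nn$ three-element sets covering all $3\nn$ elements), which the paper notes.
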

  \begin{claimproof}{clm:thm:pa_ir_add_add_maxdeg_dag_maxutil_kzero_1}
    To show the statement, we show that every agent has zero utility towards his coalition under~$\Pi$.

   Let us first consider agent~$\axag{0}$ and all the auxiliary-agents.
   We observe that $\util{\axag{0}}(\Pi(\axag{0})) = \util{\axag{0}}(\{\axag 1, \axag 2, \elag 1\}) = -2 + 1 + 1 = 0$, as required.
   
   For every $i \in [3\nn - 2]$, it holds that $\util{\axag i}(\Pi(\axag i)) = \util{\axag i}(\{\axag {i + 1}, \axag {i + 2},$ $\elag{i + 1}\}) = -2 + 1 + 1 = 0$, as required.
   
   We have that $\util{\axag{ 3\nn - 1}}(\Pi(\axag{3\nn - 1})) = \util{\axag{3\nn-1}}(\{\axag{3\nn}, \elag{3\nn}\}) = -1 + 1 = 0$, as required.  
   Since $\axag{3\nn}$ has utility zero towards every other agent, his utility is zero as well.

   Next we show that every element-agent $\elag i$, for $i \in [3\nn]$, has zero utility towards his coalition as well.
   Since $\excov$ is an exact cover, there must be a set $\set j \in \excov$ such that $i \in \set j$. 
   Thus $\utilP{\elag i} = \util{\elag i}(\{\axag i, \sset j\}) = - 1 + 1 = 0$, as required.

   Finally, we show that every set-agent $\sset j$, with $\set j \in \sets$, has zero utility towards his coalition.
   This is clearly the case for each set-agent~$\sset{j}$ with $\set j \notin \excov$ since he is in his own singleton coalition.
   Since $\excov$ is an exact cover, no two sets in~$\excov$ intersect (as otherwise $|\excov| > \nn$).
   Hence, $\utilP{\sset j} = 0$ also holds for every set-agent~$\sset{j}$ with $\set{j}\in \excov$.
   This concludes the proof.
  \end{claimproof}

  \begin{claim}\label{clm:thm:pa_ir_add_add_maxdeg_dag_maxutil_kzero_2}
    If $\Pi$ is an \IR\ partition with $\Pi(\axag{0})=\Pi(\axag{1})$, then the sets corresponding to the set-agents contained in $\Pi(\axag{0})$ form an exact cover of~$I$.
  \end{claim}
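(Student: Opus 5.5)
Let $C \coloneqq \Pi(\axag{0}) = \Pi(\axag{1})$. The plan is to propagate membership along the chain of auxiliary-agents using individual rationality, then force every element-agent into $C$, and finally read off an exact cover from the set-agents in $C$. Every agent must have non-negative utility towards his coalition under the \IR\ partition $\Pi$, and we exploit the specific weights of the construction.

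\textbf{Step 1: all auxiliary- and element-agents are in $C$.} We show by induction on $i$ that for every $i \in \{0,1,\dots,3\nn-2\}$ both $\axag{i}$ and $\axag{i+1}$ are in $C$ and, moreover, $\elag{i+1}, \axag{i+2} \in C$.

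For the base case, $\axag{0} \in C$ and $\axag{1} \in C$. Agent $\axag{0}$ has utility $-2$ towards $\axag{1}$ and his only positive utilities are $+1$ towards $\elag{1}$ and $+1$ towards $\axag{2}$. Hence both $\elag{1}$ and $\axag{2}$ must be in $C$.

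For the inductive step, suppose $\axag{i}, \axag{i+1} \in C$ for some $i \in [3\nn-2]$. The same argument applied to $\axag{i}$ yields $\elag{i+1}, \axag{i+2} \in C$.

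At the end of the chain, $\axag{3\nn-1}$ and $\axag{3\nn}$ are in $C$. Agent $\axag{3\nn-1}$ has $-1$ towards $\axag{3\nn}$ and only $+1$ towards $\elag{3\nn}$, which forces $\elag{3\nn} \in C$. Together with the induction, this gives all $\elag{i}, \axag{i} \in C$ for $i \in [3\nn]$.

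\textbf{Step 2: each element is covered exactly once.} Each $\elag{i} \in C$ has utility $-1$ towards $\axag{i} \in C$. His only positive utilities are $+1$ towards the set-agents $\sset{j}$ with $i \in \set{j}$. So \IR\ requires at least one $\sset{j} \in C$ with $i \in \set{j}$, and therefore the set-agents in $C$ cover $\els$.

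For disjointness, suppose two set-agents $\sset{j}, \sset{\ell} \in C$ with $j < \ell$ have intersecting sets. Then $\util{\sset{j}}(\sset{\ell}) = -1$. Since a set-agent has no positive utilities at all, $\sset{j}$ has negative utility towards $C$, contradicting \IR. Hence the sets in $C$ are pairwise disjoint.

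\textbf{Step 3: conclusion.} A collection of pairwise disjoint 3-sets covering $[3\nn]$ has size exactly $\nn$, so it is an exact cover.

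The main obstacle is the indexing in Step 1. We must verify that the induction hypothesis really carries two consecutive auxiliary-agents forward: the $-2$ arc of $\axag{i}$ points to $\axag{i+1}$, which is already known to be in $C$, and this is what forces $\axag{i+2}$ in. We also need to treat the boundary agents $\axag{3\nn-1}$ and $\axag{3\nn}$ separately, since their arc weights differ from the general pattern.
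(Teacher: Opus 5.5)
Your proof is correct and follows essentially the same route as the paper: propagate membership along the auxiliary chain via individual rationality (the $-2$ arc of $y_i$ toward $y_{i+1}$ forces $u_{i+1}$ and $y_{i+2}$ in), use each $u_i$'s $-1$ toward $y_i$ to force a covering set-agent, and rule out intersecting set-agents because set-agents have no positive utilities. You also make explicit two details the paper leaves implicit: the boundary agent $y_{3\hat{n}-1}$ forcing $u_{3\hat{n}}$ into the coalition, and the count showing that pairwise disjoint $3$-sets covering $[3\hat{n}]$ number exactly $\hat{n}$.
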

  \begin{claimproof}{clm:thm:pa_ir_add_add_maxdeg_dag_maxutil_kzero_2}
    By assumption, let $C$ be the coalition containing~$\axag{0}$ and $\axag{1}$.
   We aim to show that the sets corresponding to the set-agents contained in $C$ form an exact cover of~$I$.
   
   We first show that all auxiliary- and element-agents are contained in~$C$.
   We achieve this by showing that for each~$i\in \{0\}\cup[3\nn-2]$, if $\axag{i}, \axag{i+1}\in C$,
   then $\axag{i+2}, \elag{i+1} \in C$ as well.
   Notice that for each~$i\in \{0\}\cup[3\nn-2]$, agent~$\axag{i}$ has $-2$ utility towards $\axag{i+1}$, $1$ utility towards $\axag{i+2}$ and $\elag{i+1}$, and zero utility towards all remaining agents. 
   Since $\axag{i}$ and $\axag{i+1}$ are both in~$C$,
   in order for $C$ to be \IR, we have both~$\axag{i+2}$ and $\elag{i+1}$ in~$C$.

   Since every element-agent~$\elag{i}$ has $-1$ utility towards his auxiliary-agent~$\axag{i}$,
   by previous paragraph and by the utilities of~$\elag{i}$, coalition~$C$ has to have at least one set-agent~$\sset{j}$ such that~$i\in \set{j}$ to make sure overall utility of~$\elag{i}$ become non-negative.
   This implies that the sets corresponding to the set-agents in~$C$ form a set cover.
   Let $\excov$ be the set $\excov = \{\set{j}\mid \sset{j}\in C\}$.
   To show that $\excov$ is indeed an exact cover it suffices to show that no two sets in~$\excov$ intersect.   
   Suppose, towards a contradiction, that $\excov$ contains $\set{j}$ and $\set{j'}$ with $\set{j}\cap \set{j'}\neq \emptyset$.
   Without loss of generality, let $j< j'$.
   Then, $\sset{j}$ has $-1$ utility towards~$\sset{j'}$.
   Since no set-agent has positive utility towards any other agent, coalition~$C$ yields negative utility for~$\sset{j}$ and is not \IR\ for him, a contradiction.
  \end{claimproof}
   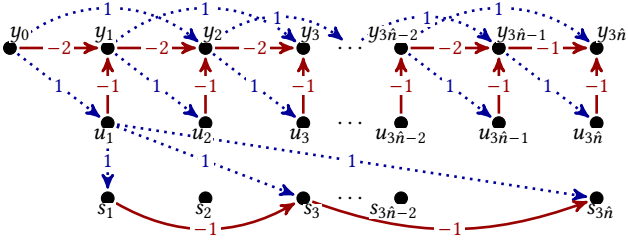
\begin{figure}[t!]
  \begin{tikzpicture}[>=stealth',shorten <= 1pt, shorten >= 1pt]
  \def \xx {1.3}
  \def \xy {1}
  \foreach \x / \n in {1/1, 2/2, 3/3, 4/3n2, 6/3n} {
    \node[unode] at (\x*\xx, -2*\xy) (s\n) {};
  }

  \node[unode] at (0*\xx, 0*\xy) (y0) {};
  
  \foreach \x / \n in {1/1,2/2,3/3,4/3n2, 5/3n1, 6/3n} {
    \node[unode] at (\x*\xx, 0*\xy) (y\n) {};
    \node[unode] at (\x*\xx, -1*\xy) (u\n) {};
  }

  \foreach \i / \p / \a / \r / \n in {%
    y0/above left/0/-10pt/{y_0}, y1/above left/0pt/-5pt/{y_1}, y2/above left/0pt/-9pt/{y_2}, y3/above left/0pt/-9pt/{y_3}, y3n2/above left/0pt/-9pt/{y_{3\nn-2}}, y3n1/above right/0pt/-3pt/{y_{3\nn-1}}, y3n/above right/0pt/-2pt/{y_{3\nn}},
    u1/below left/0pt/-5pt/{u_1},   u2/below left/0pt/-5pt/{u_2},   u3/below left/0pt/-5pt/{u_3},   u3n2/below left/0pt/-12pt/{u_{3\nn-2}},  u3n1/below left/0pt/-14pt/{u_{3\nn-1}},  u3n/below left/0pt/-5pt/{u_{3\nn}}, 
    s1/below left/0pt/-5pt/{s_1},s2/below left/0pt/-5pt/{s_2},s3/below left/0pt/-9pt/{s_3},s3n2/below right/0pt/-14pt/{s_{3\nn-2}},s3n/below left/0pt/-9pt/{s_{3\nn}}%
  } {
     \node[\p = \a and \r of \i,fill=white,inner sep=0.1pt] {$\n$};
   }
  
  \foreach \y in {y, u, s} {
    \path (\y 3) -- node[midway,inner sep=1.7pt] (\y m) {$\dots$} (\y 3n2);
  }

  \begin{pgfonlayer}{bg}
    \foreach \s / \t / \aa / \w / \typ in {
    y0/y1/0/{-2}/ec, y1/y2/0/{-2}/ec,y2/y3/0/{-2}/ec,y3n2/y3n1/0/{-2}/ec,y3n1/y3n/0/{-1}/ec,
    y0/y2/{-40}/{1}/fc,  y1/y3/{-40}/{1}/fc,  y3n2/y3n/{-40}/{1}/fc,
    y0/u1/0/{1}/fc, y1/u2/0/{1}/fc,y2/u3/0/{1}/fc,y3n2/u3n1/0/{1}/fc,y3n1/u3n/0/{1}/fc,
    u1/y1/0/{-1}/ec, u2/y2/0/{-1}/ec,u3/y3/0/{-1}/ec,u3n2/y3n2/0/{-1}/ec,u3n1/y3n1/0/{-1}/ec,u3n/y3n/0/{-1}/ec, %
    u1/s1/0/{1}/fc, u1/s3/0/1/fc, u1/s3n/0/1/fc,
    s1/s3/30/{-1}/ec, s3/s3n/20/{-1}/ec%
  }
  {
    \draw[->, \typ] (\s) edge[bend right = \aa] node[fill=white, inner sep=1pt, midway] {\small $\w$} (\t);
  }

  \draw[->, fc] (y2) edge[bend right=-40]  node[fill=white, inner sep=1pt, midway] {\small $1$} (ym);
  \draw[->, fc] (ym) edge[bend right=-40]  node[fill=white, inner sep=1pt, midway] {\small $1$}  (y3n1);
  \end{pgfonlayer}
\end{tikzpicture}
  \caption{Illustration of the relevant part of the reduction for \cref{thm:AD-IR-PA}, assuming that element~$1$ appears in sets~$\set{1}, \set{3}$, and $\set{3\nn}$.}\label{fig:thm:AD-IR-PA}
\end{figure}

 By \cref{obs:stability-relation},
 the correctness of the construction follows immediately from \cref{clm:thm:pa_ir_add_add_maxdeg_dag_maxutil_kzero_1,clm:thm:pa_ir_add_add_maxdeg_dag_maxutil_kzero_2}.
}

The hardness remains for symmetric preferences.

\begin{restatable}[\appsymb]{theorem}{thmADIRPANPhardSYMM}
  For each stability concept $\S \in \{\IR,\IS,\NS\}$ and each control action~$\A\in \{\AddAg, \DelAg\}$, \controlprob{\S}{\PA}{\A}{\AD} is \NP-hard even if the budget is $\budget=0$ and the preference graph is symmetric.
  \label{thm:SYMMAD-IR-PA}
\end{restatable}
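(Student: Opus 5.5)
We plan a polynomial-time reduction from \RETC\ with $\budget=0$ and a symmetric preference graph. As in the proofs of \cref{thm:AD-IR-NA,thm:AD-IR-PA}, with $\budget=0$ every variant simply asks whether a stable partition exists in which $\agx$ and $\agy$ share a coalition. We will first build the instance for \IR\ and then extend it to \IS\ and \NS.

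\textbf{Construction for \IR.} Given $(\els=[3\nn],\sets)$, we create the pair $\agx,\agy$, element-agents $\elag{i}$ for $i\in\els$, and set-agents $\sset{j}$ for $\set{j}\in\sets$. All weights below are symmetric, and all unmentioned weights are zero.
\begin{compactitem}[--]
\item $\util{\agx}(\agy)=-3\nn$, and $\util{\agx}(\elag{i})=1$ for every $i\in\els$. Agent $\agx$ has weight $0$ towards every set-agent.
\item $\util{\agy}(\elag{i})=-2$ for every $i\in\els$, and $\util{\agy}(\sset{j})=9$ for every $\set{j}\in\sets$.
\item $\util{\elag{i}}(\sset{j})=1$ whenever $i\in\set{j}$.
\item $\util{\sset{j}}(\sset{\ell})=-13$ whenever $\set{j}\cap\set{\ell}\neq\emptyset$.
\end{compactitem}

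\textbf{Backward direction for \IR.} Let $C$ be the coalition containing both $\agx$ and $\agy$ in an \IR\ partition. Agent $\agx$ receives $-3\nn$ from $\agy$ and can only be compensated by element-agents, each worth $1$. Hence all $3\nn$ element-agents lie in $C$.
\begin{compactitem}[--]
\item Each $\elag{i}$ then receives $+1$ from $\agx$ and $-2$ from $\agy$. It therefore needs at least one set-agent $\sset{j}$ with $i\in\set{j}$ in $C$, so the set-agents in $C$ form a set cover.
\item Each chosen $\sset{j}$ receives $+9$ from $\agy$ and at most $+3$ from its elements. A single conflict costs $13>12$, so the chosen sets are pairwise disjoint.
\item A disjoint cover of $3\nn$ elements by $3$-sets is an exact cover.
\end{compactitem}

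\textbf{Forward direction for \IR.} Given an exact cover $\excov$, take the coalition $\{\agx,\agy\}\cup\{\elag{i}\mid i\in\els\}\cup\{\sset{j}\mid \set{j}\in\excov\}$ and put everyone else in a singleton. We check that no member wants to leave.
\begin{compactitem}[--]
\item $\agx$ gets $3\nn-3\nn=0$.
\item $\agy$ gets $-3\nn-6\nn+9\nn=0$.
\item Each $\elag{i}$ gets $1-2+1=0$.
\item Each chosen $\sset{j}$ gets $9+3=12$.
\end{compactitem}
This settles \IR.

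\textbf{Extension to \IS\ and \NS.} The main obstacle is that the \IR\ partition above is not \IS. An unchosen set-agent sitting in a singleton gains $+9$ from $\agy$ and would like to join the big coalition. Whether it is accepted depends on the others' weak preferences, and $\agx$ is indifferent since his weight to set-agents is $0$, so a blocking tuple is plausible.

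I would handle this the way \cref{thm:SYM-AD-ISNS-NA-ADD} does. We add $2\nn$ auxiliary agents $\dmag{\ell}{}$, each joined to every set-agent by a weight larger than $12$, say $\util{\sset{j}}(\dmag{\ell}{})=13$. Every unchosen set-agent then sits with its own auxiliary agent and is strictly happier there than in the big coalition.

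Two further adjustments are needed.
\begin{compactitem}[--]
\item A counting argument in the style of \cref{clm:thm:SYM-AD-ISNS-NA-ADD-Backward} must force each auxiliary agent to be paired with exactly one set-agent. Once the auxiliaries absorb $2\nn$ set-agents, at most $\nn$ remain for the big coalition. This supplies the missing ``at most $\nn$ sets'' bound even without the conflict penalty.
\item Element-agents must not prefer to move to another set-agent's coalition. I would check whether small re-weightings suffice here, for instance lowering $\util{\agy}(\sset{j})$ so that chosen set-agents are not tempted by the auxiliaries, and re-balancing $\agy$'s weights so that $\agy$'s utility stays exactly $0$ with $\nn$ chosen sets.
\end{compactitem}

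I expect the most delicate part to be tuning these constants so that every agent's best existing coalition is its own coalition, as \NS\ demands, while the backward argument still goes through under the weaker assumption of \IS. Since \IS\ implies \IR, the backward claim will reuse the \IR\ analysis above.
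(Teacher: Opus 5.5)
Your \IR\ reduction is correct, and it uses a different gadget from the paper. You pull every element-agent into $C$ through $\mu_x(y)=-3\hat n$ and $\mu_x(u_i)=1$. The paper instead takes the pair $\{x,u_1\}$, gives $x$ weight $-3$ to element-agents and $+9$ to set-agents, and links the element-agents in a cycle of $+1$ weights, so each $u_i$ needs both neighbours and exactly one covering set. The gap is in the \IS/\NS\ part, which the theorem claims. You leave it as a plan with untuned constants, and the plan as sketched does not work.

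\textbf{The obstacle is misdiagnosed.} An unchosen $s_j$ joining the big coalition $B$ gets $9+3-13m$, where $m\ge 1$ because every element of $S_j$ is covered by a chosen set. So it never wants to join, and the chosen set-agents it conflicts with would refuse it under \IS\ anyway. The real violations are agents leaving $B$:
\begin{itemize}
\item $y$ gains $+9$ by joining any unchosen singleton $\{s_j\}$;
\item $u_i$ gains $+1$ by joining an unchosen $\{s_j\}$ with $i\in S_j$.
\end{itemize}
In both cases $s_j$ weakly gains, so these are blocking tuples and your partition is not even \IS.

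\textbf{Auxiliaries with weight $13$ and no other weights make things worse.}
\begin{itemize}
\item They do not remove the violations above: $y$ still gets $9>0$ in $\{s_j,d_\ell\}$, and $u_i$ still gets $1>0$.
\item A chosen $s_j$, with utility $12$ in $B$, now strictly prefers $\{s_{j'},d_\ell\}$ for a disjoint unchosen $S_{j'}$, where it gets $13$. Your suggested remedy of lowering $\mu_y(s_j)$ only widens this gap.
\item The backward direction cannot simply reuse the \IR\ analysis. Auxiliaries can enter $C$, and each one gives $+13$ to every set-agent there, which cancels the conflict penalty.
\end{itemize}
Concretely, in your augmented instance the grand coalition gives:
\begin{itemize}
\item $x$ utility $0$ and $y$ utility $18\hat n$;
\item each $u_i$ utility $2$;
\item each $s_j$ at least $12+26\hat n-78$ (at most six conflicts);
\item each $d_\ell$ utility $39\hat n$.
\end{itemize}
So for $\hat n\ge 3$ the grand coalition is \IR, hence \NS\ and \IS\ by \cref{obs:gra:eq}, whether or not an exact cover exists.

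\textbf{The missing idea.} The auxiliaries must repel the members of $C$. As in the paper, give them weight $1$ to set-agents, weight $-9$ to the agent holding $+9$ towards set-agents (your $y$), and weight $-3$ to element-agents. Then:
\begin{itemize}
\item joining $\{s_j,d_\ell\}$ is worth $9-9=0$ to $y$ and at most $-2$ to $u_i$;
\item chosen set-agents keep $12>1$;
\item any $d_\ell\in C$ would drive each $u_i$ to at most $-1+3-3<0$.
\end{itemize}
So $C$ contains no auxiliary, and your \IR\ backward argument applies verbatim. You also do not need the counting argument on auxiliaries: showing the forward partition is \NS\ and running the backward direction under \IR\ covers all three concepts.
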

\appendixproofwithstatement{thm:SYMMAD-IR-PA}{\thmADIRPANPhardSYMM*}
{
  We again provide a reduction from \RETC\ problem. 

  Let $I=(\els=[3\nn], \sets)$ be an instance of \RETC\ and we construct a hedonic game instance with symmetric preferences as follows.
  The set of agents is $\agentsU= \{x\} \cup \{\elag{i} \mid i \in \els\} \cup \{\sset{j} \mid \set{j} \in \sets\} \cup \{\dmag{\ell}{} \mid \ell \in [2\nn]\}$, where 
  \begin{compactitem}[--]
    \item $x$ is an auxiliary-agent
    \item each element~$i\in \els$ has one \myemph{element-agent~$\elag{i}$},
    \item each set~$\set{j}\in \sets$ has one \myemph{set-agent~$\sset{j}$},
    \item and auxiliary-agents $\dmag{\ell}{}$ for $\ell \in [2\nn]$.
  \end{compactitem}
  We construct the symmetric utilities as follows; the unmentioned utilities are 0:
  \begin{compactitem}[--]
    \item For all~$i \in [3\nn]$, set $\util{x}(\elag{i}) = \util{\elag{i}}(x) = -3$.
    \item For all~$\set j \in \sets$, set $\util{x}(\sset{j}) = \util{\sset{j}}(x) = 9$.
    \item For all~$i \in [3\nn - 1]$, set $\util{\elag{i}}(\elag{i+1}) = \util{\elag{i+1}}(\elag{i}) = 1$ and $\util{\elag{1}}(\elag{3\nn}) = \util{\elag{3\nn}}(\elag{1}) = 1$.
    \item For all~$\set j \in \sets$ and all $i \in \set j$, set $\util{\elag{i}}(\sset{j}) = \util{\sset{j}}(\elag{i}) = 1$.
    \item For all~$\set{j}, \set{\ell} \in \sets$ with $\set{j} \cap \set{\ell} \neq \emptyset$, set $\util{\sset j}(\sset \ell) = \util{\sset \ell}(\sset j) = -13$.
    \item For all~$\set{j} \in \sets$ and all $\ell \in [2\nn]$, set $\util{\sset{j}}(\dmag{j}{}) = \util{\dmag{j}{}}(\sset{j}) = 1$. 
    \item For all~$\ell \in [2\nn]$, set $\util{\agx}(\dmag{\ell}{}) = \util{\dmag{\ell}{}}(\agx) = -9$
    \item For all~$\ell \in [2\nn]$ and $i \in [3\nn]$, set $\util{\dmag{\ell}{}}(\elag{i}) = \util{\elag{i}}(\dmag{\ell}{}) = -3$. 
  \end{compactitem}
  The preference graph of the instance is depicted in \cref{fig:thm:SYMMAD-IR-PA}.
  To complete the construction, we define $\{x,\elag{1}\}$ as the pair of agents which we aim to be in the same coalition in an $\{\IR, \IS, \NS\}$ partition.

  In the forward direction of the correctness proof, we show that if $I$ has an exact cover, we can construct a \NS\ partition with $|\Pi(x)| = \Pi(\elag{1})$, which is by \cref{obs:stability-relation} also \IS and \IR.
  In the backward direction, we assume the partition $\Pi$ being \IR--an even weaker assumption than \NS--and show that this implies the existence of an exact cover for~$I$.

  \begin{claim}\label{clm:thm:symmad_ir_pa_forward}
    If $\excov$ is an exact cover for~$I$, then the following partition~$\Pi$
    with $\Pi(x) = \{x, \elag{1}, \dots, \elag{3\nn} \}\cup \{\sset j \mid \set j \in \excov\}$, and 
    for each $\set{j}\notin \excov$, $\Pi(\sset{j}) = \{\sset{j},\dmag{\ell}{}\}$ for one $\ell \in [2\nn]$, is \NS.
  \end{claim}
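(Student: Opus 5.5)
The plan is a direct verification. Since $\Pi$ contains only one large coalition $\Pi(x)$ and otherwise pairs $\{\sset{j},\dmag{\ell}{}\}$, I would show for every agent that his utility towards $\Pi(\cdot)$ is at least his utility towards every alternative $\bcoal\cup\{\cdot\}$ with $\bcoal\in\Pi\cup\{\emptyset\}$. First I would fix the reading of the construction, which I take to be the intended one: every set-agent has utility $1$ towards every auxiliary-agent $\dmag{\ell}{}$, so that $\{\sset{1},\dots,\sset{3\nn}\}$ and $\{\dmag{1}{},\dots,\dmag{2\nn}{}\}$ form a complete bipartite graph. Since $|\excov|=\nn$, exactly $2\nn$ sets lie outside $\excov$. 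I would match them bijectively to the $2\nn$ auxiliary-agents, so that every $\dmag{\ell}{}$ lies in exactly one pair and $\Pi$ is a partition.

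Next I would compute the current utilities, using that $\excov$ is an exact cover, so its sets are pairwise disjoint and every element lies in exactly one of them.
\begin{compactitem}[--]
\item Agent $x$ gets $3\nn\cdot(-3)+\nn\cdot 9=0$.
\item Each $\elag{i}$ gets $-3$ from $x$, $+2$ from its two cycle neighbours, and $+1$ from the unique covering set-agent, for a total of $0$.
\item Each $\sset{j}$ with $\set j\in\excov$ gets $9+3=12$ and incurs no $-13$ penalty.
\item Each paired $\sset{j}$ and each $\dmag{\ell}{}$ gets $1$.
\end{compactitem}

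Then I would rule out deviations agent by agent.
\begin{compactitem}[--]
\item For $x$: joining a pair $\{\sset{j},\dmag{\ell}{}\}$ yields $9-9=0$, and the singleton yields $0$, so neither is a strict improvement.
\item For $\elag{i}$: a pair yields at most $1-3<0$, and the singleton yields $0$.
\item For a cover set-agent: any pair yields at most $1$, and the singleton yields $0$, both below $12$.
\item For a paired auxiliary-agent: $\Pi(x)\cup\{\dmag{\ell}{}\}$ is clearly negative (it contains $x$ and all element-agents), and another pair gives exactly $1$, which is not a strict improvement.
\item For a non-cover set-agent $\sset{j}$: another pair gives at most $1$, and the singleton gives $0$.
\end{compactitem}

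The one case needing care is a non-cover set-agent $\sset{j}$ joining $\Pi(x)$, where he would collect $9+3=12$. Here I would argue that $\set j\notin\excov$ but its elements are covered by $\excov$, so $\set j$ intersects at least one set of $\excov$. This costs at least $-13$, giving total at most $-1<1$. This step is the main obstacle: the weights $9$, $1$ and $-13$ are tuned exactly so that this deviation fails. Having exhausted all agents and all target coalitions, I would conclude that $\Pi$ is \NS.
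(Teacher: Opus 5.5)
Your proposal is correct and follows the paper's own proof. Like the paper, you compute each agent's utility under $\Pi$ (zero for $x$ and for the element-agents, $12$ for the cover set-agents, and $1$ for both members of each pair), then rule out every Nash deviation agent by agent; the only delicate case is a non-cover set-agent moving into $\Pi(x)$. Your observation that one intersecting cover set already suffices ($9+3-13<1$) is slightly more robust than the paper's claim of ``at least two'', since that claim tacitly assumes the sets in $\mathcal{S}$ are pairwise distinct.
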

  \begin{claimproof}{clm:thm:symmad_ir_pa_forward}
    To show the statement, we need to show that every agent $i \in \agentsU$ has no incentive to leave his current coalition, i.e., there exist no other coalition $C \in \Pi$ with $C \cup \{i\} \succ_i \Pi(i)$.

  Let us first consider agent~$x$. 
  Being together with all element-agents $\elag{i}$ leads to utility $3\nn (-3) = -9 \nn$.
  Since $\excov$ is an exact cover, there are $\nn$ set-agents $\sset{j}$ in the same coalition with $x$.
  For each such set-agent, $x$ has positive utility $9$, so agent~$x$ has utility $-9\nn + 9 \nn = 0$ to his coalition.
  For every coalition $\Pi(\sset{j})$ with $\set{j} \notin \excov$, agent~$\agx$ also has utility 0 towards it.
  Therefore, agent~$\agx$ does not want to deviate.

  Every element-agent $\elag{i}$, for $i \in [3\nn]$, has negative utility of~$-3$ towards $x$ and positive utility of~$1$ to his predecessor $\elag{i-1}$ and successor $\elag{i+1}$ 
  (Here, $\elag{1}$ is the successor of $\elag{3\nn}$ and $\elag{3\nn}$ is the predecessor of $\elag{1}$).
  Since $\excov$ is an exact cover, there must exist a set $\set{j} \in \excov$ with $i \in \set{j}$ and so $\util{\elag{i}}(\sset{j}) = 1$. 
  Thus, $\util{\elag{i}}(\Pi(\elag{i})) = \util{\elag{i}}(\{x, \elag{i-1}, \elag{i+1}, \sset{j}\}) = -3 + 1 + 1 +1= 0$.
  For every coalition $\Pi(\sset{j})$ with $\set{j} \notin \excov$, each element-agent can have utility at most -2 towards it, because $\elag{i}$ has utility -3 towards $\dmag{\ell}{}$ and at most utility 1 to $\sset{j}$.
  Thus, $\elag{i}$ does not want to deviate.

  All set-agents $\sset{j}$, for $\set{j} \in \excov$, have positive utility 9 to $x$ and $1$ for each element-agent $\elag{i}$ with $i \in \set{j}$.
  Hence, $\excov$ is an exact cover, there are no two sets $\set{j},\set{\ell} \in \excov$ with $\set{j} \cap \set{\ell} \neq \emptyset$. 
  Therefore, the utility for every set-agent $\sset{j}$ towards his coalition is $\util{\sset{j}}(\Pi(\sset{j})) = \util{\sset{j}}(\{x,\elag{j_1}, \elag{j_2},\elag{j_3}\}) = 9 + 1 + 1 + 1 = 12$.
  Also, the set-agents $\sset{j}$ have no incentive to leave to a coalition with set-agents not in the exact cover, because this would only lead at most to utility 1.
  
  Clearly, every set-agent $\sset{j}$ with $\set{j} \notin \excov$ has utility one towards his coalition containing himself and an auxiliary-agent.
  They would not want to switch to another coalition with an auxiliary-agent because this does not improve the utility.
  Also, the coalition $\Pi(x)$ yields negative utility for $\sset{j}$ because $\set{j}$ intersects with at least two sets in $\excov$, each contributing a utility of -13.

  The auxiliary-agents do not want to join the coalitions $\Pi(\sset{j})$ with $\set{j} \notin \excov$, because it would not yield a higher utility.
  Moreover, the coalition $\Pi(x)$ yields negative utility towards $\dmag{\ell}{}$, since for each element-agent $\elag{i} \in \Pi(x)$, agent~$\dmag{\ell}{}$ has utility -3.

  We showed that no agent has incentive to leave his current coalition, therefore, the constructed partition $\Pi$ is \NS.
  \end{claimproof}

  \begin{claim}\label{clm:thm:symmad_ir_pa_backward}
    If $\Pi$ is an \IR\ partition with $\Pi(x) = \Pi(\elag{1})$, then the sets corresponding to the set-agents contained in $\Pi(x)$ form an exact cover of $I$.
  \end{claim}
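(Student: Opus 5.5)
The plan is to pin down the coalition $C \coloneqq \Pi(x)$ using the individual-rationality constraints of $x$, the element-agents, the set-agents and the auxiliary-agents in $C$. Let $a$, $b$, $c$ be the numbers of element-agents, set-agents and auxiliary-agents $\dmag{\ell}{}$ in $C$. Then \IR\ for $x$ reads $-3a + 9b - 9c \ge 0$.

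The first step is to show that the set-agents in $C$ are pairwise disjoint. Take a set-agent $\sset{j} \in C$. Its only positive utilities are $9$ towards $x$, $1$ towards each of its three element-agents, and $1$ towards its single auxiliary partner. So its utility towards $C$ is at most $13 - 13t$, where $t$ is the number of set-agents in $C$ whose sets intersect $\set{j}$. Hence $t \le 1$, and $t = 1$ forces all three element-agents of $\set{j}$ and its auxiliary partner into $C$. To exclude $t = 1$, I would first show that $c = 0$. An auxiliary-agent in $C$ gets $-9$ from $x$ and $-3$ from $\elag{1}$, so it needs at least $12$ set-agents in $C$. These would then have many intersections among themselves, since every element lies in three sets and the set-agents in $C$ must collectively cover the elements forced in. I would combine this with $x$'s inequality (each auxiliary-agent costs $x$ a further $9$) to reach a contradiction.

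This is the step I expect to be the main obstacle. The $-13$ weight is tuned to exactly cancel $9+3+1$, so the tie case with $t = 1$ and an auxiliary-agent present must be ruled out by a global counting argument rather than locally. If the direct count does not close, I would first try to show that each auxiliary-agent in $C$ forces an additional set-agent with an intersecting partner into $C$. That would give a contradiction with $t \le 1$ by a chain argument.

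Once $c = 0$ and the sets in $C$ are pairwise disjoint, I propagate along the element cycle. Each $\elag{i} \in C$ receives $-3$ from $x$. Its positive contributions are at most $1$ from each of its two cycle neighbours and $1$ from each set-agent containing $i$, and by disjointness at most one set-agent in $C$ contains $i$. Hence $\elag{i}$ needs both cycle neighbours and exactly one covering set-agent in $C$. Starting from $\elag{1} \in C$, induction around the cycle gives $a = 3\nn$. Moreover, every element is covered by exactly one set in $\{\set{j} \mid \sset{j} \in C\}$. Together with pairwise disjointness, this shows the sets corresponding to the set-agents in $C$ form an exact cover; consistently, $x$'s inequality then gives $b \ge \nn$, which matches.
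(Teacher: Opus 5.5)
You have a genuine gap in the step you flag yourself: showing that $C$ contains no auxiliary-agent ($c=0$). The ingredients you propose cannot close it. Your final propagation around the element cycle is exactly the paper's argument and is fine.

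In the intended construction, every set-agent and every auxiliary-agent have mutual utility $1$. The literal $d_j$ in the construction is a typo for $d_\ell$: the forward claim pairs an arbitrary $d_\ell$ with each non-cover set-agent, and your own ``needs at least $12$ set-agents'' presupposes this reading. (Under the single-partner reading you use for set-agents, $d_\ell$ would have only one positive neighbour worth $1$ and could never be in $C$, so you are mixing two readings.)

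With the intended reading, a set-agent in $C$ gets up to $12+c-13t$, not $13-13t$. The constraints of $x$, of the auxiliary-agents and of the set-agents are then jointly satisfiable with $c\ge 1$. For instance, if $\hat{n}\ge 12$, take $C=\{x,u_1,d_\ell\}$ together with $12$ pairwise disjoint set-agents, e.g.\ from an exact cover. Then $x$ gets $-3+9\cdot 12-9\ge 0$, $d_\ell$ gets $-9-3+12=0$, and each set-agent gets at least $10$. Twelve sets need not intersect, so neither your counting via $x$'s inequality nor the proposed chain argument can produce a contradiction.

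The missing idea is to use $u_1$, which lies in $C$ by hypothesis and is the only agent violated in the configuration above. It has utility $-3$ towards $x$ and $-3$ towards every auxiliary-agent. The total positive utility it can ever collect is $5$: the three set-agents containing element $1$ plus its two cycle neighbours. Hence a single $d_\ell\in C$ already makes $C$ not \IR\ for $u_1$, so $c=0$ immediately; this is how the paper does it.

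Once $c=0$, a set-agent in $C$ has at most $9+3=12$ positive utility, so a single intersecting partner ($-13$) makes it deviate. Thus $t=0$ outright, and the ``tie case'' with $t=1$ that you worry about never arises. After that, your cycle argument completes the proof as in the paper.
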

  \begin{claimproof}{clm:thm:symmad_ir_pa_backward}
  Let $C$ be the coalition containing agents $x$ and $\elag{1}$.
  We aim to show that the sets corresponding to set-agents contained in $C$ form an exact cover of $I$.

  First, we show that there cannot be an auxiliary-agent $\dmag{\ell}{} \in C$. 
  Suppose, by contradiction, that there is an auxiliary-agent $\dmag{\ell}{} \in C$. 
  This would mean, $\elag{1}$ would have utility -3 towards $\agx$ and $\dmag{\ell}{}$ in $C$.
  However, $\elag{1}$ can obtain at most utility 5 from a coalition (three from a set-agent and two from $\elag{2}$ and $\elag{3\nn}$), so $C$ would be not \IR\ for $\elag{1}$, contradiction. 

  Secondly, we show that there cannot be two set-agents $\sset{j}, \sset{j'} \in C$ with $\set{j}\cap \set{j'}\neq \emptyset$.
  Suppose, towards a contradiction, $C$ contains two set-agents $\sset{j}, \sset{j'}$ with $\set{j}\cap \set{j'}\neq \emptyset$.
  Then, $\sset{j}$ has utility $-13$ towards~$\sset{j'}$ and vice versa.
  Since every set-agent can have at most 12 utility towards any coalition (9 towards $x$ and 1 for the three element-agents), coalition $C$ must yield a negative utility for $\sset{j}$ and $\sset{j'}$ and is not \IR\ for these two, a contradiction.

  Then, we show that all element-agents $\elag{i}$ for $i \in [3\nn]$ are contained in $C$. 
  Agent~$\elag{1}$ has a negative utility $-3$ towards agent~$x$.
  To make coalition~$C$ \IR\ for $\elag{1}$, we have to include at least one set-agent $\sset{j}$ with $1 \in \set{j}$.
  As shown before, we cannot add two sets which have common elements, so we can add at most one set-agent $\sset{j}$ with $1 \in \set{j}$.
  Therefore, we also need to add $\elag{2}$ and $\elag{3\nn}$ to make $C$ \IR\ for $\elag{1}$. 
  The same argument holds for $\elag{2}, \dots, \elag{3 \nn - 1}$ that they have to be together with their predecessor and successor.
  
  Therefore, for every element $i$, the coalition~$C$ must contain at least one set-agent $\sset{j}$ with $i \in \set{j}$ in order to make $C$ \IR\ for every element-agent~$\elag{i}$.
  This implies that the sets corresponding to the set-agents in~$C$ form a set cover.

  Let $\excov=\{\set{j}\mid \sset{j}\in C\}$.
  Because of our first observation, $\excov$ is indeed an exact cover.
  \end{claimproof}
  
 \begin{figure}[t!]
  \centering
  \begin{tikzpicture}[>=stealth',shorten <= 1pt, shorten >= 1pt]
  \def \xx {1.3}
  \def \xy {1}
  \foreach \x / \n in {1/1, 2/2, 3/3, 4/3n2, 5/3n1, 6/3n} {
    \node[unode] at (\x*\xx, -3*\xy) (s\n) {};
  }

  \node[unode] at (3.5*\xx, 0*\xy) (x) {};
  
  \foreach \x / \n in {1/1,2/2,3/3,4/3n2, 5/3n1, 6/3n} {
    \node[unode] at (\x*\xx, -1.5*\xy) (u\n) {};
  }

  \foreach \x / \n in {2/1, 5/2n} {
    \node[unode] at (\x*\xx, -4*\xy) (d\n) {};
  }

  \foreach \i / \p / \a / \r / \n in {%
    x/above left/2pt/0pt/{x},
    u1/below left/3pt/-12pt/{y = u_1},   u2/below left/0pt/-5pt/{u_2},   u3/below left/0pt/-5pt/{u_3},   u3n2/below left/0pt/-5pt/{u_{3\nn-2}},  u3n1/below left/0pt/-14pt/{u_{3\nn-1}},  u3n/below left/0pt/-5pt/{u_{3\nn}}, 
    s1/below left/-5pt/2pt/{s_1},   s2/below left/-5pt/2pt/{s_2},   s3/below left/0pt/-5pt/{s_3},   s3n2/below left/1pt/-8pt/{s_{3\nn-2}},  s3n1/below left/0pt/-14pt/{s_{3\nn-1}},  s3n/below right/-5pt/2pt/{s_{3\nn}},
    d1/below right/0pt/2pt/{d_1}, d2n/below left/0pt/2pt/{d_{2\nn}}%
  } {
     \node[\p = \a and \r of \i,fill=white,inner sep=0.1pt] {$\n$};
   }
  \foreach \y in {u, s} {
    \path (\y 3) -- node[midway,inner sep=1.7pt] (\y m) {$\dots$} (\y 3n2);
  }

  \coordinate (h1) at ($(s3)!0.5!(d1)$);
  \coordinate (h2) at ($(s3)!0.5!(d2n)$);  
 
  \begin{pgfonlayer}{bg}
    \foreach \s / \t / \aa / \w / \typ in {
    x/u1/0/{-3}/ec, x/u2/0/{-3}/ec, x/u3/0/{-3}/ec, x/u3n2/0/{-3}/ec, x/u3n1/0/{-3}/ec, x/u3n/0/{-3}/ec,
    u1/u3n/-15/{1}/fc, u1/u2/0/{1}/fc, u2/u3/0/{1}/fc, u3n2/u3n1/0/{1}/fc, u3n1/u3n/0/{1}/fc,
    u1/s1/0/{1}/fc, u1/s3/0/1/fc, u1/s3n/0/1/fc,
    s1/d1/0/{1}/fc, s3n/d2n/0/{1}/fc,
    d1/h1/0/{1}/fc, d2n/h2/0/{1}/fc,
    s1/s3/30/{-13}/ec, s3/s3n/24/{-13}/ec%
  }
  {
    \draw[-, \typ] (\s) edge[bend right = \aa] node[fill=white, inner sep=1pt, midway] {\small $\w$} (\t);
  }

  \draw[-, fc] (x) .. controls ($(u1)+(0,1.5)$) and  ($(u1)+(-1.5,0)$)   .. (s1) node[fill=white, inner sep=1pt, midway] {\small $9$};
  \draw[-, fc] (x) .. controls ($(u3n)+(0,1.5)$) and  ($(u3n)+(1.5,0)$)   .. (s3n) node[fill=white, inner sep=1pt, midway] {\small $9$};

  \draw[-, ec] (d1) .. controls ($(s1)-(0,1.5)$) and  ($(s1)-(1.5,0)$)   .. (u1) node[fill=white, inner sep=1pt, midway] {\small $-3$};
  \draw[-, ec] (d2n) .. controls ($(s3n)-(0,1.5)$) and  ($(s3n)+(1.5,0)$)   .. (u3n) node[fill=white, inner sep=1pt, midway] {\small $-3$};
  \end{pgfonlayer}
  \end{tikzpicture}
    \caption{Illustration of the relevant part of the reduction for \cref{thm:SYMMAD-IR-PA}, assuming that element~$1$ appears in sets~$\set{1}, \set{3}$, and $\set{3\nn}$. Note that not all utilities from the auxiliary-agents $\dmag{\ell}{}$ are drawn.}\label{fig:thm:SYMMAD-IR-PA}
  \end{figure}
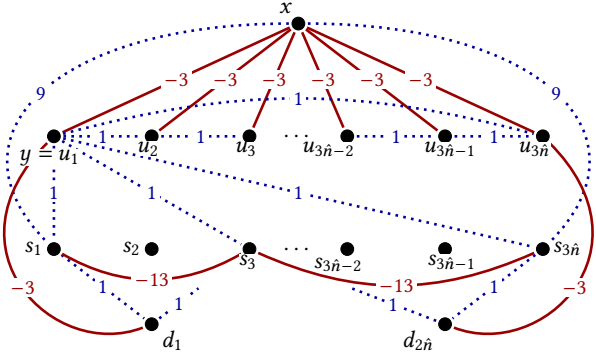

 The correctness of the construction follows immediately from Claims \ref{clm:thm:symmad_ir_pa_forward} and \ref{clm:thm:symmad_ir_pa_backward}.
}

Finally, we consider the control goal \GR: Making the grand coalition partition stable.
We already know from the previous section that for every $\S \in \{\IR, \IS,\NS, \CS\}$, \controlprob{\S}{\GR}{\AddAg}{\FA} is \NP-hard and \Wtwo-hard wrt.\ $\budget$ even for symmetric preferences.
We strengthen this result by showing the same for the \DelAg\ case. %

\begin{restatable}[\appsymb]{theorem}{thmDAGSYMADIRISNSGRADD}
  For each stability concept $\S \in \{\IR, \IS,\NS, \CS\}$ and each control action $\A \in \{\AddAg, \DelAg\}$, \controlprob{\S}{\GR}{\A}{\AD} is in \XP\ and \Wtwo-hard wrt.\ $\budget$ when the preference graph is a DAG.
  \label{thm:DAG-AD-IRISNS-GR-ADD}
\end{restatable}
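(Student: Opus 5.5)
By \cref{obs:gra:eq}, for the grand coalition partition \IS\ and \NS\ coincide with \IR. By \cref{obs:dag:eq}, on acyclic preference graphs \CS\ also coincides with \IR. Since deleting agents keeps the graph acyclic, and so does restricting to any $\agentsU \cup \sss$, it suffices to prove both claims for \IR.

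\emph{XP membership.} We enumerate all subsets $\sss \subseteq \agentsW$ (resp.\ $\agentsU' \subseteq \agentsU$) of size at most $\budget$; there are $O(|\agents|^{\budget})$ of them. For each, we check in polynomial time whether every agent has non-negative total utility towards the resulting grand coalition.

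\emph{W[2]-hardness for \AddAg.} We give a parameterized reduction from \scp\ with parameter $\scs$, as in \cref{thm:FE-IR-GR-ADD}, but keeping the graph acyclic. Let $\agentsU \coloneqq \{u_i \mid i \in [\nn]\} \cup \{z\}$ and $\agentsW \coloneqq \{s_j \mid \set j \in \sets\}$. Set $\util{u_i}(z) = -1$, set $\util{u_i}(s_j) = 1$ whenever $i \in \set j$, and let all other utilities be $0$; put $\budget \coloneqq \scs$. All arcs go from element-agents to $z$ or to set-agents, so the graph is a DAG. The agents $z$ and $s_j$ always have utility $0$. Agent $u_i$ has utility $-1 + |\{j \mid s_j \in \sss, i \in \set j\}|$, so the grand coalition of $\agentsU \cup \sss$ is \IR\ if and only if $\{\set j \mid s_j \in \sss\}$ is a cover.

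\emph{W[2]-hardness for \DelAg.} This case is the main obstacle, because the deleter might delete element-agents themselves instead of "set" agents. I would handle this by duplication. For every element $i$, create $\scs+1$ copies $u_i^1,\dots,u_i^{\scs+1}$. Also create one agent $s_j$ per set and a single agent $z$. Let $f_i$ be the number of sets containing $i$. Each copy $u_i^t$ gets utility $-1$ towards every $s_j$ with $i \in \set j$ and utility $f_i - 1$ towards $z$ (we may assume $f_i \geq 1$). All other utilities are $0$, and $\budget \coloneqq \scs$. Again all arcs leave element-copies, so the graph is a DAG.

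If $\excov$ is a cover of size at most $\scs$, delete the corresponding set-agents. Then every surviving copy $u_i^t$ loses at least one $-1$ term and keeps $z$, so its utility is at least $0$; all other agents have utility $0$.

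Conversely, let $D$ be a deletion set with $|D| \leq \scs$. For each $i$, some copy $u_i^t$ survives. If $z \in D$, then that copy has negative utility. Otherwise, its utility is non-negative only if some $s_j$ with $i \in \set j$ lies in $D$. Hence $\{\set j \mid s_j \in D\}$ is a cover of size at most $\scs$. In both reductions the parameter is preserved exactly, which gives the claimed \Wtwo-hardness with respect to $\budget$ for all four stability concepts.
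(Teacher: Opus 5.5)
Your proposal is correct. The reduction to \IR\ via \cref{obs:gra:eq,obs:dag:eq}, the brute-force \XP\ algorithm, and the \AddAg\ reduction from \scp\ are the same as the paper's; your $z$ is the paper's $b$.

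The two proofs differ only in the \DelAg\ case.
\begin{itemize}
\item The paper does not duplicate element-agents. It gives each $u_i$ exactly $d(i)-1$ private compensators $a_i^1,\dots,a_i^{d(i)-1}$, each with utility $+1$. It then uses an exchange argument to assume that only set-agents are deleted. Deleting a compensator only lowers $u_i$'s utility. A deleted $u_i$ can be swapped for an undeleted $s_j$ with $i\in S_j$, or simply un-deleted if no such $s_j$ exists.
\item You make deleting element-agents pointless by pigeonhole, using $\budget+1$ copies of each. You compensate them with a single shared agent $z$ of weight $f_i-1$.
\end{itemize}
Your route gives a direct argument with no normalization step. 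The cost is non-unit weights and a $(\budget+1)$-fold blow-up. That blow-up is harmless for a parameterized reduction, and polynomial once you cap $\scs\le|\sets|$. The paper's gadget keeps all utilities in $\{-1,0,1\}$ and the instance linear in size.

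One sentence in your backward direction is wrong as stated. Let $d_i$ be the number of deleted set-agents containing $i$. If $z\in D$, a surviving copy $u_i^t$ has utility $d_i-f_i$, which is $0$, not negative, when every set containing $i$ is deleted. The conclusion still holds. Non-negativity then forces all $f_i\ge 1$ such set-agents into $D$. So in every case some $s_j$ with $i\in S_j$ lies in $D$, and $\{S_j \mid s_j\in D\}$ is a cover of size at most $\scs$.
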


\appendixproofwithstatement{thm:DAG-AD-IRISNS-GR-ADD}{\thmDAGSYMADIRISNSGRADD*}{
Throughout this proof, given an instance $I = (\els = [\nn], \sets, \scs)$ of \scp, for every $i \in [\nn]$, let \setdeg i denote the number of sets containing $i$.
Recall that, by \cref{obs:gra:eq,obs:dag:eq}, \IR, \IS, \NS, and \CS\ are all equivalent for the grand coalition partition.
Thus, we present the whole proof only for \IR.
We present the proof for the two different control actions separately.

\begin{restatable}{claim}{clmSYMADIRISNSGRDEL} \label{clm:DAG-AD-IRISNS-GR-ADD}
	\controlprob{\IR}{\GR}{\AddAg}{\AD} is \Wtwo-hard wrt.\ $\budget$ even when the preference graph is a DAG.
\end{restatable}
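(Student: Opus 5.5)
We reduce from \scp, which is \Wtwo-hard with respect to the solution size~$\scs$, and set $\budget \coloneqq \scs$. By \cref{obs:gra:eq,obs:dag:eq}, it suffices to argue about \IR\ of the grand coalition. The idea mirrors \cref{thm:FE-IR-GR-ADD}: each element is represented by an original agent who dislikes the grand coalition unless some added set-agent compensates. Since the preference graph must now be acyclic, the compensation cannot come from mutual friendships. Instead, the utilities point from element-agents to set-agents, and the set-agents themselves have zero outgoing utility, so they are sinks.

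Given $I=(\els=[\nn],\sets,\scs)$, let $\agentsU \coloneqq \{z\}\cup\{u_i \mid i\in[\nn]\}$, where $z$ is a penalty agent, and let $\agentsW \coloneqq \{s_j \mid \set j\in\sets\}$. Set $\util{u_i}(z) = -1$ and $\util{u_i}(s_j) = 1/\setdeg i$ for every $j$ with $i\in\set j$. Here $\setdeg i$ is the number of sets containing~$i$, so the positive utilities of $u_i$ sum to exactly~$1$. All remaining utilities are~$0$. The graph is a DAG, because every arc goes from some $u_i$ into $\{z\}\cup\agentsW$, and those agents have no out-arcs.

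Take any $\sss\subseteq\agentsW$. In the grand coalition on $\agentsU\cup\sss$, the agents $z$ and $s_j$ have utility~$0$, and agent $u_i$ has utility $-1+|\{s_j\in\sss\mid i\in\set j\}|/\setdeg i$. This is nonnegative if and only if \emph{all} sets containing~$i$ are chosen, which is too strong a condition. To fix this, I instead give $\util{u_i}(z) = -1$ and $\util{u_i}(s_j) = 1$ for each $j$ with $i\in \set j$. Then $u_i$ is \IR\ if and only if at least one chosen set contains~$i$. Hence the grand coalition of $\agentsU\cup\sss$ is \IR\ exactly when $\{\set j \mid s_j\in\sss\}$ covers~$\els$, which gives both directions immediately. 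Since the parameter is preserved ($\budget=\scs$), this is a parameterized reduction for \AddAg.

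For \DelAg, every agent starts in $\agentsU$, and deleting a set must be what helps. Take $\agentsU$ to consist of the element-agents $u_i$, the set-agents $s_j$, and a protected structure. Let $\util{s_j}(u_i) = -1$ for $i\in\set j$. Give each $u_i$ a penalty agent $z_i$ with $\util{u_i}(z_i) = -(\setdeg i -1)$, and let $u_i$ value every $s_j$ with $i\in\set j$ negatively. The intended effect is that element~$i$ is unhappy as long as all the sets containing it are present.

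This inverts the logic the wrong way, so I would use a cleaner encoding. Deleting agents should remove \emph{negatives}, so agents $s_j$ are ``bad'' for element-agents: set $\util{u_i}(s_j) = -1$ for $i\notin\set j$, let $u_i$ have positive utility $M$ towards a protected partner, and calibrate so that $u_i$ tolerates the missing sets. The main obstacle is designing the \DelAg\ gadget. The deletions must correspond to choosing at most $\scs$ sets and must never profitably delete element-agents or protected agents. To rule out such deletions, I would replicate every protected agent $\scs+1$ times so that deleting them is never affordable within budget.

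The \XP\ membership is routine: enumerate all subsets of $\agentsW$ (respectively of $\agentsU$) of size at most~$\budget$. For each, check \IR\ of the grand coalition in polynomial time, which by \cref{obs:dag:eq,obs:gra:eq} also settles \CS, \IS, and \NS. This takes $O(n^{\budget})\cdot n^{O(1)}$ time.
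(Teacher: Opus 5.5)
Your final \AddAg\ construction (penalty agent $z$ with $\util{u_i}(z)=-1$, set-agents $s_j\in\agentsW$ with $\util{u_i}(s_j)=1$ for $i\in\set j$, all other utilities zero, and $\budget=\scs$) is exactly the paper's reduction, with your $z$ playing the role of $b$. Your argument that the graph is a DAG and that the grand coalition on $\agentsU\cup\sss$ is \IR\ if and only if the chosen sets cover $\els$ is correct, which establishes the claim. The abandoned $1/\setdeg i$ attempt, the unfinished \DelAg\ gadget, and the \XP\ paragraph are not needed here; the paper treats \DelAg\ in a separate claim with a different construction.
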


\begin{claimproof}{clm:DAG-AD-IRISNS-GR-ADD}
We reduce from \scp.
Let $I = (\els = [\nn], \sets,$ $\scs)$ be an instance of \scp.
W.l.o.g., we assume that every element~$i \in \els$ appears in at least one member of $\sets$.

We construct an instance of \controlprob{\IR}{\GR}{\AddAg}{\AD}.
The set of agents is $\agentsU \coloneqq \{u_i \mid i \in [\nn]\} \cup \{b\}$ and $\agentsW \coloneqq \{s_j \mid \set{j} \in \sets\}$.
We construct the utilities as follows, the unmentioned utilities are 0:
\begin{compactitem}[--]
	\item For every $i \in [\nn]$, we set $\util{u_i}(b) = -1$.
	\item For every $\set j \in \sets$, we set  $\util{u_i}(s_j)  = 1$ for every $i \in \set j$.
\end{compactitem}
We set $\budget \coloneqq \scs$.
Clearly $u_1, \dots, u_{\nn}, s_1, \dots, s_{m}, b$ is a topological order of the preference graph.

First assume that $I$ admits a cover $\excov$ such that $|\excov| \leq \scs$.
Let us construct $\sss \coloneqq \{s_j \mid \set j \in \excov\}$.
Clearly $|\sss| = |\excov | \leq \scs$.
It remains to show that the grand coalition partition is an \IR\ partition of $\agentsU \cup \sss$.
 
The agents in $\sss \cup \{b\}$ have no out-arcs, so they do not wish to deviate.
For every $i \in [\nn]$, because $\excov$ is a cover, there must be at least one $\set j \in \excov$ such that $i \in \set j$.
We obtain that   $\sum_{a \in \agentsU \cup \agentsW}\util{u_i}(a)  =\sum_{a \in \agentsU}\util{u_i}(a) + \sum_{a \in \sss}\util{u_i}(a) \ge -1 + 1 \ge 0$ and thus $u_i$ does not wish to deviate from the grand coalition.
Hence, the grand coalition partition is an \IR\ partition of $\agentsU \cup \sss$.\\

Now assume that there is a subset $\sss \subseteq \agentsW$ such that $|\sss| \leq \scs$ and 
the grand coalition partition is an \IR\ partition of $\agentsU \cup \sss$.
Let us define $\excov \coloneqq \{\set j \in \sets \mid s_j \in \sss\}$.
Assume, towards a contradiction, that there is an element $i \in [\nn]$ such that no set in $\excov$ contains $i$.
We obtain that  $\sum_{a \in \agentsU \cup \sss}\util{u_i}(a)  =\sum_{a \in \agentsU}\util{u_i}(a) + \sum_{a \in \sss}\util{u_i}(a) \ge -1 + 0 = -1$ and thus $u_i$ deviates from the grand coalition, a contradiction to individual rationality.
Thus, $\excov$ is a cover for $I$.
\end{claimproof}

\begin{restatable}{claim}{clmSYMADIRISNSGRDEL} \label{clm:DAG-AD-IRISNS-GR-DEL}
 \controlprob{\IR}{\GR}{\DelAg}{\AD} is \Wtwo-hard wrt.\ $\budget$ even when the preference graph is a DAG.
\end{restatable}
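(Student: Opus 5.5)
The plan is a parameterized reduction from \scp\ (parameter $\scs$), mirroring \cref{clm:DAG-AD-IRISNS-GR-ADD}. The difference is that \emph{deleting} a set-agent, rather than adding one, must be what helps an element-agent. By \cref{obs:gra:eq,obs:dag:eq} it suffices to treat \IR.

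Let $I=(\els=[\nn],\sets,\scs)$ be the \scp\ instance, with $\setdeg i\ge 1$ the number of sets containing $i$.
\begin{compactitem}[--]
\item The original agents are one anchor agent $b$, one set-agent $s_j$ per $\set j\in\sets$, and, for every $i\in[\nn]$, $\scs+1$ copies $u_i^1,\dots,u_i^{\scs+1}$ of an element-agent.
\item Every copy $u_i^c$ gets $\util{u_i^c}(b)=\setdeg i-1$ and $\util{u_i^c}(s_j)=-1$ for each $\set j$ with $i\in\set j$. All other utilities are $0$.
\item Set $\budget\coloneqq\scs$.
\end{compactitem}
All arcs go from element-agents to $b$ or to set-agents, so the ordering ``element-agents, set-agents, $b$'' is a topological order and the graph is a DAG. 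In the grand coalition, $u_i^c$ has utility $(\setdeg i-1)-r_i$, where $r_i$ is the number of remaining set-agents containing $i$. This is non-negative iff $r_i\le \setdeg i-1$, i.e., iff at least one set-agent containing $i$ has been deleted (provided $b$ remains). Agents $b$ and $s_j$ have no out-arcs and are always content.

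Forward direction: if $\excov$ is a cover of size at most $\scs$, delete $\{s_j\mid \set j\in\excov\}$. Every element-agent then has some deleted set containing its element, so the grand coalition of the remaining agents is \IR.

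Backward direction: let $\agentsU'$ with $|\agentsU'|\le\scs$ be a deletion set making the grand coalition \IR. Take $\excov=\{\set j\mid s_j\in\agentsU'\}$, so $|\excov|\le\scs$. For each $i$, at most $\scs$ of the $\scs+1$ copies of $u_i$ are deleted, so some copy $u_i^c$ survives and must be happy. If $b\in\agentsU'$, then $u_i^c$ has utility $-r_i$, so every set containing $i$ must be deleted; since $\setdeg i\ge 1$, some set containing $i$ is in $\excov$. If $b$ survives, the computation above again gives some $\set j\ni i$ with $s_j$ deleted. Either way $\excov$ covers $i$, so $\excov$ is a cover.

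The main obstacle is ruling out ``cheating'' deletions of element-agents or of $b$. The $\scs+1$ copies handle the element-agents. For $b$, deleting it only lowers utilities, and the case split above shows that it still forces a cover. The reduction is polynomial, and the parameter is preserved exactly ($\budget=\scs$).

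For \XP: enumerate all $O(|\agentsU|^{\budget})$ deletion sets of size at most $\budget$. For each, check in polynomial time whether the grand coalition of the remaining agents is \IR, which by \cref{obs:gra:eq,obs:dag:eq} also settles \IS, \NS\ and \CS\ in the DAG case.
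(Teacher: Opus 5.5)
Your proof is correct. It shares the paper's core construction: a reduction from \scp\ with $\budget=\scs$, where each element-agent has utility $-1$ towards every set-agent whose set contains its element and is compensated by exactly $d(i)-1$, and where the solution deletes the set-agents of a cover. You differ in how you block unwanted deletions in the backward direction.

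The paper supplies the compensation through $d(i)-1$ padding agents $a_i^z$ with unit utility, and it keeps a single copy of each $u_i$. It therefore needs an exchange argument. Deleted padding agents can be restored for free. A deleted $u_i$ is swapped for some still-present $s_j$ with $i\in S_j$, or simply restored, after checking that no other agent is hurt.

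You instead put all the compensation on one anchor $b$ with weight $d(i)-1$, and you replicate each element-agent $\scs+1$ times. Pigeonhole then guarantees a surviving copy, and the only remaining unwanted deletion is $b$, which your case split shows never helps. This makes the backward direction shorter and avoids the swap bookkeeping.

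The price is non-unit weights and an instance of size $\Theta(\scs\cdot\nn+m)$, whereas the paper's size does not depend on $\scs$. That is fine for a parameterized reduction. To also call your reduction polynomial, you should note that w.l.o.g.\ $\scs\le m$; otherwise $\scs+1$ copies may be exponential in a binary encoding of $\scs$. Both constructions rely on $d(i)\ge 1$ for every element, which you correctly state explicitly.
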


\begin{claimproof}{clm:DAG-AD-IRISNS-GR-DEL}
We reduce from \scp.
Let $I = (\els = [\nn], \sets, \scs)$ be an instance of \scp.
	
We construct an instance of \controlprob{\IR}{\GR}{\DelAg}{\AD}.
The set of agents is $\agentsU \coloneqq \{u_i \mid i \in [\nn]\} \cup \{a^z_i \mid i \in [\nn], z \in [\setdeg i - 1]\} \cup \{s_j \mid \set{j} \in \sets\}$.
We construct the utilities as follows, the unmentioned utilities are 0:
\begin{compactitem}[--]
	\item For every $i \in [\nn]$, we set $\util{u_i}(a^z_i) = 1$ for every $ z \in [\setdeg i - 1]$.
	\item For every $\set j \in \sets, i \in \set j$, we set $\util{u_i}(s_j)= -1$.
\end{compactitem}
We set $\budget \coloneqq \scs$. Clearly $u_1, \dots, u_{\nn}, s_1, \dots, s_m, a^1_1, \dots, a^{\setdeg 1}_1, \dots, a^1_m, \dots,$ $a^{\setdeg m}_m$ is a topological order of the preference graph.
	
First assume that $I$ admits a cover $\excov$ such that $|\excov| \leq \scs$.
Let us construct $\agentsU' \coloneqq \{s_j \mid \set j \in \excov\}$.
Clearly $|\agentsU'| = |\excov | \leq \scs$.
It remains to show that the grand coalition partition is an \IR\ partition of $\agentsU \setminus \agentsU'$.
	
The agents in $ \{a^z_i \mid i \in [\nn], z \in [\setdeg i - 1]\} \cup \{s_j \mid \set{j} \in \sets\} \setminus \agentsU'$ do not wish to deviate, because they have no non-zero utilities.
For every $i \in [\nn]$, recall that, since $\excov$ is a cover, there is at least one set in it that contains $i$.
Thus, we obtain that  $\sum_{x \in \agentsU \setminus \agentsU'}\util{u_i}(x)  =\sum_{x \in \agentsU}\util{u_i}(x) - \sum_{x \in \agentsU'}\util{u_i}(x) \geq (\setdeg i - 1 - \setdeg i) + 1 \geq 0$, and thus $u_i$ does not deviate.
Therefore, the grand coalition partition is an \IR\ partition of $\agentsU \setminus \agentsU'$.\\
	
Now assume that there is a subset $\agentsU' \subseteq \agentsU$ such that $|\agentsU'| \leq \scs$ and	the grand coalition partition is an \IR\ partition of $\agentsU \setminus \agentsU'$.
	
First observe that if an agent in  $ \{a^z_i \mid i \in [\nn], z \in [\setdeg i - 1]\}$ is in $\agentsU'$, we can remove this agent from $\agentsU'$ and the grand coalition partition of $\agentsU \setminus \agentsU'$ remains \IR; this holds because no agent has negative utility towards him and he has no negative utility towards anyone.
Thus we can assume that $ \{a^z_i \mid i \in [\nn], z \in [\setdeg i - 1]\} \cap \agentsU' = \emptyset$.
	
Next we show that if an agent $u_i$ is in $\agentsU'$ for some $i \in [\nn]$, then we can replace $u_i$ with $s_j$ for some $\set j \in \sets$ such that $i \in \set j, s_j \notin \agentsU'$.
If no such $s_j$ exists, we just remove $u_i$ from $\agentsU'$.
Call the updated~$\agentsU'$ with the name $\hat{\agentsU}$.
No agent in $\{u_{i'} \mid i' \in [\nn]\} \setminus (\agentsU' \cup \{u_i\})$ has negative utility towards $u_i$ or positive utility towards $s_j$, so if they do not wish to deviate from $\agentsU \setminus \agentsU'$, they also do not wish to deviate from $\agentsU \setminus \hat{\agentsU}$.
The agents in $\{s_j \mid \set{j} \in \sets\}$ have no non-zero utilities towards other agents, so they never deviate from any coalition.
It remains to show that $u_i$ does not wish to deviate from the new grand coalition partition of $\agentsU \setminus \hat\agentsU$.
Recall that no agent from $ \{a^z_i \mid i \in [\nn], z \in [\setdeg i - 1]\}$ is in $\hat\agentsU$ and there is a set $\set j$ such that $s_j \in \agentsU$ and $u_i$ has utility $-1$ towards $s_j$.
Thus we obtain that  $\sum_{x \in \agentsU \setminus \hat\agentsU}\util{u_i}(x)  =\sum_{x \in \agentsU}\util{u_i}(x) - \sum_{x \in \hat\agentsU}\util{u_i}(x) \geq (\setdeg i - 1 - \setdeg i) + 1 \geq 0$, and thus $u_i$ does not deviate from $\agentsU \setminus \hat{\agentsU}$.
	
We have now shown that we may assume $\hat{\agentsU} \subseteq \{s_j \mid \set{j} \in \sets\}$.
Let $\excov \coloneqq \{\set j \in \sets \mid s_j \in \hat{\agentsU}\}$.
Clearly $|\excov| = |\hat{\agentsU}| \leq \scs$.
It remains to show that $\excov$ covers $[\nn]$.
Assume, towards a contradiction, that there is an element $i \in [\nn]$ such that no set in $\excov$ contains $i$.
Then  $\sum_{x \in \agentsU \setminus \hat\agentsU}\util{u_i}(x)  =\sum_{x \in \agentsU}\util{u_i}(x) - \sum_{x \in \hat\agentsU}\util{u_i}(x) = (\setdeg i - 1 - \setdeg i) - 0 = -1$, and $u_i$ prefers being alone to the grand coalition, a contradiction to individual rationality.
Thus, $\excov$ is a cover for $I$.
\end{claimproof}

It remains to show that the problems are in \XP. A simple brute-force approach is sufficient to show this.

\begin{restatable}{claim}{clmDAGSYMADIRISNSGRADD}
  For each  control action $\A \in \{\AddAg, \DelAg\}$, \controlprob{\IR}{\GR}{\A}{\AD} is in \XP\  wrt.\ $\budget$.
  \label{clm:AD-IRISNS-GR-ADD}
\end{restatable}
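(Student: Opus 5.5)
The claim follows by exhaustive enumeration over all candidate agent sets of size at most $\budget$, combined with a polynomial-time test of individual rationality for the grand coalition.

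\textbf{Adding agents.} We consider every subset $\sss\subseteq\agentsW$ with $|\sss|\le\budget$. There are $\sum_{t=0}^{\budget}\binom{|\agentsW|}{t}\le(|\agentsW|+1)^{\budget}$ such subsets. For each one, we build the grand coalition $\agentsU\cup\sss$ and check, for every agent $\ag i$ in it, that $\sum_{\ag j\in\agentsU\cup\sss}\util{\ag i}(\ag j)\ge 0$. This is exactly the condition that the grand coalition partition is \IR, since $\ag i$ compares it only with $\{\ag i\}$, which has utility $0$. The check costs $O(|\agents|^2)$ arithmetic operations per subset. We answer yes if and only if some subset passes.

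\textbf{Deleting agents.} We do the same, but enumerate all $\agentsU'\subseteq\agentsU$ with $|\agentsU'|\le\budget$ and test \IR\ of the grand coalition $\agentsU\setminus\agentsU'$.

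\textbf{Correctness and running time.} Correctness is immediate: the algorithm tests precisely the set of candidate solutions named in the problem definition, using an exact characterization of \IR\ for the grand coalition. The total running time is $|\agents|^{O(\budget)}$, i.e., polynomial for every fixed $\budget$, so the problem is in \XP\ with respect to $\budget$.

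Combined with the remark at the start of this proof (\cref{obs:gra:eq,obs:dag:eq}), the same bound transfers to \IS, \NS, and \CS\ on DAGs. For \IS\ and \NS, \cref{obs:gra:eq} also makes \XP\ membership hold for general additive preferences, not only DAGs.

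There is no real obstacle. The only point to state carefully is that \IR-verification of the grand coalition is polynomial, which is exactly the utility-sum check described above.
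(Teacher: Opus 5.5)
Your proposal is correct and matches the paper's proof: brute-force enumeration of all subsets of $\agentsW$ (resp.\ $\agentsU$) of size at most $\budget$, with a polynomial-time check that every agent has non-negative total utility in the resulting grand coalition. Your extra remarks (the explicit $(|\agentsW|+1)^{\budget}$ count, and that the argument needs no DAG assumption so it carries over to \IS\ and \NS\ via \cref{obs:gra:eq}) are accurate and agree with how the paper reuses this claim.
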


\begin{claimproof}{clm:AD-IRISNS-GR-ADD}
Observe that we can verify in polynomial time whether the grand coalition partition is \IR.
Thus we can try every subset of $\agentsW$ (resp.\ $\agentsU$) of cardinality at most $\budget$.
The number of such subsets is in $O(|\agentsW|^{\budget})$ (resp.\ $O(|\agentsU|^{\budget})$), and thus this simple brute-force algorithm is in \XP\ wrt.~$\budget$.
\end{claimproof} 
This concludes the proof.
}

\begin{restatable}[\appsymb]{theorem}{thmSYMADIRISNSGRADD}
  For each stability concept $\S \in \{\IR, \IS,\NS\}$ and each control action $\A \in \{\AddAg, \DelAg\}$, \controlprob{\S}{\GR}{\A}{\AD} is \Wtwo-hard and in \XP\ wrt.\ $\budget$. The hardness holds even when the preference graph is symmetric.
  \label{thm:SYM-AD-IRISNS-GR-ADD}
\end{restatable}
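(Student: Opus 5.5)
By \cref{obs:gra:eq}, the grand coalition partition is \IS\ (resp.\ \NS) if and only if it is \IR. It therefore suffices to prove the statement for \IR\ under symmetric additive preferences, for both control actions. Membership in \XP\ follows exactly as in \cref{clm:AD-IRISNS-GR-ADD}: checking whether the grand coalition is \IR\ takes polynomial time, so we can enumerate all subsets of $\agentsW$ (resp.\ $\agentsU$) of size at most $\budget$. The real work is the two \Wtwo-hardness reductions from \scp\ with parameter $\scs$.

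For \AddAg, the hardness is essentially the symmetric \FA\ reduction of \cref{thm:FE-IR-GR-ADD}, made additive. Concretely, set $\agentsU = \{u_i \mid i\in[\nn]\}\cup\{b\}$ and $\agentsW=\{s_j\mid \set j\in\sets\}$, with symmetric utilities
\begin{compactitem}[--]
\item $\util{u_i}(b)=\util{b}(u_i)=-1$ for every $i\in[\nn]$;
\item $\util{u_i}(s_j)=\util{s_j}(u_i)=1$ whenever $i\in\set j$;
\item $\util{b}(s_j)=\util{s_j}(b)=M$ for every $j$, with $M$ large (say $M=\nn+1$).
\end{compactitem}
The large weight $M$ is there so that $b$ and every added $s_j$ are happy in any grand coalition containing at least one set-agent. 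The agents $u_i$ obtain utility $-1+|\{j: s_j \text{ added}, i\in\set j\}|$, which is non-negative exactly when $i$ is covered. One corner case must be handled: adding no set-agent at all gives $b$ utility $-\nn<0$. This is harmless, because a cover requires at least one set (assuming $\nn\ge 1$). Correctness in both directions is then immediate, with $\budget=\scs$.

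For \DelAg, the DAG reduction of \cref{clm:DAG-AD-IRISNS-GR-DEL} needs symmetric arcs. Take agents $u_i$ and $s_j$, with $\util{u_i}(s_j)=\util{s_j}(u_i)=-1$ for $i\in\set j$. Give each $u_i$ exactly $\setdeg i-1$ private padding agents $a_i^z$ of utility $+1$ (symmetric), so that $u_i$ has total $-1$ and needs at least one incident set deleted. The complication under symmetry is the set-agents: $s_j$ has utility $-|\set j|$ towards the grand coalition and would deviate. I would fix this by giving each $s_j$ enough private positive padding agents: $|\set j|$ agents $c_j^z$ with mutual utility $+1$ to $s_j$. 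The padding agents $a_i^z,c_j^z$ each have a single positive neighbour, so they are happy as long as that neighbour remains. Backward direction: first argue that deleting padding agents is never beneficial, since it only lowers utilities of others, and a padding agent whose neighbour is deleted has utility $0$ and is still fine. Next, a deleted $u_i$ can be exchanged for some $s_j$ with $i\in\set j$. This exchange is safe: removing $s_j$ only raises the other $u$'s utilities, and removing $u_i$ only raised $s_j$'s, which no longer matters once $s_j$ is gone. After these normalisations the deleted set consists of set-agents forming a cover, as in the DAG proof.

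The main obstacle is the exchange argument in the \DelAg\ direction under symmetry. Every utility change now affects both endpoints, so I must check carefully that after exchanges every remaining agent, in particular the $s_j$'s and the padding agents, still has non-negative utility. If the $c_j^z$ padding causes trouble there, I would instead make the set-agents' padding very large, so that set-agents are always individually rational regardless of which $u_i$ remain.
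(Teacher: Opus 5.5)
Your proposal is correct and follows essentially the paper's proof: you reduce to \IR\ via \cref{obs:gra:eq} and brute-force over subsets of size at most $\budget$ for \XP. For \DelAg\ you use the same padding-plus-exchange reduction from \scp; the paper additionally puts utility $-1$ between every pair of set-agents and gives $s_j$ $|\set j|+|\sets|$ padding agents, neither of which is needed. For \AddAg\ the paper simply invokes \cref{thm:FE-IR-GR-ADD} through the embedding of \FA\ into \AD, whereas your direct gadget with the penalty agent $b$ works equally well.

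The obstacle you flag in the exchange step does not arise. With its $|\set j|$ padding agents present, every set-agent has utility at least $|\set j|-|\set j|=0$, whichever element-agents remain. Restoring $u_i$ therefore cannot make any remaining set-agent deviate; note that restoring $u_i$ lowers the utility of every remaining $s_{j'}$ with $i\in \set{j'}$, not only $s_j$. If every set-agent containing $i$ is already deleted, you just restore $u_i$ without an exchange, which gives him utility $d(i)-1\ge 0$.
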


\newcommand{\setdeg}[1]{\ensuremath{d(#1)}}
\appendixproofwithstatement{thm:SYM-AD-IRISNS-GR-ADD}{\thmSYMADIRISNSGRADD*}{
Throughout this proof, given an instance $I = (\els = [\nn], \sets, \scs)$ of \scp, for every $i \in [\nn]$, let \setdeg i denote the number of sets containing $i$.
Recall that, by \cref{obs:gra:eq}, \IR, \IS, and  \NS\ are equivalent for the grand coalition partition.
Thus we present the whole proof only for \IR.

For symmetric preferences and the control action \AddAg\ the hardness is a consequence of \cref{thm:FE-IR-GR-ADD}.
Thus it is sufficient to show the hardness for \DelAg.

\begin{restatable}{claim}{clmSYMADIRISNSGRDEL} \label{clm:SYM-AD-IRISNS-GR-DEL}
  For each stability concept $\S \in \{\IR, \IS,\NS\}$, \controlprob{\S}{\GR}{\DelAg}{\AD} is \Wtwo-hard wrt.\ $\budget$ even when the preference graph is symmetric.
\end{restatable}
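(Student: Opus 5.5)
The plan is a parameterized reduction from \scp, adapting the construction of \cref{clm:DAG-AD-IRISNS-GR-DEL} so that the utilities become symmetric. By \cref{obs:gra:eq}, it suffices to treat \IR. \XP\ membership follows from the brute-force argument of \cref{clm:AD-IRISNS-GR-ADD}, and \AddAg\ is already covered by \cref{thm:FE-IR-GR-ADD}. So only the \Wtwo-hardness of \controlprob{\IR}{\GR}{\DelAg}{\AD} under symmetric utilities remains.

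\textbf{Construction.} Given $(\els=[\nn],\sets,\scs)$, I create the following agents:
\begin{compactitem}[--]
\item an element-agent $u_i$ for each $i\in[\nn]$;
\item padding agents $a_i^z$ for $z\in[\setdeg i-1]$;
\item a set-agent $s_j$ for each $\set j\in\sets$;
\item new compensator agents $b_j^1,\dots,b_j^{|\set j|}$ for each $\set j$.
\end{compactitem}
The symmetric utilities are $\util{u_i}(a_i^z)=\util{a_i^z}(u_i)=1$, $\util{u_i}(s_j)=\util{s_j}(u_i)=-1$ whenever $i\in\set j$, and $\util{s_j}(b_j^t)=\util{b_j^t}(s_j)=1$; all other utilities are $0$. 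I set $\budget\coloneqq\scs$.

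In the full grand coalition each agent has the following utility:
\begin{compactitem}[--]
\item $u_i$ has $(\setdeg i-1)-\setdeg i=-1$;
\item each $s_j$ has $-|\set j|+|\set j|=0$;
\item every $a$- and $b$-agent has $1$.
\end{compactitem}
So the $u_i$ are the only agents that violate \IR, and each of them needs exactly one more unit of utility.

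\textbf{Forward direction.} Given a cover $\excov$ with $|\excov|\le\scs$, I delete $\{s_j\mid \set j\in\excov\}$. Each $u_i$ loses at least one $-1$ and reaches utility $\ge 0$. The deleted $s_j$'s compensators drop to $0$, which is still fine. Every remaining $s_{j'}$ keeps all its $b$'s and all its neighbours, so it stays at $0$.

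\textbf{Backward direction.} This is the main step; it is a normalization argument. Take a deletion set $\agentsU'$ with $|\agentsU'|\le\scs$ such that the grand coalition on $\agentsU\setminus\agentsU'$ is \IR.
\begin{compactenum}[(1)]
\item Drop every $a$- and $b$-agent from $\agentsU'$. Every utility towards these agents is nonnegative, so no remaining agent loses utility, and the restored agents themselves have utility $\ge0$ since they only have positive arcs.
\item After step (1), every present $s_j$ has all its compensators, so its utility is at least $-|\set j|+|\set j|=0$, regardless of which $u_i$ are present. Likewise every $a$- and $b$-agent is always at $\ge0$. Hence only the $u_i$ constrain the solution.
\item Replace each deleted $u_i$ by some $s_j$ with $i\in\set j$ that is not yet deleted, or simply drop $u_i$ from the deletion set if no such $s_j$ exists. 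Restoring $u_i$ only lowers the utilities of the $s$-agents, which stay $\ge0$ by step (2), and of the $a_i^z$, which only gain.
\item The restored $u_i$ now has all $\setdeg i-1$ padding agents and at most $\setdeg i-1$ present negative neighbours, so its utility is $\ge0$. In the no-such-$s_j$ case every $s_j$ with $i\in\set j$ is already deleted, so $u_i$ even has utility $\setdeg i-1\ge 0$. The other $u_{i'}$ only lose negative neighbours.
\end{compactenum}

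After this normalization, $\agentsU'$ consists only of set-agents, and \IR\ of each $u_i$ forces some $s_j$ with $i\in\set j$ to be deleted. So the sets of the deleted set-agents form a cover of size at most $\scs$. The delicate point is verifying step (2), namely that the compensators make the $s$-agents robust to every pattern of $u$-deletions. Once that holds, the exchange in steps (3) and (4) goes through as in the DAG proof.
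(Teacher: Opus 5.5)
Your proof is correct and follows the paper's reduction from \textsc{Set Cover}. It uses the same element, padding, set and compensator agents and $k=h$, deletes the set-agents of a cover in the forward direction, and in the backward direction restores padding/compensator agents and swaps each deleted element-agent for an undeleted set-agent containing it. The only difference is the gadget: the paper adds pairwise $-1$ utilities between set-agents and gives each $s_j$ $|S_j|+|\mathcal{S}|$ compensators, whereas you drop the set--set utilities and use $|S_j|$ compensators, which already make every present set-agent's utility nonnegative; this is a valid simplification. Both versions tacitly need $d(i)\ge 1$ for every element.
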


\begin{claimproof}{clm:SYM-AD-IRISNS-GR-DEL}
We reduce from \scp.
Let $I = (\els = [\nn], \sets, \scs)$ be an instance of \scp.

We construct an instance of \controlprob{\S}{\GR}{\DelAg}{\AD} with symmetric utilities.
The set of agents is $\agentsU \coloneqq \{u_i \mid i \in [\nn]\} \cup \{a^z_i \mid i \in [\nn], z \in [\setdeg i - 1]\} \cup \{s_j \mid \set{j} \in \sets\} \cup \{b_j^w \mid  \set{j} \in \sets, w \in [|\set j| + |\sets|]\}$.
We construct the utilities as follows, the unmentioned utilities are~0:
\begin{compactitem}[--]
\item For every $i \in [\nn]$, we set $\util{u_i}(a^z_i) = \util{a^z_i}(u_i) = 1$ for every $ z \in [\setdeg i - 1]$.
\item For every $\set j \in \sets$, we set  $\util{s_j}(b^w_j) = \util{b^w_j}(s_j) = 1$ for every $ w \in [|\set j| + |\sets|]$.
\item For every $\set j \in \sets, i \in \set j$, we set $\util{u_i}(s_j) = \util{s_j}(u_i) = -1$.
\item For every $\set j, \set{j'} \in \sets$, we set $\util{s_j}(s_{j'}) = \util{s_{j'}}(s_j) = -1$.
\end{compactitem}
We set $\budget \coloneqq \scs$.

First assume that $I$ admits a cover $\excov$ such that $|\excov| \leq \scs$.
Let us construct $\agentsU' \coloneqq \{s_j \mid \set j \in \excov\}$.
Clearly $|\agentsU'| = |\excov | \leq \scs$.
It remains to show that the grand coalition partition is an \IR\ partition of $\agentsU \setminus \agentsU'$.

The agents in $ \{a^z_i \mid i \in [\nn], z \in [\setdeg i - 1]\} \cup  \{b_j^w \mid  \set{j} \in \sets, w \in [|\set j| + |\sets|]\}$ do not wish to deviate, because they have only non-negative utilities towards other agents.
For every $\set j \in \sets \setminus \excov$, we have that $\sum_{a \in \agentsU \setminus \agentsU'}\util{s_j}(a) \geq |\set j| + |\sets| - |\sets| - |\set j| \geq 0$, and thus~$s_j$ does not deviate.
For every $i \in [\nn]$, recall that, since $\excov$ is a cover, there is at least one set in it that contains $i$.
Thus we obtain that  $\sum_{a \in \agentsU \setminus \agentsU'}\util{u_i}(a)  =\sum_{a \in \agentsU}\util{u_i}(a) - \sum_{a \in \agentsU'}\util{u_i}(a) \geq (\setdeg i - 1 - \setdeg i) + 1 \geq 0$, and thus $u_i$ does not deviate.
Therefore the grand coalition partition is an \IR\ partition of $\agentsU \setminus \agentsU'$.\\

Now assume that there is a subset $\agentsU' \subseteq \agentsU$ such that $|\agentsU'| \leq \scs$ and 
the grand coalition partition is an \IR\ partition of $\agentsU \setminus \agentsU'$.

First observe that if an agent in  $ \{a^z_i \mid i \in [\nn], z \in [\setdeg i - 1]\} \cup  \{b_j^w \mid  \set{j} \in \sets, w \in [|\set j| + |\sets|]\}$ is in $\agentsU$, we can remove this agent from $\agentsU'$ and the grand coalition partition of $\agentsU \setminus \agentsU'$ remains \IR; this holds because no agent has negative utility towards him and he has no negative utility towards anyone.
Thus we can assume that $ \{a^z_i \mid i \in [\nn], z \in [\setdeg i - 1]\} \cup  \{b_j^w \mid  \set{j} \in \sets, w \in [|\set j| + |\sets|]\} \cap \agentsU' = \emptyset$.

Next we show that if an agent $u_i$ is in $\agentsU'$ for some $i \in [\nn]$, then we can replace $u_i$ with $s_j$ for some $\set j \in \sets$ such that $i \in \set j, s_j \notin \agentsU'$.
If no such $s_j$ exists, we just remove $u_i$ from $\agentsU'$.
Call the updated~$\agentsU'$ with the name $\hat{\agentsU}$.
No agent in $\{u_{i'} \mid i' \in [\nn]\} \setminus (\agentsU' \cup \{u_i\})$ has negative utility towards $u_i$ or positive utility towards $s_j$, so if they do not wish to deviate from $\agentsU \setminus \agentsU'$, they also do not wish to deviate from $\agentsU \setminus \hat{\agentsU}$.
Every agent in $ \{s_{j'} \mid \set{j'} \in \sets\} \setminus (\agentsU' \cup \{s_j\}) $ who has negative utility $u_i$ has the same negative utility for $s_j$, so if they do not wish to deviate from $\agentsU \setminus \agentsU'$, they also do not wish to deviate from $\agentsU \setminus \hat{\agentsU}$.
Moreover, the only agents who have negative utility towards $u_i$ are the agents $s_{j'}$ where $\set{j'} \in \sets, i \in \set j$; if we are in the case where we remove $u_i$ without adding any agents to $\agentsU$, then all of the agents who have negative utility towards $u_i$ are already in~$\agentsU'$.

It remains to show that $u_i$ does not wish to deviate from the new grand coalition partition of $\agentsU \setminus \hat\agentsU$.
Recall that no agent from $ \{a^z_i \mid i \in [\nn], z \in [\setdeg i - 1]\} \cup  \{b_j^w \mid  \set{j} \in \sets, w \in [|\set j| + |\sets|]\}$ is in $\agentsU'$ and there is a set $\set j$ such that $s_j \in \agentsU$ and $u_i$ has utility $-1$ towards $s_j$.
Thus we obtain that  $\sum_{a \in \agentsU \setminus \hat\agentsU}\util{u_i}(a)  =\sum_{a \in \agentsU}\util{u_i}(a) - \sum_{a \in \hat\agentsU}\util{u_i}(a) \geq (\setdeg i - 1 - \setdeg i) + 1 \geq 0$, and thus $u_i$ does not deviate from $\agentsU \setminus \hat{\agentsU}$.

We have now shown that we may assume $\hat\agentsU \subseteq \{s_j \mid \set{j} \in \sets\}$.
Let $\excov \coloneqq \{\set j \in \sets \mid s_j \in \hat\agentsU\}$.
Clearly $|\excov| = |\hat\agentsU| \leq \scs$.
It remains to show that $\excov$ covers $[\nn]$.
Assume, towards a contradiction, that there is an element $i \in [\nn]$ such that no set in $\excov$ contains $i$.
Then  $\sum_{a \in \agentsU \setminus \hat\agentsU}\util{u_i}(a)  =\sum_{a \in \agentsU}\util{u_i}(a) - \sum_{a \in \hat\agentsU}\util{u_i}(a) = (\setdeg i - 1 - \setdeg i) - 0 = -1$, and $u_i$ prefers being alone to the grand coalition, a contradiction to individual rationality.
Thus $\excov$ covers~$[\nn]$.
\end{claimproof}

To show \XP\ containment observe that \cref{clm:AD-IRISNS-GR-ADD} does not require the graph to be a DAG.
}

For core stability, \cref{thm:DAG-AD-IRISNS-GR-ADD} shows that \controlprob{\CS}{\GR}{\AddAg}{\AD} and \controlprob{\CS}{\GR}{\DelAg}{\AD} admit a polynomial-time algorithm when the preference graph is a DAG and the budget is a constant.
However, in contrast with \IR, \IS, and \NS, the problems for \CS\ are in general \coNP-complete even when $\budget = 0$.
The hardness holds even when the preferences are symmetric.
     
\begin{restatable}[\appsymb]{theorem}{thmADCSGRcoNP}
  For each control action~$\A\in \{\AddAg, \DelAg\}$,
  \controlprob{\CS}{\GR}{\A}{\AD} is \coNP-complete.
  It remains \coNP-hard even when $\budget = 0$ and the preference graph is symmetric.
  \label{thm:AD-CS-GS}
\end{restatable}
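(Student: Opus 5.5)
For membership in \coNP, I would argue directly rather than through \cref{obs:complexityupperbounds}, since that observation yields \coNP\ only for each fixed choice of agents to add or delete, and the budget is part of the input. The point is that the grand coalition partition is \CS\ only if it is \IR, and the grand coalition of a subset $T$ is \CS\ exactly when no coalition $B\subseteq T$ blocks it. I would first try to show that the best choice of $T$ can be determined in polynomial time, or that the quantifier over $T$ can be absorbed into a single universal certificate. If that fails, I would fall back on the upper bound as stated in \cref{obs:complexityupperbounds} and concentrate on the hardness part, which is the substantive claim.

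For \coNP-hardness with $\budget=0$, both control actions reduce to the same question: is the grand coalition partition of the given agent set \CS? With $\budget=0$ there are no additions, and no deletions are allowed. It therefore suffices to show that verifying core stability of the grand coalition is \coNP-hard under symmetric additive preferences. I would reduce from the complement of \probname{Clique}, or equivalently of \RETC, by building a symmetric instance in which a blocking coalition exists if and only if the source instance is a \yes-instance.

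The concrete plan follows the standard verification-hardness construction (cf.~\cite{CheCsaRoySim2023Verif}), adapted so that the coalition under test is the whole agent set.
\begin{compactitem}[--]
\item For \probname{Clique} on a graph $G$ with target size $h$, create one vertex-agent per vertex of $G$.
\item Give adjacent vertex-agents mutual utility $1$ and non-adjacent ones a large negative mutual utility $-M$, so that the only possible blocking coalitions are cliques.
\item Add a heavy penalty agent $p$ with mutual utility $-c$ towards every vertex-agent.
\end{compactitem}
In the grand coalition, each vertex-agent $v$ then has utility $\deg(v)-M\cdot(n-1-\deg(v))-c$. Since this quantity depends on $v$, I would pad degrees with private dummy agents so that every vertex-agent has the same grand-coalition utility, namely $h-2$ plus a small constant, and this value must be non-negative. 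A clique of size $h$ then gives each member utility $h-1$ and blocks. Conversely, any blocking coalition must avoid non-edges because of the $-M$ terms, and must have size at least $h$ to beat $h-2$.

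I also have to make sure of two more things. First, the grand coalition must be \IR\ for $p$ and for every dummy agent; I would achieve this by giving them positive mutual utilities towards helper agents. Second, $p$, the dummy agents and the helper agents must not be able to form a blocking coalition among themselves. I would handle this by making each of them already obtain its maximum achievable utility in the grand coalition, which means every positive edge they have lies inside the grand coalition and all their negative edges are arranged so that dropping them is impossible without also dropping positive ones.

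The step I expect to be the main obstacle is exactly this calibration. Symmetry means that any utility I give a vertex-agent towards a padding agent is returned to the padding agent. The padding and penalty agents must therefore be designed so that no mixed coalition blocks. I would handle this by giving every padding and helper agent a unique positive partner and an equal negative counterweight, so that its total utility in the grand coalition equals the sum of its positive edges, the maximum possible, and it can never strictly improve. Such agents cannot belong to any blocking coalition, and blocking then reduces to vertex-agents only, that is, to cliques.
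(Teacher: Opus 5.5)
Your hardness part follows the paper's route. With $\budget=0$ both control actions collapse to verifying the grand coalition, and the paper reduces from \probname{Clique} with edge weight $1$ and non-edge weight $-|V|$, as you plan. It then adds one agent $b$ whose utility $|V|(|V|-\Delta(v)-1)-\Delta(v)+h-2$ towards $a_v$ calibrates every vertex-agent's grand-coalition utility to $h-2$. Because $b$ has only positive utilities (the paper assumes no universal vertex), it already gets its maximum and cannot be in a blocking coalition. Your penalty agent $p$ is a legitimate way to drop the no-universal-vertex assumption, but two details need repair.
\begin{itemize}
\item Padding and helper agents must have \emph{only} non-negative utilities. An ``equal negative counterweight'' contradicts your own requirement that their grand-coalition utility equal the sum of their positive edges. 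It would also break the reduction in every instance: the base utility $\deg(v)-M(n-1-\deg(v))-c$ is negative, so the total padding weight of $v$ exceeds $v$'s grand-coalition utility, and $v$ together with its padding agents would block.
\item $p$ does not get its maximum in the grand coalition, since it would gladly drop all vertex-agents. The correct argument is that no all-positive helper can be in a blocking coalition. So in any coalition $p$ could block with, $p$ gets at most $0$, which is at most its grand-coalition utility once the helper weights sum to at least $cn$.
\end{itemize}

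The genuine gap is membership in \coNP, and your suspicion there is justified: neither of your two routes can work. Computing the best $T$ in polynomial time, or folding the choice of $T$ into one universal certificate, would place the problem in \coNP. But with $\budget$ part of the input the problem is \NP-hard for both actions. The \probname{Set Cover} reductions in \cref{thm:DAG-AD-IRISNS-GR-ADD} run in polynomial time and produce DAGs, and on DAGs \CS\ coincides with \IR\ by \cref{obs:dag:eq}. An \NP-hard problem in \coNP\ would give $\NP=\coNP$.

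Falling back on \cref{obs:complexityupperbounds} only imports an unjustified claim. That observation rests on verifying a \emph{given} partition, and quantifying existentially over exponentially many sets $\agentsW'$ (resp.\ $\agentsU'$) yields the bound \sigmatwop, not \coNP. The paper's proof takes exactly this fallback (``containment is discussed in'' the observation), so the gap is in the paper too.

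What is actually provable is \coNP-completeness for $\budget=0$. More generally, \coNP-membership holds for every constant $\budget$: a single \NP-certificate for the complement can list one blocking coalition for each of the polynomially many candidate sets. For unbounded $\budget$ the problem is \NP-hard, \coNP-hard and in \sigmatwop, hence not in \coNP\ unless $\NP=\coNP$.
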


\newcommand{\verdeg}[1]{\ensuremath{\Delta(#1)}}

\appendixproofwithstatement{thm:AD-CS-GS}{\thmADCSGRcoNP*}{
We show \coNP-hardness for  \controlprob{\CS}{\GR}{\AddAg}{\AD}; containment is discussed in \cref{obs:complexityupperbounds}.
Since $\budget = 0$, the proof for  \controlprob{\CS}{\GR}{\DelAg}{\AD} is analogous.
We reduce from the classic \np-complete problem \cliqprob. 
\decprob{\cliqprob}{An undirected graph $G = (V, E)$, an integer $\scs$.}{Does $G$ admit a clique of size at least $\scs$?}

Let $I = (G = (V, E), \scs)$ be an instance of \cliqprob\ and we construct an \AD-instance with a symmetric preference graph $\utilG$.
We use \verdeg{v} to denote the degree of a vertex $v$.
The set of agents is $\agentsU = \{a_v \mid v \in V\} \cup \{b\}$, where $a_v, v \in V$ is a vertex-agent, and $b$ is a dummy-agent.

  We construct the utilities as follows:
  \begin{compactitem}[--]
	\item For each $v, v' \in V$ such that $v$ and $v'$ are adjacent in $G$, set $\util{a_v}(a_{v'}) = \util{a_{v'}}(a_v) = 1.$
  \item For each $v, v' \in V$ such that $v$ and $v'$ are not adjacent, set $\util{a_v}(a_{v'}) = \util{a_{v'}}(a_v) = - |V|.$
	\item For each $v \in V$, set $\util{a_v}(b) = \util{b}(a_v) = |V| \cdot (|V| - \verdeg{v} - 1) - \verdeg{v} + \scs - 2.$ We assume that every vertex has at least one vertex which is not adjacent to them, so $b$ has positive utility towards every agent.
  \end{compactitem}
The budget $\budget$ is zero.

We show that $I$ is a \yes\ of \cliqprob\ if and only if $\utilG$ is a \no\ of \controlprob{\CS}{\GR}{\AddAg}{\AD}.

Observe that for every $v \in V$
\begin{align}
\sum_{x \in \agentsU} \util{a_v}(x) &=  \sum_{v' \in V \setminus N(v)}  \util{a_v}(a_{v'})  + \sum_{v' \in N(v)}  \util{a_v}(a_{v'}) + \util{a_v}(b) \nonumber\\
&= (-|V| \cdot (|V| - \verdeg{v} - 1)) + \verdeg{v} \nonumber\\ &\quad+ (|V| \cdot (|V| - \verdeg{v} - 1) - \verdeg{v} + \scs - 2) \nonumber\\
&=  \scs - 2.\label{eq:AD-CS-GS}
\end{align}

First assume $I$ is a \yes\ of \cliqprob.
Let $K$ be a clique of cardinality at least $\scs$.
We show that $\coal \coloneqq \{a_v \mid v \in K\}$ blocks the grand coalition partition of $\agentsU$.
For every $v \in K$, the utility $a_v$ obtains under $\coal$ is $\sum_{a_{v'} \in \coal} \util{a_v}(a_{v'}) = \scs - 1 \stackrel{\eqref{eq:AD-CS-GS}}{>}\sum_{x \in \agentsU} \util{a_v}(x)$, because every pair of vertices in a clique is adjacent.
Thus $\coal$ blocks the grand coalition partition and $\utilG$ is a \no.\\

Now assume $\utilG$ is a \no\ for \controlprob{\CS}{\GR}{\AddAg}{\AD}, i.e., some coalition $\coal$ blocks the grand coalition partition of $\agentsU$.
Observe that, since $b$ obtains all of the agents he has positive utility towards and he has no negative utilities, he cannot be in $\coal$.
Thus $\coal$ must consist of vertex-agents only.
Let $K \coloneqq \{v \in V \mid a_v \in \coal\}$.

Let us first show that $K$ must be a clique.
Assume, towards a contradiction, that there is a pair of vertices $v, v' \in K$ that are not adjacent.
Then $\sum_{x \in \coal} \util{a_v}(x) = \util{a_v}(a_{v'}) + \sum_{\hat{v} \in K \setminus \{v'\}} \util{a_v}(a_{\hat{v}}) \leq -|V| + |V| - 1 < 0 \stackrel{\eqref{eq:AD-CS-GS}}{<} \sum_{x \in \agentsU} \util{a_v}(x)$, and thus $\coal$ cannot block the grand coalition partition.

Next  assume that $K$ is a clique of cardinality at most $\scs - 1$.
Then $\sum_{v' \in K} \util{a_v}(a_{v'}) = |K|  - 2\leq \scs - 2 \stackrel{\eqref{eq:AD-CS-GS}}{=} \sum_{x \in \agentsU} \util{a_v}(x) $ and thus $\coal$ does not block the grand coalition partition, a contradiction.
Thus $K$ must be a clique of size at least $\scs$.
This concludes the proof.
}

\section{Conclusion}\label{sec:conclusion}

Motivated by control in other computational social choice problems,
we introduce control in hedonic games. 
We study three control goals: ensuring an agent is not alone~(\NA), ensuring a pair of agents is together~(\PA), and ensuring the grand coalition is stable~(\GR), combined with two control actions: adding and deleting agents.
We present a complete complexity picture for these control goals and actions across four stability concepts and two preference representations--\FA\ and \AD.

Our work opens several potential directions for future research.
First, alternative control actions remain unexplored. 
In Stable Roommates and Marriage settings, a common control action is removing acceptability--making previously acceptable pairs unacceptable to each other.
Analogously, one could study removing friendship relations in \FA\ or adjusting utility values in \AD.
Second, inspired by destructive control in voting~\cite{Hemaspaandra07_Preclude}, one could study destructive control in hedonic games: ensuring a specific agent remains isolated or preventing a specific pair from being in the same coalition.
Third, other solution concepts merit investigation, such as Pareto optimal or strictly core stable partitions.
Finally, other simple compact preference representations warrant exploration, such as fractional hedonic games~\cite{aziz2019fractionalhg}, anonymous preferences~\cite{Bogomolnaia02_NS_Symmetric} or B- and W-preferences~\cite{cechlarova2001stability}.
We conjecture that the ideas of many of our hardness reductions could work also for fractional hedonic games, since \cref{obs:originalhard-follows} extends to any compact preference representation. 
However, anonymous, B- and W-preferences have a different structure and therefore likely require different ideas.

\clearpage

\bibliographystyle{ACM-Reference-Format} 
\bibliography{bib}


\begin{thebibliography}{31}


\ifx \showCODEN    \undefined \def \showCODEN     #1{\unskip}     \fi
\ifx \showDOI      \undefined \def \showDOI       #1{#1}\fi
\ifx \showISBNx    \undefined \def \showISBNx     #1{\unskip}     \fi
\ifx \showISBNxiii \undefined \def \showISBNxiii  #1{\unskip}     \fi
\ifx \showISSN     \undefined \def \showISSN      #1{\unskip}     \fi
\ifx \showLCCN     \undefined \def \showLCCN      #1{\unskip}     \fi
\ifx \shownote     \undefined \def \shownote      #1{#1}          \fi
\ifx \showarticletitle \undefined \def \showarticletitle #1{#1}   \fi
\ifx \showURL      \undefined \def \showURL       {\relax}        \fi
\providecommand\bibfield[2]{#2}
\providecommand\bibinfo[2]{#2}
\providecommand\natexlab[1]{#1}
\providecommand\showeprint[2][]{arXiv:#2}

\bibitem[\protect\citeauthoryear{Aziz, Brandl, Brandt, Harrenstein, Olsen, and
  Peters}{Aziz et~al\mbox{.}}{2019}]%
        {aziz2019fractionalhg}
\bibfield{author}{\bibinfo{person}{Haris Aziz}, \bibinfo{person}{Florian
  Brandl}, \bibinfo{person}{Felix Brandt}, \bibinfo{person}{Paul Harrenstein},
  \bibinfo{person}{Martin Olsen}, {and} \bibinfo{person}{Dominik Peters}.}
  \bibinfo{year}{2019}\natexlab{}.
\newblock \showarticletitle{Fractional Hedonic Games}.
\newblock \bibinfo{journal}{\emph{ACM Transactions on Economics and
  Computation}} \bibinfo{volume}{7}, \bibinfo{number}{2}
  (\bibinfo{year}{2019}).
\newblock
\showISSN{2167-8375}


\bibitem[\protect\citeauthoryear{Aziz and Savani}{Aziz and Savani}{2016}]%
        {AZ2016HG-Bookchapter}
\bibfield{author}{\bibinfo{person}{Haris Aziz} {and} \bibinfo{person}{Rahul
  Savani}.} \bibinfo{year}{2016}\natexlab{}.
\newblock \showarticletitle{Hedonic Games}.
\newblock In \bibinfo{booktitle}{\emph{Handbook of Computational Social
  Choice}}, \bibfield{editor}{\bibinfo{person}{Felix Brandt},
  \bibinfo{person}{Vincent Conitzer}, \bibinfo{person}{Ulle Endriss},
  \bibinfo{person}{J{\'{e}}r{\^{o}}me Lang}, {and} \bibinfo{person}{Ariel~D.
  Procaccia}} (Eds.). \bibinfo{publisher}{Cambridge University Press},
  \bibinfo{pages}{356--376}.
\newblock


\bibitem[\protect\citeauthoryear{Banerjee, Konishi, and S{\"o}nmez}{Banerjee
  et~al\mbox{.}}{2001}]%
        {Banerjee01_Core}
\bibfield{author}{\bibinfo{person}{Suryapratim Banerjee},
  \bibinfo{person}{Hideo Konishi}, {and} \bibinfo{person}{Tayfun S{\"o}nmez}.}
  \bibinfo{year}{2001}\natexlab{}.
\newblock \showarticletitle{Core in a Simple Coalition Formation Game}.
\newblock \bibinfo{journal}{\emph{Social Choice and Welfare}}
  \bibinfo{volume}{18}, \bibinfo{number}{1} (\bibinfo{year}{2001}),
  \bibinfo{pages}{135--153}.
\newblock


\bibitem[\protect\citeauthoryear{{Bartholdi~III}, Tovey, and
  Trick}{{Bartholdi~III} et~al\mbox{.}}{1992}]%
        {Bartholdi92_Control}
\bibfield{author}{\bibinfo{person}{John {Bartholdi~III}},
  \bibinfo{person}{Craig~A.\ Tovey}, {and} \bibinfo{person}{Michael~A.\
  Trick}.} \bibinfo{year}{1992}\natexlab{}.
\newblock \showarticletitle{How hard is it to control an election?}
\newblock \bibinfo{journal}{\emph{Mathematical and Computer Modelling}}
  \bibinfo{volume}{16}, \bibinfo{number}{8/9} (\bibinfo{year}{1992}),
  \bibinfo{pages}{27--40}.
\newblock


\bibitem[\protect\citeauthoryear{Boehmer, Bredereck, Heeger, and
  Niedermeier}{Boehmer et~al\mbox{.}}{2021}]%
        {Boehmer21_StableMarriage}
\bibfield{author}{\bibinfo{person}{Niclas Boehmer}, \bibinfo{person}{Robert
  Bredereck}, \bibinfo{person}{Klaus Heeger}, {and} \bibinfo{person}{Rolf
  Niedermeier}.} \bibinfo{year}{2021}\natexlab{}.
\newblock \showarticletitle{Bribery and Control in Stable Marriage}.
\newblock \bibinfo{journal}{\emph{Journal of Artificial Intelligence Research}}
   \bibinfo{volume}{71} (\bibinfo{year}{2021}), \bibinfo{pages}{993--1048}.
\newblock


\bibitem[\protect\citeauthoryear{Bogomolnaia and Jackson}{Bogomolnaia and
  Jackson}{2002}]%
        {Bogomolnaia02_NS_Symmetric}
\bibfield{author}{\bibinfo{person}{Anna Bogomolnaia} {and}
  \bibinfo{person}{Matthew~O. Jackson}.} \bibinfo{year}{2002}\natexlab{}.
\newblock \showarticletitle{The Stability of Hedonic Coalition Structures}.
\newblock \bibinfo{journal}{\emph{Games Economic Behavior}}
  \bibinfo{volume}{38}, \bibinfo{number}{2} (\bibinfo{year}{2002}),
  \bibinfo{pages}{201--230}.
\newblock


\bibitem[\protect\citeauthoryear{Brandt, Bullinger, and Tappe}{Brandt
  et~al\mbox{.}}{2024}]%
        {Brandt24_AI_NashStable}
\bibfield{author}{\bibinfo{person}{Felix Brandt}, \bibinfo{person}{Martin
  Bullinger}, {and} \bibinfo{person}{Leo Tappe}.}
  \bibinfo{year}{2024}\natexlab{}.
\newblock \showarticletitle{Stability based on single-agent deviations in
  additively separable hedonic games}.
\newblock \bibinfo{journal}{\emph{Artificial Intelligence}}
  \bibinfo{volume}{334} (\bibinfo{year}{2024}), \bibinfo{pages}{104160}.
\newblock


\bibitem[\protect\citeauthoryear{Bullinger, Elkind, and Rothe}{Bullinger
  et~al\mbox{.}}{2016}]%
        {BER2016}
\bibfield{author}{\bibinfo{person}{Martin Bullinger}, \bibinfo{person}{Edith
  Elkind}, {and} \bibinfo{person}{J{\"{o}}rg Rothe}.}
  \bibinfo{year}{2016}\natexlab{}.
\newblock \showarticletitle{Cooperative Game Theory}.
\newblock In \bibinfo{booktitle}{\emph{Economics and Computation, An
  Introduction to Algorithmic Game Theory, Computational Social Choice, and
  Fair Division}}, \bibfield{editor}{\bibinfo{person}{J{\"{o}}rg Rothe}} (Ed.).
  \bibinfo{publisher}{Springer}, \bibinfo{pages}{135--193}.
\newblock


\bibitem[\protect\citeauthoryear{Cechl{\'a}rov{\'a} and
  Romero-Medina}{Cechl{\'a}rov{\'a} and Romero-Medina}{2001}]%
        {cechlarova2001stability}
\bibfield{author}{\bibinfo{person}{Katar\'ina Cechl{\'a}rov{\'a}} {and}
  \bibinfo{person}{Antonio Romero-Medina}.} \bibinfo{year}{2001}\natexlab{}.
\newblock \showarticletitle{Stability in coalition formation games}.
\newblock \bibinfo{journal}{\emph{International Journal of Game Theory}}
  \bibinfo{volume}{29} (\bibinfo{year}{2001}), \bibinfo{pages}{487--494}.
\newblock


\bibitem[\protect\citeauthoryear{Chalkiadakis, Elkind, and
  Wooldridge}{Chalkiadakis et~al\mbox{.}}{2011}]%
        {CEW2011CooperativeGames}
\bibfield{author}{\bibinfo{person}{Georgios Chalkiadakis},
  \bibinfo{person}{Edith Elkind}, {and} \bibinfo{person}{Michael~J.
  Wooldridge}.} \bibinfo{year}{2011}\natexlab{}.
\newblock \bibinfo{booktitle}{\emph{Computational Aspects of Cooperative Game
  Theory}}.
\newblock \bibinfo{publisher}{Morgan {\&} Claypool Publishers}.
\newblock


\bibitem[\protect\citeauthoryear{Chen, Cs{\'a}ji, Roy, and Simola}{Chen
  et~al\mbox{.}}{2023}]%
        {CheCsaRoySim2023Verif}
\bibfield{author}{\bibinfo{person}{Jiehua Chen}, \bibinfo{person}{Gergely
  Cs{\'a}ji}, \bibinfo{person}{Sanjukta Roy}, {and} \bibinfo{person}{Sofia
  Simola}.} \bibinfo{year}{2023}\natexlab{}.
\newblock \showarticletitle{Hedonic Games With Friends, Enemies, and Neutrals:
  {R}esolving Open Questions and Fine-Grained Complexity (AAMAS 2023)}.
  \bibinfo{pages}{251--259}.
\newblock


\bibitem[\protect\citeauthoryear{Chen, Hatschka, and Simola}{Chen
  et~al\mbox{.}}{2025a}]%
        {CHS2025FPTCOMSOC}
\bibfield{author}{\bibinfo{person}{Jiehua Chen}, \bibinfo{person}{Christian
  Hatschka}, {and} \bibinfo{person}{Sofia Simola}.}
  \bibinfo{year}{2025}\natexlab{a}.
\newblock \showarticletitle{Computational {S}ocial {C}hoice: {P}arameterized
  Complexity and Challenges}.
\newblock \bibinfo{journal}{\emph{Computer Science Review}}
  (\bibinfo{year}{2025}).
\newblock
\newblock
\shownote{To appear.}


\bibitem[\protect\citeauthoryear{Chen, Kaczmarek, N\"{u}sken, Rothe, Schlotter,
  and Seeger}{Chen et~al\mbox{.}}{2025b}]%
        {che-kac-nue-rot-sch-see:c:control-in-computational-social-choice}
\bibfield{author}{\bibinfo{person}{Jiehua Chen}, \bibinfo{person}{Joanna
  Kaczmarek}, \bibinfo{person}{Paul N\"{u}sken}, \bibinfo{person}{J{\"o}rg
  Rothe}, \bibinfo{person}{Ildik{\'o} Schlotter}, {and} \bibinfo{person}{Tessa
  Seeger}.} \bibinfo{year}{2025}\natexlab{b}.
\newblock \showarticletitle{Control in Computational Social Choice}. In
  \bibinfo{booktitle}{\emph{Proceedings of the 34th International Joint
  Conference on Artificial Intelligence (IJCAI 2025)}}.
  \bibinfo{pages}{10391--10399}.
\newblock


\bibitem[\protect\citeauthoryear{Chen and Schlotter}{Chen and
  Schlotter}{2025}]%
        {Chen25_StableMatchControl}
\bibfield{author}{\bibinfo{person}{Jiehua Chen} {and}
  \bibinfo{person}{Ildik{\'o} Schlotter}.} \bibinfo{year}{2025}\natexlab{}.
\newblock \bibinfo{booktitle}{\emph{Control in Stable Marriage and Stable
  Roommates: Complexity and Algorithms}}.
\newblock \bibinfo{type}{{T}echnical {R}eport}.
  \bibinfo{institution}{arXiv:2502.01215}.
\newblock


\bibitem[\protect\citeauthoryear{Cygan, Fomin, Kowalik, Lokshtanov, Marx,
  Pilipczuk, Pilipczuk, and Saurabh}{Cygan et~al\mbox{.}}{2015}]%
        {CyFoKoLoMaPiPiSa2015}
\bibfield{author}{\bibinfo{person}{Marek Cygan}, \bibinfo{person}{Fedor~V.
  Fomin}, \bibinfo{person}{Lukasz Kowalik}, \bibinfo{person}{Daniel
  Lokshtanov}, \bibinfo{person}{D{\'a}niel Marx}, \bibinfo{person}{Marcin
  Pilipczuk}, \bibinfo{person}{Michal Pilipczuk}, {and} \bibinfo{person}{Saket
  Saurabh}.} \bibinfo{year}{2015}\natexlab{}.
\newblock \bibinfo{booktitle}{\emph{Parameterized Algorithms}}.
\newblock \bibinfo{publisher}{Springer}.
\newblock


\bibitem[\protect\citeauthoryear{Dimitrov, Borm, Hendrickx, and Sung}{Dimitrov
  et~al\mbox{.}}{2006}]%
        {dimitrov2006simple}
\bibfield{author}{\bibinfo{person}{Dinko Dimitrov}, \bibinfo{person}{Peter
  Borm}, \bibinfo{person}{Ruud Hendrickx}, {and} \bibinfo{person}{Shao~Chin
  Sung}.} \bibinfo{year}{2006}\natexlab{}.
\newblock \showarticletitle{Simple priorities and core stability in hedonic
  games}.
\newblock \bibinfo{journal}{\emph{Social Choice and Welfare}}
  \bibinfo{volume}{26}, \bibinfo{number}{2} (\bibinfo{year}{2006}),
  \bibinfo{pages}{421--433}.
\newblock


\bibitem[\protect\citeauthoryear{Downey and Fellows}{Downey and
  Fellows}{2013}]%
        {DF13}
\bibfield{author}{\bibinfo{person}{Rodney~G. Downey} {and}
  \bibinfo{person}{Michael~R. Fellows}.} \bibinfo{year}{2013}\natexlab{}.
\newblock \bibinfo{booktitle}{\emph{Fundamentals of Parameterized Complexity}}.
\newblock \bibinfo{publisher}{Springer}.
\newblock


\bibitem[\protect\citeauthoryear{Dr{\`{e}}ze and Greenberg}{Dr{\`{e}}ze and
  Greenberg}{1980}]%
        {Dreze80_Hedonic}
\bibfield{author}{\bibinfo{person}{Jacques~H. Dr{\`{e}}ze} {and}
  \bibinfo{person}{Joseph Greenberg}.} \bibinfo{year}{1980}\natexlab{}.
\newblock \showarticletitle{Hedonic Coalitions: {O}ptimality and Stability}.
\newblock \bibinfo{journal}{\emph{Econometrica}} \bibinfo{volume}{48},
  \bibinfo{number}{4} (\bibinfo{year}{1980}), \bibinfo{pages}{987--1003}.
\newblock


\bibitem[\protect\citeauthoryear{Faliszewski and Rothe}{Faliszewski and
  Rothe}{2016}]%
        {fal-rot:b:handbook-comsoc-control-and-bribery}
\bibfield{author}{\bibinfo{person}{Piotr Faliszewski} {and}
  \bibinfo{person}{J{\"o}rg Rothe}.} \bibinfo{year}{2016}\natexlab{}.
\newblock \showarticletitle{Control and Bribery in Voting}.
\newblock In \bibinfo{booktitle}{\emph{Handbook of Computational Social
  Choice}}, \bibfield{editor}{\bibinfo{person}{F.~Brandt},
  \bibinfo{person}{V.~Conitzer}, \bibinfo{person}{U.~Endriss},
  \bibinfo{person}{J.~Lang}, {and} \bibinfo{person}{A.~Procaccia}} (Eds.).
  \bibinfo{publisher}{Cambridge University Press}, Chapter~7,
  \bibinfo{pages}{146--168}.
\newblock


\bibitem[\protect\citeauthoryear{Feldman and Ruhl}{Feldman and Ruhl}{2006}]%
        {FeldmanRuhl_SCSS}
\bibfield{author}{\bibinfo{person}{Jon Feldman} {and} \bibinfo{person}{Matthias
  Ruhl}.} \bibinfo{year}{2006}\natexlab{}.
\newblock \showarticletitle{The Directed Steiner Network Problem is Tractable
  for a Constant Number of Terminals}.
\newblock \bibinfo{journal}{\emph{SIAM J. Comput.}} \bibinfo{volume}{36},
  \bibinfo{number}{2} (\bibinfo{year}{2006}), \bibinfo{pages}{543--561}.
\newblock


\bibitem[\protect\citeauthoryear{Floyd}{Floyd}{1962}]%
        {Floyd_Warshall}
\bibfield{author}{\bibinfo{person}{Robert~W. Floyd}.}
  \bibinfo{year}{1962}\natexlab{}.
\newblock \showarticletitle{Algorithm 97: Shortest path}.
\newblock \bibinfo{journal}{\emph{Commun. ACM}} \bibinfo{volume}{5},
  \bibinfo{number}{6} (\bibinfo{date}{June} \bibinfo{year}{1962}),
  \bibinfo{pages}{345}.
\newblock
\showISSN{0001-0782}


\bibitem[\protect\citeauthoryear{Galbraith}{Galbraith}{2012}]%
        {Galbraith2012MPKC}
\bibfield{author}{\bibinfo{person}{Steven~D. Galbraith}.}
  \bibinfo{year}{2012}\natexlab{}.
\newblock \bibinfo{booktitle}{\emph{Mathematics of Public Key Cryptography}}.
\newblock \bibinfo{publisher}{Cambridge University Press}.
\newblock
\showISBNx{978-1107013926}


\bibitem[\protect\citeauthoryear{Gonzalez}{Gonzalez}{1985}]%
        {Gonzalez1985}
\bibfield{author}{\bibinfo{person}{Teofilo~F. Gonzalez}.}
  \bibinfo{year}{1985}\natexlab{}.
\newblock \showarticletitle{Clustering to Minimize the Maximum Intercluster
  Distance}.
\newblock \bibinfo{journal}{\emph{Theoretical Computer Science}}
  \bibinfo{volume}{38} (\bibinfo{year}{1985}), \bibinfo{pages}{293--306}.
\newblock


\bibitem[\protect\citeauthoryear{Hemaspaandra, Hemaspaandra, and
  Rothe}{Hemaspaandra et~al\mbox{.}}{2007}]%
        {Hemaspaandra07_Preclude}
\bibfield{author}{\bibinfo{person}{Edith Hemaspaandra},
  \bibinfo{person}{Lane~A. Hemaspaandra}, {and} \bibinfo{person}{J{\"o}rg
  Rothe}.} \bibinfo{year}{2007}\natexlab{}.
\newblock \showarticletitle{Anyone but Him: {T}he Complexity of Precluding an
  Alternative}.
\newblock \bibinfo{journal}{\emph{Artificial Intelligence}}
  \bibinfo{volume}{171}, \bibinfo{number}{5-6} (\bibinfo{year}{2007}),
  \bibinfo{pages}{255--285}.
\newblock


\bibitem[\protect\citeauthoryear{Li, McCormick, and Simchi-Levi}{Li
  et~al\mbox{.}}{1992}]%
        {li1992point}
\bibfield{author}{\bibinfo{person}{Chung-Lun Li}, \bibinfo{person}{S~Thomas
  McCormick}, {and} \bibinfo{person}{David Simchi-Levi}.}
  \bibinfo{year}{1992}\natexlab{}.
\newblock \showarticletitle{The point-to-point delivery and connection
  problems: complexity and algorithms}.
\newblock \bibinfo{journal}{\emph{Discrete Applied Mathematics}}
  \bibinfo{volume}{36}, \bibinfo{number}{3} (\bibinfo{year}{1992}),
  \bibinfo{pages}{267--292}.
\newblock


\bibitem[\protect\citeauthoryear{Natu and Shu-Cherng}{Natu and
  Shu-Cherng}{1997}]%
        {natu1997point}
\bibfield{author}{\bibinfo{person}{Madan Natu} {and} \bibinfo{person}{Fang
  Shu-Cherng}.} \bibinfo{year}{1997}\natexlab{}.
\newblock \showarticletitle{The point-to-point connection problem—analysis
  and algorithms}.
\newblock \bibinfo{journal}{\emph{Discrete Applied Mathematics}}
  \bibinfo{volume}{78}, \bibinfo{number}{1-3} (\bibinfo{year}{1997}),
  \bibinfo{pages}{207--226}.
\newblock


\bibitem[\protect\citeauthoryear{Niedermeier}{Niedermeier}{2006}]%
        {Nie06}
\bibfield{author}{\bibinfo{person}{Rolf Niedermeier}.}
  \bibinfo{year}{2006}\natexlab{}.
\newblock \bibinfo{booktitle}{\emph{Invitation to Fixed-Parameter Algorithms}}.
\newblock \bibinfo{publisher}{Oxford University Press}.
\newblock


\bibitem[\protect\citeauthoryear{Peters}{Peters}{2017}]%
        {peters2017precise}
\bibfield{author}{\bibinfo{person}{Dominik Peters}.}
  \bibinfo{year}{2017}\natexlab{}.
\newblock \showarticletitle{Precise complexity of the core in dichotomous and
  additive hedonic games}. In \bibinfo{booktitle}{\emph{Proceedings of the
  International Conference on Algorithmic Decision Theory (ADT 2017)}}.
  Springer, \bibinfo{pages}{214--227}.
\newblock


\bibitem[\protect\citeauthoryear{Sung and Dimitrov}{Sung and Dimitrov}{2010}]%
        {Sung10_Additive}
\bibfield{author}{\bibinfo{person}{Shao{-}Chin Sung} {and}
  \bibinfo{person}{Dinko Dimitrov}.} \bibinfo{year}{2010}\natexlab{}.
\newblock \showarticletitle{Computational Complexity in Additive Hedonic
  Games}.
\newblock \bibinfo{journal}{\emph{European Journal of Operational Research}}
  \bibinfo{volume}{203}, \bibinfo{number}{3} (\bibinfo{year}{2010}),
  \bibinfo{pages}{635--639}.
\newblock


\bibitem[\protect\citeauthoryear{Woeginger}{Woeginger}{2013a}]%
        {Woeginger_CoreSurvey}
\bibfield{author}{\bibinfo{person}{Gerhard~J. Woeginger}.}
  \bibinfo{year}{2013}\natexlab{a}.
\newblock \showarticletitle{Core Stability in Hedonic Coalition Formation}. In
  \bibinfo{booktitle}{\emph{Proceedings of the 39th International Conference on
  Current Trends in Theory and Practice of Computer Science (SOFSEM 2013)}}
  \emph{(\bibinfo{series}{Lecture Notes in Computer Science},
  Vol.~\bibinfo{volume}{7741})}. \bibinfo{pages}{33--50}.
\newblock


\bibitem[\protect\citeauthoryear{Woeginger}{Woeginger}{2013b}]%
        {Woeginger13_CoreAdditiveHG}
\bibfield{author}{\bibinfo{person}{Gerhard~J. Woeginger}.}
  \bibinfo{year}{2013}\natexlab{b}.
\newblock \showarticletitle{A hardness result for core stability in additive
  hedonic games}.
\newblock \bibinfo{journal}{\emph{Mathematical Social Science}}
  \bibinfo{volume}{65}, \bibinfo{number}{2} (\bibinfo{year}{2013}),
  \bibinfo{pages}{101--104}.
\newblock


\end{thebibliography}

\ifcr
\else
\clearpage
\begin{table}[t!]
  \centering
  \Large \textbf{\appendixtitle}
\end{table}
\bigskip

\begin{appendix}
\appendixtext
\end{appendix}
\fi

\end{document}

